\newif\ifllncs
    \newcommand{\authnote}[3]{\textcolor{#3}{[{\footnotesize {#2} {\bf -- #1}}]}}
    \newcommand{\prabhanjan}[1]{\authnote{Prabhanjan}{}{blue}}
    \newcommand{\yaoting}[1]{\authnote{Yao-Ting}{}{red}}
    \newcommand{\aditya}[1]{\authnote{Aditya}{}{brown}}
    \newcommand{\oldtext}[1]{\authnote{Old}{}{red}}
\spnewtheorem{myclaim}[theorem]{Claim}{\bfseries}{\itshape}
\Crefname{myclaim}{Claim}{Claims}
\spnewtheorem{fact}[theorem]{Fact}{\bfseries}{\itshape}
\Crefname{fact}{Fact}{Facts}
\renewenvironment{proof}[1][Proof]
{\par\noindent\textit{#1. }}
{\hfill$\square$\par}
\newtheorem{theorem}{Theorem}[section]
\newtheorem{proposition}[theorem]{Proposition}
\newtheorem{definition}[theorem]{Definition}
\newtheorem{lemma}[theorem]{Lemma}
\newtheorem{myclaim}[theorem]{Claim}
\newtheorem{corollary}[theorem]{Corollary}
\newtheorem{fact}[theorem]{Fact}
\Crefname{fact}{Fact}{Facts}
\newmdenv[
  linewidth=1pt,
  roundcorner=5pt,
  linecolor=black,
  innerleftmargin=10pt,
  innerrightmargin=10pt,
  innertopmargin=5pt,
  innerbottommargin=5pt
]{protocolbox}
\newcommand{\protocol}[2]{ 
  \hfill
  \begin{protocolbox}
    #1\textnormal{:} 
    #2
  \end{protocolbox}
}
\newcommand{\myparagraph}[1]{\vspace{.5em} \noindent \textbf{#1.}\,}
\newcommand{\resp}{resp.,\ }
\newcommand{\ie}{i.e.,\ }
\newcommand{\wrt} {with respect to\ }
\newcommand{\N}{\mathbb{N}}
\newcommand{\R}{\mathbb{R}}
\newcommand{\ceil}[1]{\lceil #1 \rceil}
\renewcommand{\bra}[1]{\langle#1\rvert}
\renewcommand{\ket}[1]{\lvert#1\rangle}
\newcommand{\set}[1]{\{ #1 \}}
\newcommand{\bit}{\{0,1\}}
\newcommand{\cA}{{\mathcal A}}
\newcommand{\cE}{{\mathcal E}}
\newcommand{\cF}{{\mathcal F}}
\newcommand{\cH}{{\mathcal H}}
\newcommand{\cI}{{\mathcal I}}
\newcommand{\cO}{{\mathcal O}}
\newcommand{\cP}{{\mathcal P}}
\newcommand{\cQ}{{\mathcal Q}}
\newcommand{\cR}{{\mathcal R}}
\newcommand{\cT}{{\mathcal T}}
\newcommand{\cU}{{\mathcal U}}
\newcommand{\cV}{{\mathcal V}}
\newcommand{\cX}{{\mathcal X}}
\newcommand{\cY}{{\mathcal Y}}
\newcommand{\bfD}{\mathbf{D}}
\newcommand{\bfP}{\mathbf{P}}
\newcommand{\bfQ}{\mathbf{Q}}
\newcommand{\bfT}{\mathbf{T}}
\newcommand{\bfa}{\mathbf{a}}
\newcommand{\bfk}{\mathbf{k}}
\newcommand{\sfA}{\mathsf{A}}
\newcommand{\sfB}{\mathsf{B}}
\newcommand{\sfC}{\mathsf{C}}
\newcommand{\sfE}{\mathsf{E}}
\newcommand{\sfF}{\mathsf{F}}
\newcommand{\sfG}{\mathsf{G}}
\newcommand{\sfH}{\mathsf{H}}
\newcommand{\sfR}{\mathsf{R}}
\newcommand{\eps}{\varepsilon}
\newcommand{\veps}{\varepsilon}
\newcommand{\secp}{{\lambda}}
\newcommand{\poly}{\mathsf{poly}}
\newcommand{\Exp}{\operatorname*{\mathbb{E}}}
\newcommand{\Ex}{\Exp}
\newcommand{\negl}{\mathsf{negl}}
\newcommand{\Supp}{\operatorname{Supp}}
\newcommand{\TD}{\mathrm{TD}}
\newcommand{\SD}{\mathrm{SD}}
\newcommand{\commit}{{\mathsf{Commit}}}
\newcommand{\reveal}{{\mathsf{Reveal}}}
\newcommand{\C}{\mathbb{C}}
\newcommand{\Haar}{\mathcal{H}}
\newcommand{\hilbert}{\cH}
\newcommand{\sym}{\mathsf{Sym}}
\renewcommand{\ketbra}[2]{\ket{#1}\bra{#2}}
\newcommand{\inner}[2]{\langle #1, #2 \rangle}
\newcommand{\bad}{\mathsf{Bad}}
\newcommand{\alice}{\sfA}
\newcommand{\bob}{\sfB}
\newcommand{\challenger}{\sfC}
\newcommand{\eve}{\sfE}
\newcommand{\wt}{\widetilde}
\newcommand{\adv}{\mathsf{Adv}}
\newcommand{\Symgp}{\mathrm{Sym}}
\newcommand{\Unitary}{\mathrm{U}}
\renewcommand{\bra}[1]{\langle#1\rvert}
\renewcommand{\ket}[1]{\lvert#1\rangle}
\renewcommand{\ketbra}[2]{\ket{#1}\!\bra{#2}}
\newcommand{\projector}[1]{\ket{#1}\!\bra{#1}}
\newcommand{\secparam}{\lambda}
\newcommand{\id}{\mathrm{id}}
\newcommand{\haarstates}{\Haar}
\newcommand{\haarunitaries}{\mu}
\newcommand{\Adversary}{{\cal A}}
\newcommand{\dist}{\mathsf{dist}}
\newcommand{\num}{\mathsf{num}}
\newcommand{\reg}[1]{{\color{brown} \mathbf{#1}}}
\newcommand{\regA}{{\color{brown} \mathbf{A}}}
\newcommand{\regB}{{\color{brown} \mathbf{B}}}
\newcommand{\regC}{{\color{brown} \mathbf{C}}}
\newcommand{\regL}{{\color{brown} \mathbf{L}}}
\newcommand{\regR}{{\color{brown} \mathbf{R}}}
\newcommand{\regX}{{\color{brown} \mathbf{X}}}
\newcommand{\regY}{{\color{brown} \mathbf{Y}}}
\newcommand{\Dom}{\operatorname{Dom}}
\newcommand{\ptrans}{\Theta}
\newcommand{\newhybrid}[1]{\vspace{.5em} \noindent \textbullet~\textbf{Hybrid~#1:}}
\newcommand{\hybrid}[1]{\textbf{Hybrid~#1}}
\newcommand{\HUD}{\mathsf{HUD}}
\newcommand{\NAInvHUD}{\mathsf{NA\text{-}Inv\text{-}HUD}}
\newcommand{\NIInvHUD}{\mathsf{NI\text{-}Inv\text{-}HUD}}
\newcommand{\sen}{\mathsf{Sen}}
\newcommand{\rec}{\mathsf{Rec}}
\title{On the Limitations of Pseudorandom Unitaries\footnote{A preliminary version of this paper appears in the proceedings of \textit{the 23rd Theory of Cryptography Conference} (\textsc{TCC~2025}). This is the full version.}\\
{\em {\small Or: Cryptographic Applications of LOCC indistinguishability of identical versus independent Haar unitaries}}}
    \author{Prabhanjan Ananth\inst{1} \and Aditya Gulati\inst{1} \and Yao-Ting Lin\inst{1}}
    \authorrunning{P. Ananth et al.}
    \titlerunning{On the Limitations of PRUs}
    \institute{University of California, Santa Barbara, CA, USA \\
    \email{prabhanjan@cs.ucsb.edu} \;
    \email{adityagulati@ucsb.edu} \;
    \email{yao-ting\_lin@ucsb.edu}}
    \author{Prabhanjan Ananth\thanks{\texttt{prabhanjan@cs.ucsb.edu}}\\ \small{UCSB} \and Aditya Gulati\thanks{\texttt{adityagulati@ucsb.edu}}\\ \small{UCSB} \and Yao-Ting Lin\thanks{\texttt{yao-ting\_lin@ucsb.edu}}\\ \small{UCSB}}
    \date{}
\begin{document}

\maketitle

\begin{abstract}
\noindent Pseudorandom unitaries (PRUs), one of the key quantum pseudorandom notions, are efficiently computable unitaries that are computationally indistinguishable from Haar random unitaries. While there is evidence to believe that PRUs are weaker than one-way functions, so far its relationship with other quantum cryptographic primitives (that are plausibly weaker than one-way functions) has not been fully established. 

\par In this work, we focus on quantum cryptographic primitives with classical communication, referred to as QCCC primitives. Our main result shows that QCCC bit commitments and QCCC key agreement, cannot be constructed from pseudorandom unitaries in a black-box manner.

\par Our core technical contribution is to show (in a variety of settings) the difficulty of distinguishing identical versus independent Haar unitaries by separable channels. Our result strictly improves upon prior works which studied similar problems in the context of learning theory [Anshu, Landau, Liu, STOC 2022] and cryptography [Ananth, Gulati, Lin, TCC 2024]. 
\end{abstract}

\ifllncs
    
\else
\newpage 
    \tableofcontents
\newpage 
\fi

\section{Introduction}

\noindent The governing principles of quantum information presents new opportunities for cryptography. The seminal work of~\cite{BB84} first opened up the possibility of utilizing quantum resources to weaken the computational assumptions needed for cryptographic tasks. Notably, in the last few years, extensive efforts have been devoted to exploit quantum resources to weaken the assumptions needed behind traditional cryptographic tasks. These include designing quantum bit commitments~\cite{BartusekCKM21a,GLSV21,MY22,AQY21,BCQ23,BEMPQY23,behera2023pseudorandomness,KT24,BJ24}, secret-key encryption~\cite{AQY21}, digital signatures~\cite{MY22}, public-key encryption~\cite{BGH+23,Col23,KMNY24,MW24} and message authentication codes~\cite{AQY21} from a variety of quantum cryptographic primitives, such as pseudorandom (function-like) states~\cite{JLS18,AQY21}, one-way state generators~\cite{MY22,}, EFI pairs~\cite{BCQ23}, one-way puzzles~\cite{KT24} and so on. The surprising result of Kretschmer~\cite{Kretschmer21} opened up the possibility that these quantum cryptographic primitives -- referred to as {\em Microcrypt} primitives -- can exist even if one-way functions don't. One particular Microcrypt primitive that is of interest to us is {\em pseudorandom unitaries} (PRUs). Interestingly, pseudorandom unitaries implies all of the above aforementioned primitives, making it one of the strongest primitives in Microcrypt. 

\myparagraph{Pseudorandom Unitaries} Formally, a pseudorandom unitary, introduced by~\cite{JLS18}, is associated with a family of  keyed polynomial time quantum circuits $\left\{G(k,\cdot)\}_{k \in \{0,1\}^{\secparam}}\right\}_{\secparam \in \mathbb{N}}$ such that $G(k,\cdot)$ implements an $n$-qubit unitary and moreover, any quantum polynomial time algorithm cannot distinguish whether it has oracle access to $G(k,\cdot)$, where $k$ is sampled at random, or an $n$-qubit Haar random unitary. Recent works~\cite{MPSY24,CDXBBH24,MH25} showed that PRUs can be based on the existence of one-way functions. PRUs, and more generally, Haar random unitaries, have applications to different areas of sciences, including randomized benchmarking~\cite{MGE12}, learning theory~\cite{ZLKQHC24}, complexity theory~\cite{CLS25}, random matrix theory~\cite{Meckes19} and black-hole physics~\cite{HP07}. However, the impact of PRUs on quantum cryptography is yet to be thoroughly investigated. While some cryptographic applications of pseudorandom unitaries have been explored recently~\cite{LQSYZ24,AGKL}, the relationship between PRUs and many core quantum cryptographic primitives remains to be much explored. We focus on a class of core primitives that we discuss next. 

\myparagraph{Quantum Cryptography with Classical Communication} Typically, quantum cryptographic primitives require the existence of quantum communication channels. This means that for quantum cryptography to be realized in practice, not only do we require quantum computers to exist but we also additionally require the infrastructure of quantum internet to be set up. A relatively more desirable scenario is when we only require classical communication channels to carry out quantum cryptographic tasks. The class of quantum cryptographic primitives that only require classical communication to be QCCC primitives. Examples in this class include quantum key agreement, quantum commitments, quantum (secret-key and public-key) encryption, quantum digital signatures and so on. Recent works~\cite{behera2023pseudorandomness,ALY24,CGG24,AGL24,barhoush2024signatures,KQT25,GMMY24} have focused on characterizing the complexity of QCCC primitives as well as proposing feasibility results. Precisely characterizing the minimal assumptions behind the existence of QCCC primitives is still an active and ongoing research direction. Concretely, we initiate the research direction of understanding the relationship between pseudorandom unitaries and QCCC primitives. 

\subsection{Our Results}
We show a variety of black-box separations between pseudorandom unitaries and QCCC primitives.

\myparagraph{Separations Between PRUs and QCCC primitives} We give a general framework to show separations between PRUs and QCCC primitives. 
\par First, we consider QCCC bit commitments. This is an interactive bit commitment scheme, where the sender and receiver are allowed to be quantum polynomial time algorithms but the communication is restricted to be classical. We summarise the prior work on QCCC bit commitments:
\begin{itemize}
    \item {\em Feasibility}: It is known that QCCC bit commitments, with statistical binding and computational hiding, can be realized from pseudorandom state generators with short output length~\cite{AGQY22,ALY24} or from a special variant of pseudorandom function-like state generators~\cite{behera2023pseudorandomness}.
    \item {\em Limitations}:~\cite{CLM23} presented evidence that non-interactive commitments cannot be constructed from one-way functions. Recently, a black-box separation between QCCC commitments and pseudorandom state generators (or even pseudorandom function-like state generators) with long output length was recently established~\cite{AGL24}.
\end{itemize}

\noindent In particular, the above black-box separation did not address the (im)possibility of basing QCCC commitments on PRUs, which imply the existence of both pseudorandom state generators and pseudorandom function-like state generators. We address this problem below.  

\begin{theorem}[Informal]
There is a black-box separation between pseudorandom unitaries and interactive quantum bit commitments with classical communication (QCCC commitments). 
\end{theorem}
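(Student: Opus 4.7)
The plan is to establish the separation via an oracle that provides PRU security while simultaneously destroying every candidate QCCC commitment. I would fix a Haar-random oracle $U$ (together with $U^{\dagger}$) that every algorithm may query, so that $U$ trivially instantiates a pseudorandom unitary with perfect indistinguishability from Haar. Relative to $U$, I need to show that no candidate QCCC commitment scheme $(\Com, \Rev)$ relativizing to this oracle can simultaneously satisfy statistical binding and computational hiding. Following the AGL24 template, I would do this by constructing an efficient receiver (with oracle access to $U$ only) that extracts the committed bit from the classical transcript, contradicting hiding.

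The key structural observation that drives the attack is that because every message between the sender $\sen$ and the receiver $\rec$ is classical, at every point in the protocol the joint quantum state across the $\sen:\rec$ cut, conditioned on the transcript, is a separable state. Hence any meaningful quantum correlation between the parties must be \emph{funnelled through shared oracle queries} to $U$; the classical messages themselves create only classical correlation. To exploit this, I would have the adversarial receiver internally simulate a fresh copy of the sender's algorithm (using its own oracle access to $U$) and attempt to compare it against the ``real'' sender's residual state, using a swap-test-style subroutine aided by the revealed transcript. This reduces extracting the bit to a task about the sender's state distribution across the two choices of $b$.

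The main obstacle, and the central technical lemma the abstract highlights, is to show that an adversary restricted to separable (equivalently LOCC) channels across the $\sen:\rec$ cut cannot distinguish between two oracle scenarios: (a) both $\sen$ and $\rec$ query the \emph{same} Haar random unitary $U$, versus (b) they query two \emph{independent} Haar unitaries $U_\sen, U_\rec$. Intuitively, classical communication cannot convert a shared quantum resource into any non-separable correlation, so even though the parties nominally share $U$, the sender's reduced state looks nearly identical to the receiver under both scenarios. I expect the heart of the proof to be a careful moment calculation on Haar-random unitaries — symmetrizing over the unitary group via Weingarten calculus and bounding the $1$-norm difference of the two induced mixtures of product states — generalizing the Anshu--Landau--Liu argument (originally for Haar states under separable measurements) to the more delicate setting of two-sided query access to Haar \emph{unitaries} (and their inverses). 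Obtaining parameters strong enough to handle a protocol with polynomially many queries and polynomially many rounds is where I anticipate the most effort.

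Once this identical-versus-independent LOCC indistinguishability is in hand with sufficiently sharp parameters, the separation follows by a standard two-step argument. First, by the lemma, the receiver's view of the protocol in the real (shared-$U$) world is statistically close to its view in an ``independent-oracle'' world where the sender uses a freshly sampled $U_\sen$ that the receiver has no access to. Second, in the independent-oracle world, the transcript and the receiver-side state can be produced by an efficient simulator that does not depend on the sender's workspace, so the receiver's own internal simulation of the sender's algorithm is statistically interchangeable with the real sender's state; combining this with the statistical binding property, the receiver can test which of $b=0, b=1$ is consistent with the transcript and the reveal distribution, breaking hiding. Black-boxness is preserved throughout because the entire argument only uses oracle access to $U, U^{\dagger}$ and to the scheme algorithms, allowing the separation to be lifted to a standard black-box separation in the style of Impagliazzo--Rudich.
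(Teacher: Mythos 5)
Your high-level target lemma (separable/LOCC adversaries cannot tell identical from independent Haar unitaries) is indeed the paper's central tool, but the way you wire it into the separation has several genuine gaps. The most serious is your plan to build an \emph{efficient} receiver that breaks computational hiding. That cannot work in general: a QCCC commitment relative to the oracle may simply ignore the oracle and run a plain-model OWF-based commitment (Naor), which no efficient oracle-aided adversary can break. The paper's adversary is computationally \emph{unbounded} but makes only polynomially many queries; this suffices because the PRU construction relative to the oracle family $\cU$ (from~\cite{ABGL24}) is secure against all unbounded-time, polynomial-query distinguishers, so the fully-black-box reduction applied to this query-bounded adversary already yields the contradiction. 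Relatedly, you assume the scheme is statistically binding and attack hiding; a construction from PRUs could instead be statistically hiding and only computationally binding, so the impossibility must produce \emph{either} a hiding attack or a binding attack. The paper does this via a quantitative version of the statistical hiding/binding trade-off, applied after the oracle has been effectively removed, rather than via your simulate-the-sender-and-swap-test receiver.

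The second gap is quantitative: the LOCC indistinguishability bound is $O(t^2/N)$, which is vacuous for the $O(\log\secp)$-qubit unitaries that the parties are allowed to query (and indeed identical versus independent small unitaries \emph{are} LOCC-distinguishable by tomography). Replacing the shared oracle by independent local Haar unitaries in one shot therefore fails; the paper needs a two-step argument: first use the adaptive LOCC indistinguishability theorem to strip out all queries of length $\omega(\log\secp)$ (turning them into locally sampled unitaries), and then, for the remaining ``short'' scheme, have the attacker perform process tomography on the $O(\log\secp)$-qubit oracles with polynomially many queries, reducing to a plain-model/common-randomness setting where the hiding--binding trade-off applies. Your sketch contains no mechanism for the small-dimension regime. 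Finally, note that granting both parties adaptive access to $U^{\dagger}$, as you propose, would require LOCC indistinguishability against adaptive inverse queries, which the paper explicitly identifies as an open problem; the separation proved there is for forward-query access (with inverses handled only in the non-adaptive and one-message settings), and your Weingarten-style additive $1$-norm bound would in any case have to be replaced by the multiplicative twirling approximation, which is what lets the paper handle adaptive queries via gate teleportation and post-selection.
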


\noindent Conceptually speaking, the above separation is the best one could hope for, in terms of separating QCCC commitments, since most Microcrypt primitives are implied by PRUs and moreover, QCCC commitments can be constructed from one-way functions~\cite{Naor89CRTPYO}. \\ 

\noindent Another QCCC primitive we address is QCCC key agreement. This is a type of quantum key agreement protocol where the communication is required to be classical. While it is well known that quantum key agreement protocol (with no restriction on the type of communication) is even information-theoretically possible~\cite{BB84}, precisely characterizing the minimal assumptions for achieving classical communication has been an ongoing research direction. We summarize the state of the art below:
\begin{itemize}
    \item {\em Feasibility}: unlike the case of QCCC commitments, it is currently unknown whether we can base QCCC key agreement on weaker assumptions than classical key agreement.   
    \item {\em Limitations}: Recent works~\cite{ACC+22,LLLL24,LLLL25}, presented evidence on the difficulty of constructing quantum key agreement with classical communication from one-way functions. These works fall short of showing a full fledged black-box separation. Another recent work~\cite{AGL24} showed a separation between QCCC commitments and pseudorandom function-like state generators.
\end{itemize}

\noindent We show the following. 

\begin{theorem}[Informal]
\label{thm:main:intro}
There is a black-box separation between pseudorandom unitaries and interactive quantum key agreement with classical communication (QCCC key agreements). 
\end{theorem}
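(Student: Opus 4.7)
The plan is to establish the separation by constructing an oracle $\oracle$ relative to which PRUs exist trivially but no QCCC key agreement is secure. The natural candidate is a family $\{U_k\}_{k \in \{0,1\}^{\secparam}}$ of independent Haar-random $n$-qubit unitaries with $n$ polynomially large in $\secparam$: the construction $G(k,\cdot) := U_k(\cdot)$ is statistically secure against any query-bounded adversary, so the entire content of the theorem is to rule out secure QCCC key agreement relative to $\oracle$.

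Fix a candidate protocol $\Pi = (\alice,\bob)$ with oracle access to $\oracle$, classical communication, and correctness $\Pr[k_A = k_B] \geq 1/2 + \epsilon$. I would construct the eavesdropper $\eve$ as follows: on input the transcript $\tau$, $\eve$ internally samples a \emph{fresh} family $\{U_k'\}$ of Haar-random unitaries (independent of the real $\{U_k\}$ used by $\alice$ and $\bob$), simulates $\bob$'s side of the protocol against $\{U_k'\}$ using rejection sampling until the simulated transcript matches $\tau$, and outputs the key this simulated $\bob$ would produce. For $\eve$ to succeed with advantage $\Omega(\epsilon)$, it suffices to argue that the joint distribution of (transcript, $\bob$'s final key) in the real protocol is statistically close to the joint distribution in which $\alice$ and $\bob$ query \emph{independent} copies of $\oracle$. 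Because the protocol is QCCC, the overall quantum channel producing these outputs from the shared oracle is a separable channel across the $\alice/\bob$ cut, so this closeness is precisely the content of the paper's core LOCC-indistinguishability result for identical versus independent Haar unitaries.

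The main obstacle is to make the LOCC-indistinguishability bound apply across the interactive structure of $\Pi$, where oracle queries and classical messages are interleaved and adaptive in the transcript seen so far. I would first reduce to a single shared key $k$: since $\oracle$ is indexed by a classical label, queries to distinct $U_k$'s are independent and can be handled by a hybrid argument one key at a time. For a fixed key, the combined action of all of $\alice$'s queries (over all rounds) together with her classical-message production is one half of a separable channel, and similarly for $\bob$; applying the core bound to this ``compressed'' two-party channel yields the needed indistinguishability. The subtlety is that the number of queries per key can be $\poly(\secparam)$, so the bound must scale favorably with $n$ (Haar-measure concentration in large dimension), which is why the separation requires $n = \omega(\log \secparam)$. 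Finally, transcript-closeness must be lifted to closeness of $\bob$'s residual state after conditioning on $\tau$ in trace distance; together with correctness, this gives $\eve$ advantage $\Omega(\epsilon) - \negl(\secparam)$ at predicting $k_A = k_B$, contradicting security of $\Pi$ and completing the separation.
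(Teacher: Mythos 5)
There is a genuine gap, and it is precisely the ``flaw'' the paper itself flags in its technical overview: your argument replaces the \emph{entire} shared oracle by independent copies via LOCC indistinguishability, but that bound ($O(t^2/N)$, \Cref{thm:adap_LOCC}) is vacuous when the queried unitary has dimension polynomial in the security parameter, and in that regime the switch is genuinely false --- an LOCC pair can distinguish identical from independent small unitaries by tomography. You try to dodge this by stipulating that the oracle contains only $n$-qubit unitaries with $n = \poly(\secparam)$, but the oracle must be a single fixed infinite object supporting PRUs at \emph{every} security parameter, so it necessarily contains families at all levels; a black-box QCCC key-agreement construction with security parameter $\secp$ is free to query the low levels, whose dimension is $\poly(\secp)$ (or even constant) relative to $\secp$. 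Against such a protocol your eavesdropper fails concretely: if Alice and Bob derive their key from measurements of a shared $O(\log\secp)$-qubit oracle unitary and reconcile classically, then the real (identical-oracle) conditional distribution of Bob's key given the transcript is far from the independent-oracle conditional your rejection-sampling simulator samples from, so the simulated key is essentially uncorrelated with Bob's. Note that such a protocol is still insecure --- but only against an eavesdropper who \emph{learns} the small unitaries, which yours never does.

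The paper closes exactly this hole with a two-step argument (\Cref{lem:nexist_KA}): first, only queries of length $\omega(\log\secp)$ are removed, by having each party locally simulate them with freshly sampled independent Haar unitaries, with the error controlled by \Cref{thm:adap_LOCC} (this is \Cref{lem:reduce_to_shortKA}, and it preserves both completeness and security up to $O(1/\secp)$); second, for the resulting ``short'' protocol, the eavesdropper runs process tomography (\Cref{thm:tomography}) on all remaining $O(\log\secp)$-length unitaries with polynomially many queries, and then outputs the most likely value of Bob's key conditioned on the transcript and the learned oracle, using the conditional product structure of QCCC protocols (\Cref{lem:cond_indep}) together with the statistical-distance lemma (\Cref{lem:SD_lemma}) to relate its success to completeness (\Cref{lem:break_shortKA}). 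Your high-level skeleton (oracle world plus LOCC indistinguishability plus a transcript-conditioned simulator, then the relativization step of \Cref{thm:relativization}) matches the paper's for the large-dimension part, but without the tomography phase for small-dimension queries the separation does not go through; you would need to add it (or an equivalent mechanism that turns the small shared unitaries into publicly learnable randomness) for the proof to be complete.
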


\noindent Unlike~\cite{ACC+22,LLLL25}, our result does not impose restrictions on the number of rounds or the perfect completeness of the key agreements, nor does it rely on any unproven conjectures. Concretely,~\cite{ACC+22} present a separation between (interactive) QCCC key agreements and one-way functions based on a conjecture and~\cite{LLLL25} present a separation between {\em 2-round} key agreement and one-way functions (without assuming any conjecture). We instead separate (interactive) QCCC key agreements and PRUs (without assuming any conjecture). 

\myparagraph{On Strong PRUs and QCCC primitives} We also consider the notion of strong pseudorandom unitaries (strong PRUs)~\cite{MH25} and explore its relationship with QCCC primitives. A strong pseudorandom unitary is associated with a family of  keyed polynomial time quantum circuits $\left\{G(k,\cdot)\}_{k \in \{0,1\}^{\secparam}}\right\}_{\secparam \in \mathbb{N}}$ such that $G(k,\cdot)$ implements an $n$-qubit unitary and moreover, any quantum polynomial time algorithm cannot distinguish whether it has oracle access to $G(k,\cdot),G(k,\cdot)^{\dagger}$, where $k$ is sampled at random, or $(U,U^{\dagger})$, where $U$ is an $n$-qubit Haar random unitary. Clearly, strong PRUs imply the existence of PRUs. 

We rule out a class of constructions of QCCC primitives from strong PRUs. As before, we consider the case of commitments and key agreement.  
\par Consider the following class of constructions of QCCC commitments that use a strong PRU as a black box: 
\begin{itemize}
    \item At the beginning of the protocol, the sender and the receiver makes non-adaptive (or parallel) calls to the strong PRU, 
    \item After that, both the sender and receiver never query the strong PRU for the rest of the protocol.
\end{itemize}
\noindent This class of QCCC commitments (i.e., making only non-adaptive calls to strong PRUs) do not exist.\ifllncs\else\footnote{Interestingly, while the requirement on non-adaptivity might come across as quite restrictive, we do know how to achieve commitment schemes with {\em quantum communication} from PRUs making only non-adaptive queries. The high level description of the scheme is as follows:
\begin{itemize}
    \item In the \textbf{query phase}, the committer and receiver create the state $$\frac{1}{2^{\secp/2}}\sum_k U_k\ket{0^n}_{\reg{C}} \ket{k}_{\reg{R}},$$ where $U_k$ is the PRU corresponding to the key $k$ and $n>\secp$. 
    \item In the \textbf{commitment phase}, the committer sets the following state as: $$\ket{\psi_0}_{\reg{CR}}=\frac{1}{2^{\secp/2}}\sum_k U_k\ket{0^n}_{\reg{C}} \ket{k}_{\reg{R}},\qquad and \qquad\ket{\psi_1}_{\reg{CR}}=\frac{1}{2^{\secp/2}}\sum_k \ket{k}_{\reg{C}} \ket{k}_{\reg{R}}.$$ To commit to bit $b$, the committer sends register $\reg{C}$ to the receiver. 
    \item In the \textbf{reveal phase}, the committer sends $b,\reg{R}$ to the receiver. If $b=0$, the receiver does a swap test between $\reg{CR}$ and their copy of $$\frac{1}{2^{\secp/2}}\sum_k U_k\ket{0^n} \ket{k},$$ which created in the query phase. Otherwise, if $b=1$, the receiver does a swap test between $\reg{CR}$ and $$\frac{1}{2^{\secp/2}}\sum_k \ket{k}\ket{k}.$$ Accept the bit $b$ is the swap test passes. 
\end{itemize}}
\fi
Similar conclusions can be reached for key agreements as well. The proof template of the impossibility result is quite similar to the proof of~\Cref{thm:main:intro} and hence, we omit the proof details. 

\myparagraph{Central Results: LOCC Indistinguishability Frameworks for Unitaries} Central to proving our results is to establishing the so-called LOCC indistinguishability frameworks.\footnote{Here, LOCC refers to \emph{local operations and classical communication}, a well-studied notion in quantum information. It is very similar to QCCC, with the following nuance: LOCC is usually used in an information-theoretic setting, where parties may have unbounded running time and no computational assumptions are made. By contrast, QCCC is usually used for cryptographic primitives.} More broadly, LOCC indistinguishability, often referred to as data hiding in the quantum information community, and also related to distributed quantum property testing in the learning theory community, is a well-studied topic. Nonetheless, existing techniques do not directly apply to our setting. This is partly because, to the best of our knowledge, our work is the \textit{first} to study this problem in the \textit{oracle} setting. In contrast, all prior work considers the \textit{state} setting, where two parties are given quantum states. Much like the gap between constructing PRSs and PRUs, transitioning to the oracle setting introduces significant challenges. We consider two flavors of LOCC indistinguishability frameworks. 
\par We describe the first setting, referred to as \emph{Haar unitary distinguishing game} ($\mathsf{HUD}(N, \alice, \bob)$). There are two parties $\alice$ and $\bob$. $\alice$ receives oracle access to a unitary $U$ and $\bob$ receives oracle access to a unitary $V$. The unitaries $U$ and $V$ are generated as follows: sample a bit $b$ at random. 
\begin{itemize}
    \item If $b=0$: sample $U$ from the Haar distribution on $N$-dimensional unitaries. Set $V=U$. 
    \item If $b=1$: sample $U$ and $V$ {\em independently} from the Haar distribution on $N$-dimensional unitaries. 
\end{itemize}
\par Both  $\alice$ and $\bob$ can perform an arbitrary finite number of rounds of classical communication, during which they can make queries adaptively. In the end, $\alice$ outputs a bit $b'$. The output of the experiment $\mathsf{HUD}(N, \alice, \bob)$ is 1 if $b'=b$. 

\begin{theorem}[Informal]
The probability that $\mathsf{HUD}(N, \alice, \bob)$ outputs 1 is at most $\frac{1}{2} + O\left( \frac{t^2}{N} \right)$, where $t$ is the number of queries made by $\alice$ and $\bob$. 
\end{theorem}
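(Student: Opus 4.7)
My approach reduces the LOCC distinguishing game to an analysis in the path-recording framework of~\cite{MH25}, then bounds the advantage by a birthday-style collision argument. I would first purify all operations so that classical messages are coherently recorded into a shared transcript register, allowing me to view the game as an alternating sequence of unitary moves by $\alice$ (acting on her workspace and her oracle) and $\bob$ (acting on his workspace and his oracle), interleaved with transcript updates. Next, I would replace the Haar-random oracles with path-recording oracles: in the $b=0$ case, both parties' queries are routed through a single database register $\cD$; in the $b=1$ case, they are routed through separate database registers $\cD_\alice$ and $\cD_\bob$. The path-recording theorem guarantees that these substitutions are exact for purified systems, so the remaining task is to compare the shared-database world to the split-database world.

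The heart of the argument is a ``no-collision'' coupling. Define the collision event as the event that some input lies in the support of both $\alice$'s and $\bob$'s query histories. Conditioned on no collision, the shared database $\cD$ in the $b=0$ world decomposes into two disjoint pieces with independent uniformly sampled outputs, which is exactly the structure of $\cD_\alice \otimes \cD_\bob$ in the $b=1$ world. Hence, up to the collision event, the two executions produce identical joint states on $\alice$'s and $\bob$'s accessible registers, and the LOCC distinguishing advantage is bounded by the collision probability.

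Finally, I would bound the collision probability by $O(t^2/N)$ via a per-query analysis. The key property of the path-recording oracle is that fresh queries yield outputs that are uniform on a subspace of dimension $\Omega(N)$; as a consequence, the transcript messages exchanged between $\alice$ and $\bob$ convey information about the output randomness but not about the specific identities of past inputs. Therefore, each new query has probability at most $O(t/N)$ of coinciding with a past query of the other party, which sums to $O(t^2/N)$ over all $t$ queries. The principal technical obstacle lies precisely in this last step: formalizing that classical communication cannot be used to adaptively ``target'' collision inputs requires carefully tracking the correlations between the transcript register and each party's database, and invoking the randomness of the path-recording oracle to argue that no transcript provides actionable information for biasing a future query toward an input the other party has previously used. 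This is where the oracle setting diverges most sharply from the prior state-based analyses of~\cite{ALL22,AGL24} and where the bulk of the technical work should concentrate.
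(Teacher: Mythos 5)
There is a genuine gap, and it sits at the center of your argument: the ``no-collision'' coupling. You define the collision event as some input appearing in both $\alice$'s and $\bob$'s query histories and claim its probability is $O(t^2/N)$, arguing that the transcript does not let a party target the other's past inputs. But $\alice$ and $\bob$ are \emph{cooperating} distinguishers who choose their own query inputs: they can simply agree (even without any communication) to both query $\ket{0^n}$, so the collision event occurs with probability $1$. Your per-query $O(t/N)$ collision bound would be valid if the inputs of one party were hidden or random relative to the other, as in a disjointness argument against an external adversary, but here nothing prevents deliberate overlap. And conditioned on such an overlap the two worlds genuinely differ: with $b=0$ both parties hold copies of $U\ket{0^n}$ (a correlated, symmetric state), while with $b=1$ they hold independent Haar-random states. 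So the distinguishing advantage is \emph{not} bounded by the collision probability; the entire content of the theorem is that LOCC-restricted parties cannot detect this correlation, and your proposal never addresses that case. A symptom of the problem is that your argument makes no essential use of the separability/LOCC constraint: if it were sound, it would bound the advantage of parties with quantum communication as well, contradicting the constant-advantage swap-test attack on identical versus independent Haar outputs.

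Two further points. First, even setting the collision issue aside, the purification route has the difficulty the paper flags explicitly: once both parties' queries are routed through a single path-recording database, $\alice$ and $\bob$ become entangled with a shared purification register that belongs to neither of them, and it is unclear how to keep the analysis inside the LOCC/separable-measurement framework; moreover path recording yields only an additive error (and with inverse queries only $O(t^2/N^{1/8})$), so it is not obvious you could recover the $O(t^2/N)$ rate. Second, for contrast, the paper's proof takes a completely different route that uses separability crucially: it flattens adaptive queries into parallel ones via gate teleportation and post-selection (which works only because the Haar-twirl approximation of \cite{SHH25} has \emph{multiplicative} error), rewrites the conditional success probabilities in terms of permutation operators, applies the inequality of \cite{CGY24,AKY25} comparing $\Symgp_{2t}$ to $\Symgp_t\times\Symgp_t$ traces against PSD (separable) operators to relate the identical and independent worlds multiplicatively, and finishes with a small linear program. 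If you want to salvage a path-recording-style proof, you would need a new ingredient that exploits the separable structure of the final measurement to show the shared-database and split-database states are indistinguishable \emph{even when the query inputs overlap}, which is precisely the hard part your sketch leaves unproved.
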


\noindent The above theorem is formally stated in~\Cref{thm:adap_LOCC}. This improves upon prior works who study this problem in the context of showing separations~\cite{AGL24} and learning theory~\cite{ALL22,GHYZ24,AS24}. Establishing the indistinguishability of different states against LOCC adversaries has a rich literature in quantum information theory (under the umbrella of data hiding schemes)~\cite{BDF+99, DLT02, EW02, GB02, HLS05, MWW09, CLMO13, PNC14, CH14, CLMOW14, HBAB19,Har23}. 

\noindent We now describe the second setting, referred to as \emph{Non-Adaptive Haar unitary distinguishing game} ($\mathsf{NA}-\mathsf{HUD}(N, \alice, \bob)$). This game is similar to $\mathsf{HUD}(N, \alice, \bob)$ except in the following way:
\begin{itemize}
    \item $\alice$ and $\bob$ have additionally access to both $U^{\dagger}$ and $V^{\dagger}$,
    \item $\alice$ and $\bob$ can only make non-adaptive calls to $U^{\dagger}$ and $V^{\dagger}$. 
\end{itemize}
\noindent We show the following: 

\begin{theorem}[Informal]
The probability that $\mathsf{NA}-\mathsf{HUD}(N, \alice, \bob)$ outputs 1 is at most $\frac{1}{2} + O\left( \frac{t^2}{N^{1/8}} \right)$, where $t$ is the maximum number of non-adaptive queries made by $\alice$ or $\bob$.
\end{theorem}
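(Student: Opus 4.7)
The plan is to reduce the $\mathsf{NA}$-$\mathsf{HUD}$ game to the previously analyzed $\HUD$ game by simulating the non-adaptive inverse queries $U^{\dagger}, V^{\dagger}$ from forward queries alone, and to show this costs only $O(t^2 / N^{1/8})$ in distinguishing advantage. As a first step, I would use non-adaptivity to batch each party's inverse queries into a single parallel block of $t$ queries: since the inputs to $U^{\dagger}$ do not depend on the outcomes of other $U^{\dagger}$ queries, one may assume without loss of generality that $\alice$ issues all her $U^{\dagger}$-queries in one shot on a multi-register state $\ket{\Phi_A}$, and symmetrically for $\bob$ on $V^{\dagger}$.

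To simulate this parallel $U^{\dagger}$ block using only forward $U$-queries, the reduction would privately sample $t$ Haar-random ``probe'' states $\ket{\psi_1}, \dots, \ket{\psi_t}$, apply $U$ to each to obtain the pairs $(\ket{\psi_i}, U\ket{\psi_i})$, and use the partial isometry $W := \sum_i \ket{\psi_i}\bra{U\psi_i}$ as a stand-in for $U^{\dagger}$. On any input supported in $\operatorname{span}\{U\ket{\psi_i}\}$, $W$ agrees exactly with $U^{\dagger}$; in general one must bound the weight of $\ket{\Phi_A}$ on the orthogonal ``unknown'' subspace and argue that $U^{\dagger}$ acts there like a fresh Haar isometry whose image can be replaced by a random state independent of the transcript, up to small error.

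The main technical obstacle, and the source of the weaker dimensional dependence, is precisely this simulation-error bound. I would carry it out inside the path-recording framework of \cite{MH25}, in which the Haar unitary is purified into a partial matching between input and output registers that is updated with each forward query. In that picture, an inverse-query input decomposes into a recorded component (on which $U^{\dagger}$ is reproduced exactly by inverting the matching) and an orthogonal component (on which $U^{\dagger}$ behaves as a Haar isometry onto an unseen subspace). A measure-concentration argument controls the weight of $\ket{\Phi_A}$ on the unrecorded subspace; passing from fidelity to trace distance (costing a square root), summing over $t$ queries on each party, and combining with $\bob$'s side then gives an $O(t^2/N^{1/8})$ overhead, where the suboptimal $1/8$ exponent reflects the nested root losses in the hybrid.

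Once inverse queries have been replaced by this simulator, the resulting protocol is a plain instance of $\HUD(N, \alice, \bob)$, to which the earlier bound of $\tfrac{1}{2} + O(t^2/N)$ applies. Adding the two error contributions then yields the claimed $\tfrac{1}{2} + O(t^2/N^{1/8})$ bound.
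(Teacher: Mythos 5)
Your high-level plan --- use non-adaptivity to trade access to $U^\dagger$ for access to an independent Haar unitary, then fall back on the forward-only $\HUD$ bound --- is indeed the route the paper takes (its mixed-twirling theorem and the resulting corollary that $(U^{\otimes k}\otimes U^{\dagger,\otimes k})$ is close in diamond norm to $(U^{\otimes k}\otimes V^{\otimes k})$, followed by a hybrid chain). But as written there are two genuine gaps. First, the probe-state simulator does no work: the inverse-query inputs are fixed before any oracle access, so their overlap with $\mathrm{span}\{U\ket{\psi_i}\}$ is only $O(t/N)$, and essentially all of the weight lands in your ``unknown'' subspace. Your entire argument therefore rests on the claim that, \emph{jointly} with the $t$ parallel forward queries to $U$, the action of $U^\dagger$ on such inputs looks like a fresh Haar unitary. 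That is exactly the paper's mixed-twirling theorem, which requires a delicate sequence of path-recording hybrids (modified isometries $\wt{V}^L,\wt{V}^R$, showing the $V^{L,\dagger}$ error terms annihilate the relevant states, and identifying the result with a purification of the two-permutation approximation), with the $N^{1/8}$ inherited from Ma--Huang's approximation guarantee rather than from fidelity-to-trace-distance losses in your own hybrid. A concentration bound on the weight of $\ket{\Phi_A}$ in the unrecorded subspace does not establish this: the difficulty is the correlation between the forward and inverse answers through the shared $U$, not the size of the recorded overlap, so your central estimate is asserted rather than proved.

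Second, the step ``once inverse queries have been replaced by this simulator, the resulting protocol is a plain instance of $\HUD$'' mishandles the $b=0$ world. There both parties query the \emph{same} $U^\dagger$, so their inverse answers are correlated; replacing them by two independently and locally sampled simulators silently converts a shared oracle into independent ones, and bounding that change is itself an identical-versus-independent LOCC problem that cannot be charged to a per-party simulation error. The paper handles this with a five-step hybrid chain: the mixed-twirl corollary is applied with $(\alice,\bob)$ treated as a single entity (so $U^\dagger$ is replaced by one \emph{shared} fresh unitary), and then the adaptive forward-only theorem is invoked twice to decouple the shared replacements before the inverses are reinstated on each side. Some version of that bookkeeping is needed in your reduction as well; without it, the claim that what remains is a plain $\HUD$ instance does not hold.
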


\noindent We also consider another setting, where $\alice$ sends just one message to $\bob$. However, $\alice$ (resp., $\bob$) has access to $(U,U^{\dagger})$ (resp., $(V,V^{\dagger})$) and moreover, they can query both the oracle adaptively. A recent work~\cite{GMMY24} implicitly studies this setting. 

\paragraph{Inverse Adaptive Queries: Obstacles.} Proving LOCC indistinguishability in the presence of inverse adaptive queries remains an open problem. We explain briefly the obstacles to obtaining LOCC indistinguishability for the generalised case. First, existing techniques from state-based settings do not extend to our oracle model. The two natural approaches to attempt the generalized setting are using approximation formulas (like the twirling formula~\cite{HY24,SHH25}) and using purification (like path-recording framework~\cite{MH25}). The approximation idea relies on using a multiplicative approximation for Haar twirling. For forward-only queries, one can obtain such a result using the Weingarten calculus~\cite{HY24,SHH25}; however, no analogous closed-form or approximation is known for the mixed twirling case involving both $U$ and $U^\dagger$, making the inverse-query setting substantially harder. For the purification route, it is unclear whether purification techniques can help at all, once we purify the interaction with a common unitary, both parties become entangled with the shared purification register, making it difficult to analyze the protocol within the LOCC framework. Moreover, as far we know, path-recording only gives us an additive approximation. We believe that significantly new ideas are needed to tackle the inverse adaptive setting. 

\paragraph{Related and Concurrent Work on LOCC Indistinguishability:} Prior work by~\cite{AGL24,ALL22} studied the LOCC indistinguishability of Haar-random states. Specifically, they showed that two parties restricted to LOCC protocols cannot distinguish whether they share identical Haar states or independently sampled ones. In a concurrent and independent work~\cite{GZ25} consider an LOCC indistinguishability game for Haar swap unitaries\footnote{Haar swap unitaries are the unitary that swaps $\ket{0}$ with a Haar-random state $\ket{\psi}$.}. Specifically, they showed that two parties restricted to LOCC protocols cannot distinguish whether they share oracle access to identical Haar swap unitaries or independently sampled ones. Their techniques are similar in spirit to those used in~\cite{AGL24}. While the result in~\cite{GZ25} is incomparable to ours, we note that indistinguishability of Haar swap unitaries is arguably a weaker task than that of Haar unitaries, since swap unitaries capture the power of Haar states, whereas Haar unitaries are believed to be strictly stronger.

\section{Technical Overview}
\noindent We first discuss the main ideas behind the central results (i.e. Haar indistinguishability frameworks) in the paper before discussing their applications to proving separations. 
\par All of our Haar indistinguishability frameworks follow a similar template that we describe below: 
\begin{itemize}
    \item {\bf Step 1}: in the first step, we express the probability of the output of the experiment in both the identical (i.e., both $\alice$ and $\bob$ receive access to the same oracle) and the independent (i.e., both $\alice$ and $\bob$ receive access to the independent oracles) settings in terms of permutation operators. A permutation operator of the form $P(\pi)$, for $\pi:[t] \rightarrow [t]$ is an $nt$-qubit unitary that permutes the $t$ blocks (each of size $n$) according to the permutation $\pi$. 
    \item {\bf Step 2}: in the second step, we express the probability that $\alice$ outputs an outcome $b$ in the identical experiment in terms of {\em multiplicative factor} of the probability that $\alice$ outputs $b$ in the independent experiment.
    \item {\bf Step 3}: once we obtain the relations between the probabilities in both the experiments, we then solve for an optimization problem to complete the proof.
\end{itemize}
\noindent The crux of the proof of our Haar indistinguishability lemmas lies in Step 2. For the warmup case (\Cref{sec:warmup:state}), we can directly rely upon existing results whereas for the more general cases (\Cref{sec:hud},~\Cref{sec:nahud}), we need to make non-trivial observations using Haar twirling approximation formula~\cite{SHH25}.

\subsection{Warm-up: Haar State Distinguishing Game}
\label{sec:warmup:state}
\noindent We first consider a simpler version of the Haar unitary distinguishing game, that we refer to as Haar {\em state} distinguishing game. 
\par In this simpler version, $\alice$ and $\bob$ receive as input $\ket{\psi}^{\otimes t}$ and $\ket{\phi}^{\otimes t}$ respectively. Here, $\ket{\psi}$ is sampled from the Haar distribution. Depending on the challenge bit, $\ket{\phi}=\ket{\psi}$ or $\ket{\phi}$ is sampled independently from the Haar distribution on $n$-qubit states. Prior work~\cite{ALL22,AGL24} showed that the probability that $(\alice,\bob)$ can predict the challenge bit correctly is at most $\frac{1}{2} + O\left( \frac{t^2}{N} \right)$, where $N = 2^n$. We will give an alternate proof of this fact. 
\par Since LOCC protocols with classical outputs can be cast as separable measurements, we will assume without loss of generality that the protocol can be implemented by the following separable measurement: $\{ \Lambda^0 := \sum_{a \in \Sigma} M_a^0 \otimes N_a^0,\allowbreak \Lambda^1 := \sum_{a \in \Sigma} M_a^1 \otimes N_a^1\}$, where $\Lambda^0 + \Lambda^1$ is the identity. Here, $M_a^0,M_a^1 \succeq 0$ act on $\alice$'s registers, and $N_a^0,N_a^1 \succeq 0$ act on $\bob$'s register. According to~\cite{Harrow13church}, $\alice$ and $\bob$'s state are: 
\begin{align*}
\rho & = \frac{1}{(N + 2t - 1)^{\downarrow 2t}} \sum_{\pi \in \Symgp_{2t}} P_N(\pi) \\
\sigma & = \frac{1}{(N + t - 1)^{\downarrow t}} \sum_{\pi_A \in \Symgp_{t}} P_N(\pi_A) 
\otimes \frac{1}{(N + t - 1)^{\downarrow t}} \sum_{\pi_B \in \Symgp_{t}} P_N(\pi_B).
\end{align*}
Here, $N^{\downarrow t} = N!/(N-t)!$ denotes the falling factorial and $P_N(\pi)$ is a permutation operator that permutes the blocks according to a permutation $\pi$. 
\par We need to show the following: 
\[
\frac{\Tr(\Lambda^0 \rho) + \Tr(\Lambda^1 \sigma)}{2} 
\leq \frac{1}{2} + O\qty(\frac{t^2}{N}).
\]
From~\cite{CGY24,AKY25}, we have that: 
\begin{align*}
& \Tr\left( \sum_{\pi \in \Symgp_{2t}} \left( \Pi^A \otimes \Pi^B \right) P_N(\pi) \right) \\
& \geq \Tr \left( \sum_{\pi_A \in \Symgp_t} \Pi^A \cdot P_N(\pi_A) \right) \cdot \Tr \left( \sum_{\pi_B \in \Symgp_t} \Pi^B \cdot P_N(\pi_B) \right),
\end{align*}
for any two positive semidefinite (PSD) matrices $\Pi^A$ and $\Pi^B$. Using this fact along with the fact that $\frac{(N + 2t - 1)^{\downarrow 2t}}{(N + t - 1)^{\downarrow t}(N + t - 1)^{\downarrow t}} \leq 1 + \veps$, where $\veps = O\left(\frac{t^2}{N} \right)$, we have the following:
\begin{align*}
\left(1 + \veps \right) \cdot \Tr\left( \Lambda^0 \rho \right) 
\geq \Tr\left( \Lambda^0 \sigma \right), \\
\left(1 + \veps \right) \cdot \Tr\left( \Lambda^1 \rho \right) 
\geq \Tr\left( \Lambda^1 \sigma \right).
\end{align*}
\noindent Since $\Tr\left( \Lambda^0 \rho \right) + \Tr\left( \Lambda^1 \rho \right) = 1$ and $\Tr\left( \Lambda^0 \sigma \right) + \Tr\left( \Lambda^1 \sigma \right) = 1$, we have
\begin{align*}
& \Tr(\Lambda^0 \rho) + \Tr(\Lambda^1 \sigma)
\leq \Tr(\Lambda^0 \rho) + (1 + \veps) \cdot \Tr(\Lambda^1 \rho)
\leq 1 + \veps,
\end{align*}
which completes the proof.

\subsection{Haar Unitary Distinguishing Game} 
\label{sec:hud}

Now, we move to the oracle setting in which $\alice$ and $\bob$ each have oracle access, but not the inverse, to either the same Haar random unitary $U$ ($b = 0$) or two independent Haar random unitaries $U,V$ ($b = 1$). For a formal definition, see~\Cref{def:HUDgame}. For simplicity, let's first consider the non-adaptive setting in which $\alice$ and $\bob$ both have to make queries \emph{all at once} in the beginning. Then they lose the oracle access after they start communicating. At the end, they locally perform a two-outcome measurement to generate the output. In other words, suppose they each makes $t$ queries with the initial state being $\rho \otimes \sigma$, the resulting state is either 
\begin{align*}
& \text{$b = 0$:} \quad \Ex_{U \sim \haarunitaries(N)}\qty[ U^{\otimes t}\rho U^{\dagger, \otimes t} \otimes U^{\otimes t}\sigma U^{\dagger, \otimes t} ]
\quad \text{or} \\
& \text{$b = 1$:} \quad \Ex_{U \sim \haarunitaries(N)}\qty[ U^{\otimes t}\rho U^{\dagger, \otimes t} ] \otimes \Ex_{V \sim \haarunitaries(N)} \qty[ V^{\otimes t}\sigma V^{\dagger, \otimes t} ].
\end{align*}
Fortunately, the closed form of them is well-studied in the literature. For any density matrix $\rho$ in $(\C^N)^{\otimes k}$, it is known that
\[
\Ex_{U \sim \haarunitaries(N)}\qty[ U^{\otimes k}\rho U^{\dagger, \otimes k} ] 
= \sum_{\pi,\tau \in \Symgp_k} \mathrm{Wg}(\pi,\tau,N) \cdot \Tr(P_N(\tau)^\intercal \cdot \rho) \cdot P_N(\pi),
\]
where $\mathrm{Wg}(\pi,\tau,N)$ is the Weingarten function~\cite{Wei83,Col03,CS06}. Despite having the exact expression, the calculation of the Weingarten function is often complicated. Recent works~\cite{HY24,SHH25} derive the following approximation formula
\[
\Phi_{\sf approx}(\rho) := \frac{1}{N^k} \sum_{\pi\in\Symgp_k} \Tr( P_N(\pi)^\intercal \cdot \rho ) \cdot P_N(\pi),
\]
which dramatically simplifies our analysis. After the query phase, the joint state of $(\alice,\bob)$ can be approximated by
\ifllncs
\begin{align*}
& \text{$b = 0$:} \quad \frac{1}{N^{2t}} \sum_{\pi\in\Symgp_{2t}} \Tr( P_N(\pi)^\intercal \cdot \rho \otimes \sigma ) \cdot P_N(\pi) \\
& \text{$b = 1$:} \quad \frac{1}{N^t} \sum_{\pi_A\in\Symgp_t} \Tr( P_N(\pi_A)^\intercal \cdot \rho ) \cdot P_N(\pi_A) \\
& \hspace{.4\textwidth} \otimes \frac{1}{N^t} \sum_{\pi_B\in\Symgp_t} \Tr( P_N(\pi_B)^\intercal \cdot \sigma ) \cdot P_N(\pi_B).
\end{align*}
\else
\begin{align*}
& \text{$b = 0$:} \quad \frac{1}{N^{2t}} \sum_{\pi\in\Symgp_{2t}} \Tr( P_N(\pi)^\intercal \cdot \rho \otimes \sigma ) \cdot P_N(\pi) \\
& \text{$b = 1$:} \quad \frac{1}{N^t} \sum_{\pi_A\in\Symgp_t} \Tr( P_N(\pi_A)^\intercal \cdot \rho ) \cdot P_N(\pi_A) \otimes \frac{1}{N^t} \sum_{\pi_B\in\Symgp_t} \Tr( P_N(\pi_B)^\intercal \cdot \sigma ) \cdot P_N(\pi_B).
\end{align*}
\fi
The classical communication together with the final measurement corresponds to performing a separable measurement of the form $\{\sum_{a \in \Sigma} M^0_a \otimes N^0_a$, $\sum_{a \in \Sigma} M^1_a \otimes N^1_a \}$ on their state, where $\Sigma$ is an alphabet\footnote{One can add dummy zero operators to equalize the number of summands in the two measurement elements.} and $M^0_a$, $N^0_a$, $M^1_a$, $N^1_a$ are PSD operators for all $a \in \Sigma$ such that $M^0_a,M^1_a$ (\resp $N^0_a,N^1_a$) act on $\alice$'s (\resp $\bob$'s) system. Now, the probability $p_{b'|b}$ of $(\alice,\bob)$ outputting $b' \in \bit$ conditioned on the challenge being $b$ can be approximated by\footnote{Since $\Phi_{\sf approx}$ is not trace-preserving, $q_{0|b} + q_{1|b}$ is not necessarily equal to $1$.}
\ifllncs
\begin{align*}
& q_{b'|0} := \frac{1}{N^{2t}} \sum_{a \in \Sigma, \pi\in\Symgp_{2t}} \Tr( P_N(\pi)^\intercal \cdot \rho \otimes \sigma ) \cdot \Tr( P_N(\pi) \cdot M_a^{b'} \otimes N_a^{b'} ), \\
& q_{b'|1} := \quad \frac{1}{N^{2t}} \sum_{a \in \Sigma,\pi_A,\pi_B\in\Symgp_t} \Tr( P_N(\pi_A)^\intercal \cdot \rho ) \cdot \Tr( P_N(\pi_A) \cdot M_a^{b'} ) \\
& \hspace{.4\textwidth} \cdot \Tr( P_N(\pi_B)^\intercal \cdot \sigma ) \cdot \Tr( P_N(\pi_B) \cdot N_a^{b'} ).
\end{align*}
\else
\begin{align*}
& q_{b'|0} := \frac{1}{N^{2t}} \sum_{a \in \Sigma, \pi\in\Symgp_{2t}} \Tr( P_N(\pi)^\intercal \cdot \rho \otimes \sigma ) \cdot \Tr( P_N(\pi) \cdot M_a^{b'} \otimes N_a^{b'} ), \\
& q_{b'|1} := \quad \frac{1}{N^{2t}} \sum_{a \in \Sigma,\pi_A,\pi_B\in\Symgp_t} \Tr( P_N(\pi_A)^\intercal \cdot \rho ) \cdot \Tr( P_N(\pi_A) \cdot M_a^{b'} ) \cdot \Tr( P_N(\pi_B)^\intercal \cdot \sigma ) \cdot \Tr( P_N(\pi_B) \cdot N_a^{b'} ).
\end{align*}
\fi
According to~\cite{SHH25}, for all $b,b' \in \bit$, $p_{b'|b}$ and $q_{b'|b}$ satisfy
\[
(1 - \veps) q_{b'|b} \leq p_{b'|b} \leq (1 + \veps) q_{b'|b}
\]
for $\veps = O(t^2/N)$. Next, we derive a relation between $q_{b'|0}$ and $q_{b'|1}$. Using the fact that $\Tr(X^\intercal Y) = \Tr(XY^\intercal)$, for any $a \in \Sigma$ and $\pi \in \Symgp_{2t}$,
\begin{align*}
& \Tr( P_N(\pi)^\intercal \cdot \rho \otimes \sigma ) \cdot \Tr( P_N(\pi) \cdot M_a^{b'} \otimes N_a^{b'} ) \\
& = \Tr( P_N(\pi) \otimes P_N(\pi) \cdot M_a^{b'} \otimes N_a^{b'} \otimes \rho^\intercal \otimes \sigma^\intercal ) \\
& = \Tr( P_{N^2}(\pi) \cdot ( M_a^{b'} \otimes \rho^\intercal )  \otimes ( N_a^{b'} \otimes \sigma^\intercal ) ),
\end{align*}
where the last equality follows from re-ordering the registers. That is, we can view $P_N(\pi) \otimes P_N(\pi)$ as a permutation $P_{N^2}(\pi)$ defined over $(\C^{N^2})^{\otimes 2t}$. Moreover, $\rho^\intercal, \sigma^\intercal \succeq 0$ since taking transpose preserves positivity. Now, summing over $a \in \Sigma$ and $\pi \in \Symgp_{2t}$ (\resp $\pi_A, \pi_B \in \Symgp_t$), we can use the same idea as in the Haar state distinguishing game to obtain $q_{0|0} \geq q_{0|1}$ and $q_{1|0} \geq q_{1|1}$. Finally, using the above bounds and elementary calculation, we obtain
\[
p_{1|1} \leq (1 + \veps) q_{1|1} 
\leq (1 + \veps) q_{1|0} 
\leq \frac{1 + \veps}{1 - \veps} p_{1|0}
= \frac{1 + \veps}{1 - \veps} (1 - p_{0|0})
\leq 1 + 4\veps - p_{0|0},
\]
which implies the guessing probability $(p_{0|0} + p_{1|1})/2$ is at most $1/2 + 2\veps$. \\

\noindent For the general case where $\alice$ and $\bob$ are allowed to use ancilla and make queries during the communication, most of the work is dedicated to expressing $(\alice,\bob)$'s guessing probability in a form that fits the above template. Looking ahead, the proof crucially relies on the fact that the approximation error in~\cite{SHH25} is \emph{multiplicative} which allows us to apply post-selections to relate the guessing probability of an adaptive querier to the one of a non-adaptive querier.

\subsection{Non-Adaptive Invertible Haar Unitary Distinguishing Game}
\label{sec:nahud}
We consider the non-adaptive setting in which $\alice$ and $\bob$ have forward access to the oracle, as well as its \emph{inverse}, but must make all queries simultaneously before communication. For a formal definition, see~\Cref{def:NAInvHUDgame}. Our technique is to show that oracle access to $(U,U^\dagger)$ is indistinguishable from $(U,V)$ under a polynomial number of \emph{non-adaptive} queries. Then using a standard hybrid, we can reduce the distinguishing game with the inverse to the one without the inverse. Informally, this is a quantum analogue of the indistinguishability between $(\pi,\pi^{-1})$ and $(\pi,\pi')$ for random permutations $\pi$ and $\pi'$ under non-adaptive queries. It is not hard to see that indistinguishability holds as long as the distinguisher does not query $\pi(x)$ to the second oracle for some query $x$ to the first oracle. Since the distinguisher must decide the queries to the second oracle, denoted by $\cQ$, ahead of time, we can perturb the first oracle so that it never returns any values belonging to $\cQ$. We follow this intuition and use the path-recording framework~\cite{MH25} to formalize the proof.

\subsection{Non-Interactive Invertible Haar Unitary Distinguishing Game} 
\label{sec:overview:ni:hud}
We consider the non-interactive setting where $\alice$ can make adaptive queries to her oracle and its inverse. Next, $\alice$ sends an $m$-bit message $\tau$ to $\bob$. Then $\bob$ can make adaptive queries to his oracle and its inverse. For a formal definition, see~\Cref{def:NIInvHUDgame}. Unlike our previous strategies, the proof relies on the concentration of the Haar measure, which has been extensively used in many recent works~\cite{Kretschmer21,CM24TCC,GMMY24,GHYZ24}. Operationally, the concentration property implies the following. Suppose $\sfC$ is an oracle quantum algorithm that makes $t$ queries to a Haar random unitary $U$ and outputs a bit. Now, if we sample a Haar random unitaries $U$, the probability $\Pr[1 \gets \sfC^U]$ will be $\veps$-close to a \emph{constant} $\Ex_{V\sim\haarunitaries(N)}[\Pr[1 \gets \sfC^V]]$ with probability at least $1 - \exp(-\Omega(N\veps^2/t^2))$, which is \emph{doubly-exponentially} small in the length of $U$. We observe that this property holds even when $\sfC$ has access to $U^\dagger$, which will be leveraged in our proof. 

In the Haar distinguishing game, we can view $\bob(\tau)$ as a $t$-query algorithm for any possible message $\tau$ sent by $\alice$. Now, for any $\tau \in \bit^m$, we define the set of ``bad'' oracles $U$ such that $\Pr[1 \gets \bob^{U,U^\dagger}(\tau)]$ is $\veps$-far away from the constant $\Ex_{V\sim\haarunitaries(N)}[\Pr[1 \gets \bob^{V,V^\dagger}(\tau)]]$. The concentration property ensures that the probability of a Haar random $U$ being ``good'' relative to \emph{all} $\tau \in \bit^m$ is at least $1 - 2^m\exp(-\Omega(N\veps^2/t^2))$ using the union bound. As a result, conditioning on $\alice$ is given a ``good'' oracle $U$, the probability $\Pr[1 \gets \bob^{U,U^\dagger}(\tau)]$ (when $b = 0$) is $\veps$-close to $\Ex_{V\sim\haarunitaries(N)}[\Pr[1 \gets \bob^{V,V^\dagger}(\tau)]]$ (when $b = 1$) for \emph{every} $\tau$. Careful readers may notice that the above argument holds regardless of how many queries $\alice$ makes, \ie $\alice$ could learn the description of $U$ and send the optimal $\tau$. Finally, by picking $\veps$ properly, we prove the distinguishing advantage is $O\qty( \poly(m,t,\log N) / \sqrt{N})$.




\subsection{From LOCC Indistinguishability to Black-Box Separations}
We discuss how to translate the LOCC indistinguishability frameworks into black-box separations. We first discuss the case of separating QCCC commitments and PRUs. Similar ideas can be adapted to rule out QCCC key agreement and PRUs as well. 
\par To rule out QCCC commitments, we consider the following model: the sender $\sen$ and the receiver $\rec$ in the QCCC commitments have access to exponentially many i.i.d Haar unitaries ${\cal U} = \{U_k\}_{k \in [\ell]}$, where the unitary $U_k$ is an $k$-qubit Haar unitary and $\ell=2^{\secparam}$. Here, $\sen$ and $\rec$ run in time polynomial in $\secparam$. A recent work~\cite{ABGL24} showed that pseudorandom unitaries exist in this model. We show that QCCC commitments do not exist in this model. 
\par To show the impossibility of QCCC commitments, we crucially use the Haar indistinguishability result from~\Cref{sec:hud}. Instead of both $\sen$ and $\rec$ having oracle access to the same set of unitaries $\{U_k\}_{k \in [\ell]}$, we now have $\sen$ get access to $\{U_k\}_{k \in [\ell]}$ and $\rec$ get access to ${\cal V} = \{V_k\}_{k \in [\ell]}$, where $V_k$ is a $k$-qubit Haar unitary sampled independently from $U_k$. This should not affect the correctness or security at all, thanks to the Haar indistinguishability result. Concretely, the Haar indistinguishability result needs to be invoked polynomially (in $\secparam$) many times, for each unitary the adversary queries to. 
\par Now since both $\sen$ and $\rec$ receive independently sampled Haar random unitaries, instead of giving oracle access to ${\cal U}$ and ${\cal V}$, we can now have $\sen$ itself sample ${\cal U}$ and similarly, have $\rec$ itself sample ${\cal V}$. In other words, we were able to completely compile out the oracles to obtain a correct and secure commitment scheme, satisfying information-theoretic security, in the plain model. Since information-theoretic security is impossible in the plain model, this would prove that QCCC commitments do not exist in the plain model. 
\par Unfortunately, there is a flaw in the above argument. The flaw lies at the point where we invoke the Haar indistinguishability result. Note that it was shown that the probability that any adversary succeeds in the Haar distinguishing game is at most $O\left( \frac{t^2}{N} \right)$, where $N$ is the dimension of the unitary it has access to. This means that if $N$ is too small (i.e. $\leq t^2$) then invoking the Haar indistinguishability result is meaningless. This means that we cannot switch from $U_k$ to $V_k$, for very small $k$. Indeed, this should not be surprising at all since it is possible for a LOCC adversary, using tomography, to distinguish the two cases when both have access to $U_k$ versus when $\sen$ has access to $U_k$ and $\rec$ has access to $V_k$, where $k=O(\log(\secparam))$.  
\par To address the flaw, we revise the above argument and incorporate the following two step approach:
\begin{itemize}
    \item In the first step, we transform the QCCC commitment scheme, where both $\sen$ and $\rec$ have access to ${\cal U}=\{U_k\}_{k \in [\ell]}$ into another secure and correct commitment scheme, where both have access to $\{U_k\}_{k \in [c \cdot (\log(\secparam))]}$, for some constant $c$. That is, we rid of all the unitaries with input length $> c \cdot (\log(\secparam))$. To achieve this transformation, similar to the above argument, we invoke the Haar indistinguishability result. 
    \item In the second step, we prove the impossibility of any QCCC commitment scheme where both $\sen$ and $\rec$ have access to $\{U_k\}_{k \in [O(\log(\secparam))]}$. Informally, the impossibility holds due to the fact that both $\sen$ and $\rec$ can separately perform tomography on the unitaries they have access to, in order to recover an approximate representation of all the unitaries $\{U_k\}_{k \in [O(\log(\secparam))]}$. At the end of the tomography process, both $\sen$ and $\rec$ have access to noisy shared randomness. We then argue that realizing QCCC commitments in the noisy share randomness model is impossible. Conceptually, this is similar to the model where both $\sen$ and $\rec$ have access to the same random string and it is known that achieving information-theoretic security in this setting is impossible. 
\end{itemize}
\noindent We follow a similar template to prove the impossibility of a certain class of constructions of QCCC commitments from strong PRUs. 
\section{Preliminaries} \label{sec:prelim}

We denote the security parameter by $\secp$. We refer readers to~\cite{nielsen_chuang_2010,WatrousBook} for the basics of quantum information.

\subsection{Notation}

\myparagraph{Indexing, sets, and the symmetric group} We use the notation $[n]$ to refer to the set $\set{ 1, \ldots, n }$.  For $N, t \in \N$, we denote the falling factorial by $N^{\downarrow t} := N!/(N-t)!$. We use $\Symgp_t$ to refer to the symmetric group of $t$ elements (i.e. the group of all permutations of $t$ elements). For a tuple $\vec{x} = (x_1,\dots,x_t)$, we define its support set as $\Supp(\vec{x}) := \bigcup_{i \in [t]} \set{x_i}$, which consists of the distinct elements appearing in $\vec{x}$.

\myparagraph{Quantum information}
A register $\reg{R}$ is a named finite-dimensional Hilbert space.  If $\reg{A}$ and $\reg{B}$ are registers, then $\reg{AB}$ denotes the tensor product of the two associated Hilbert spaces. 

We define $\mathrm{L}(\hilbert_\regA, \hilbert_\regB)$ as the set of all linear operators mapping a Hilbert space $\hilbert_\regA$ to a Hilbert space $\hilbert_\regB$. Similarly, we define $\mathrm{L}(\hilbert_\regA)$ as the set of all linear operators mapping a Hilbert space $\hilbert_\regA$ to itself. We define $\Unitary(\hilbert_\regA)$ as the set of all unitary operators acting on a Hilbert space $\Unitary(\hilbert_\regA)$. For $N \in \N$, we define $\Unitary(N)$ as the set of all $N$-dimensional unitary operators. We define $\mathrm{CP}(\hilbert_\regA, \hilbert_\regB)$ as the set of all linear completely positive maps from $\mathrm{L}(\hilbert_\regA)$ to $\mathrm{L}(\hilbert_\regB)$. Similarly, we define $\mathrm{C}(\hilbert_\regA, \hilbert_\regB)$ as the set of all quantum channels from $\mathrm{L}(\hilbert_\regA)$ to $\mathrm{L}(\hilbert_\regB)$. 

We denote by $\TD(\rho, \rho') = \frac{1}{2} \norm{\rho - \rho'}_1$ the trace distance between operators $\rho$ and $\rho'$, where $\norm{X}_1 = \Tr(\sqrt{X^{\dagger} X})$ is the trace norm. For two pure states $\ket{\psi}$ and $\ket{\phi}$, which may not be normalized, we use $\TD\qty(\ket{\psi},\ket{\phi})$ as a shorthand for $\TD \qty( \projector{\psi}, \projector{\phi} )$. By $\norm{ \cdot }_\diamond$ we denote the diamond norm, which has the following operation meaning. 
\begin{fact}\cite{AKN98}
\label{fact:diamondnorm}
Let $\cE_0$ and $\cE_1$ be quantum channels. Then the maximum distinguishing advantage of running the channel once is given by $1/2 \cdot \norm{ \cE_0 - \cE_1 }_\diamond$. Furthermore, when the channel is run $q$ times, the advantage is at most $q/2 \cdot \norm{ \cE_0 - \cE_1 }_\diamond$.
\end{fact}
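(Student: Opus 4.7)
The plan is to establish the two claims separately. For the single-shot claim, start from the definition $\|\Phi\|_\diamond = \sup_\rho \|(\Phi \otimes I_{\reg{R}})(\rho)\|_1$, where the supremum is over density operators $\rho$ on the input system tensored with an arbitrary reference register $\reg{R}$. Any single-shot distinguisher can be cast, without loss of generality, as preparing a state $\rho$ on an input register $\reg{A}$ together with an ancilla $\reg{R}$, applying the unknown channel to $\reg{A}$, and then measuring a two-outcome POVM $\{M, I - M\}$ on $\reg{AR}$. By the Holevo--Helstrom theorem, the optimal measurement on two states with equal prior achieves $|\Pr[1\mid \cE_0] - \Pr[1\mid \cE_1]| = \frac{1}{2}\|(\cE_0 \otimes I_{\reg{R}})(\rho) - (\cE_1 \otimes I_{\reg{R}})(\rho)\|_1$, and maximizing over $\rho$ yields exactly $\frac{1}{2}\|\cE_0 - \cE_1\|_\diamond$. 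I would also mention, in one line, that it suffices to consider pure bipartite $\rho$ by convexity of the trace norm.

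For the $q$-shot claim, use a hybrid argument. Model any $q$-query distinguisher $\cA$ as a sequence of fixed unitaries $V_0, V_1, \dots, V_q$ acting on a workspace register, interleaved with $q$ invocations of the unknown channel on a designated subregister. Define hybrid experiments $H_0, H_1, \dots, H_q$, where in $H_k$ the first $k$ invocations use $\cE_0$ and the remaining $q-k$ use $\cE_1$. For each $k$, the trace distance between the final states of $H_k$ and $H_{k+1}$ is at most $\frac{1}{2}\|\cE_0 - \cE_1\|_\diamond$: up to just before the $(k+1)$-st invocation, both experiments produce the same (possibly mixed) state $\tau_k$ on the workspace, after which the only difference is whether $\cE_0$ or $\cE_1$ is applied to the designated subregister while the rest of the workspace serves as reference, and all subsequent $V_j$'s and channel invocations are identical in both experiments and hence do not increase trace distance. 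Summing the $q$ hybrid bounds via the triangle inequality gives the claimed $\frac{q}{2}\|\cE_0 - \cE_1\|_\diamond$.

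The only mildly delicate step is justifying that the reference register in the diamond norm definition is general enough to absorb arbitrary workspaces, ancillas, and all earlier oracle calls into a ``virtual'' input state for the $(k+1)$-st query. This is immediate from two standard facts that I would record explicitly before the hybrid: first, monotonicity of the trace norm under CPTP maps (so the post-processing $V_{k+1}, \cE_b, V_{k+2}, \dots, V_q$, being identical in both hybrids, only decreases distance); second, that an arbitrary bipartite state $\tau_k$ on $\reg{A}\reg{R}$ is a valid input in the diamond-norm supremum. With these in hand, the hybrid step becomes entirely mechanical, and the single-shot bound combined with the triangle inequality delivers the fact as stated.
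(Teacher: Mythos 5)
The paper does not actually prove this fact---it is stated with a citation to \cite{AKN98}---and your argument is the standard proof of that cited result: the Holevo--Helstrom theorem combined with the definition of the diamond norm (with a reference register) for the single-use bound, and a hybrid/triangle-inequality argument for $q$ uses. Both parts are correct, including the key point that adjacent hybrids share an identical prefix and suffix, so the workspace and prior oracle answers can be absorbed into the reference system of the diamond-norm supremum and the common post-processing discarded by monotonicity of trace distance under CPTP maps.
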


We write $A^\intercal$ to denote the transpose of $A$. We write $\ptrans_{\reg{Y}}(A_{\reg{XY}})$ to denote the partial transpose (\wrt $\reg{Y}$) of $A_{\reg{XY}}$.\footnote{We note that the partial transpose is basis-dependent. Throughout this work, all partial transpose operations are defined over the computational basis. That is, $\ptrans_{\reg{Y}}( \ketbra{i}{j}_{\reg{X}} \otimes \ketbra{k}{\ell}_{\reg{Y}} ) = \ketbra{i}{j}_{\reg{X}} \otimes \ketbra{\ell}{k}_{\reg{Y}}$.} We write $A \succeq 0$ to denote that $A$ is a positive semidefinite (PSD) operator. Similarly, we write $A \succeq B$ to indicate that $A - B$ is a PSD operator.

For positive integers $N, t \in \N$ and permutation $\sigma \in \Symgp_t$, we let $P_N(\sigma)_{\reg{X_1}\ldots\reg{X_t}} \in \Unitary(N^t)$ be the unitary that acts on registers $\reg{X_1}, \dots, \reg{X_t}$, where each $\reg{X_i}$ associated with an $N$-dimensional Hilbert space, by permuting the registers according to $\sigma$. That is,
\begin{equation*}
    P_N(\sigma)_{\reg{X_1}\ldots\reg{X_t}} \ket{x_1}_{\reg{X_1}} \ket{x_2}_{\reg{X_2}} \dots \ket{x_t}_{\reg{X_t}}= \ket{x_{\sigma^{-1}(1)}}_{\reg{X_1}} \ket{x_{\sigma^{-1}(2)}}_{\reg{X_2}} \dots \ket{x_{\sigma^{-1}(t)}}_{\reg{X_t}}.
\end{equation*}

\myparagraph{Unitary, LOCC, and separable channels}
\begin{definition}[Unitary channels]
Let $\hilbert_{\reg{A}}$ be a Hilbert space and let $U \in \Unitary(\hilbert_{\reg{A}})$ be a unitary operator. The \emph{unitary channel} $\cE_U \in \mathrm{C}(\hilbert_{\reg{A}},\hilbert_{\reg{A}})$ is defined to be $\cE_U(X) := UXU^\dagger$ for every $X \in \mathrm{L}(\hilbert_{\reg{A}})$.
\end{definition}

\begin{definition}[Separable maps and channels~{\cite[Definitions~6.17,6.20]{WatrousBook}}]
\label{def:sep_maps}
Let $\hilbert_{\reg{A_0}},\hilbert_{\reg{A_1}},\hilbert_{\reg{B_0}}$, and $\hilbert_{\reg{B_1}}$ be Hilbert spaces. The set $\mathrm{SepCP}(\hilbert_{\reg{A_0}}, \hilbert_{\reg{A_1}}: \hilbert_{\reg{B_0}}, \hilbert_{\reg{B_1}})$ is defined to be the set of all completely positive maps of the form $\cE \in \mathrm{CP}(\hilbert_{\reg{A_0}} \otimes \hilbert_{\reg{B_0}}, \hilbert_{\reg{A_1}} \otimes \hilbert_{\reg{B_1}})$ for which there exists an alphabet $\Sigma$ and collections of completely positive maps $\set{\Psi_a \in \mathrm{CP}(\hilbert_{\reg{A_0}}, \hilbert_{\reg{A_1}}) }_{a \in \Sigma}$ and $\set{\Phi_a \in \mathrm{CP}(\hilbert_{\reg{B_0}}, \hilbert_{\reg{B_1}}) }_{a \in \Sigma}$ such that
\[
\cE = \sum_{a \in \Sigma} \Psi_a \otimes \Phi_a.
\]
Elements of the set $\mathrm{SepCP}(\hilbert_{\reg{A_0}}, \hilbert_{\reg{A_1}}: \hilbert_{\reg{B_0}}, \hilbert_{\reg{B_1}})$ are called \emph{separable maps}. \\

\noindent Define $\mathrm{SepC}(\hilbert_{\reg{A_0}}, \hilbert_{\reg{A_1}}: \hilbert_{\reg{B_0}}, \hilbert_{\reg{B_1}}) 
:= \mathrm{SepCP}(\hilbert_{\reg{A_0}}, \hilbert_{\reg{A_1}}: \hilbert_{\reg{B_0}}, \hilbert_{\reg{B_1}}) 
\cap \mathrm{C}(\hilbert_{\reg{A_0}} \otimes \hilbert_{\reg{B_0}}, \hilbert_{\reg{A_1}} \otimes \hilbert_{\reg{B_1}})$. Elements of the set $\mathrm{SepC}(\hilbert_{\reg{A_0}}, \hilbert_{\reg{A_1}}: \hilbert_{\reg{B_0}}, \hilbert_{\reg{B_1}})$ are called \emph{separable channels}.
\end{definition}

\begin{proposition}[{\cite[Proposition~6.18]{WatrousBook}}]
\label{prop:sep_kraus}
Let $\hilbert_{\reg{A_0}},\hilbert_{\reg{A_1}},\hilbert_{\reg{B_0}}$, and $\hilbert_{\reg{B_1}}$ be Hilbert spaces and let $\cE \in \mathrm{CP}(\hilbert_{\reg{A_0}} \otimes \hilbert_{\reg{B_0}}, \hilbert_{\reg{A_1}} \otimes \hilbert_{\reg{B_1}})$ be a completely positive map. It holds that 
\[
\cE \in \mathrm{SepCP}(\hilbert_{\reg{A_0}}, \hilbert_{\reg{A_1}}: \hilbert_{\reg{B_0}}, \hilbert_{\reg{B_1}})
\]
if and only if there exists an alphabet $\Sigma$ and collections of operators
\[
\set{A_a \in \mathrm{L}(\hilbert_{\reg{A_0}},\hilbert_{\reg{A_1}})}_{a \in \Sigma}
\quad \text{and} \quad
\set{B_a \in \mathrm{L}(\hilbert_{\reg{B_0}},\hilbert_{\reg{B_1}})}_{a \in \Sigma}
\]
such that
\[
\cE(X) = \sum_{a \in \Sigma} (A_a \otimes B_a) X (A_a \otimes B_a)^\dagger
\]
for every $X \in \mathrm{L}(\hilbert_{\reg{A_0}} \otimes \hilbert_{\reg{B_0}}, \hilbert_{\reg{A_1}} \otimes \hilbert_{\reg{B_1}})$.
\end{proposition}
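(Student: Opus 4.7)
The plan is to prove the two directions of this equivalence separately, with both reductions amounting to standard applications of the Kraus (operator-sum) representation theorem for completely positive maps to each tensor factor.

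For the easy direction ($\Leftarrow$), suppose $\cE(X) = \sum_{a \in \Sigma}(A_a \otimes B_a) X (A_a \otimes B_a)^\dagger$. I would define single-Kraus-operator maps $\Psi_a(X) := A_a X A_a^\dagger \in \mathrm{CP}(\hilbert_{\reg{A_0}}, \hilbert_{\reg{A_1}})$ and $\Phi_a(Y) := B_a Y B_a^\dagger \in \mathrm{CP}(\hilbert_{\reg{B_0}}, \hilbert_{\reg{B_1}})$, each of which is manifestly completely positive since any single-Kraus map is CP. Checking on product inputs gives $(\Psi_a \otimes \Phi_a)(X \otimes Y) = A_a X A_a^\dagger \otimes B_a Y B_a^\dagger = (A_a \otimes B_a)(X \otimes Y)(A_a \otimes B_a)^\dagger$, which extends by linearity to every $Z \in \mathrm{L}(\hilbert_{\reg{A_0}} \otimes \hilbert_{\reg{B_0}})$. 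Summing over $a \in \Sigma$ then exhibits $\cE$ as $\sum_{a \in \Sigma} \Psi_a \otimes \Phi_a$, so $\cE \in \mathrm{SepCP}(\hilbert_{\reg{A_0}},\hilbert_{\reg{A_1}} : \hilbert_{\reg{B_0}},\hilbert_{\reg{B_1}})$ per Definition \ref{def:sep_maps}.

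For the forward direction ($\Rightarrow$), suppose $\cE = \sum_{a \in \Sigma} \Psi_a \otimes \Phi_a$ with each $\Psi_a, \Phi_a$ completely positive. I would invoke the operator-sum representation on each tensor factor separately to obtain, for every $a \in \Sigma$, finite index sets $I_a, J_a$ and operators $A_{a,i} \in \mathrm{L}(\hilbert_{\reg{A_0}}, \hilbert_{\reg{A_1}})$ and $B_{a,j} \in \mathrm{L}(\hilbert_{\reg{B_0}}, \hilbert_{\reg{B_1}})$ such that $\Psi_a(X) = \sum_{i \in I_a} A_{a,i} X A_{a,i}^\dagger$ and $\Phi_a(Y) = \sum_{j \in J_a} B_{a,j} Y B_{a,j}^\dagger$. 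Expanding on product operators gives $(\Psi_a \otimes \Phi_a)(X \otimes Y) = \sum_{i,j}(A_{a,i} \otimes B_{a,j})(X \otimes Y)(A_{a,i} \otimes B_{a,j})^\dagger$; by linearity the analogous identity holds on every $Z \in \mathrm{L}(\hilbert_{\reg{A_0}} \otimes \hilbert_{\reg{B_0}})$. Summing over $a$ and re-indexing by the (possibly larger) alphabet $\Sigma' := \set{(a, i, j) : a \in \Sigma,\, i \in I_a,\, j \in J_a}$ with operators $A'_{(a,i,j)} := A_{a,i}$ and $B'_{(a,i,j)} := B_{a,j}$ produces the desired tensor-product Kraus decomposition of $\cE$.

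I do not anticipate a substantive obstacle here: the proposition is essentially a repackaging of the Kraus representation theorem applied to each factor of a separable map. The only small bookkeeping point is that the alphabet $\Sigma$ guaranteed by the separable-map definition is \emph{a priori} different from the alphabet $\Sigma'$ indexing the Kraus operators in the forward direction, but since the statement only asserts existence of some alphabet the re-indexing is harmless. Well-definedness in the backward direction (that a sum of tensor-product conjugations yields a CP map, which is what is being claimed to be separable) is automatic from the form itself.
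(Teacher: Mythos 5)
Your proof is correct, and it is the standard argument: the paper does not prove this proposition at all but imports it verbatim from Watrous's book (Proposition~6.18), whose proof proceeds exactly as yours does --- Kraus-decomposing each completely positive factor and re-indexing over the product alphabet in one direction, and noting that single-Kraus conjugations $X \mapsto A_a X A_a^\dagger$, $Y \mapsto B_a Y B_a^\dagger$ are completely positive in the other. So there is nothing to reconcile; your write-up would serve as a self-contained proof of the cited fact.
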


\begin{definition}[One-way right/left LOCC channel~{\cite[Definition~6.25]{WatrousBook}}]
Let $\hilbert_{\regA},\hilbert_{\reg{A'}},\hilbert_{\regB}$, and $\hilbert_{\reg{B'}}$ be Hilbert spaces and let $\cE \in \mathrm{C}(\hilbert_{\regA} \otimes \hilbert_{\regB}, \hilbert_{\reg{A'}} \otimes \hilbert_{\reg{B'}})$ be a channel. The channel $\cE$ is a \emph{one-way right LOCC channel} if there exists an alphabet $\Sigma$ and a collection $\set{ \Phi_a \in \mathrm{CP}(\hilbert_{\regA}, \hilbert_{\reg{A'}}) }_{a \in \Sigma}$ of completely positive maps satisfying $\sum_{a \in \Sigma} \Phi_a \in \mathrm{C}(\hilbert_{\regA}, \hilbert_{\reg{A'}})$ along with a collection $\set{ \Psi_a \in \mathrm{C}(\hilbert_{\regB}, \hilbert_{\reg{B'}}) }_{a \in \Sigma}$ of channels, such that
\[
\cE = \sum_{a \in \Sigma} \Phi_a \otimes \Psi_a.
\]

\noindent \emph{One-way left LOCC channels} are defined symmetrically be switching $\regA$ with $\regB$ and $\reg{A'}$ with $\reg{B'}$.
\end{definition}

Note that every one-way left/right LOCC channel is a separable channel according to~\Cref{def:sep_maps}.

\myparagraph{Haar measure} We denote by $\haarstates_N$ the Haar distribution over $N$-dimensional states, and $\haarunitaries_N$ the Haar measure over the unitary group $\Unitary(N)$ (\ie the unique left and right invariant measure).

\ifllncs
\else
\myparagraph{Tensor Network Diagrams} Tensor network diagrams provide a graphical representation of tensor operations. We introduce some frequently used diagrams below, where for $\reg{X}$ and $\reg{X'}$ such that $\hilbert_{\reg{X}} = \hilbert_{\reg{X'}} = d_X$, the \emph{non-normalized} maximally entangled state is denoted by $\ket{\Omega}_{\reg{XX'}} := \sum_{x \in [d_X]} \ket{x}_{\reg{X}} \ket{x}_{\reg{X'}}$. For a comprehensive introduction, see~\cite[Section~6]{Mel24}.

\begin{figure}[H]
\centering
\begin{subfigure}{0.3\textwidth} 
\centering
\resizebox{1\textwidth}{!}{%
\begin{circuitikz}
\tikzstyle{every node}=[font=\LARGE]
\draw [short] (8,7.25) .. controls (9.25,7.75) and (9.25,9.25) .. (8,9.75);
\draw  (1.75,8.5) rectangle (4.25,6);
\draw  (5.5,8.5) rectangle (8,6);
\draw (4.25,7.25) to[short] (5.5,7.25);
\node [font=\LARGE] at (3,7.25) {$A$};
\node [font=\LARGE] at (6.75,7.25) {$B$};
\draw (1.75,9.75) to[short] (8,9.75);
\draw [short] (1.75,9.75) .. controls (0.5,9.25) and (0.5,7.75) .. (1.75,7.25);
\end{circuitikz}
}%
\caption{Trace: $\Tr(AB)$}
\end{subfigure}
\hfill 
\begin{subfigure}{0.3\textwidth} 
\centering
\resizebox{1\textwidth}{!}{%
\begin{circuitikz}
\tikzstyle{every node}=[font=\LARGE]
\draw [short] (5.5,4.75) .. controls (6.75,5.25) and (6.75,6.75) .. (5.5,7.25);
\draw  (1.75,8.5) rectangle (4.25,6);
\draw (4.25,7.25) to[short] (5.5,7.25);
\node [font=\Huge] at (3,7.25) {$A$};
\draw (1.75,9.75) to[short] (5.5,9.75);
\draw [short] (1.75,9.75) .. controls (0.5,9.25) and (0.5,7.75) .. (1.75,7.25);
\draw (1.75,4.75) to[short] (5.5,4.75);
\draw [short] (-0.75,7.25) .. controls (1,6.75) and (0,5.25) .. (1.75,4.75);
\draw (-2,7.25) to[short] (-0.75,7.25);
\draw (8,7.25) to[short] (9.25,7.25);
\draw [short] (5.5,9.75) .. controls (7.25,9.25) and (6.25,7.75) .. (8,7.25);
\end{circuitikz}
}%
\caption{Transpose: $A^\intercal$}
\end{subfigure}
\hfill
\begin{subfigure}{0.3\textwidth} 
\centering
\resizebox{1\textwidth}{1.5cm}{%
\begin{circuitikz}
\tikzstyle{every node}=[font=\LARGE]
\draw  (1.25,9) rectangle (2.5,7.75);
\draw (4.5,8.5) to[short] (2.5,8.5);
\draw (-0.75,8.5) to[short] (1.25,8.5);
\draw  (1.25,7.5) rectangle (2.5,6.25);
\draw (4.5,7) to[short] (2.5,7);
\draw (-0.75,7) to[short] (1.25,7);
\node [font=\LARGE] at (1.8,8.4) {$A$};
\node [font=\LARGE] at (1.8,6.9) {$B$};
\end{circuitikz}
}%
\caption{Tensor Product: $A \otimes B$}
\end{subfigure}
\\ 
\begin{subfigure}{0.3\textwidth} 
\centering
\resizebox{1\textwidth}{!}{%
\begin{circuitikz}
\tikzstyle{every node}=[font=\LARGE]
\draw  (1.25,9) rectangle (3.75,6.5);
\node [font=\LARGE] at (2.5,7.75) {$A$};
\draw (5,7) to[short] (3.75,7);
\draw (0,7) to[short] (1.25,7);
\draw (0,6) to[short] (5,6);
\draw (3.75,8.5) to[short] (6.25,8.5);
\draw (-1.25,8.5) to[short] (1.25,8.5);
\draw [short] (5,7) .. controls (5.75,6.75) and (5.75,6.25) .. (5,6);
\draw [short] (0,7) .. controls (-0.75,6.75) and (-0.75,6.25) .. (0,6);
\end{circuitikz}
}%
\caption{Partial Trace: $\Tr_{\reg{Y}}(A_{\reg{XY}})$}
\end{subfigure}
\hfill 
\begin{subfigure}{0.3\textwidth} 
\centering
\resizebox{1\textwidth}{!}{%
\begin{circuitikz}
\tikzstyle{every node}=[font=\LARGE]
\draw  (1.25,9) rectangle (3.75,6.5);
\node [font=\LARGE] at (2.5,7.75) {$A$};
\draw (5,7) to[short] (3.75,7);
\draw (0,7) to[short] (1.25,7);
\draw (0,5.25) to[short] (6.25,5.25);
\draw (-1.25,6) to[short] (5,6);
\draw (3.75,8.5) to[short] (6.25,8.5);
\draw (-1.25,8.5) to[short] (1.25,8.5);
\draw [short] (5,7) .. controls (5.75,6.75) and (5.75,6.25) .. (5,6);
\draw [short] (0,7) .. controls (-0.75,6.75) and (-0.75,5.5) .. (0,5.25);
\end{circuitikz}
}%
\caption{Partial Transpose: $\ptrans_{\reg{Y}}(A_{\reg{XY}})$}
\end{subfigure}
\hfill
\begin{subfigure}{0.3\textwidth} 
\centering
\resizebox{1\textwidth}{1.5cm}{%
\begin{circuitikz}
\tikzstyle{every node}=[font=\LARGE]
\draw (4.5,8.5) to[short] (2.5,8.5);
\draw (-0.75,8.5) to[short] (1.25,8.5);
\draw (4.5,7) to[short] (2.5,7);
\draw (-0.75,7) to[short] (1.25,7);
\draw [short] (1.25,8.5) .. controls (1.75,8.25) and (1.75,7.25) .. (1.25,7);
\draw [short] (2.5,8.5) .. controls (2,8.25) and (2,7.25) .. (2.5,7);
\end{circuitikz}
}%
\caption{$\projector{\Omega}_{\reg{XX'}}$}
\end{subfigure}
\caption{Examples of tensor network diagrams.}
\end{figure}
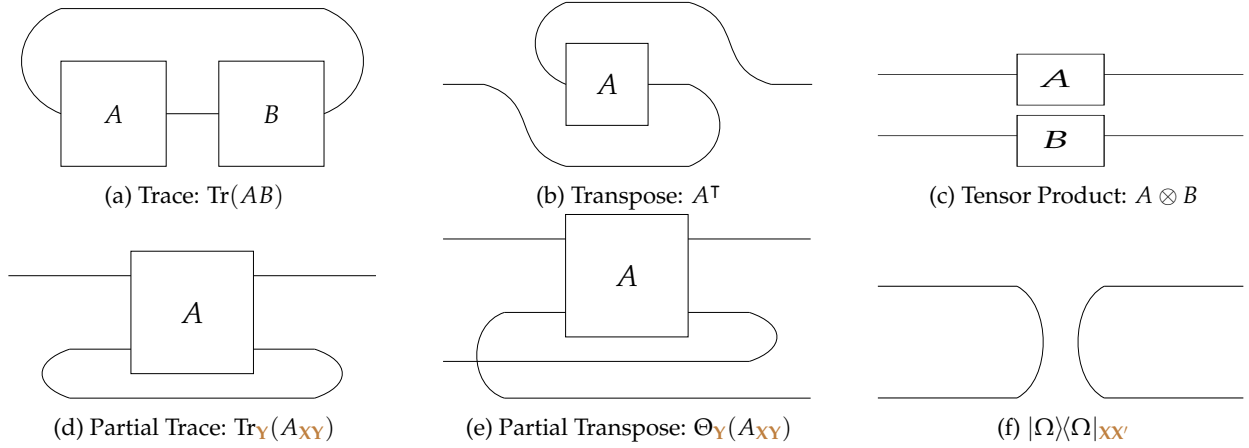 

\fi

\begin{fact}[Ricochet Property]
\label{fact:Ricochet}
Let $\hilbert_\regX$ and $\hilbert_{\regX'}$ be Hilbert spaces of equal dimension $d_X$. For any $A \in \mathrm{L}(\hilbert_\regX)$, $(A_\regX \otimes \id_{\reg{X'}}) \ket{\Omega}_{\reg{XX'}} = (\id_\regX \otimes A^\intercal_{\reg{X'}}) \ket{\Omega}_{\reg{XX'}}$. Furthermore, by re-ordering registers, we have the following variants:
\begin{align*}
& \sum_{x\in[d_X]} A_\regX \ketbra{x}{\psi}_\regX \otimes \ketbra{x}{\phi}_{\reg{X'}}
= \sum_{x\in[d_X]} \ketbra{x}{\psi}_\regX \otimes A^\intercal_{\reg{X'}} \ketbra{x}{\phi}_{\reg{X'}} \\
& \sum_{x\in[d_X]} A_\regX \ketbra{x}{\psi}_\regX \otimes \ketbra{\phi}{x}_{\reg{X'}}
= \sum_{x\in[d_X]} \ketbra{x}{\psi}_\regX \otimes \ketbra{\phi}{x}_{\reg{X'}} A_{\reg{X'}} \\
& \sum_{x\in[d_X]} \ketbra{\psi}{x}_\regX A_\regX \otimes \ketbra{x}{\phi}_{\reg{X'}}
= \sum_{x\in[d_X]} \ketbra{\psi}{x}_\regX \otimes A_{\reg{X'}} \ketbra{x}{\phi}_{\reg{X'}} \\
& \sum_{x\in[d_X]} \ketbra{\psi}{x}_\regX A_\regX \otimes \ketbra{\phi}{x}_{\reg{X'}}
= \sum_{x\in[d_X]} \ketbra{\psi}{x}_\regX \otimes \ketbra{\phi}{x}_{\reg{X'}} A^\intercal_{\reg{X'}}
\end{align*}
where $\ket{\psi}$ and $\ket{\phi}$ are arbitrary (possibly unnormalized) vectors.
\end{fact}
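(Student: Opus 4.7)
The plan is to first establish the base identity $(A_\regX \otimes \id_{\reg{X'}}) \ket{\Omega}_{\reg{XX'}} = (\id_\regX \otimes A^\intercal_{\reg{X'}}) \ket{\Omega}_{\reg{XX'}}$ by direct expansion in the computational basis, and then obtain the four displayed variants by a routine ``outer-product'' manipulation. For the base identity, I would write $A = \sum_{i,j} A_{ij}\, \ketbra{i}{j}$, so that
\[
(A_\regX \otimes \id_{\reg{X'}}) \ket{\Omega}_{\reg{XX'}} \;=\; \sum_{x}\sum_{i,j} A_{ij}\, \ket{i}\!\braket{j}{x}_\regX \otimes \ket{x}_{\reg{X'}} \;=\; \sum_{i,x} A_{ix}\, \ket{i}_\regX \otimes \ket{x}_{\reg{X'}},
\]
and symmetrically $(\id_\regX \otimes A^\intercal_{\reg{X'}}) \ket{\Omega}_{\reg{XX'}} = \sum_{i,x} A_{xi}\, \ket{x}_\regX \otimes \ket{i}_{\reg{X'}}$, which agree after swapping the dummy indices $i \leftrightarrow x$.

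For the four variants, the key observation is that $\sum_{x\in[d_X]} \ket{x}_\regX \otimes \ket{x}_{\reg{X'}} = \ket{\Omega}_{\reg{XX'}}$, so each sum can be rewritten as an outer product of $\ket{\Omega}$ with some bra vectors placed on the appropriate sides. For instance, for the first variant I would write
\[
\sum_{x} A_\regX \ketbra{x}{\psi}_\regX \otimes \ketbra{x}{\phi}_{\reg{X'}}
\;=\; (A_\regX \otimes \id_{\reg{X'}}) \ket{\Omega}_{\reg{XX'}} \bigl( \bra{\psi}_\regX \otimes \bra{\phi}_{\reg{X'}} \bigr),
\]
apply the base identity to the leading ket-factor, and recollect the sum into $\sum_x \ketbra{x}{\psi}_\regX \otimes A^\intercal_{\reg{X'}} \ketbra{x}{\phi}_{\reg{X'}}$. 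The remaining three variants follow by exactly the same template: depending on whether the bras $\bra{x}$ appear on the left or right sides of $\ketbra{\cdot}{\cdot}_\regX$ and $\ketbra{\cdot}{\cdot}_{\reg{X'}}$, one instead multiplies $\ket{\Omega}_{\reg{XX'}}$ by the appropriate combination of bras/kets on the left and right, and reads off whether $A$ or $A^\intercal$ appears on $\reg{X'}$ using the base identity (the transpose toggles precisely when $A$ acts on the same side of $\ketbra{x}{\cdot}$ as the $\ket{x}$ factor being ricocheted).

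I do not anticipate any obstacle; the statement is a standard consequence of the defining property of the unnormalized maximally entangled vector, and the variants are purely bookkeeping on which side of which $\ketbra{x}{\cdot}$-factor the operator $A$ sits, with the transpose appearing exactly when $A$ is moved through the bra side of an $\ket{x}$ factor. The only care needed is to keep register labels and the direction of the bras/kets consistent when re-ordering the sum.
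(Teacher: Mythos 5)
Your proposal is correct. Note that the paper states this Ricochet Property as a Fact in the preliminaries without providing any proof, so there is no authorial argument to compare against; your direct computational verification is exactly the standard justification one would supply. Two small remarks: your component expansion of the base identity $(A_\regX \otimes \id_{\reg{X'}})\ket{\Omega} = (\id_\regX \otimes A^{\intercal}_{\reg{X'}})\ket{\Omega}$ is fine, but the outer-product template $\ket{\Omega}(\bra{\psi}\otimes\bra{\phi})$ literally covers only the first variant, where both $\ket{x}$ factors appear as kets. In the remaining three variants one of the $\ket{x}$'s sits in a bra, so the relevant object is the partial adjoint $\sum_x \ket{x}_\regX\bra{x}_{\reg{X'}}$ rather than $\ket{\Omega}$ itself; these cases are handled either by the same index expansion you already carried out (e.g.\ $\sum_x A\ket{x}_\regX\bra{x}_{\reg{X'}} = \sum_x \ket{x}_\regX\bra{x}_{\reg{X'}}A_{\reg{X'}}$, with no transpose) or by taking adjoints of the first variant, and your stated rule of when $A$ versus $A^{\intercal}$ appears does match all four displayed identities. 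This is bookkeeping, not a gap.
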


\subsection{Ma-Huang's Path Recording Framework}

Before we recall the isometries described by~\cite{MH25}, we first set up some notation. A \emph{relation} $R$ is defined as a \emph{multiset} $R = \{(x_1,y_1), \dots, (x_t, y_t) \}$ of ordered pairs $(x_i, y_i) \in [N]^2$. The \emph{domain} of a relation $R$ is the \emph{set} $\Dom(R) := \set{x: x \in [N], \exists y\ s.t.\ (x,y) \in R }$. The \emph{image} of a relation $R$ is the \emph{set} $\Im(R) := \set{y: y \in [N], \exists x\ s.t.\ (x,y) \in R }$. For a relation $R = \{(x_1,y_1), \dots, (x_t, y_t) \}$, define the corresponding \emph{relation state} $\ket{R}$ to be 
\[
\ket{R}_\regR 
:= \frac{ \sum_{\pi \in \Symgp_t} \ket{x_{\pi^{-1}(1)}}_{\reg{R_{X,1}}} \ket{y_{\pi^{-1}(1)}}_{\reg{R_{Y,1}}} \dots \ket{x_{\pi^{-1}(t)}}_{\reg{R_{X,t}}} \ket{y_{\pi^{-1}(t)}}_{\reg{R_{Y,t}}} }{\sqrt{t! \prod_{(x,y)\in[N]^2} \num(R,(x,y))!}},
\]
where $\num(R,(x,y))$ denotes the number of times the tuple $(x,y)$ appears in $R$. If the tuples in a relation $R = \{(x_1,y_1), \dots, (x_t, y_t) \}$ are all distinct, then 
\[
\ket{R}_\regR = \frac{1}{\sqrt{t!}} \sum_{\pi \in \Symgp_t} \ket{x_{\pi^{-1}(1)}}_{\reg{R_{X,1}}} \ket{y_{\pi^{-1}(1)}}_{\reg{R_{Y,1}}} \dots \ket{x_{\pi^{-1}(t)}}_{\reg{R_{X,t}}} \ket{y_{\pi^{-1}(t)}}_{\reg{R_{Y,t}}}.
\]
We define the following two partial isometries: let $x,y \in [N]$ and $L, R$ be two relations, 
\begin{align*}
V^L \cdot \ket{x}_{\regA} \ket{L}_{\regL} \ket{R}_{\regR} 
& = \frac{1}{\sqrt{N - |\Im(L \cup R)|}} \sum_{y \notin \Im(L \cup R)} \ket{y}_{\reg{A}} \ket{L \cup \set{(x,y)}}_{\regL} \ket{R}_{\regR}, \\
V^R \cdot \ket{y}_{\regA} \ket{L}_{\regL} \ket{R}_{\regR} 
& = \frac{1}{\sqrt{N - |\Dom(L \cup R)|}} \sum_{x \notin \Dom(L \cup R)} \ket{x}_{\regA} \ket{L}_{\regL} \ket{R \cup \set{(x,y)}}_{\regR}.
\end{align*}

\noindent Using $V^L$ and $V^R$, they define the \emph{path-recording oracle}, which is a partial isometry:
\[
V := V^L \cdot (\id - V^R \cdot V^{R,\dagger}) + (\id - V^L \cdot V^{L,\dagger}) \cdot V^{R,\dagger}.
\]
They then showed that the following theorem:

\begin{theorem}[{\cite[Theorem~8]{MH25}}]
\label{thm:MH24}
Let $\cA$ be an oracle quantum algorithm that makes a total of $t$ forward and inverse queries, and let $\rho^{\cA,t}_{\reg{AB}}$ be the final density matrix when $\cA$ is interacting with a Haar random unitary, $\ket{\cA^V_t}_{\reg{ABLR}}$ be the final state when $\cA$ is interacting with $V$. Then 
\[
\TD( \rho^{\cA,t}_{\reg{AB}}, \Tr_{\reg{LR}}(\projector{\cA^V_t}_{\reg{ABLR}}) ) \leq O\qty(\frac{t^2}{N^{1/8}}).
\]
\end{theorem}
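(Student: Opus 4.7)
The plan is to establish the theorem by showing that the path-recording isometry $V$ provides a faithful simulation of a Haar random unitary from the algorithm's perspective, in the spirit of Zhandry's compressed oracle technique for random functions. I would place the two experiments side by side: in the real experiment, the algorithm $\cA$ interacts with $(U, U^\dagger)$ for a Haar random $U \sim \haarunitaries(N)$; in the idealized experiment, $\cA$ interacts with $V$, with the $\regL\regR$ registers initialized to the empty relation state $\ket{\emptyset}_\regL \ket{\emptyset}_\regR$. The goal is then to compare, query by query, the reduced state on $\reg{AB}$ produced under the two processes and show that the accumulated trace distance is at most $O(t^2/N^{1/8})$.

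The core technical step is to match the two sides using representation theory. On the real side, Haar twirling together with Schur-Weyl duality expresses $\avg_U [U^{\otimes k} X U^{\dagger \otimes k}]$, and more generally the expectation relevant to the mixed forward/inverse query pattern, as a sum over permutation operators $P_N(\sigma)$ weighted by Weingarten coefficients. On the ideal side, the relation states $\ket{R}_\regR$ are symmetric superpositions over orderings of the recorded pairs, so tracing $\regR$ (and $\regL$) out produces sums over $\Symgp$ that coincide at leading order with the Weingarten expansion. I would then argue inductively that after each query the two reduced states drift apart by at most $O(t/N^{1/8})$ in trace distance, telescoping to the claimed $O(t^2/N^{1/8})$ bound over $t$ queries via \Cref{fact:diamondnorm}.

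The main obstacle is that $V$ is only a partial isometry: it fails on inputs that would create an \emph{inconsistent} relation, namely a forward query on $\ket{x}$ when $x \in \Dom(L \cup R)$ or an inverse query on $\ket{y}$ when $y \in \Im(L \cup R)$, and these are precisely the ``collision'' events that need to be ruled out. Bounding the amplitude mass that falls outside the domain of $V$ is the crux of the argument and requires a careful birthday-style analysis of collisions between fresh queries and already-recorded pairs. I expect this to be compounded by the mixed forward/inverse structure: the definition $V = V^L(\id - V^R V^{R,\dagger}) + (\id - V^L V^{L,\dagger})V^{R,\dagger}$ is designed to enforce consistency between the two directions, but the interference terms between the two summands must be analyzed jointly rather than independently. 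The final $N^{1/8}$ exponent (rather than the more optimistic $N^{1/2}$ one might expect from a pure collision bound) most likely emerges from optimizing a threshold parameter that trades off the probability of collisions in a truncated ``good'' event against the magnitude of the subleading Weingarten corrections outside that event, and I would allocate the bulk of the proof effort to carefully tuning this threshold.
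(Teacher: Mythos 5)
The theorem you are proving is not proved in this paper at all: it is imported as Theorem~8 of \cite{MH25}, so the comparison has to be against Ma--Huang's original argument. Measured against that, your plan has two genuine gaps. First, your core technical step --- expanding the real experiment's mixed forward/inverse moments via Weingarten calculus and matching them, query by query, against partial traces of relation states --- is exactly the step that is not available: mixed moments in $U$ and $U^\dagger$ are governed by the walled Brauer algebra rather than by sums over $\Symgp_k$ with Weingarten weights, and no closed form or usable (multiplicative) approximation of the kind your matching step needs is known. This paper says as much in its discussion of obstacles for inverse adaptive queries, and it is precisely why the present work only handles inverse queries non-adaptively, via the hybrid argument of \Cref{thm:mixed_twirling} rather than via any mixed-twirl formula.

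Second, the per-query telescoping with drift $O(t/N^{1/8})$ per query is not sound as described: the oracle is a single unitary correlated across all $t$ queries, not fresh randomness at each query, so there is no well-defined intermediate experiment that is ``Haar for the first $i$ queries and path-recording thereafter''; constructing such hybrids requires a consistent joint simulation of both processes, which is exactly what the path-recording purification is designed to supply and cannot be presupposed. Ma--Huang argue globally rather than query by query: they show the adversary's state under $V$ is supported on injective relations, then exploit the two-sided unitary invariance of the Haar measure (a twirled path-recording oracle) to relate the purified process to the true Haar interaction; birthday-style collision bounds enter only as one ingredient, and the exponent $1/8$ comes from composing several approximation steps (including fidelity-to-trace-distance conversions), not from tuning a truncation threshold against subleading Weingarten corrections. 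As written, your proposal names the right objects but the announced route --- mixed-moment Weingarten expansion plus per-query hybrids --- would not go through.
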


\section{LOCC Indistinguishability of Haar Random Unitaries}
\label{sec:LOCC}
In this section, we show that any two-party LOCC adversary that makes a polynomial number of queries is unable to distinguish whether they are given oracle access to the same Haar random unitary or to independent Haar random unitaries in various settings. We formalize two-party LOCC adversaries with oracle access using the following definition.

\begin{definition}[LOCC channels \wrt fixed oracles]
Let $\hilbert_{\reg{A_{in}}}$, $\hilbert_{\reg{A_{out}}}$, $\hilbert_{\reg{B_{in}}}$, $\hilbert_{\reg{B_{out}}}$, $\hilbert_{\reg{O_A}}$, and $\hilbert_{\reg{O_B}}$ be Hilbert spaces and let $U \in \Unitary(\hilbert_{\reg{O_A}}), V \in \Unitary(\hilbert_{\reg{O_B}})$ be unitaries, $\cE^{U,V} \in \mathrm{C}(\hilbert_{\reg{O_A}} \otimes \hilbert_{\reg{A_{in}}} \otimes \hilbert_{\reg{O_B}} \otimes\hilbert_{\reg{B_{in}}}, \hilbert_{\reg{A_{out}}} \otimes \hilbert_{\reg{B_{out}}})$ a channel. The channel $\cE^{U,V}$ is an \emph{LOCC channels \wrt $(U, V)$} if $\cE^{U,V}$ can be obtained from the composition of any finite number of (1) one-way right LOCC channels, (2) one-way left LOCC channels, (3) unitary channels $\cE_U \otimes \id$ and (4) unitary channels $\id \otimes \cE_V$. Namely, either $\cE^{U,V}$ is a one-way right LOCC channel, a one-way left LOCC channel, a unitary channel $\cE_U \otimes \id$, a unitary channel $\id \otimes \cE_V$, or there exists an integer $r \geq 2$, Hilbert spaces $\hilbert_{\reg{A_1}},\dots,\hilbert_{\reg{A_{r-1}}}$ and $\hilbert_{\reg{B_1}},\dots,\hilbert_{\reg{B_{r-1}}}$, and channels
\begin{align*}
    \cE_1 & \in \mathrm{C}(\hilbert_{\reg{O_A}} \otimes \hilbert_{\reg{A_{in}}} \otimes \hilbert_{\reg{O_B}} \otimes\hilbert_{\reg{B_{in}}}, \hilbert_{\reg{O_A}} \otimes \hilbert_{\reg{A_1}} \otimes \hilbert_{\reg{O_B}} \otimes\hilbert_{\reg{B_1}}), \\
    \cE_2 & \in \mathrm{C}(\hilbert_{\reg{O_A}} \otimes \hilbert_{\reg{A_1}} \otimes \hilbert_{\reg{O_B}} \otimes\hilbert_{\reg{B_1}}, \hilbert_{\reg{O_A}} \otimes \hilbert_{\reg{A_2}} \otimes \hilbert_{\reg{O_B}} \otimes\hilbert_{\reg{B_2}}), \\
    & \hspace{0.3\textwidth} \vdots \\
    \cE_r & \in \mathrm{C}(\hilbert_{\reg{O_A}} \otimes \hilbert_{\reg{A_{r-1}}} \otimes \hilbert_{\reg{O_B}} \otimes\hilbert_{\reg{B_{r-1}}}, \hilbert_{\reg{O_A}} \otimes \hilbert_{\reg{A_{out}}} \otimes \hilbert_{\reg{O_B}} \otimes\hilbert_{\reg{B_{out}}}),
\end{align*}
each of which is either a one-way right LOCC channel, a one-way left LOCC channel, a unitary channel $\cE_U \otimes \id$, or a unitary channel $\id \otimes \cE_V$ such that $\cE^{U,V} = \cE_r \circ \dots \circ \cE_1$.
\end{definition}

Note that the number and sequence of channels could depend on the choice of $(U,V)$. For simplicity, we pad dummy rounds and queries and consider channels of the following ``normal'' form. For every $(U,V)$,
\begin{equation*}
\cE^{U,V} = \cE_U \otimes \cE_V \circ \cE_r \circ \dots \circ \cE_U \otimes \cE_V \circ \cE_2 \circ \cE_U \otimes \cE_V \circ \cE_1
\end{equation*}
where each $\cE_i$ is either a one-way right LOCC channel or a one-way left LOCC channel. In other words, $\cE^{U,V}$ is composed by one-way LOCC channels and unitary channels $\cE_U \otimes \cE_V$ alternatively. From~\Cref{prop:sep_kraus}, each $\cE_i$ can be written as $\cE_i(X) = \sum_{a_i \in \Sigma_i} (A_{i,a_i} \otimes B_{i,a_i}) X (A_{i,a_i} \otimes B_{i,a_i})^\dagger$ for an alphabet $\Sigma_i$ and a set $\set{A_{i,a_i} \otimes B_{i,a_i}}_{a_i \in \Sigma_i}$ of operators which are all independent of $U$ and $V$. Expanding $\cE^{U,V}$, we have\footnote{We use $\prod$ to denote sequential composition of operations, \ie $\prod_{i=1}^n A_i := A_1 \circ A_2 \circ \dots \circ A_n$. Note that order matters, as the composition of operators is generally non-commutative.}
\begin{align}
\label{eq:normal_form}
\cE^{U,V}(X) = 
\sum_{(a_1,\dots,a_r) \in \Sigma_1 \times \dots \times \Sigma_r} \qty( \prod_{i=r}^1 UA_{i,a_i} \otimes VB_{i,a_i} ) X \qty( \prod_{i'=1}^r A^\dagger_{i',a_{i'}}U^\dagger \otimes B^\dagger_{i',a_{i'}}V^\dagger )
\end{align}
for every $X \in \mathrm{L}(\hilbert_{\reg{O_A}} \otimes \hilbert_{\reg{A_{in}}} \otimes \hilbert_{\reg{O_B}} \otimes\hilbert_{\reg{B_{in}}})$.

\subsection{LOCC Indistinguishability against Adaptive Queries}
In this section, we consider the distinguishing game in~\Cref{def:HUDgame}. 

\begin{definition}[Haar Unitary Distinguishing Game]
\label{def:HUDgame}
A \emph{Haar unitary distinguishing game}, parametrized by dimension $N \in \N$, is played by the challenger $\challenger$ and a two-party adversary $(\alice^{(\cdot)}, \bob^{(\cdot)})$.
\begin{protocol}{$\underline{\mathsf{HUD}(N, \alice, \bob)}$}
{
\begin{enumerate}
    \item $\challenger$ samples two independent Haar unitaries $U,V\sim\haarunitaries(N)$.
    \item $\challenger$ samples a challenge bit $b \gets \bit$. If $b = 0$, $\alice$ and $\bob$ are both given oracle access to $U$. Otherwise, $\alice$ is given oracle access to $U$ and $\bob$ is given oracle access to $V$.
    \item $\alice$ and $\bob$ can perform an arbitrary finite number of rounds of classical communication, during which they can make queries adaptively.
    \item $(\alice, \bob)$ outputs a guess $b'\in\bit$.
    \item $\challenger$ outputs $1$ if and only if $b' = b$.
\end{enumerate}
}
\end{protocol}
In particular, $(\alice^{(\cdot)}, \bob^{(\cdot)})$ start with the all-zero state, perform an LOCC channel $\cE$ \wrt $(U,V)$, and perform a two-outcome POVM of the form $\set{ M^0_{\regA} \otimes N^0_{\regB}, M^1_{\regA} \otimes N^1_{\regB} }$ to generate $b' \in \bit$.
\end{definition}

\noindent Before proving the main result of this section (\Cref{thm:adap_LOCC}), we first introduce several lemmas. 

\begin{lemma}[{\cite[Lemma~16]{CGY24}}, implicitly in~{\cite[Lemma~16, Theorem~17]{AKY25}}]
\label{lem:product_perm}
Let $N, t \in \N$ and $\hilbert_\reg{A} = \hilbert_\reg{B} = (\C^N)^{\otimes t}$. Then for any PSD operators $M_{\reg{A}}, N_{\reg{B}} \succeq 0$, it holds that
\ifllncs
\begin{multline*}
\sum_{\pi\in\Symgp_{2t}} \Tr( P_N(\pi)_{\reg{AB}} \cdot M_{\reg{A}} \otimes N_{\reg{B}} ) \\
\geq \sum_{\substack{\pi_A \in \Symgp_t \\ \pi_B \in \Symgp_t}} \Tr( P_N(\pi_A)_{\reg{A}} \otimes P_N(\pi_B)_{\reg{B}}  \cdot M_{\reg{A}} \otimes N_{\reg{B}} )
\geq 0.
\end{multline*}
\else
\begin{align*}
\sum_{\pi\in\Symgp_{2t}} \Tr( P_N(\pi)_{\reg{AB}} \cdot M_{\reg{A}} \otimes N_{\reg{B}} ) 
\geq \sum_{\substack{\pi_A \in \Symgp_t \\ \pi_B \in \Symgp_t}} \Tr( P_N(\pi_A)_{\reg{A}} \otimes P_N(\pi_B)_{\reg{B}}  \cdot M_{\reg{A}} \otimes N_{\reg{B}} )
\geq 0.
\end{align*}
\fi
\end{lemma}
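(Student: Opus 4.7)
The plan is to prove the two inequalities separately. The second is immediate from factorization: since $\sum_{\pi_A \in \Symgp_t} P_N(\pi_A) = t! \, \Pi_{\sym,{\reg{A}}}$ (and likewise on $\reg{B}$), the product sum equals $(t!)^2 \Tr(\Pi_{\sym,{\reg{A}}} M_{\reg{A}}) \cdot \Tr(\Pi_{\sym,{\reg{B}}} N_{\reg{B}})$, each factor of which is nonnegative as the trace of a PSD operator against a projector.

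For the first inequality, I rewrite $\sum_{\pi\in\Symgp_{2t}} P_N(\pi) = (2t)!\,\Pi_{\sym,{\reg{AB}}}$ and, using $(2t)! = (t!)^2 \binom{2t}{t}$, reduce to proving
\[
\binom{2t}{t}\, \Tr(\Pi_{\sym,{\reg{AB}}} \cdot M_{\reg{A}} \otimes N_{\reg{B}}) \geq \Tr(\Pi_{\sym,{\reg{A}}} M_{\reg{A}}) \Tr(\Pi_{\sym,{\reg{B}}} N_{\reg{B}}).
\]
From the identity $\Pi_{\sym,{\reg{AB}}} (\Pi_{\sym,{\reg{A}}} \otimes \Pi_{\sym,{\reg{B}}}) = \Pi_{\sym,{\reg{AB}}}$ (the fully symmetric subspace of $(\C^N)^{\otimes 2t}$ is contained in $\sym^t \otimes \sym^t$) together with cyclicity of trace, we may assume WLOG that $M$ and $N$ are each supported in their respective symmetric subspace, by replacing $M \leftarrow \Pi_{\sym} M \Pi_{\sym}$ (still PSD) and similarly for $N$; both sides of the displayed inequality are invariant under this replacement.

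Next, I decompose $\Symgp_{2t} = \bigsqcup_\sigma \sigma (\Symgp_t \times \Symgp_t)$ into the $\binom{2t}{t}$ cosets with $(t,t)$-shuffle representatives $\sigma$. For symmetric-supported $M, N$, $\sum_{\tau \in \Symgp_t \times \Symgp_t} P_N(\tau)(M \otimes N) = (t!)^2\, M \otimes N$, and so
\[
\sum_{\pi \in \Symgp_{2t}} \Tr\qty( P_N(\pi) \cdot M \otimes N ) = (t!)^2 \sum_\sigma \Tr\qty( P_N(\sigma) \cdot M \otimes N ).
\]
The identity shuffle contributes exactly $(t!)^2 \Tr(M)\Tr(N)$, which equals the target right-hand side. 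Thus the inequality reduces to the per-shuffle positivity $\Tr(P_N(\sigma)(M \otimes N)) \geq 0$ for every shuffle $\sigma$.

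This per-shuffle positivity is the core technical step and the main obstacle. My plan is to decompose $M = \sum_i \ketbra{m_i}{m_i}$, $N = \sum_j \ketbra{n_j}{n_j}$ and reduce to showing $\langle mn | P_N(\sigma) | mn \rangle \geq 0$ for arbitrary vectors $\ket{m}, \ket{n} \in (\C^N)^{\otimes t}$. A given shuffle $\sigma$ splits the $t$ tensor factors of $\ket{m}$ (and separately $\ket{n}$) into two groups of sizes $a$ and $t - a$, according to which half of the $2t$ positions each factor is sent to. Matricizing $\ket{m}$ as an operator $M_\flat : (\C^N)^{\otimes(t-a)} \to (\C^N)^{\otimes a}$ via this split (and $\ket{n}$ dually as $N_\flat$), a direct expansion in the computational basis yields
\[
\langle mn | P_N(\sigma) | mn \rangle = \Tr\qty( \overline{M_\flat^\dagger M_\flat} \cdot N_\flat N_\flat^\dagger ),
\]
which is nonnegative as the trace of a product of two PSD matrices. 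The main difficulty lies in the index bookkeeping that identifies the correct matricization induced by $\sigma$ and verifies this identity; once that is set up, the positivity follows from the elementary fact $\Tr(XY) \geq 0$ for PSD $X, Y$.
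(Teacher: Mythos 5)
Your overall architecture is sensible, and it is worth noting that the paper itself does not prove \Cref{lem:product_perm} (it imports it from the cited works), so your argument has to stand entirely on its own. Steps one through three check out: the second inequality via $\sum_{\pi_A}P_N(\pi_A)=t!\,\Pi_{\mathrm{sym}}$, the WLOG replacement $M\mapsto\Pi_{\mathrm{sym}}M\Pi_{\mathrm{sym}}$, $N\mapsto\Pi_{\mathrm{sym}}N\Pi_{\mathrm{sym}}$ (both sides are indeed invariant), and the left-coset decomposition into $(t,t)$-shuffles with $\sum_{\pi}P_N(\pi)(M\otimes N)=(t!)^2\sum_{\sigma}P_N(\sigma)(M\otimes N)$ for block-symmetric $M,N$. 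The gap is precisely in the step you flag as the core one: per-shuffle positivity \emph{for arbitrary vectors} is false, and therefore so is the matricization identity you propose (any identity of that shape would force the value to be a nonnegative real). Counterexample: take $t=2$, the $(2,2)$-shuffle $\sigma=(1\,2\,3)$ (it is increasing on $\{1,2\}$ and on $\{3,4\}$), $\ket{m}=\ket{01}+i\ket{10}$ and $\ket{n}=\ket{00}$. With the paper's convention $P_N(\sigma)\ket{x_1x_2x_3x_4}=\ket{x_3x_1x_2x_4}$, one gets $P_N(\sigma)\ket{m}\ket{n}=\ket{0010}+i\ket{0100}$ while $\ket{m}\ket{n}=\ket{0100}+i\ket{1000}$, so $\bra{mn}P_N(\sigma)\ket{mn}=i$ — not nonnegative, not even real (with the opposite ordering convention one gets $-i$). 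The structural reason is that a shuffle does not merely decide which half each tensor factor of $\ket{m}$ lands in; it also reorders factors \emph{within} each half relative to the unpermuted bra, and for non-symmetric vectors this internal reordering changes the value, so no single matricization of $\ket{m}$ and $\ket{n}$ can capture $\bra{mn}P_N(\sigma)\ket{mn}$.

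The good news is that the symmetrization you already performed rescues the argument, provided you use it at this point rather than discarding it. After the WLOG step, $M$ and $N$ are supported on the symmetric subspaces, so their eigenvectors $\ket{m_i},\ket{n_j}$ may be taken to be symmetric; equivalently, for block-symmetric $M,N$ the quantity $\Tr(P_N(\sigma)\cdot M\otimes N)$ depends only on the double coset of $\sigma$ modulo $\Symgp_t\times\Symgp_t$, i.e.\ only on $k:=|\sigma(\{1,\dots,t\})\cap\{t+1,\dots,2t\}|$. Choosing the representative $\sigma_k=\prod_{i=1}^{k}(i,\ t+i)$, which acts as a swap between the first $k$ registers of $\reg{A}$ and the first $k$ registers of $\reg{B}$ and as the identity elsewhere, and writing $\reg{A''}$, $\reg{B''}$ for the remaining $t-k$ registers on each side, one gets $\Tr(P_N(\sigma_k)\cdot M\otimes N)=\Tr\qty(\mathrm{SWAP}\cdot\Tr_{\reg{A''}}(M)\otimes\Tr_{\reg{B''}}(N))=\Tr\qty(\Tr_{\reg{A''}}(M)\,\Tr_{\reg{B''}}(N))\geq 0$, since partial traces of PSD operators are PSD. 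This is exactly your matricization computation, but it is only valid once the symmetry lets you standardize the crossing pattern to a block swap; with that correction (and noting the $k=0$ term reproduces $\Tr(M)\Tr(N)$), your proof goes through.
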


\begin{lemma}[{\cite[Lemma~1]{SHH25}}]
\label{lem:multiplicative_twirling}
Let $k,N\in\N$ such that $k^2 \leq N$ and $\hilbert_\regA = (\C^N)^{\otimes k}$, and define the completely positive maps
\begin{align*}
\Phi_a(X_\regA) & := \frac{1}{N^k} \sum_{\pi\in\Symgp_k} P_N(\pi)_{\reg{A}} \cdot \Tr( P_N(\pi)^\intercal_{\reg{A}} \cdot X_{\regA}) \\
\Phi_H(X_\regA) & := \Ex_{U \sim \haarunitaries(N)} [ U^{\otimes k}_{\reg{A}} \cdot X_{\regA} \cdot U^{\dagger, \otimes k}_{\regA} ]
\end{align*}
for any $X_\regA \in \mathrm{L}(\hilbert_\regA)$. Then $(1-\veps) \cdot \Phi_a \preceq \Phi_H \preceq (1+\veps) \cdot \Phi_a$, where $\Phi \preceq \Phi'$ denotes that $\Phi' - \Phi$ is a completely positive map. That is, for any Hilbert space $\hilbert_\regB$ and any $\rho_{\reg{AB}} \succeq 0$,
\begin{align*}
    0 \preceq & (1-\veps) \cdot \frac{1}{N^k} \sum_{\pi\in\Symgp_k} P_N(\pi)_{\reg{A}} \otimes \Tr_{\reg{A}}(P_N(\pi)^\intercal_{\reg{A}} \otimes \id_{\reg{B}} \cdot \rho_{\reg{AB}}) \\
    \preceq & 
    \Ex_{U\sim\mu(N)}[ U^{\otimes k}_{\reg{A}} \otimes \id_{\reg{B}} \cdot \rho_{\reg{AB}} \cdot U^{\dagger, \otimes k}_{\reg{A}} \otimes \id_{\reg{B}} ] \\
    \preceq &
    (1+\veps) \cdot \frac{1}{N^k} \sum_{\pi\in\Symgp_k} P_N(\pi)_{\reg{A}} \otimes \Tr_{\reg{A}}(P_N(\pi)^\intercal_{\reg{A}} \otimes \id_{\reg{B}} \cdot \rho_{\reg{AB}}).
\end{align*}
\end{lemma}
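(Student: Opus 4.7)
The plan is to establish the multiplicative CP dominance $(1-\veps)\Phi_a \preceq \Phi_H \preceq (1+\veps)\Phi_a$ with $\veps = O(k^2/N)$ by reducing it, via Choi--Jamio\l{}kowski, to a PSD dominance of operators in the algebra generated by $\{P_N(\pi)\otimes P_N(\tau)\}_{\pi,\tau \in \Symgp_k}$, and then invoking Weingarten asymptotics.

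First, I would invoke the exact Weingarten formula~\cite{Wei83,Col03,CS06} to write
\[
\Phi_H(X) = \sum_{\pi,\tau \in \Symgp_k} \mathrm{Wg}(\pi\tau^{-1},N) \cdot P_N(\pi) \cdot \Tr(P_N(\tau)^\intercal X),
\]
so that both $\Phi_H$ and $\Phi_a$ live in the same ``permutation-basis'' of CP maps, with coefficient matrices $M_H(\pi,\tau) = \mathrm{Wg}(\pi\tau^{-1},N)$ and $M_a(\pi,\tau) = N^{-k}\delta_{\pi,\tau}$, respectively. A short computation using the ricochet property (\Cref{fact:Ricochet}) then shows that the Choi matrices are $J(\Phi_H) = \sum_{\pi,\tau} \mathrm{Wg}(\pi\tau^{-1},N)\,P_N(\pi)_\regA \otimes P_N(\tau)_{\regA'}$ and $J(\Phi_a) = N^{-k}\sum_\pi P_N(\pi)_\regA \otimes P_N(\pi)_{\regA'}$. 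The claim is equivalent to the two PSD inequalities
\[
\sum_{\pi,\tau} \bigl[\mathrm{Wg}(\pi\tau^{-1},N) - (1-\veps) N^{-k}\delta_{\pi,\tau}\bigr] P_N(\pi)\otimes P_N(\tau) \succeq 0
\]
and the symmetric upper bound.

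Second, I would exploit the Schur--Weyl block structure: in the basis adapted to the isotypic decomposition of $(\C^N)^{\otimes k}$ under $\Symgp_k$, the operators $\sum_\pi c(\pi) P_N(\pi)$ act block-diagonally on Young-diagram blocks $\lambda \vdash k$, with scalar eigenvalues given by a Fourier transform of $c$ over $\Symgp_k$ weighted by irreducible characters. Both $J(\Phi_H)$ and $J(\Phi_a)$ inherit this block decomposition, and within each block the two operators are proportional to the same projector. The ratio of their eigenvalues on the block $\lambda$ can be read off from the known formula $\mathrm{Wg}(\pi,N) = N^{-k}\delta_{\pi,e} + O(N^{-(k+1)})$ in the relevant regime $k^2 \leq N$: the identity term reproduces $\Phi_a$ exactly up to a $1 + O(k^2/N)$ factor, while the non-identity Weingarten contributions shift each block eigenvalue by a factor of at most $(1 \pm O(k^2/N))$.

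The main obstacle, in my view, is upgrading an additive Weingarten estimate into a \emph{multiplicative} CP-dominance statement. A naive operator-norm bound would give $\|J(\Phi_H) - J(\Phi_a)\|_\infty = O(k^2/N^{k+1})$, which is additive and does not immediately translate into a sandwich $(1\pm\veps)J(\Phi_a)$. The technical content lies in showing that $J(\Phi_H)$ and $J(\Phi_a)$ share the same support (namely, the image of the permutation algebra under Choi), and that in each Schur block the non-identity Weingarten correction term acts as a scalar multiple of the identity in that block rather than generating an orthogonal component. Once this support-matching is in hand, the $(1\pm\veps)$ multiplicative bound follows by comparing the two block eigenvalues coordinate-wise, and the hypothesis $k^2 \leq N$ is exactly what is needed to guarantee that every Weingarten ratio stays within $(1\pm O(k^2/N))$ of unity.
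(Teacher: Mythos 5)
First, note that the paper itself does not prove this lemma at all: it is imported verbatim as \cite[Lemma~1]{SHH25}, so there is no in-paper proof to compare against; your proposal is really an attempt to reprove the cited result. Your overall plan is the right one and is in the spirit of the known proof: pass to Choi matrices (CP dominance is equivalent to PSD dominance of Choi operators), note $J(\Phi_a) = N^{-k}\sum_{\pi} P_N(\pi)\otimes P_N(\pi)$ and $J(\Phi_H) = \sum_{\pi,\tau}\mathrm{Wg}(\pi\tau^{-1},N)\,P_N(\pi)\otimes P_N(\tau)$, and then compare them blockwise via Schur--Weyl.

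The gap is that the step you yourself flag as ``the main obstacle'' --- that the two Choi operators have the same support and are, on each block, multiples of the \emph{same} projector --- is exactly the content of the lemma's multiplicative (as opposed to additive) nature, and you assert it rather than prove it. It can be closed, and cleanly: decompose $(\C^N)^{\otimes k}\otimes(\C^N)^{\otimes k}$ under $\Symgp_k\times\Symgp_k$; for the pair-block $(\lambda,\mu)$ one has $\sum_\tau \rho_\lambda(\tau)\otimes\rho_\mu(\tau) = k!\,\delta_{\lambda\mu}\,\projector{\omega_\lambda}$ by Schur's lemma (the diagonal action has no invariants unless $\mu=\lambda$, and then a one-dimensional space), so $J(\Phi_a)$ acts as $(k!/N^k)\projector{\omega_\lambda}$ on each $(\lambda,\lambda)$ block and vanishes elsewhere; writing $\pi=\sigma\tau$ and using that $\mathrm{Wg}$ is a class function, $J(\Phi_H)$ acts on the same block as $k!\,\widehat{\mathrm{Wg}}(\lambda)\projector{\omega_\lambda}$ with the scalar $\widehat{\mathrm{Wg}}(\lambda) = \frac{1}{d_\lambda}\sum_\sigma \mathrm{Wg}(\sigma,N)\chi_\lambda(\sigma) = \frac{d_\lambda}{k!\,s_\lambda(1^N)}$. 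The lemma is then equivalent to bounding, for every $\lambda\vdash k$, the ratio $N^k d_\lambda/(k!\,s_\lambda(1^N)) = \prod_{(i,j)\in\lambda}\bigl(1+(j-i)/N\bigr)^{-1}$, which lies in $[1-O(k^2/N),\,1+O(k^2/N)]$ precisely under the hypothesis $k^2\le N$. I would also caution against the route you sketch of summing the elementwise asymptotic $\mathrm{Wg}(\pi,N)=N^{-k}\delta_{\pi,e}+O(N^{-(k+1)})$ over the non-identity permutations: there are up to $k!$ of them and the character values can be as large as $d_\lambda$, so making that sum come out to a relative error $O(k^2/N)$ requires the sharper Weingarten bounds $|\mathrm{Wg}(\sigma,N)|\lesssim \mathrm{Wg}(e,N)\,N^{-|\sigma|}$ (with combinatorial prefactors) rather than the crude $O(N^{-(k+1)})$; the exact per-irrep formula above avoids this entirely. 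With that structural step and the content-product bound supplied, your proof goes through; as written, the central claim is identified but left open.
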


\begin{fact}
\label{fact:PSD_trace}
Let $A,B,M \succeq 0$ such that $A \succeq B$. Then $\Tr(AM) \geq \Tr(BM) \geq 0$.
\end{fact}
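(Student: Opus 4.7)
The plan is to prove this as a two-step application of the standard fact that the trace of a product of two PSD operators is nonnegative. First I would establish the auxiliary claim: for any $X, Y \succeq 0$, one has $\Tr(XY) \geq 0$. The standard route is to take a square root $X^{1/2}$ (which exists because $X$ is PSD) and use cyclicity of the trace to write
\[
\Tr(XY) \;=\; \Tr(X^{1/2} X^{1/2} Y) \;=\; \Tr(X^{1/2} Y X^{1/2}).
\]
Then $X^{1/2} Y X^{1/2}$ is itself PSD (it is of the form $C Y C^\dagger$ with $C = X^{1/2}$ Hermitian, and conjugation preserves positivity), so its trace, which equals the sum of its nonnegative eigenvalues, is nonnegative.

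Given this auxiliary fact, the conclusion is immediate. Applying it with $X = B$ and $Y = M$ yields $\Tr(BM) \geq 0$. For the other inequality, note that $A \succeq B$ means $A - B \succeq 0$, so applying the auxiliary fact with $X = A - B$ and $Y = M$ gives $\Tr((A-B)M) \geq 0$, which by linearity of the trace rearranges to $\Tr(AM) \geq \Tr(BM)$. Combining the two bounds yields $\Tr(AM) \geq \Tr(BM) \geq 0$, as desired.

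There is no real obstacle here; the statement is a textbook fact and the only subtlety is keeping track of the two separate applications (one to $A-B$, one to $B$). I would keep the write-up to a few lines.
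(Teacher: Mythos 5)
Your proof is correct: the paper states this as a standard fact without proof, and your argument (reducing to $\Tr(XY)\geq 0$ for $X,Y\succeq 0$ via $\Tr(XY)=\Tr(X^{1/2}YX^{1/2})$ and then applying it to $X=B$ and $X=A-B$) is exactly the textbook route one would expect. No gaps.
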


\begin{fact}
\label{fact:gate_teleportation}
Let $A \in \mathrm{L}(\hilbert_{\reg{X}},\hilbert_{\reg{Y}}), B \in \mathrm{L}(\hilbert_{\reg{Y}})$, and $C \in \mathrm{L}(\hilbert_{\reg{Y}},\hilbert_{\reg{X}})$. Then
\begin{equation*}
ABC 
= \bra{\Omega_Y} \otimes \id 
\cdot \bigg(
B \otimes (\id \otimes A \cdot \projector{\Omega_Y} \cdot \id \otimes C)
\bigg)
\cdot \ket{\Omega_Y} \otimes \id,
\end{equation*}
\ifllncs
where $\ket{\Omega_Y} := \sum_{i \in [\dim(\hilbert_{\regY})]} \ket{i}\ket{i}$ denotes the non-normalized maximally entangled state.
\else
where $\ket{\Omega_Y} := \sum_{i \in [\dim(\hilbert_{\regY})]} \ket{i}\ket{i}$ denotes the non-normalized maximally entangled state (see~\Cref{fig:gate_teleportation}).
\fi
\end{fact}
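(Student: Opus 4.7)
The plan is to verify this gate-teleportation identity by directly expanding the maximally entangled states in a fixed computational basis, which is equivalent to applying the ricochet property (Fact~\ref{fact:Ricochet}) twice. To pin down the registers, label the two copies of $\hilbert_{\reg{Y}}$ appearing in the outer $\ket{\Omega_Y}$'s as $\reg{Y_B}$ (on which $B$ acts) and $\reg{Y_1}$, and the remaining copy of $\hilbert_{\reg{Y}}$ inside the inner $\projector{\Omega_Y}$ as $\reg{Y_2}$, so that $A$ connects $\reg{Y_2}\to\reg{X}$ and $C$ connects $\reg{X}\to\reg{Y_2}$. Then the RHS is manifestly an operator from $\hilbert_{\reg{X}}$ to itself.

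The first step is to simplify the inner block. Writing $\ket{\Omega_Y}_{\reg{Y_1 Y_2}} = \sum_i \ket{i}_{\reg{Y_1}} \ket{i}_{\reg{Y_2}}$ and contracting, one finds
\[
(\id_{\reg{Y_1}} \otimes A) \cdot \projector{\Omega_Y}_{\reg{Y_1 Y_2}} \cdot (\id_{\reg{Y_1}} \otimes C) = \sum_{i,j} \ketbra{i}{j}_{\reg{Y_1}} \otimes \left( A \ketbra{i}{j} C \right)_{\reg{X}},
\]
an operator on $\reg{Y_1}\reg{X}$ whose $(i,j)$ block is $A\ketbra{i}{j}C$. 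The second step is to sandwich this expression by $B$ on $\reg{Y_B}$ and by the outer $\ket{\Omega_Y}_{\reg{Y_B Y_1}}$'s: contracting the two outer maximally entangled states against $B_{\reg{Y_B}} \otimes \ketbra{i}{j}_{\reg{Y_1}}$ pulls out the scalar $\bra{i} B \ket{j}$ from each summand. The resulting operator on $\reg{X}$ is $\sum_{i,j} \bra{i}B\ket{j} \cdot A \ketbra{i}{j} C = A \big( \sum_{i,j} B_{ij} \ketbra{i}{j} \big) C = ABC$, which is the claim.

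I do not expect any substantive obstacle here: this is a routine bookkeeping check, and one could equally verify it in a single line by computing the $(\beta,\alpha)$ matrix element of the two sides in a fixed basis. The only care required is in keeping the three distinct copies of $\hilbert_{\reg{Y}}$ straight and in handling the directions of $A$ and $C$, since these map between the different spaces $\hilbert_{\reg{X}}$ and $\hilbert_{\reg{Y}}$ rather than acting on a single space as in Fact~\ref{fact:Ricochet}; the manipulation above, however, does not invoke a transpose and so carries over unchanged to this asymmetric setting.
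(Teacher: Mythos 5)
Your computation is correct, and it is essentially the paper's own justification: the paper proves this Fact only by the tensor-network picture (\Cref{fig:gate_teleportation}), which encodes exactly the same basis expansion/double ricochet contraction you carry out algebraically. Your reading of the directions, $A:\hilbert_{\reg{Y}}\to\hilbert_{\reg{X}}$ and $C:\hilbert_{\reg{X}}\to\hilbert_{\reg{Y}}$ with the product $ABC$ composed in the usual right-to-left order, is the only one under which the stated expression type-checks (and it matches the figure and the way the Fact is applied in \Cref{lem:adaptive-to-selective}), so resolving that ambiguity the way you did is exactly right.
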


\ifllncs
\else
\begin{figure}[H]
\centering
\resizebox{0.75\textwidth}{!}{%
\begin{circuitikz}
\tikzstyle{every node}=[font=\LARGE]
\draw  (-22,-76) rectangle (-19.5,-78.5);
\draw  (-18.25,-76) rectangle (-15.75,-78.5);
\draw  (-14.5,-76) rectangle (-12,-78.5);
\draw (-23.25,-77.25) to[short] (-22,-77.25);
\draw (-19.5,-77.25) to[short] (-18.25,-77.25);
\draw (-15.75,-77.25) to[short] (-14.5,-77.25);
\draw (-12,-77.25) to[short] (-10.75,-77.25);
\node [font=\LARGE] at (-13.25,-77.25) {$C$};
\node [font=\LARGE] at (-17,-77.25) {$B$};
\node [font=\LARGE] at (-20.75,-77.25) {$A$};
\draw  (-6.25,-78.5) rectangle (-3.75,-81);
\draw  (-3.75,-73.5) rectangle (-1.25,-76);
\draw  (-1.25,-78.5) rectangle (1.25,-81);
\draw [ color={rgb,255:red,255; green,0; blue,0}, line width=2pt, short] (-3.75,-79.75) .. controls (-2.5,-79.25) and (-2.5,-77.75) .. (-3.75,-77.25);
\draw (-6.25,-77.25) to[short] (-3.75,-77.25);
\draw (-7.5,-79.75) to[short] (-6.25,-79.75);
\draw (-6.25,-74.75) to[short] (-3.75,-74.75);
\draw (-1.25,-74.75) to[short] (1.25,-74.75);
\draw (-1.25,-77.25) to[short] (1.25,-77.25);
\draw (2.5,-79.75) to[short] (1.25,-79.75);
\draw [ color={rgb,255:red,255; green,0; blue,0}, line width=2pt, short] (-1.25,-79.75) .. controls (-2.5,-79.25) and (-2.5,-77.75) .. (-1.25,-77.25);
\draw [ color={rgb,255:red,255; green,0; blue,0}, line width=2pt, short] (-6.25,-77.25) .. controls (-7.5,-76.75) and (-7.5,-75.25) .. (-6.25,-74.75);
\draw [ color={rgb,255:red,255; green,0; blue,0}, line width=2pt, short] (1.25,-77.25) .. controls (2.5,-76.75) and (2.5,-75.25) .. (1.25,-74.75);
\node [font=\LARGE] at (0,-79.75) {$C$};
\node [font=\LARGE] at (-2.5,-74.75) {$B$};
\node [font=\LARGE] at (-5,-79.75) {$A$};
\node [font=\LARGE, color={rgb,255:red,255; green,0; blue,0}] at (3,-76) {$\ket{\Omega_Y}$};
\node [font=\LARGE, color={rgb,255:red,255; green,0; blue,0}] at (-1,-78) {$\bra{\Omega_Y}$};
\node [font=\LARGE, color={rgb,255:red,255; green,0; blue,0}] at (-8,-76) {$\bra{\Omega_Y}$};
\node [font=\LARGE, color={rgb,255:red,255; green,0; blue,0}] at (-3.75,-78) {$\ket{\Omega_Y}$};
\node [font=\Huge] at (-9.25,-77.25) {$=$};
\end{circuitikz}
}%
\caption{A graphical expression of~\Cref{fact:gate_teleportation}.}
\label{fig:gate_teleportation}
\end{figure}
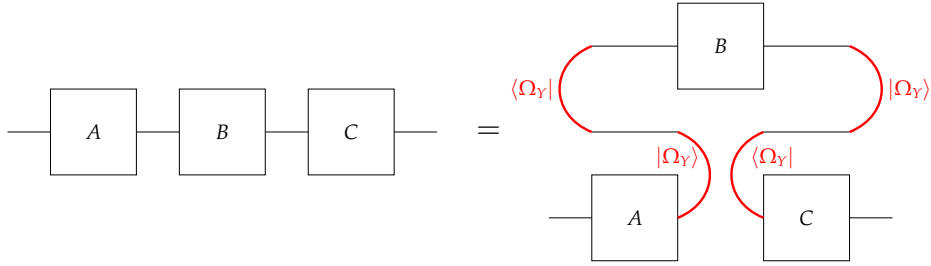
\fi

\begin{lemma}
\label{lem:PSDness}
For any PSD operators $M_{\reg{XY}}, \rho_{\reg{YZ}} \succeq 0$, the operator
$\wt{M}_{\reg{XZ}} :=$ $\Tr_{\reg{Y}}($ $M_{\reg{XY}} \otimes \id_{\reg{Z}} \cdot \id_{\reg{X}} \otimes \ptrans_{\reg{Z}}(\rho_{\reg{YZ}}) )$ is PSD.
\end{lemma}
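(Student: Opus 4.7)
The plan is to re-express $\wt{M}_{\reg{XZ}}$ as the image of a positive semidefinite operator under a completely positive map, at which point positivity is immediate.

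First, I would invoke the Choi--Jamiolkowski correspondence. Since $M_{\reg{XY}} \succeq 0$, there exists a CP map $\Phi_M \in \mathrm{CP}(\hilbert_{\reg{Y}}, \hilbert_{\reg{X}})$ whose Choi matrix is $M_{\reg{XY}}$, characterized by
\[
\Phi_M(N_{\reg{Y}}) = \Tr_{\reg{Y}}\!\left( M_{\reg{XY}} \cdot \id_{\reg{X}} \otimes N^\intercal_{\reg{Y}} \right)
\]
for every $N \in \mathrm{L}(\hilbert_{\reg{Y}})$. By linearity, this extends to bipartite inputs as
\[
(\Phi_M \otimes \id_{\reg{Z}})(\sigma_{\reg{YZ}}) = \Tr_{\reg{Y}}\!\left( M_{\reg{XY}} \otimes \id_{\reg{Z}} \cdot \id_{\reg{X}} \otimes \ptrans_{\reg{Y}}(\sigma_{\reg{YZ}}) \right)
\]
for every $\sigma_{\reg{YZ}} \in \mathrm{L}(\hilbert_{\reg{Y}} \otimes \hilbert_{\reg{Z}})$.

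Next, I would apply this identity with $\sigma_{\reg{YZ}} = \rho^\intercal_{\reg{YZ}}$, the full transpose of $\rho_{\reg{YZ}}$. The bookkeeping step is the observation that
\[
\ptrans_{\reg{Y}}\!\left( \rho^\intercal_{\reg{YZ}} \right)
= \ptrans_{\reg{Y}} \circ \ptrans_{\reg{Y}} \circ \ptrans_{\reg{Z}}(\rho_{\reg{YZ}})
= \ptrans_{\reg{Z}}(\rho_{\reg{YZ}}),
\]
using that the full transpose equals $\ptrans_{\reg{Y}} \circ \ptrans_{\reg{Z}}$ and that each partial transpose is an involution. Combined with the previous formula, this yields the key identity
\[
\wt{M}_{\reg{XZ}} = (\Phi_M \otimes \id_{\reg{Z}})(\rho^\intercal_{\reg{YZ}}).
\]

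Finally, I would conclude: since the transpose preserves positive semidefiniteness, $\rho^\intercal_{\reg{YZ}} \succeq 0$; since $\Phi_M$ is completely positive, so is $\Phi_M \otimes \id_{\reg{Z}}$. Hence $\wt{M}_{\reg{XZ}} \succeq 0$. The only real obstacle is notational, namely keeping the partial- and full-transpose conventions aligned between the Choi inversion formula and the definition of $\wt{M}_{\reg{XZ}}$; this can also be double-checked by a direct matrix-element computation in the computational basis (or, equivalently, via \Cref{fact:Ricochet}) if desired.
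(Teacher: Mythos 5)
Your proof is correct, but it takes a genuinely different route from the paper's. The paper argues elementarily: by linearity it reduces to the case where both $M_{\reg{XY}}$ and $\rho_{\reg{YZ}}$ are rank-one projectors, and then an explicit computation in a Schmidt basis of $\ket{\psi}_{\reg{XY}}$ shows that $\wt{M}_{\reg{XZ}}$ is itself of the form $\projector{\xi}$, hence PSD. You instead recognize $\wt{M}_{\reg{XZ}} = (\Phi_M \otimes \id_{\reg{Z}})(\rho^\intercal_{\reg{YZ}})$, where $\Phi_M$ is the CP map whose Choi matrix is $M_{\reg{XY}}$ (Choi's theorem gives complete positivity precisely because $M_{\reg{XY}} \succeq 0$), and the transpose bookkeeping $\ptrans_{\reg{Y}}(\rho^\intercal_{\reg{YZ}}) = \ptrans_{\reg{Z}}(\rho_{\reg{YZ}})$ together with the Choi inversion formula checks out in the computational basis fixed by the paper's conventions; positivity then follows since $\rho^\intercal_{\reg{YZ}} \succeq 0$ and $\Phi_M \otimes \id_{\reg{Z}}$ maps PSD operators to PSD operators. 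Your argument is more structural: it avoids the rank-one reduction and case analysis, makes the positivity manifest as an instance of complete positivity, and explains \emph{why} the partial transpose on $\reg{Z}$ is exactly the right twist (it is the Choi–Jamio{\l}kowski action in disguise). The paper's computation, by contrast, is self-contained and requires no Choi machinery, only the Schmidt decomposition, which is arguably better suited to a short appendix lemma. The one point where care is needed in your route is the register-ordering and transpose conventions of the Choi correspondence ($\reg{X}$ as output, $\reg{Y}$ as the input copy, transposes in the computational basis), which you flag and which indeed align with the paper's conventions, so there is no gap.
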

\ifllncs
The proof of~\Cref{lem:PSDness} can be found in the full version.
\else
\begin{proof}
Since a PSD operator has non-negative eigenvalues and the sum of PSD operators remains PSD, it suffices to consider the case where $M_{\reg{XY}}$ and $\rho_{\reg{YZ}}$ are both rank-$1$ projections. \\

\noindent Suppose $M_{\reg{XY}} = \projector{\psi}_{\reg{XY}}$ and $\rho_{\reg{YZ}} = \projector{\phi}_{\reg{YZ}}$. Consider the Schmidt decomposition of $\ket{\psi}_{\reg{XY}}$:
\begin{align*}
    \ket{\psi}_{\reg{XY}} = \sum_{i=1}^r \lambda_i \ket{u_i}_{\reg{X}} \ket{v_i}_{\reg{Y}}.
\end{align*}
Now, extend the set of orthonormal vectors $\set{\ket{v_i}}_{i\in[r]}$ to an orthonormal basis $\set{\ket{v_i}}_{i\in[d_Y]}$ of $\cH_\reg{Y}$, where $d_Y := \dim(\cH_\reg{Y})$. Next, expand $\ket{\phi}_{\reg{YZ}}$ as
\begin{align*}
    \ket{\phi}_{\reg{YZ}} = \sum_{i\in[d_Y],j\in[d_Z]} \alpha_{ij} \ket{v_i}_{\reg{Y}} \ket{j}_{\reg{Z}},
\end{align*}
where $d_Z := \dim(\cH_\reg{Z})$ and $\set{\ket{j}}_{i\in[d_Z]}$ is the computational basis of $\cH_\reg{Z}$. \\

\noindent A direct calculation yields
\begin{align*}
    \wt{M}_{\reg{XZ}} 
    & = \sum_{i,i'\in[r]} \sum_{j,j'\in[d_Z]} \lambda_i \alpha^*_{ij} \lambda^*_{i'} \alpha_{i'j'} \ketbra{u_i}{u_{i'}}_{\reg{X}} \otimes \ketbra{j}{j'}_{\reg{Z}} \\
    & = \qty( \sum_{i\in[r],j\in[d_Z]} \lambda_i \alpha^*_{ij} \ket{u_i}_{\reg{X}} \ket{j}_{\reg{Z}} ) 
    \qty( \sum_{i' \in [r], j' \in [d_Z]} \lambda^*_{i'} \alpha_{i'j'} \bra{u_{i'}}_{\reg{X}} \bra{j'}_{\reg{Z}} ).
\end{align*}
Since $\wt{M}_{\reg{XZ}}$ can be written as $\projector{\xi}$ for some vector $\ket{\xi}$, it is PSD.
\end{proof}
\fi

Finally, we have the following main lemma.
\begin{lemma}
\label{lem:iden_vs_indep_twirl}
Let $N, t \in \N$ such that $(2t)^2 \leq N$ and let $\hilbert_{\reg{A_1}} = \hilbert_{\reg{B_1}} = (\C^N)^{\otimes t}$ and $\hilbert_{\reg{A_2}}, \hilbert_{\reg{B_2}}$ finite-dimensional Hilbert spaces. For any PSD operators $\rho_{\reg{A_1A_2}}$, $\sigma_{\reg{B_1B_2}}$, $M_{\reg{A_1A_2}}$, $N_{\reg{B_1B_2}}$ $\succeq 0$, it holds that
\begin{align*}
    & \Ex_{U \sim \mu(N)} \qty[ \Tr \qty( U^{\otimes t}_{\reg{A_1}} \rho_{\reg{A_1A_2}} U^{\dagger, \otimes t}_{\reg{A_1}} \cdot M_{\reg{A_1A_2}} ) ] 
    \cdot \Ex_{V \sim \mu(N)}\qty[ \Tr \qty( V^{\otimes t}_{\reg{B_1}} \sigma_{\reg{B_1B_2}} V^{\dagger, \otimes t}_{\reg{B_1}} \cdot N_{\reg{B_1B_2}} ) ] \\
    \leq & \,
    \frac{(1+\veps)^2}{1-\veps} \cdot \Ex_{U\sim\mu(N)} \qty[ \Tr( U^{\otimes t}_{\reg{A_1}} \rho_{\reg{A_1A_2}} U^{\dagger, \otimes t}_{\reg{A_1}} \otimes U^{\otimes t}_{\reg{B_1}} \sigma_{\reg{B_1B_2}} U^{\dagger, \otimes t}_{\reg{B_1}} \cdot M_{\reg{A_1A_2}} \otimes N_{\reg{B_1B_2}} ) ],
\end{align*}
where $\veps := (2t)^2/N$.
\end{lemma}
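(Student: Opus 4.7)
The strategy is to reduce the lemma to the non-adaptive template sketched in \Cref{sec:hud} by applying the multiplicative twirling approximation (\Cref{lem:multiplicative_twirling}) to convert both sides into explicit permutation sums, and then using the product-permutation comparison (\Cref{lem:product_perm}) to bound the $\Symgp_t \times \Symgp_t$ sum on the left by the $\Symgp_{2t}$ sum on the right. The factor $(1+\veps)^2/(1-\veps)$ in the statement reflects this structure exactly: two $(1+\veps)$ upper bounds for the independent twirls on the left, and one $1/(1-\veps)$ lower bound for the joint twirl on the right.

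Concretely, I would first apply \Cref{lem:multiplicative_twirling} with $k = t$ (valid since $t^2 \leq (2t)^2 \leq N$) to each of the two independent Haar twirls on the LHS. Combined with \Cref{fact:PSD_trace} applied to the PSD operators $M, N$, this upper bounds the LHS by $(1+\veps)^2/N^{2t}$ times a double sum over $(\pi_A, \pi_B) \in \Symgp_t \times \Symgp_t$. Dually, applying \Cref{lem:multiplicative_twirling} with $k = 2t$ to the single joint twirl on the RHS lower bounds it by $(1-\veps)/N^{2t}$ times a single sum over $\pi \in \Symgp_{2t}$ of the analogous quantity with $\rho \otimes \sigma$ and $M \otimes N$. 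The heart of the proof is then the permutation-level inequality
\[
\sum_{\pi_A, \pi_B \in \Symgp_t} F(\pi_A, \pi_B) \;\leq\; \sum_{\pi \in \Symgp_{2t}} G(\pi),
\]
where $F, G$ are the summands just described. Following the transpose-and-reorder template of \Cref{sec:hud}, I would rewrite every summand using the identity $\Tr(X^\intercal Y) = \Tr(XY^\intercal)$ and the ricochet identities of \Cref{fact:Ricochet} to transfer the transposes from the permutation operators onto $\rho$ and $\sigma$ via partial transposes. After collapsing the remaining partial trace via trace cyclicity and re-grouping the doubled $\reg{A_1}$-registers (\resp $\reg{B_1}$-registers) into blocks of dimension $N^2$, each summand on both sides ends up in the common form $\Tr\qty( P_{N^2}(\pi) \cdot Z_A \otimes Z_B )$, where $Z_A$ is a fixed operator on $(\C^{N^2})^{\otimes t}$ built from $M$ and $\rho$, $Z_B$ is built symmetrically from $N$ and $\sigma$, and neither depends on the summation index. \Cref{lem:product_perm} applied in dimension $N^2$ then yields the inequality directly.

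The main technical obstacle is the PSDness of $Z_A$ and $Z_B$. Unlike the auxiliary-free setting sketched in \Cref{sec:hud}, where $Z_A$ is simply the register-reordering of $\rho^\intercal \otimes M$ and is PSD because transposition preserves positivity, here the un-twirled register $\reg{A_2}$ is shared between $\rho$ and $M$; a direct application of the transpose trick leaves $Z_A$ in the form $\Tr_{\reg{A_2}}$ of the product of $M_{\reg{A_1 A_2}}$ with a partial transpose of $\rho_{\reg{A_1'A_2}}$ on $\reg{A_1'}$, which is not manifestly PSD because partial transposes can break positivity. This is precisely the scenario addressed by \Cref{lem:PSDness}: with the identification $\reg{X} \mapsto \reg{A_1}$, $\reg{Y} \mapsto \reg{A_2}$, $\reg{Z} \mapsto \reg{A_1'}$, a direct index-level check shows that the $Z_A$ produced by the transpose manipulation coincides with the operator $\wt{M}_{\reg{XZ}}$ of \Cref{lem:PSDness}, so $Z_A \succeq 0$ follows, and symmetrically for $Z_B$. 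Chaining the three bounds then yields $\text{LHS} \leq \tfrac{(1+\veps)^2}{1-\veps} \cdot \text{RHS}$ as claimed.
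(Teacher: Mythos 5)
Your proposal is correct and follows essentially the same route as the paper's proof: two applications of \Cref{lem:multiplicative_twirling} (with $k=t$) plus \Cref{fact:PSD_trace} for the $(1+\veps)^2$ factor, one application with $k=2t$ for the $1/(1-\veps)$ factor, and in between the permutation-level comparison obtained by moving the transposes onto $\rho,\sigma$, regrouping the doubled registers into $N^2$-dimensional blocks, and invoking \Cref{lem:product_perm}, with \Cref{lem:PSDness} supplying exactly the PSDness of the partially transposed operators $\wt{M},\wt{N}$ (which is the paper's \Cref{claim:perms_ineq}). No gaps; this matches the paper's argument step for step.
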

\begin{proof}[Proof of~\Cref{lem:iden_vs_indep_twirl}]
By using~\Cref{lem:multiplicative_twirling} twice, we have
\begin{align}
\label{eq:indep_twirling}
    & \Ex_{U \sim \mu(N)} \qty[ U^{\otimes t}_{\reg{A_1}} \rho_{\reg{A_1A_2}} U^{\dagger, \otimes t}_{\reg{A_1}} \cdot M_{\reg{A_1A_2}} ) ] \otimes \Ex_{V \sim \mu(N)}\qty[  V^{\otimes t}_{\reg{B_1}} \sigma_{\reg{B_1B_2}} V^{\dagger, \otimes t}_{\reg{B_1}} \cdot N_{\reg{B_1B_2}} ] \nonumber \\ 
    & \preceq  \frac{(1+\veps)^2}{N^{2t}} \sum_{\pi_A \in \Symgp_t} P_N(\pi_A)_{\reg{A_1}} \otimes \Tr_{\reg{A_1}}(P_N(\pi_A)^\intercal_{\reg{A_1}} \cdot \rho_{\reg{A_1A_2}}) \nonumber \\
    & \hspace{.3\textwidth} \otimes \sum_{\pi_B \in \Symgp_t} P_N(\pi_B)_{\reg{B_1}} \otimes \Tr_{\reg{B_1}}(P_N(\pi_B)^\intercal_{\reg{B_1}} \cdot \sigma_{\reg{B_1B_2}}).
\end{align}
Next, we have the following claim to relate~\Cref{eq:indep_twirling} and~\Cref{lem:multiplicative_twirling}:
\begin{myclaim}
\label{claim:perms_ineq}
\ifllncs
\begin{align*}
    & \mathrm{Tr}\Bigg( \sum_{\substack{\pi_A \in \Symgp_t \\ \pi_B \in \Symgp_t}}
    P_N(\pi_A)_{\reg{A_1}} \otimes P_N(\pi_B)_{\reg{B_1}} \otimes \Tr_{\reg{A_1B_1}}( P_N(\pi_A)^\intercal_{\reg{A_1}} \otimes P_N(\pi_B)^\intercal_{\reg{B_1}} \cdot \rho_{\reg{A_1A_2}} \otimes \sigma_{\reg{B_1B_2}}) \\
    & \hspace{.3\textwidth} \cdot M_{\reg{A_1A_2}} \otimes N_{\reg{B_1B_2}} \Bigg) \\ 
    & \leq \Tr( \sum_{\pi \in \Symgp_{2t}} P_N(\pi)_{\reg{A_1B_1}} \otimes \Tr_{\reg{A_1B_1}}(P_N(\pi)^\intercal_{\reg{A_1B_1}} \cdot \rho_{\reg{A_1A_2}} \otimes \sigma_{\reg{B_1B_2}}) \cdot M_{\reg{A_1A_2}} \otimes N_{\reg{B_1B_2}} ).
\end{align*}
\else
\begin{align*}
    & \mathrm{Tr}\Bigg( \sum_{\substack{\pi_A \in \Symgp_t \\ \pi_B \in \Symgp_t}}
    P_N(\pi_A)_{\reg{A_1}} \otimes P_N(\pi_B)_{\reg{B_1}} \otimes \Tr_{\reg{A_1B_1}}( P_N(\pi_A)^\intercal_{\reg{A_1}} \otimes P_N(\pi_B)^\intercal_{\reg{B_1}} \cdot \rho_{\reg{A_1A_2}} \otimes \sigma_{\reg{B_1B_2}}) \cdot M_{\reg{A_1A_2}} \otimes N_{\reg{B_1B_2}} \Bigg) \\ 
    & \leq \Tr( \sum_{\pi \in \Symgp_{2t}} P_N(\pi)_{\reg{A_1B_1}} \otimes \Tr_{\reg{A_1B_1}}(P_N(\pi)^\intercal_{\reg{A_1B_1}} \cdot \rho_{\reg{A_1A_2}} \otimes \sigma_{\reg{B_1B_2}}) \cdot M_{\reg{A_1A_2}} \otimes N_{\reg{B_1B_2}} ).
\end{align*}
\fi
\end{myclaim}
\begin{proof}[Proof of~\Cref{claim:perms_ineq}]
First, let $\hilbert_{\reg{A'_1}} = \hilbert_{\reg{A_1}}, \hilbert_{\reg{B'_1}} = \hilbert_{\reg{B_1}}$, and define the operators
\begin{align*}
\wt{M}_{\reg{A_1A'_1}} & := \Tr_{\reg{A_2}}( M_{\reg{A_1A_2}}\otimes\id_{\reg{A'_1}} \cdot \id_{\reg{A_1}} \otimes \ptrans_{\reg{A'_1}}(\rho_{\reg{A'_1A_2}}) ) \\
\wt{N}_{\reg{B_1B'_1}} & := \Tr_{\reg{B_2}}( N_{\reg{B_1B_2}}\otimes\id_{\reg{B'_1}} \cdot  \id_{\reg{B_1}} \otimes \ptrans_{\reg{B'_1}}(\sigma_{\reg{B'_1B_2}}) ).
\end{align*}
From~\Cref{lem:PSDness}, we have $\wt{M}_{\reg{A_1A'_1}}, \wt{N}_{\reg{B_1B'_1}} \succeq 0$. Moreover, for every $\pi \in \Symgp_{2t}$,
\ifllncs
one can verify using direct calculation that
\else
one can verify either using direct calculation or tensor network diagrams (see~\Cref{fig:equivalence} for a graphical proof) that
\fi

\begin{align}
\label{eq:tensor_network_1}
& \Tr( P_N(\pi)_{\reg{A_1B_1}} \otimes \Tr_{\reg{A_1B_1}}(P_N(\pi)^\intercal_{\reg{A_1B_1}} \cdot \rho_{\reg{A_1A_2}} \otimes \sigma_{\reg{B_1B_2}}) \cdot M_{\reg{A_1A_2}} \otimes N_{\reg{B_1B_2}} ) \\
\label{eq:tensor_network_2}
& = \Tr( P_N(\pi)_{\reg{A_1B_1}} \otimes P_N(\pi)_{\reg{A'_1B'_1}} \cdot \wt{M}_{\reg{A_1A'_1}} \otimes \wt{N}_{\reg{B_1B'_1}} ) \\
\label{eq:tensor_network_3}
& = \Tr( P_{N^2}(\pi)_{\reg{A_1A'_1B_1B'_1}} \cdot \wt{M}_{\reg{A_1A'_1}} \otimes \wt{N}_{\reg{B_1B'_1}} ),
\end{align}
where we equivalently view $P_N(\pi)_{\reg{A_1B_1}} \otimes P_N(\pi)_{\reg{A'_1B'_1}}$ as $P_{N^2}(\pi)_{\reg{A_1A'_1B_1B'_1}}$ by re-ordering the registers to obtain the last equality.

\ifllncs
\else
\begin{figure}[ht]
    \centering
\begin{subfigure}[t]{0.45\textwidth}
\centering
\resizebox{1\textwidth}{!}{%
\begin{circuitikz}
\tikzstyle{every node}=[font=\LARGE]
\draw  (1.25,9) rectangle (3.75,4);
\draw  (6.25,10.75) rectangle (7.5,8.25);
\draw  (6.25,4.75) rectangle (7.5,2.25);
\node [font=\LARGE] at (7.25,9.5) {};
\node [font=\LARGE] at (7,9.5) {$\rho$};
\node [font=\LARGE] at (6.75,3.5) {$\sigma$};
\node [font=\LARGE] at (2.5,6.5) {$P_N(\pi)$};
\draw  (12.5,10.75) rectangle (15,2.25);
\node [font=\LARGE] at (13.75,6.5) {$P_N(\pi)$};
\draw  (9.25,10.75) rectangle (10.5,8.25);
\draw  (9.25,4.75) rectangle (10.5,2.25);
\node [font=\LARGE] at (10.25,9.5) {};
\node [font=\LARGE] at (9.75,9.5) {$M$};
\node [font=\LARGE] at (9.75,3.5) {$N$};
\draw [short] (3.75,8.25) .. controls (5.5,10.5) and (2,11) .. (-1.25,10.25);
\draw [short] (3.75,4.5) .. controls (5.5,2.5) and (2,1.75) .. (-1.25,2.75);
\draw [short] (1.25,8.25) .. controls (-0.5,10.5) and (3,11.25) .. (6.25,10.25);
\draw [short] (1.25,4.5) .. controls (-0.5,2.5) and (3,1.75) .. (6.25,2.75);
\draw [short] (-1.25,10.25) .. controls (-3.5,12.5) and (9.75,12.75) .. (7.5,10.25);
\draw [short] (-1.25,2.75) .. controls (-3.5,0.25) and (9.75,0.25) .. (7.5,2.75);
\draw [short] (7.5,4) .. controls (7.75,4) and (8.5,4) .. (9.25,4);
\node [font=\LARGE] at (8.75,8.5) {};
\node [font=\LARGE] at (8.75,8.5) {};
\draw [short] (7.5,9) .. controls (7.75,9) and (8.5,9) .. (9.25,9);
\draw [short] (6.25,5.25) .. controls (7.75,5.25) and (8.5,5.25) .. (10.5,5.25);
\node [font=\LARGE] at (9.25,3) {};
\node [font=\LARGE] at (9.25,3) {};
\node [font=\LARGE] at (9.25,3) {};
\node [font=\LARGE] at (8.75,5.25) {};
\draw [short] (6.25,7.75) .. controls (7.75,7.75) and (8.5,7.75) .. (10.5,7.75);
\draw [short] (6.25,9) .. controls (5,8.75) and (5,8) .. (6.25,7.75);
\draw [short] (6.25,5.25) .. controls (5,5) and (5,4.25) .. (6.25,4);
\draw [short] (10.5,7.75) .. controls (11.75,8) and (11.75,8.75) .. (10.5,9);
\draw [short] (10.5,4) .. controls (11.75,4.25) and (11.75,5) .. (10.5,5.25);
\node [font=\LARGE] at (8.75,7.25) {};
\node [font=\LARGE] at (8.75,7.25) {};
\draw [short] (10.5,10.25) .. controls (11,10.25) and (11.5,10.25) .. (12.5,10.25);
\draw [short] (10.5,2.75) .. controls (11,2.75) and (11.5,2.75) .. (12.5,2.75);
\draw [short] (9.25,1.5) .. controls (11.5,1.5) and (12.25,1.5) .. (15,1.5);
\draw [short] (9.25,11.5) .. controls (11.5,11.5) and (12.25,11.5) .. (15,11.5);
\draw [short] (9.25,11.5) .. controls (8,11.25) and (8,10.5) .. (9.25,10.25);
\draw [short] (9.25,2.75) .. controls (8,2.5) and (8,1.75) .. (9.25,1.5);
\draw [short] (15,10.25) .. controls (16.25,10.5) and (16.25,11.25) .. (15,11.5);
\draw [short] (15,1.5) .. controls (16.25,1.75) and (16.25,2.5) .. (15,2.75);
\node [font=\LARGE] at (0.75,8.25) {$A_1$};
\node [font=\LARGE] at (0.75,4.75) {$B_1$};
\node [font=\LARGE] at (5.75,10.75) {$A_1$};
\node [font=\LARGE] at (11.5,10.75) {$A_1$};
\node [font=\LARGE] at (11.5,3.25) {$B_1$};
\node [font=\LARGE] at (5.75,3.25) {$B_1$};
\node [font=\LARGE] at (8.25,4.5) {$B_2$};
\node [font=\LARGE] at (8.25,8.5) {$A_2$};
\end{circuitikz}
}%
\caption{Tensor network diagram of~\Cref{eq:tensor_network_1}.}
\label{fig:LHS}
\end{subfigure}
\hfill
\begin{subfigure}[t]{0.45\textwidth}
\centering
\resizebox{1\textwidth}{!}{%
\begin{circuitikz}
\tikzstyle{every node}=[font=\LARGE]
\draw  (17.5,12.75) rectangle (20,0.25);
\draw [ color={rgb,255:red,255; green,0; blue,0} , line width=2pt ] (5,10.75) rectangle (6.25,8.25);
\draw [ color={rgb,255:red,0; green,0; blue,255} , line width=2pt ] (5,4.75) rectangle (6.25,2.25);
\node [font=\LARGE] at (6,9.5) {};
\node [font=\LARGE, color={rgb,255:red,255; green,0; blue,0}] at (5.75,9.5) {$\rho$};
\node [font=\LARGE, color={rgb,255:red,0; green,0; blue,255}] at (5.5,3.5) {$\sigma$};
\node [font=\LARGE] at (18.75,6.5) {$P_N(\pi)$};
\draw  (12.5,10.75) rectangle (15,2.25);
\node [font=\LARGE] at (13.75,6.5) {$P_N(\pi)$};
\draw [ color={rgb,255:red,255; green,0; blue,0} , line width=2pt ] (9.25,10.75) rectangle (10.5,8.25);
\draw [ color={rgb,255:red,0; green,0; blue,255} , line width=2pt ] (9.25,4.75) rectangle (10.5,2.25);
\node [font=\LARGE] at (10.25,9.5) {};
\node [font=\LARGE, color={rgb,255:red,255; green,0; blue,0}] at (9.75,9.5) {$M$};
\node [font=\LARGE, color={rgb,255:red,0; green,0; blue,255}] at (9.75,3.5) {$N$};
\draw [ color={rgb,255:red,0; green,0; blue,255}, line width=2pt, short] (6.25,4) .. controls (7.25,4) and (7.75,4) .. (9.25,4);
\node [font=\LARGE] at (8.75,8.5) {};
\node [font=\LARGE] at (8.75,8.5) {};
\draw [ color={rgb,255:red,255; green,0; blue,0}, line width=2pt, short] (6.25,9) .. controls (7.25,9) and (7.75,9) .. (9.25,9);
\draw [ color={rgb,255:red,0; green,0; blue,255}, line width=2pt, short] (3.75,5.25) .. controls (7,5.25) and (7.5,5.25) .. (11.25,5.25);
\node [font=\LARGE] at (9.25,3) {};
\node [font=\LARGE] at (9.25,3) {};
\node [font=\LARGE] at (9.25,3) {};
\node [font=\LARGE] at (8.75,5.25) {};
\draw [ color={rgb,255:red,255; green,0; blue,0}, line width=2pt, short] (3.75,7.75) .. controls (7,7.75) and (7.5,7.75) .. (11.25,7.75);
\draw [ color={rgb,255:red,255; green,0; blue,0}, line width=2pt, short] (3.75,9) .. controls (2.5,8.75) and (2.5,8) .. (3.75,7.75);
\draw [ color={rgb,255:red,0; green,0; blue,255}, line width=2pt, short] (3.75,5.25) .. controls (2.5,5) and (2.5,4.25) .. (3.75,4);
\draw [ color={rgb,255:red,255; green,0; blue,0}, line width=2pt, short] (11.25,7.75) .. controls (12.5,8) and (12.5,8.75) .. (11.25,9);
\draw [ color={rgb,255:red,0; green,0; blue,255}, line width=2pt, short] (11.25,4) .. controls (12.5,4.25) and (12.5,5) .. (11.25,5.25);
\node [font=\LARGE] at (8.75,7.25) {};
\node [font=\LARGE] at (8.75,7.25) {};
\draw [short] (8.75,1.5) .. controls (11.25,1.5) and (12,1.5) .. (15,1.5);
\draw [short] (8.75,11.5) .. controls (11.25,11.5) and (12,11.5) .. (15,11.5);
\draw [short] (8.75,11.5) .. controls (7.5,11.25) and (7.5,10.5) .. (8.75,10.25);
\draw [short] (8.75,2.75) .. controls (7.5,2.5) and (7.5,1.75) .. (8.75,1.5);
\draw [short] (15,10.25) .. controls (16.25,10.5) and (16.25,11.25) .. (15,11.5);
\draw [short] (15,1.5) .. controls (16.25,1.75) and (16.25,2.5) .. (15,2.75);
\node [font=\LARGE] at (20.75,11.75) {$A'_1$};
\node [font=\LARGE] at (20.75,1.5) {$B'_1$};
\node [font=\LARGE] at (3.75,10.25) {$A'_1$};
\node [font=\LARGE] at (11.5,10.75) {$A_1$};
\node [font=\LARGE] at (11.5,3.25) {$B_1$};
\node [font=\LARGE] at (3.75,2.75) {$B'_1$};
\node [font=\LARGE] at (8.25,4.5) {$B_2$};
\node [font=\LARGE] at (8.25,8.5) {$A_2$};

\draw [short] (3.75,1) .. controls (2.5,0.5) and (2.5,-0.5) .. (3.75,-1);
\draw [short] (3.75,14) .. controls (2.5,13.5) and (2.5,12.5) .. (3.75,12);
\draw [ color={rgb,255:red,0; green,0; blue,255}, line width=2pt, short] (5,2.75) .. controls (3.75,2) and (3.75,1.25) .. (5,0.5);
\draw [ color={rgb,255:red,255; green,0; blue,0}, line width=2pt, short] (5,12.5) .. controls (3.75,11.75) and (3.75,11) .. (5,10.25);
\draw [ color={rgb,255:red,255; green,0; blue,0}, line width=2pt, short] (6.25,10.25) .. controls (7.5,10.75) and (7.5,11.5) .. (6.25,12);
\draw [ color={rgb,255:red,255; green,0; blue,0}, , line width=2pt](3.75,12) to[short] (6.25,12);
\draw [ color={rgb,255:red,255; green,0; blue,0}, , line width=2pt](5,12.5) to[short] (17.5,12.5);
\draw [short] (20,12.5) .. controls (21.25,12.75) and (21.25,13.75) .. (20,14);
\draw (3.75,14) to[short] (20,14);
\draw [ color={rgb,255:red,0; green,0; blue,255}, , line width=2pt](5,0.5) to[short] (17.5,0.5);
\draw [short] (20,-1) .. controls (21.25,-0.5) and (21.25,0.25) .. (20,0.75);
\draw (3.75,-1) to[short] (20,-1);
\draw [ color={rgb,255:red,0; green,0; blue,255}, line width=2pt, short] (6.25,1) .. controls (7.5,1.5) and (7.5,2.25) .. (6.25,2.75);
\draw [ color={rgb,255:red,0; green,0; blue,255}, , line width=2pt](3.75,1) to[short] (6.25,1);
\draw [ color={rgb,255:red,255; green,0; blue,0}, , line width=2pt](10.5,9) to[short] (11.25,9);
\draw [ color={rgb,255:red,0; green,0; blue,255}, , line width=2pt](10.5,4) to[short] (11.25,4);
\draw [ color={rgb,255:red,255; green,0; blue,0}, , line width=2pt](3.75,9) to[short] (5,9);
\draw [ color={rgb,255:red,0; green,0; blue,255}, , line width=2pt](3.75,4) to[short] (5,4);
\draw [ color={rgb,255:red,255; green,0; blue,0}, , line width=2pt](8.75,10.25) to[short] (9.25,10.25);
\draw [ color={rgb,255:red,0; green,0; blue,255}, , line width=2pt](8.75,2.75) to[short] (9.25,2.75);
\draw [ color={rgb,255:red,255; green,0; blue,0}, , line width=2pt](10.5,10.25) to[short] (11.25,10.25);
\draw (11.25,10.25) to[short] (12.5,10.25);
\draw [ color={rgb,255:red,0; green,0; blue,255}, , line width=2pt](10.5,2.75) to[short] (11.25,2.75);
\draw (11.25,2.75) to[short] (12.5,2.75);
\end{circuitikz}
}%
\caption{Tensor network diagram of~\Cref{eq:tensor_network_2}. The \textcolor{red}{red} part represents $\wt{M}_{\reg{A_1A'_1}}$ and the \textcolor{blue}{blue} part represents $\wt{N}_{\reg{B_1B'_1}}$.}
\label{fig:RHS}
\end{subfigure}
\caption{A graphical expression of~\Cref{eq:tensor_network_1,eq:tensor_network_2}.}
\label{fig:equivalence}
\end{figure}
\fi

\noindent Summing over $\pi \in \Symgp_{2t}$ and $(\pi_A,\pi_B) \in \Symgp_t \times \Symgp_t$ in~\Cref{eq:tensor_network_3} respectively and invoking~\Cref{lem:product_perm} completes the proof of~\Cref{claim:perms_ineq}.
\end{proof}
\ \\
\noindent Finally, we have
\ifllncs
\begin{align*}
    & \Ex_{U \sim \mu(N)} \qty[ \Tr \qty( U^{\otimes t}_{\reg{A_1}} \rho_{\reg{A_1A_2}} U^{\dagger, \otimes t}_{\reg{A_1}} \cdot M_{\reg{A_1A_2}} ) ] \cdot \Ex_{V \sim \mu(N)}\qty[ \Tr \qty( V^{\otimes t}_{\reg{B_1}} \sigma_{\reg{B_1B_2}} V^{\dagger, \otimes t}_{\reg{B_1}} \cdot N_{\reg{B_1B_2}} ) ] \\
    & \leq
    \frac{(1+\veps)^2}{N^{2t}} \mathrm{Tr}\Bigg( \sum_{\substack{\pi_A \in \Symgp_t \\ \pi_B \in \Symgp_t}}
    P_N(\pi_A)_{\reg{A_1}} \otimes P_N(\pi_B)_{\reg{B_1}} \\
    & \hspace{.1\textwidth} \otimes \Tr_{\reg{A_1B_1}}( P_N(\pi_A)^\intercal_{\reg{A_1}} \otimes P_N(\pi_B)^\intercal_{\reg{B_1}} \cdot \rho_{\reg{A_1A_2}} \otimes \sigma_{\reg{B_1B_2}}) \cdot M_{\reg{A_1A_2}} \otimes N_{\reg{B_1B_2}} \Bigg) \\
    & \leq
    \frac{(1+\veps)^2}{N^{2t}} \mathrm{Tr} \Bigg( \sum_{\pi \in \Symgp_{2t}} P_N(\pi)_{\reg{A_1B_1}} \otimes \Tr_{\reg{A_1B_1}}(P_N(\pi)^\intercal_{\reg{A_1B_1}} \cdot \rho_{\reg{A_1A_2}} \otimes \sigma_{\reg{B_1B_2}}) \cdot M_{\reg{A_1A_2}} \otimes N_{\reg{B_1B_2}} \Bigg) \\
    & \leq \frac{(1+\veps)^2}{1-\veps} \cdot \Ex_{U\sim\mu(N)}\qty[ \Tr( U^{\otimes t}_{\reg{A_1}} \rho_{\reg{A_1A_2}} U^{\dagger, \otimes t}_{\reg{A_1}} \otimes U^{\otimes t}_{\reg{B_1}} \sigma_{\reg{B_1B_2}} U^{\dagger, \otimes t}_{\reg{B_1}} \cdot M_{\reg{A_1A_2}} \otimes N_{\reg{B_1B_2}} ) ],
\end{align*}
\else
\begin{align*}
    & \Ex_{U \sim \mu(N)} \qty[ \Tr \qty( U^{\otimes t}_{\reg{A_1}} \rho_{\reg{A_1A_2}} U^{\dagger, \otimes t}_{\reg{A_1}} \cdot M_{\reg{A_1A_2}} ) ] \cdot \Ex_{V \sim \mu(N)}\qty[ \Tr \qty( V^{\otimes t}_{\reg{B_1}} \sigma_{\reg{B_1B_2}} V^{\dagger, \otimes t}_{\reg{B_1}} \cdot N_{\reg{B_1B_2}} ) ] \\
    & \leq
    \frac{(1+\veps)^2}{N^{2t}} \mathrm{Tr}\Bigg( \sum_{\substack{\pi_A \in \Symgp_t \\ \pi_B \in \Symgp_t}}
    P_N(\pi_A)_{\reg{A_1}} \otimes P_N(\pi_B)_{\reg{B_1}} \otimes \Tr_{\reg{A_1B_1}}( P_N(\pi_A)^\intercal_{\reg{A_1}} \otimes P_N(\pi_B)^\intercal_{\reg{B_1}} \cdot \rho_{\reg{A_1A_2}} \otimes \sigma_{\reg{B_1B_2}}) \cdot M_{\reg{A_1A_2}} \otimes N_{\reg{B_1B_2}} \Bigg) \\
    & \leq
    \frac{(1+\veps)^2}{N^{2t}} \mathrm{Tr} \Bigg( \sum_{\pi \in \Symgp_{2t}} P_N(\pi)_{\reg{A_1B_1}} \otimes \Tr_{\reg{A_1B_1}}(P_N(\pi)^\intercal_{\reg{A_1B_1}} \cdot \rho_{\reg{A_1A_2}} \otimes \sigma_{\reg{B_1B_2}}) \cdot M_{\reg{A_1A_2}} \otimes N_{\reg{B_1B_2}} \Bigg) \\
    & \leq \frac{(1+\veps)^2}{1-\veps} \cdot \Ex_{U\sim\mu(N)}\qty[ \Tr( U^{\otimes t}_{\reg{A_1}} \rho_{\reg{A_1A_2}} U^{\dagger, \otimes t}_{\reg{A_1}} \otimes U^{\otimes t}_{\reg{B_1}} \sigma_{\reg{B_1B_2}} U^{\dagger, \otimes t}_{\reg{B_1}} \cdot M_{\reg{A_1A_2}} \otimes N_{\reg{B_1B_2}} ) ],
\end{align*}
\fi
where the first inequality follows from~\Cref{eq:indep_twirling} and~\Cref{fact:PSD_trace}; the second inequality follows from~\Cref{claim:perms_ineq}; the last inequality follows from~\Cref{lem:multiplicative_twirling} and~\Cref{fact:PSD_trace}. This completes the proof of~\Cref{lem:iden_vs_indep_twirl}.
\end{proof}

\begin{theorem}
\label{thm:adap_LOCC}
For any two-party adversary $(\alice^{(\cdot)},\bob^{(\cdot)})$, where $\alice$ and $\bob$ do not share any entanglement initially, and each makes $t$ queries,
\begin{align*}
    \Pr[\mathsf{HUD}(N, \alice, \bob) = 1] 
    \leq \frac{1}{2} + \frac{3\veps+\veps^2}{2(1+\veps)^2} 
    = \frac{1}{2} + O\qty(\frac{t^2}{N}),
\end{align*}
where $\veps := \frac{(2t)^2}{N}$.
\end{theorem}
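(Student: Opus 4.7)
The plan is to reduce the adaptive LOCC protocol to a non-adaptive form via gate teleportation (\Cref{fact:gate_teleportation}) so that \Cref{lem:iden_vs_indep_twirl} applies directly. Let $p_{b'|b}$ denote the probability that $(\alice,\bob)$ outputs $b'$ conditioned on the challenge being $b$, so that $\Pr[\mathsf{HUD}(N,\alice,\bob)=1]=(p_{0|0}+p_{1|1})/2$. Writing $\cE^{U,V}$ in the normal form of~\Cref{eq:normal_form}, each $p_{b'|b}$ becomes an expectation over Haar unitaries of a trace involving the $t$ interleaved queries together with the $U,V$-independent operators $A_{i,a_i}\otimes B_{i,a_i}$ and the POVM element $M^{b'}_{\reg{A}}\otimes N^{b'}_{\reg{B}}$.

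To match the non-adaptive form required by~\Cref{lem:iden_vs_indep_twirl}, I would introduce $t$ non-normalized maximally entangled pairs on each of $\alice$'s and $\bob$'s sides, each of local dimension $N$, and iteratively apply \Cref{fact:gate_teleportation} together with the Ricochet identities (\Cref{fact:Ricochet}) to migrate every application of $U$ on $\alice$'s side onto a single ancilla register $\reg{A_1}$ as $U^{\otimes t}_{\reg{A_1}}$; symmetrically, all applications of $V$ collapse to $V^{\otimes t}_{\reg{B_1}}$. All residual structure (the $\ket{\Omega},\bra{\Omega}$ pairs, the Kraus operators $A_{i,a_i},B_{i,a_i}$, and the POVM elements) is absorbed into fixed operators $\wt{\rho}_{\reg{A_1A_2}}, \wt{\sigma}_{\reg{B_1B_2}}, \wt{M}^{b'}_{\reg{A_1A_2}}, \wt{N}^{b'}_{\reg{B_1B_2}}\succeq 0$, with positivity established in the style of~\Cref{lem:PSDness}. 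Crucially, these four operators depend on the adversary but not on $b$, so the only difference between $p_{b'|0}$ and $p_{b'|1}$ lies in whether the Haar average is joint or product. Applying~\Cref{lem:iden_vs_indep_twirl} then yields $p_{b'|1}\le\alpha\,p_{b'|0}$ for $\alpha := (1+\veps)^2/(1-\veps)$ and both $b'\in\bit$.

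The final step is a small optimization. Setting $x := p_{0|0}$, $y := p_{0|1}$ and using $p_{0|b}+p_{1|b}=1$, the task is to maximize $(x+1-y)/2$ subject to $y\le\alpha x$, $1-y\le\alpha(1-x)$, and $x,y\in[0,1]$. For fixed $x$ the objective is maximized at $y=\max(0,\,1-\alpha+\alpha x)$, and a short case split on the sign of $1-\alpha+\alpha x$ shows the overall maximum equals $1-\tfrac{1}{2\alpha}$, attained at $x=1-1/\alpha$. Subtracting $\tfrac12$ and simplifying, $\tfrac{\alpha-1}{2\alpha}=\tfrac{3\veps+\veps^2}{2(1+\veps)^2}$, matching the theorem.

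I expect the main obstacle to be the adaptive-to-non-adaptive reduction: one must carry out the gate-teleportation rewriting consistently across all $t$ rounds on both sides, verify PSDness of the resulting $\wt{\rho},\wt{\sigma},\wt{M}^{b'},\wt{N}^{b'}$ (this is precisely why~\Cref{lem:PSDness} and the Ricochet property are already in place), and check that these four operators coincide in the $b=0$ and $b=1$ cases so that~\Cref{lem:iden_vs_indep_twirl} applies verbatim. Once the reduction is in place, the rest is routine bookkeeping plus the short linear program above.
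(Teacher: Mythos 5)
Your proposal follows essentially the same route as the paper's proof: gate teleportation with post-selection (the paper's \Cref{lem:adaptive-to-selective}) to collapse the adaptive LOCC interaction into parallel $U^{\otimes t}$/$V^{\otimes t}$ twirls of fixed PSD operators that are identical in the $b=0$ and $b=1$ cases, then \Cref{lem:iden_vs_indep_twirl} to obtain $p_{b'|1}\le \frac{(1+\veps)^2}{1-\veps}\,p_{b'|0}$, and finally the same linear program, whose optimum $\frac{1}{2}+\frac{3\veps+\veps^2}{2(1+\veps)^2}$ you computed correctly. The one point to tighten is that the teleported operators cannot be absorbed into a single quadruple $\wt\rho,\wt\sigma,\wt M^{b'},\wt N^{b'}$: they must remain indexed by the classical transcript $a$ (a collection $\set{\rho_a\otimes\sigma_a}_{a\in\Sigma}$ as in the paper), with \Cref{lem:iden_vs_indep_twirl} applied term-by-term before summing over $a$, since the independent-case probability is a sum of products of expectations and not a product of sums.
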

\begin{proof}[Proof of~\Cref{thm:adap_LOCC}]
For short, let $p_{b'|b}$ be the probability that $(\alice,\bob)$ outputs $b'$ conditioned on $\challenger$ picks $b$ as the challenge. It is sufficient to show that the success probability $(p_{0|0} + p_{1|1})/2$ is at most $\frac{1}{2} + \frac{3\veps+\veps^2}{2(1+\veps)^2}$. Suppose $(\alice, \bob)$ starts with the initial state $\projector{0}_{\regA} \otimes \projector{0}_{\regB}$, performs an LOCC channel $\cE^{U,V}$ of the form~\Cref{eq:normal_form}, and performs a two-outcome POVM of the form $\set{ M^0_{\regA} \otimes N^0_{\regB}, M^1_{\regA} \otimes N^1_{\regB} }$ such that $M^0_{\regA}$, $N^0_{\regB}$, $M^1_{\regA}$, $N^1_{\regB} \succeq 0$ and $M^0_{\regA} \otimes N^0_{\regB} + M^1_{\regA} \otimes N^1_{\regB} = \id$ to output $b' \in \bit$. \\

\noindent First, using the idea from~\cite{AMR20} (see also~\cite{Kretschmer21,MPSY24,SHH25}), which utilizes gate teleportation and post-selection, we can express each $p_{b'|b}$ in the following form:

\begin{lemma}
\label{lem:adaptive-to-selective}
There exist PSD operators $P^0_{A}, P^1_{A}, Q^0_{B}$, and $Q^1_{B}$, an alphabet $\Sigma$, and a collection of PSD operators $\set{\rho_a \otimes \sigma_a:\ \rho_a, \sigma_a \succeq 0}_{a \in \Sigma}$ such that
\begin{align*}
    & p_{0|0} = \sum_{a \in \Sigma} \Ex_{U \sim \haarunitaries(N)} \qty[ \Tr( U^{\otimes t} \rho_a U^{\dagger, \otimes t} \otimes U^{\otimes t} \sigma_a U^{\dagger, \otimes t} \cdot P^0_{A} \otimes Q^0_{B} ) ], \\
    & p_{0|1} = \sum_{a \in \Sigma} \Ex_{U \sim \haarunitaries(N)} \qty[ \Tr( U^{\otimes t} \rho_a U^{\dagger, \otimes t} \cdot P^0_{A} ) ] \cdot \Ex_{V \sim \haarunitaries(N)} \qty[ \Tr\qty( V^{\otimes t} \sigma_a V^{\dagger, \otimes t} \cdot Q^0_{B} ) ], \\
    & p_{1|0} = \sum_{a \in \Sigma} \Ex_{U \sim \haarunitaries(N)} \qty[ \Tr( U^{\otimes t} \rho_a U^{\dagger, \otimes t} \otimes U^{\otimes t} \sigma_a U^{\dagger, \otimes t} \cdot P^1_{A} \otimes Q^1_{B} ) ], \\
    & p_{1|1} = \sum_{a \in \Sigma} \Ex_{U \sim \haarunitaries(N)} \qty[ \Tr( U^{\otimes t} \rho_a U^{\dagger, \otimes t} \cdot P^1_{A} ) ] \cdot \Ex_{V \sim \haarunitaries(N)} \qty[ \Tr\qty( V^{\otimes t} \sigma_a V^{\dagger, \otimes t} \cdot Q^1_{B} ) ].
\end{align*}
\end{lemma}
\begin{proof}[Proof of~\Cref{lem:adaptive-to-selective}]
Using~\Cref{eq:normal_form}, we expand $p_{0|0}$ as follows:
\begin{align}
\label{eq:p00}
p_{0|0} = & \sum_{(a_1,\dots,a_t) \in \Sigma_1 \times \dots \times \Sigma_t}  
\Ex_{U \sim \haarunitaries(N)} \Bigg[ \Tr \qty( \prod_{i=t}^1 UA_{i,a_i} \projector{0}_{\regA} \prod_{i'=1}^t A^\dagger_{i',a_{i'}}U^\dagger \cdot M^0_{\regA} ) \nonumber \\
& \hspace{0.2\textwidth} \cdot \Tr \qty( \prod_{j=t}^1 UB_{j,a_j} \projector{0}_{\regB} \prod_{j'=1}^t B^\dagger_{j',a_{j'}}U^\dagger \cdot N^0_{\regB} ) \Bigg].
\end{align}

\noindent Now, fix $\bfa := (a_1,\dots,a_t) \in \Sigma_1 \times \dots \times \Sigma_t$ and $U \in \Unitary(N)$. Applying~\Cref{fact:gate_teleportation} recursively on~\Cref{eq:p00}, we have
\begin{align}
\label{eq:trace_A}
& \Tr \qty( \prod_{i=t}^1 U A_{i,a_i} \projector{0}_{\regA} \prod_{i'=1}^t A^\dagger_{i',a_{i'}} U^\dagger \cdot M^0_{\regA} ) \nonumber \\
= & \mathrm{Tr} \Bigg( \qty( \id \otimes U^{\otimes t} ) \cdot \underbrace{ \projector{0}_{\regA} \otimes 
\bigotimes_{i=1}^t (\id \otimes A_{i,a_i}) \projector{\Omega_i} (\id \otimes A^\dagger_{i,a_i}) }_{ =: \rho_{\bfa}} \cdot \qty( \id \otimes U^{\dagger, \otimes t} ) \cdot \underbrace{ \bigotimes_{i=1}^t \projector{\Omega_i} \otimes M^0_{\regA} }_{ =: P^0_{A}} \Bigg),
\end{align}
where $\ket{\Omega_i}$ denotes the non-normalized maximally entangled state of an appropriate dimension.
\ifllncs
\else
where $\ket{\Omega_i}$ denotes the non-normalized maximally entangled state of an appropriate dimension (see~\Cref{fig:flattening} for an example).

\begin{figure}[!ht]
\centering
\resizebox{.5\textwidth}{!}{%
\begin{circuitikz}
\tikzstyle{every node}=[font=\LARGE]
\draw [ color={rgb,255:red,0; green,0; blue,255} , line width=2pt ] (2.5,-83.5) rectangle (5,-86);
\draw [ color={rgb,255:red,0; green,0; blue,255} , line width=2pt ] (5,-78.5) rectangle (7.5,-81);
\draw [ color={rgb,255:red,0; green,0; blue,255} , line width=2pt ] (7.5,-83.5) rectangle (10,-86);
\draw [ color={rgb,255:red,0; green,0; blue,255}, line width=2pt, short] (5,-84.75) .. controls (6.25,-84.25) and (6.25,-82.75) .. (5,-82.25);
\draw [ color={rgb,255:red,0; green,0; blue,255}, , line width=2pt](1.25,-82.25) to[short] (5,-82.25);
\draw [ color={rgb,255:red,0; green,0; blue,255}, , line width=2pt](1.25,-79.75) to[short] (5,-79.75);
\draw [ color={rgb,255:red,0; green,0; blue,255}, , line width=2pt](7.5,-79.75) to[short] (11.25,-79.75);
\draw [ color={rgb,255:red,0; green,0; blue,255}, , line width=2pt](7.5,-82.25) to[short] (11.25,-82.25);
\draw [ color={rgb,255:red,0; green,0; blue,255}, line width=2pt, short] (7.5,-84.75) .. controls (6.25,-84.25) and (6.25,-82.75) .. (7.5,-82.25);
\draw [ color={rgb,255:red,255; green,0; blue,0}, line width=2pt, short] (-2.5,-82.25) .. controls (-3.75,-81.75) and (-3.75,-80.25) .. (-2.5,-79.75);
\draw [ color={rgb,255:red,255; green,0; blue,0}, line width=2pt, short] (15,-82.25) .. controls (16.25,-81.75) and (16.25,-80.25) .. (15,-79.75);
\node [font=\LARGE] at (8.75,-84.75) {$A_1^\dagger$};
\node [font=\LARGE] at (6.25,-79.75) {$\projector{0}$};
\node [font=\LARGE] at (3.75,-84.75) {$A_1$};

\draw [ color={rgb,255:red,0; green,0; blue,255}, , line width=2pt](1.25,-84.75) to[short] (2.5,-84.75);
\draw [ color={rgb,255:red,0; green,0; blue,255}, , line width=2pt](10,-84.75) to[short] (11.25,-84.75);

\draw [ line width=2pt ] (-1.75,-83.5) rectangle (0.75,-86);
\draw [ line width=2pt ] (11.75,-83.5) rectangle (14.25,-86);
\node [font=\LARGE] at (13,-84.75) {$U^\dagger$};
\node [font=\LARGE] at (-0.5,-84.75) {$U$};

\node [font=\LARGE] at (-3.75,-89.75) {$\sqrt{M^0_A}$};
\node [font=\LARGE] at (16.25,-89.75) {$\sqrt{M^0_A}$};
\draw [ color={rgb,255:red,255; green,0; blue,0} , line width=2pt ] (15,-88.5) rectangle (17.5,-91);
\draw [ color={rgb,255:red,255; green,0; blue,0} , line width=2pt ] (-5,-88.5) rectangle (-2.5,-91);
\draw [short] (-5,-92.25) .. controls (-6.25,-91.75) and (-6.25,-90.25) .. (-5,-89.75);
\draw [short] (17.5,-92.25) .. controls (18.75,-91.75) and (18.75,-90.25) .. (17.5,-89.75);
\draw (-5,-92.25) to[short] (17.5,-92.25);
\draw (11.25,-79.75) to[short] (15,-79.75);
\draw (11.25,-82.25) to[short] (15,-82.25);
\draw (-2.5,-79.75) to[short] (1.25,-79.75);
\draw (-2.5,-82.25) to[short] (1.25,-82.25);
\draw (11.25,-84.75) to[short] (11.75,-84.75);
\draw (14.25,-84.75) to[short] (15,-84.75);
\draw (0.75,-84.75) to[short] (1.25,-84.75);
\draw (-2.5,-84.75) to[short] (-1.75,-84.75);
\draw [ color={rgb,255:red,0; green,0; blue,255} , line width=2pt ] (2.5,-88.5) rectangle (5,-91);
\draw [ color={rgb,255:red,0; green,0; blue,255} , line width=2pt ] (7.5,-88.5) rectangle (10,-91);
\draw [ color={rgb,255:red,0; green,0; blue,255}, line width=2pt, short] (5,-89.75) .. controls (6.25,-89.25) and (6.25,-87.75) .. (5,-87.25);
\draw [ color={rgb,255:red,0; green,0; blue,255}, , line width=2pt](1.25,-87.25) to[short] (5,-87.25);
\draw [ color={rgb,255:red,0; green,0; blue,255}, , line width=2pt](7.5,-87.25) to[short] (11.25,-87.25);
\draw [ color={rgb,255:red,0; green,0; blue,255}, line width=2pt, short] (7.5,-89.75) .. controls (6.25,-89.25) and (6.25,-87.75) .. (7.5,-87.25);
\draw [ color={rgb,255:red,255; green,0; blue,0}, line width=2pt, short] (-2.5,-87.25) .. controls (-3.75,-86.75) and (-3.75,-85.25) .. (-2.5,-84.75);
\draw [ color={rgb,255:red,255; green,0; blue,0}, line width=2pt, short] (15,-87.25) .. controls (16.25,-86.75) and (16.25,-85.25) .. (15,-84.75);
\node [font=\LARGE] at (8.75,-89.75) {$A_2^\dagger$};
\node [font=\LARGE] at (3.75,-89.75) {$A_2$};

\draw [ color={rgb,255:red,0; green,0; blue,255}, , line width=2pt](1.25,-89.75) to[short] (2.5,-89.75);
\draw [ color={rgb,255:red,0; green,0; blue,255}, , line width=2pt](10,-89.75) to[short] (11.25,-89.75);

\draw [ line width=2pt ] (-1.75,-88.5) rectangle (0.75,-91);
\draw [ line width=2pt ] (11.75,-88.5) rectangle (14.25,-91);
\node [font=\LARGE] at (13,-89.75) {$U^\dagger$};
\node [font=\LARGE] at (-0.5,-89.75) {$U$};
\draw (11.25,-87.25) to[short] (15,-87.25);
\draw (-2.5,-87.25) to[short] (1.25,-87.25);
\draw (11.25,-89.75) to[short] (11.75,-89.75);
\draw (14.25,-89.75) to[short] (15,-89.75);
\draw (0.75,-89.75) to[short] (1.25,-89.75);
\draw (-2.5,-89.75) to[short] (-1.75,-89.75);
\end{circuitikz}
}%
\caption{Tensor network diagram of~\Cref{eq:trace_A} for $r = 2$. The \textcolor{blue}{blue} part represents $\rho_\bfa$, and the \textcolor{red}{red} part represents $P^0_A$.}
\label{fig:flattening}
\end{figure}
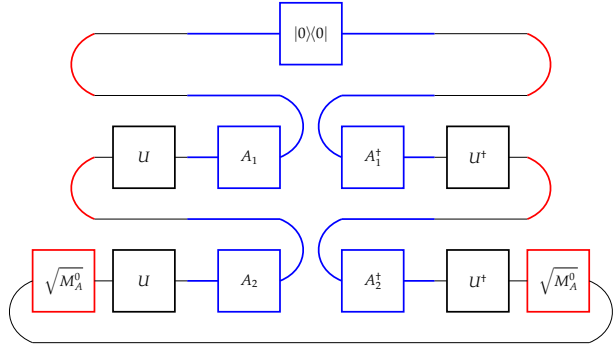
\fi

\ \\
\noindent Similarly, we have
\begin{align}
\label{eq:trace_B}
& \Tr \qty( \prod_{j = t}^1 U B_{j,a_j} \projector{0}_{\regA} \prod_{j' = 1}^t B^\dagger_{j',a_{j'}} U^\dagger \cdot N^0_{\regB} ) \nonumber \\
& = \mathrm{Tr} \Bigg( (\id \otimes U^{\otimes t}) \cdot \underbrace{ \projector{0}_{\regA} \otimes 
\bigotimes_{j=1}^t (\id \otimes B_{j,a_j}) \projector{\Omega_j} (\id \otimes B^\dagger_{j,a_j}) }_{ =: \sigma_{\bfa}} \cdot (\id \otimes U^{\dagger, \otimes t}) \cdot \underbrace{ \bigotimes_{j=1}^t \projector{\Omega_j} \otimes N^0_{\regB} }_{ =: Q^0_{B}} \Bigg).
\end{align}

\noindent Plugging~\Cref{eq:trace_A,eq:trace_B} to~\Cref{eq:p00}, we have
\begin{align*}
    p_{0|0} & = 
    \sum_{\bfa} \Ex_{U \sim \haarunitaries(N)} \qty[ \Tr \qty( U^{\otimes t} \rho_\bfa U^{\dagger, \otimes t} \cdot P^0_A ) \cdot \Tr \qty( U^{\otimes t} \sigma_\bfa U^{\dagger, \otimes t} \cdot Q^0_B ) ] \\
    & = \sum_{\bfa} \Ex_{U \sim \haarunitaries(N)} \qty[ \Tr \qty( U^{\otimes t} \rho_\bfa U^{\dagger, \otimes t} \otimes U^{\otimes t} \sigma_\bfa U^{\dagger, \otimes t} \cdot P^0_A \otimes Q^0_B ) ].
\end{align*}
The cases of $p_{0|1}, p_{1|0}$, and $p_{1|1}$ can be proven similarly. This completes the proof of~\Cref{lem:adaptive-to-selective}.
\end{proof}

\noindent From~\Cref{lem:adaptive-to-selective,lem:iden_vs_indep_twirl}, we have the following linear program:
\begin{equation}
\label{eq:LP}
\begin{aligned}
    \text{maximize} \quad & \frac{p_{0|0} + p_{1|1}}{2} \\
    \text{subject to} \quad 
    & p_{0|0}, p_{0|1}, p_{1|0}, p_{1|1} \geq 0, \\
    & p_{0|0} + p_{1|0} =  p_{0|1} + p_{1|1} = 1, \\
    & \frac{(1+\veps)^2}{1-\veps} p_{0|0} \geq p_{0|1}, \\
    & \frac{(1+\veps)^2}{1-\veps} p_{1|0}\geq p_{1|1}.
\end{aligned}
\end{equation}
Solving~\eqref{eq:LP} we obtain that the maximum is $\frac{1}{2} + \frac{3\veps+\veps^2}{2(1+\veps)^2}$. This completes the proof of~\Cref{thm:adap_LOCC}.
\end{proof}

\subsection{LOCC Indistinguishability against Non-Adaptive and Inverse Queries}
In this section, we consider the distinguishing game in~\Cref{def:NAInvHUDgame}.

\begin{definition}[Non-Adaptive Invertible Haar Unitary Distinguishing Game]
\label{def:NAInvHUDgame}
A \emph{non-adaptive invertible Haar unitary distinguishing game}, parametrized by dimension
$N \in \N$, is played by the challenger $\challenger$ and a two-party adversary $(\alice^{(\cdot)}, \bob^{(\cdot)})$.
\begin{protocol}{$\underline{\NAInvHUD(N, \alice, \bob)}$}
{
\vspace{.5em}
\ \\
\noindent {\bf{Phase~1:}}
\begin{enumerate}
    \item $\challenger$ samples two independent Haar unitaries $U,V\sim\haarunitaries(N)$.
    \item $\challenger$ samples a challenge bit $b \gets \bit$. If $b = 0$, $\alice$ and $\bob$ are both given oracle access to $U$ {\bf and its inverse $U^\dagger$}. Otherwise, $\alice$ is given oracle access to $U$ and {\bf its inverse $U^\dagger$}, and $\bob$ is given oracle access to $V$ {\bf and its inverse $V^\dagger$}.
    \item $\alice$ and $\bob$ each makes {\bf one round of non-adaptive queries}.
\end{enumerate}
\noindent {\bf{Phase~2:}}
\begin{enumerate}
    \item $\alice$ and $\bob$ {\bf both lose oracle access}. They can perform perform an arbitrary finite number of rounds of classical communication.
    \item $(\alice,\bob)$ outputs a guess $b'\in\bit$.
    \item $\challenger$ outputs $1$ if and only if $b' = b$. 
\end{enumerate}
}
\end{protocol}
\end{definition}

\noindent The main technical tool is the following approximation formula (with additive error) for \emph{mixed twirling}.
\begin{theorem}[Mixed twirling approximation]
\label{thm:mixed_twirling}
Let $k,N \in \N$, $\hilbert_{\regA} = \hilbert_{\regB} = (\C^N)^{\otimes k}$, and $\hilbert_{\regC}$ an arbitrary Hilbert space. For any density matrix $\rho_{\reg{ABC}}$, define
\begin{align*}
    \rho^{\sf twirl}_{\reg{ABC}} := \Ex_{U \sim \haarunitaries(N)}[(U^{\otimes k}_{\reg{A}}\otimes U^{\dagger, \otimes k}_{\reg{B}}) \cdot \rho_{\reg{ABC}} \cdot (U^{\otimes k}_{\reg{A}} \otimes U^{\dagger, \otimes k}_{\reg{B}})^\dagger]
\end{align*}
\begin{align*}
    \rho^{\sf approx}_{\reg{ABC}} := \frac{1}{N^{2k}} \sum_{\pi, \tau \in \Symgp_k} 
    P_N(\pi)_{\regA} \otimes P_N(\tau)_{\regB} \otimes
    \Tr_{\reg{AB}}( P_N(\pi)^\intercal_{\regA} \otimes P_N(\tau)^\intercal_{\regB} \cdot \rho_{\reg{ABC}} ).
\end{align*}
Then $\TD(\rho^{\sf twirl}_{\reg{ABC}}, \rho^{\sf approx}_{\reg{ABC}}) \leq O \qty( k^2/N^{1/8} )$.
\end{theorem}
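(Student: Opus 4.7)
The plan is to invoke the Ma-Huang path-recording framework on a non-adaptive query algorithm whose Haar-averaged output coincides with $\rho^{\sf twirl}$, and then compute the marginal of the resulting path-recording state explicitly. Concretely, consider the oracle algorithm $\cA$ that prepares $\rho_{\reg{ABC}}$, makes $k$ non-adaptive forward queries to its oracle using the $k$ subregisters of $\reg{A}$, makes $k$ non-adaptive inverse queries using the $k$ subregisters of $\reg{B}$, and outputs the full state on $\reg{ABC}$. When $\cA$ is instantiated with a Haar random $U$, its output is precisely $\rho^{\sf twirl}_{\reg{ABC}}$. Since $\cA$ makes a total of $2k$ queries, \Cref{thm:MH24} gives
\[
\TD\qty( \rho^{\sf twirl}_{\reg{ABC}},\, \Tr_{\reg{LR}}( \projector{\cA^V}_{\reg{ABCLR}} ) ) \leq O\qty( k^2 / N^{1/8} ),
\]
where $V$ is the path-recording isometry with both relation registers initialized to $\ket{\emptyset}$.

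It remains to show that $\Tr_{\reg{LR}}( \projector{\cA^V}_{\reg{ABCLR}} )$ equals $\rho^{\sf approx}_{\reg{ABC}}$ up to an error that can be folded into $O(k^2/N^{1/8})$. Starting from empty relations, each of the $k$ forward queries acts as $V^L$ (since $V^R V^{R,\dagger}$ vanishes on states with empty $R$), appending a pair $(x_i,y_i)$ to $L$ with normalization $1/\sqrt{N - i + 1}$ per computational-basis amplitude on $\reg{A}$. The subsequent $k$ inverse queries act mostly as $V^R$, appending pairs to $R$; the only ``bad'' contributions come from the $V^{L,\dagger}$ branch of $V^\dagger$, which fires only when an inverse-query input happens to lie in $\Image(L)$. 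A union bound over the $k^2$ pairs of query indices shows the total weight of these branches is at most $O(k^2/N)$, which can be absorbed into the trace-distance error.

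After these observations, the partial trace over $\reg{LR}$ enforces multiset equality between the relations produced by the ket and the bra. For relations of size $k$ with distinct entries, multiset equality is witnessed by a permutation, so the partial trace decomposes into independent sums over $\pi \in \Symgp_k$ (matching $L$) and $\tau \in \Symgp_k$ (matching $R$). Tracking the symmetrization factors of the relation states together with the $1/\sqrt{N^{\downarrow k}}$ amplitude normalizations yields exactly the expression
\[
\frac{1}{(N^{\downarrow k})^2} \sum_{\pi, \tau \in \Symgp_k} P_N(\pi)_{\reg{A}} \otimes P_N(\tau)_{\reg{B}} \otimes \Tr_{\reg{AB}}\qty( P_N(\pi)^\intercal_{\reg{A}} \otimes P_N(\tau)^\intercal_{\reg{B}} \cdot \rho_{\reg{ABC}} ),
\]
which differs from $\rho^{\sf approx}_{\reg{ABC}}$ only by the multiplicative factor $(N^k / N^{\downarrow k})^2 = 1 + O(k^2/N)$, contributing at most $O(k^2/N)$ in trace distance.

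The main obstacle is the explicit marginalization computation: carefully unpacking the symmetrization in the relation states, verifying that forward and inverse query outcomes give rise to \emph{independent} permutation sums (not correlated ones), and rigorously bounding the collision contributions from $V^{L,\dagger}$ terms. Once these pieces are in place, the collision error, the $N^{\downarrow k}/N^k$ correction, and the Ma-Huang additive error all combine into the stated $O(k^2/N^{1/8})$ bound.
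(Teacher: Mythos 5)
Your plan has the same architecture as the paper's proof: simulate the $2k$ non-adaptive forward/inverse queries with Ma--Huang's path-recording oracle, invoke \Cref{thm:MH24} for the $O(k^2/N^{1/8})$ term, and then compare the marginal of the recorded state with $\rho^{\sf approx}_{\reg{ABC}}$. The final error accounting is also right. However, the two middle steps you describe are precisely where the work lies, and as stated they contain a genuine inaccuracy. First, the treatment of the inverse queries: the error branch of $V^\dagger$ is the operator $(\id - V^R(V^L + V^{R,\dagger}))V^{L,\dagger}$, and the $\reg{B}$ register it acts on is arbitrary and entangled with $\reg{A}\reg{C}$, so ``a union bound over the $k^2$ pairs of query indices'' is only a heuristic---you cannot union-bound over computational-basis components of an entangled input against a non-projective error operator. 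The paper makes this rigorous by a detour: it first replaces $V^L$ with a modified $\wt{V}^L$ that forbids outputs lying in $\Supp(\vec z)$ (cost $O(k^2/(N-k))$ in norm), shows that after this change the $V^{L,\dagger}$ branch annihilates the state \emph{exactly}, and only then switches back to $V^L$. Your sketch needs some equivalent mechanism to turn the collision intuition into a norm bound.

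Second, your claimed ``exact'' marginal, $\frac{1}{(N^{\downarrow k})^2}\sum_{\pi,\tau\in\Symgp_k} P_N(\pi)\otimes P_N(\tau)\otimes \Tr_{\reg{AB}}(P_N(\pi)^\intercal\otimes P_N(\tau)^\intercal\cdot\rho)$, differing from $\rho^{\sf approx}$ only by the scalar $(N^k/N^{\downarrow k})^2$, is not correct. The exact $V^L,V^R$ evolution is supported only on recorded tuples with $\vec z$ distinct and $\vec w$ distinct and disjoint from $\Supp(\vec x)$, and the $V^R$ normalization $1/\sqrt{N-|\Dom(L\cup R)|}$ is branch-dependent (it depends on $|\Supp(\vec x)|$, which varies in superposition), so the marginal is a \emph{projected and reweighted} version of a purification of $\rho^{\sf approx}$, not a global rescaling of it---there is no single scalar factor at all. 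The paper closes exactly this gap: it first uniformizes the inverse normalization to $1/\sqrt{N}$ (another $O(k^2/(N-k))$ hybrid), then writes the resulting state as $\sqrt{N^k/N^{\downarrow k}}\cdot\Pi_{\reg{LR}}\ket{P}$ for a purification $\ket{P}$ of $\rho^{\sf approx}$, and bounds the weight removed by the projector $\Pi_{\reg{LR}}$ using $\Tr(\rho^{\sf approx})$ via \Cref{lem:multiplicative_twirling}. All of these corrections are of size $O(k^2/(N-k))$, so your final bound survives, but the steps you assert as exact conceal most of the technical content of the theorem.
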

\begin{proof}
Let $\reg{A} := (\reg{A_1},\dots,\reg{A_k})$, $\reg{B} := (\reg{B_1},\dots,\reg{B_k})$ such that $\hilbert_{\reg{A_i}} = \hilbert_{\reg{B_i}} = \C^N$ for all $i \in [k]$. By convexity, it is sufficient to consider the case when $\rho_{\reg{ABC}}$ is pure, \ie $\rho_{\reg{ABC}} = \projector{\psi}_{\reg{ABC}}$. Recall the definition of the path recording isometry $V$:
\begin{align*}
    V & = V^L \cdot \qty( \id - V^R \cdot V^{R, \dagger} ) + \qty( \id - V^L \cdot V^{L, \dagger} ) \cdot V^{R,\dagger} \\
    V^{\dagger} & = V^R \cdot \qty( \id - V^L \cdot V^{L, \dagger} ) + \qty( \id - V^R \cdot V^{R, \dagger} ) \cdot V^{L,\dagger}.
\end{align*}

\noindent Consider the following sequence of operators defined over registers $\reg{ABC}$, where the change from the previous one is highlighted in \textcolor{red}{red}.

\newhybrid{1} Let $\rho_1 := \rho^{\sf twirl}_{\reg{ABC}}$.

\newhybrid{2} We replace $(U,U^\dagger)$ in~\hybrid{1} with $(V,V^\dagger)$. Define the following subnormalized state
\begin{align*}
    \ket{\Adversary^{V,V^\dagger}_k}_{\reg{ABCLR}}
    & := \prod_{j = k}^1 V^{\dagger}_{\reg{B_j LR}}
    \cdot \prod_{i = k}^1 V_{\reg{A_i LR}}
    \cdot \ket{\psi}_{\reg{ABC}} 
    \ket{\varnothing}_{\reg{L}} 
    \ket{\varnothing}_{\reg{R}},
\end{align*}
and let $\rho_2 := \Tr_{\reg{LR}} \qty( \projector{\Adversary^{V,V^\dagger}_k}_{\reg{ABCLR}} )$. \\

\noindent By~\Cref{thm:MH24}, we have $\TD \qty( \rho_1, \rho_2 ) = O\qty( k^2/N^{1/8} )$.

\newhybrid{3} Define the following subnormalized state
\begin{align*}
    \ket{\Adversary^{\textcolor{red}{V^L}, V^\dagger}_k}_{\reg{ABCLR}}
    & := \prod_{j = k}^1 V^\dagger_{\reg{B_j LR}} 
    \cdot \prod_{i = k}^1 \textcolor{red}{V^L_{\reg{A_i LR}}}
    \cdot \ket{\psi}_{\reg{ABC}} 
    \ket{\varnothing}_{\reg{L}} 
    \ket{\varnothing}_{\reg{R}},
\end{align*}
and let $\rho_3 := \Tr_{\reg{LR}} \qty( \projector{\Adversary^{V^L,V^\dagger}_k}_{\reg{ABCLR}} )$.

\noindent Since the term $V^{R,\dagger}$ in $V$ vanishes $\ket{\varnothing}_R$, we have $\ket{\Adversary^{V,V^\dagger}_k}_{\reg{ABCLR}} = \ket{\Adversary^{V^L,V^\dagger}_k}_{\reg{ABCLR}}$, which implies $\rho_2 = \rho_3$.

\newhybrid{4} For $i \in [k]$, define the partial isometry\footnote{To see this, one can verify that it preserves orthogonality by checking that the inner product between any two orthogonal vectors remains zero after applying the transformation.}
\begin{align*}
& \wt{V}^L_{\reg{A_i B LR}}:
\ket{x}_{\reg{A_i}} \ket{\vec{z}}_{\reg{B}} \ket{L}_{\reg{L}} \ket{R}_{\reg{R}}
\mapsto \\
& \quad \sum_{ \substack{ y \in [N]: \\ y \notin \Im(L \cup R) \cup \Supp(\vec{z}) }} 
\frac{1}{\sqrt{N - |\Im(L \cup R) \cup \Supp(\vec{z})|}} \ket{y}_{\reg{A_i}} \ket{\vec{z}}_{\reg{B}} \ket{L \cup \set{(x,y)}}_{\reg{L}} \ket{R}_{\reg{R}},
\end{align*}
and define the following subnormalized state
\begin{align*}
    \ket{\Adversary^{\textcolor{red}{\wt{V}^L}, V^\dagger}_k}_{\reg{ABCLR}}
    & := \prod_{j = k}^1 V^\dagger_{\reg{B_j LR}} 
    \cdot \prod_{i = k}^1 \textcolor{red}{\wt{V}^L_{\reg{A_i BLR}}}
    \cdot \ket{\psi}_{\reg{ABC}} 
    \ket{\varnothing}_{\reg{L}} 
    \ket{\varnothing}_{\reg{R}},
\end{align*}
and let $\rho_4 := \Tr_{\reg{LR}} \qty( \projector{\Adversary^{\wt{V}^L,V^\dagger}_k}_{\reg{ABCLR}} )$.

\myparagraph{Closeness between Hybrids~3 and 4} The only difference between $V^L$ and $\wt{V}^L$ is that $\wt{V}^L$ further disallows $y$ from taking values in the set $\Supp(\vec{z})$. By using a similar argument as in~\cite[Appendix~B]{MH25}, we can obtain
\begin{equation}
\label{eq:Hyb3_Hyb4}
\norm{ \ket{\Adversary^{V^L,V^\dagger}_k}_{\reg{ABCLR}} 
- \ket{\Adversary^{\wt{V}^L,V^\dagger}_k}_{\reg{ABCLR}} }
\leq O\qty( \frac{k^2}{N - k} ), 
\end{equation}
which implies that $\TD( \rho_3, \rho_4 ) \leq O\qty( k^2/(N - k) )$. 
\ifllncs
The proof of~\Cref{eq:Hyb3_Hyb4} can be found in the full version.
\else
The proof of~\Cref{eq:Hyb3_Hyb4} is deferred to~\Cref{app:eq:Hyb3_Hyb4}.
\fi
\newhybrid{5} Define the following subnormalized state
\begin{align*}
    \ket{\Adversary^{\wt{V}^L,\textcolor{red}{V^R}}_k}_{\reg{ABCLR}}
    & := \prod_{j = k}^1 \textcolor{red}{V^R_{\reg{B_j LR}}}
    \cdot \prod_{i = k}^1 \wt{V}^L_{\reg{A_i LR}}
    \cdot \ket{\psi}_{\reg{ABC}} 
    \ket{\varnothing}_{\reg{L}} 
    \ket{\varnothing}_{\reg{R}},
\end{align*}
and let $\rho_5 := \Tr_{\reg{LR}} \qty( \projector{\Adversary^{\wt{V}^L,V^R}_k}_{\reg{ABCLR}} )$.

\myparagraph{Equivalence between Hybrids~4 and 5} Consider the following decomposition of $V^{\dagger}_{\reg{B_j}\reg{LR}}$ for each $j \in [k]$:
\begin{equation*}
    V^{\dagger}_{\reg{B_j}\reg{LR}} 
    = V^R_{\reg{B_j}\reg{LR}} + 
    \underbrace{ \qty( \id_{\reg{B_j}\reg{LR}} - V^R_{\reg{B_j}\reg{LR}} \cdot \qty( V^L_{\reg{B_j}\reg{LR}} + V^{R,\dagger}_{\reg{B_j}\reg{LR}} ) ) \cdot V^{L,\dagger}_{\reg{B_j}\reg{LR}} }_{(i)},
\end{equation*}
where the term $(i)$ can be viewed as an error. Thus, we can expand $\prod_{j = k}^1 V^{\dagger}_{\reg{B_j LR}}$ as
\ifllncs
\begin{align*}
    & \prod_{j = k}^1 \qty( V^R_{\reg{B_j LR}} + 
    \qty( \id_{\reg{B_j}\reg{LR}} - V^R_{\reg{B_j}\reg{LR}} \cdot \qty( V^L_{\reg{B_j LR}} + V^{R,\dagger}_{\reg{B_j LR}} ) ) \cdot V^{L,\dagger}_{\reg{B_j LR}} ) \\
    & = \prod_{j = k}^1 V^R_{\reg{B_j LR}}
    + \\
    & \underbrace{ \sum_{\ell = 1}^k \qty( \prod_{m = k}^{\ell + 1} V^{\dagger}_{\reg{B_m LR}} \cdot \qty( \id_{\reg{B_\ell LR}} - V^R_{\reg{B_\ell LR}} \cdot \qty( V^L_{\reg{B_\ell LR}} + V^{R,\dagger}_{\reg{B_\ell LR}} ) ) \cdot V^{L,\dagger}_{\reg{B_\ell LR}} 
    \cdot \prod_{n = \ell - 1}^{1} V^R_{\reg{B_n LR}} ) }_{(ii)}.
\end{align*}
\else
\begin{align*}
    & \prod_{j = k}^1 \qty( V^R_{\reg{B_j LR}} + 
    \qty( \id_{\reg{B_j}\reg{LR}} - V^R_{\reg{B_j}\reg{LR}} \cdot \qty( V^L_{\reg{B_j LR}} + V^{R,\dagger}_{\reg{B_j LR}} ) ) \cdot V^{L,\dagger}_{\reg{B_j LR}} ) \\
    & = \prod_{j = k}^1 V^R_{\reg{B_j LR}}
    + \underbrace{ \sum_{\ell = 1}^k \qty( \prod_{m = k}^{\ell + 1} V^{\dagger}_{\reg{B_m LR}} \cdot \qty( \id_{\reg{B_\ell LR}} - V^R_{\reg{B_\ell LR}} \cdot \qty( V^L_{\reg{B_\ell LR}} + V^{R,\dagger}_{\reg{B_\ell LR}} ) ) \cdot V^{L,\dagger}_{\reg{B_\ell LR}} 
    \cdot \prod_{n = \ell - 1}^{1} V^R_{\reg{B_n LR}} ) }_{(ii)}.
\end{align*}
\fi

\noindent Similarly, the term $(ii)$ can be viewed as an error. To simplify notation, we use the shorthand for the summand in $(ii)$. For each $\ell \in [k]$, define
\ifllncs
\begin{align*}
& E_{\ell,\reg{BLR}} := \\
& \prod_{m = k}^{\ell + 1} V^{\dagger}_{\reg{B_m LR}} \cdot \qty( \id_{\reg{B_\ell LR}} - V^R_{\reg{B_\ell LR}} \cdot \qty( V^L_{\reg{B_\ell LR}} + V^{R,\dagger}_{\reg{B_\ell LR}} ) ) \cdot V^{L,\dagger}_{\reg{B_\ell LR}} 
\cdot \prod_{n = \ell - 1}^{1} V^R_{\reg{B_n LR}}.
\end{align*}
\else
\begin{align*}
E_{\ell,\reg{BLR}} := \prod_{m = k}^{\ell + 1} V^{\dagger}_{\reg{B_m LR}} \cdot \qty( \id_{\reg{B_\ell LR}} - V^R_{\reg{B_\ell LR}} \cdot \qty( V^L_{\reg{B_\ell LR}} + V^{R,\dagger}_{\reg{B_\ell LR}} ) ) \cdot V^{L,\dagger}_{\reg{B_\ell LR}} 
\cdot \prod_{n = \ell - 1}^{1} V^R_{\reg{B_n LR}}.
\end{align*}
\fi

\noindent Therefore, we can write $\ket{\Adversary^{\wt{V}^L,V^\dagger}_k}$ in~\hybrid{4} as
\ifllncs
\begin{align*}
    & \ket{\Adversary^{\wt{V}^L,V^\dagger}_k} \\
    & = \qty( \prod_{j = k}^1 V^R_{\reg{B_j LR}}
    + \sum_{\ell = 1}^t E_{\ell,\reg{BLR}} ) \cdot 
    \prod_{i = k}^1 \wt{V}^L_{\reg{A_i BLR}}
    \cdot \ket{\psi}_{\reg{ABC}} 
    \ket{\varnothing}_{\reg{L}} 
    \ket{\varnothing}_{\reg{R}} \\
    & = \ket{\Adversary^{\wt{V}^L,V^R}_k}_{\reg{ABCLR}} 
    + \sum_{\ell = 1}^t \qty( E_{\ell,\reg{BLR}} \cdot 
    \prod_{i = k}^1 \wt{V}^L_{\reg{A_i BLR}}
    \cdot \ket{\psi}_{\reg{ABC}} 
    \ket{\varnothing}_{\reg{L}} 
    \ket{\varnothing}_{\reg{R}} ).
\end{align*}
\else
\begin{align*}
    \ket{\Adversary^{\wt{V}^L,V^\dagger}_k} 
    & = \qty( \prod_{j = k}^1 V^R_{\reg{B_j LR}}
    + \sum_{\ell = 1}^t E_{\ell,\reg{BLR}} ) \cdot 
    \prod_{i = k}^1 \wt{V}^L_{\reg{A_i BLR}}
    \cdot \ket{\psi}_{\reg{ABC}} 
    \ket{\varnothing}_{\reg{L}} 
    \ket{\varnothing}_{\reg{R}} \\
    & = \ket{\Adversary^{\wt{V}^L,V^R}_k}_{\reg{ABCLR}} 
    + \sum_{\ell = 1}^t \qty( E_{\ell,\reg{BLR}} \cdot 
    \prod_{i = k}^1 \wt{V}^L_{\reg{A_i BLR}}
    \cdot \ket{\psi}_{\reg{ABC}} 
    \ket{\varnothing}_{\reg{L}} 
    \ket{\varnothing}_{\reg{R}} ).
\end{align*}
\fi
Notice that for all $\ell \in [k]$,
\begin{align*}
    V^{L,\dagger}_{\reg{B_\ell LR}} 
    \cdot \prod_{n = \ell - 1}^{1} V^R_{\reg{B_n LR}}
    \cdot \prod_{i = k}^1 \wt{V}^L_{\reg{A_i BLR}} 
    \cdot \ket{\psi}_{\reg{ABC}} \ket{\varnothing}_{\reg{L}} \ket{\varnothing}_{\reg{R}}
    = 0.
\end{align*}
This is because, by construction, $\prod_{n = \ell - 1}^{1} V^R_{\reg{B_n LR}} 
    \cdot \prod_{i = k}^1 \wt{V}^L_{\reg{A_i BLR}} \cdot \ket{\psi}_{\reg{ABC}} \ket{\varnothing}_{\reg{L}} \ket{\varnothing}_{\reg{R}}$ lies in the subspace spanned by
\begin{equation*}
\set{ \ket{\vec{z}}_{\regB} \otimes \ket{L}_{\reg{L}}:
\Supp(\vec{z}) \cap \Im(L) = \varnothing },
\end{equation*}
which is not in the image of $V^L_{\reg{B_\ell LR}}$. This implies $\ket{\Adversary^{V,V^\dagger}_k}_{\reg{ABCLR}} = \ket{\Adversary^{\wt{V}^L,V^R}_k}_{\reg{ABCLR}}$, and hence $\rho_4 = \rho_5$.

\newhybrid{6} Define the following normalized state
\begin{align*}
    \ket{\Adversary^{\textcolor{red}{V^L},V^R}_k}_{\reg{ABCLR}}
    & := \prod_{j = k}^1 V^R_{\reg{B_j LR}} 
    \cdot \prod_{i = k}^1 \textcolor{red}{V^L_{\reg{A_i LR}}}
    \cdot \ket{\psi}_{\reg{ABC}} 
    \ket{\varnothing}_{\reg{L}} 
    \ket{\varnothing}_{\reg{R}},
\end{align*}
and let $\rho_6 := \Tr_{\reg{LR}} \qty( \projector{\Adversary^{V^L,V^R}_k}_{\reg{ABCLR}} )$.

\noindent Using the same argument as in~\Cref{eq:Hyb3_Hyb4}, $\TD( \rho_5, \rho_6) \leq O\qty( k^2/(N - k) )$.

\newhybrid{7} For each $j \in [k]$, define the operator
\begin{align*}
    \wt{V}^R_{\reg{B_j LR}}:
    \ket{z}_{\reg{B_j}} \ket{L}_{\reg{L}} \ket{R}_{\reg{R}}
    \mapsto \sum_{\substack{w \in [N]: \\ w \notin \Dom(L \cup R)}} 
    \frac{1}{\sqrt{N}} \ket{w}_{\reg{B_j}} \ket{L}_{\reg{L}} \ket{R \cup \set{(w,z)}}_{\reg{R}}.
\end{align*}

\noindent Define the following subnormalized state
\begin{align*}
\ket{ \Adversary^{ V^L, \textcolor{red}{\wt{V}^R} }_k }_{\reg{ABCLR} }
:= \prod_{j = k}^1 \textcolor{red}{\wt{V}^R_{\reg{B_j LR}}}
\cdot \prod_{i = k}^1 V^L_{\reg{A_i LR}} 
\cdot \ket{\psi}_{\reg{ABC}} \ket{\varnothing}_{\reg{L}} \ket{\varnothing}_{\reg{R}},
\end{align*}
and let $\rho_7 := \Tr_{\reg{LR}} \qty( \projector{\Adversary^{V^L,\wt{V}^R}_k}_{\reg{ABCLR}} )$.

\noindent The only difference between $V^R$ and $\wt{V}^R$ is the constant factor. 
\ifllncs
In the full version, we show  
\else
In~\Cref{app:eq:Hyb6_Hyb7}, we show  
\fi
\begin{equation}
\label{eq:Hyb6_Hyb7}
\norm{ \ket{ \Adversary^{V^L,V^R}_k }_{\reg{ABCLR} } 
- \ket{ \Adversary^{V^L,\wt{V}^R}_k }_{\reg{ABCLR} } } 
\leq O\qty(\frac{k^2}{N - k}),
\end{equation}
which implies $\TD( \rho_6, \rho_7) \leq O\qty( k^2/(N-k) )$.

\newhybrid{8} \noindent Define the unnormalized state
\ifllncs
\begin{align*}
    & \ket{P}_{\reg{ABCLR}} := 
    \frac{1}{N^k \cdot k!} 
    \sum_{ \substack{
    \vec{w},\vec{x},\vec{y},\vec{z} \in [N]^k \\ 
    \pi, \tau \in \sym_k
        }
    } 
    \ketbra{\vec{z}}{\vec{x}}_{\reg{A}} \otimes \ketbra{\vec{w}}{\vec{y}}_{\reg{B}} \cdot \ket{\psi}_{\reg{ABC}} \\
    & \hspace{.2\textwidth} \otimes P_{N}(\pi) \ket{\vec{x}}_{\reg{L_X}} \otimes P_{N}(\pi)\ket{\vec{z}}_{\reg{L_Y}} \otimes P_{N}(\tau) \ket{\vec{w}}_{\reg{R_X}} \otimes P_{N}(\tau)\ket{\vec{y}}_{\reg{R_Y}},
\end{align*}
\else
\begin{multline*}
    \ket{P}_{\reg{ABCLR}} := \\
    \frac{1}{N^k \cdot k!} 
    \sum_{ \substack{
    \vec{w},\vec{x},\vec{y},\vec{z} \in [N]^k \\ 
    \pi, \tau \in \sym_k
        }
    } 
    \ketbra{\vec{z}}{\vec{x}}_{\reg{A}} \otimes \ketbra{\vec{w}}{\vec{y}}_{\reg{B}} \cdot \ket{\psi}_{\reg{ABC}} \otimes P_{N}(\pi) \ket{\vec{x}}_{\reg{L_X}} \otimes P_{N}(\pi)\ket{\vec{z}}_{\reg{L_Y}} \otimes P_{N}(\tau) \ket{\vec{w}}_{\reg{R_X}} \otimes P_{N}(\tau)\ket{\vec{y}}_{\reg{R_Y}},
\end{multline*}
\fi
and let $\rho_8 := \Tr_{\reg{LR}} \qty( \projector{P}_{\reg{ABCLR}} )$. \\

\noindent Before proving the closeness between $\rho_7$ and $\rho_8$, we first show that $\ket{P}$ is a purification of $\rho^{\sf approx}$.

\myparagraph{Equivalence between $\rho_8$ and $\rho^{\sf approx}$} By the Ricochet property (\Cref{fact:Ricochet}), we can write $\ket{P}$ as follows:
\ifllncs
\begin{align*}
    & \ket{P}_{\reg{ABCLR}} = 
    \frac{1}{N^k \cdot k!} \sum_{\substack{ \vec{w},\vec{x},\vec{y},\vec{z} \in [N]^k \\ \pi, \tau \in \sym_k}} 
    P_{N}(\pi)^\intercal_{\regA} \ketbra{\vec{z}}{\vec{x}}_{\reg{A}} P_{N}(\pi)_{\regA} \\
    & \hspace{.2\textwidth} \otimes 
    P_{N}(\tau)^\intercal_{\regB} \ketbra{\vec{w}}{\vec{y}}_{\reg{B}} P_{N}(\tau)_{\regB} 
    \cdot \ket{\psi}_{\reg{ABC}} 
    \ket{\vec{x}}_{\reg{L_X}} 
    \ket{\vec{z}}_{\reg{L_Y}}  
    \ket{\vec{w}}_{\reg{R_X}}
    \ket{\vec{y}}_{\reg{R_Y}}.
\end{align*}
\else
\begin{multline*}
    \ket{P}_{\reg{ABCLR}} = \\
    \frac{1}{N^k \cdot k!} \sum_{\substack{ \vec{w},\vec{x},\vec{y},\vec{z} \in [N]^k \\ \pi, \tau \in \sym_k}} 
    P_{N}(\pi)^\intercal_{\regA} \ketbra{\vec{z}}{\vec{x}}_{\reg{A}} P_{N}(\pi)_{\regA} \otimes 
    P_{N}(\tau)^\intercal_{\regB} \ketbra{\vec{w}}{\vec{y}}_{\reg{B}} P_{N}(\tau)_{\regB} 
    \cdot \ket{\psi}_{\reg{ABC}} 
    \ket{\vec{x}}_{\reg{L_X}} 
    \ket{\vec{z}}_{\reg{L_Y}}  
    \ket{\vec{w}}_{\reg{R_X}}
    \ket{\vec{y}}_{\reg{R_Y}}.
\end{multline*}
\fi
A direct calculation yields
\begin{align*}
\rho_8
& = \Tr_{\reg{LR}} \qty[ \projector{P}_{\reg{ABCLR}} ] \\
& = \frac{1}{N^{2k} (k!)^2} \sum_{\substack{\vec{w},\vec{x},\vec{y},\vec{z} \in[N]^k \\ \pi, \pi', \tau, \tau' \in \sym_k}} 
P_{N}(\pi)^\intercal_{\regA} \ketbra{\vec{z}}{\vec{x}}_{\reg{A}} P_{N}(\pi)_{\regA} 
\otimes 
P_{N}(\tau)^\intercal_{\regB} \ketbra{\vec{w}}{\vec{y}}_{\reg{B}} P_{N}(\tau)_{\regB}
\cdot \projector{\psi}_{\reg{ABC}} \\
& \hspace{12em} \cdot P_{N}(\pi')^\intercal_{\regA} \ketbra{\vec{x}}{\vec{z}}_{\reg{A}} P_{N}(\pi')_{\regA} 
\otimes 
P_{N}(\tau')^\intercal_{\regB} \ketbra{\vec{y}}{\vec{w}}_{\reg{B}} P_{N}(\tau')_{\regB} \\
& = \frac{1}{N^{2k} (k!)^2} \sum_{\pi, \pi', \tau, \tau' \in \sym_k} 
P_{N}(\pi)^\intercal_{\regA} P_{N}(\pi')_{\regA} \otimes P_{N}(\tau)^\intercal_{\regB} P_{N}(\tau')_{\regB} \\
& \hspace{12em} \otimes \Tr_{\reg{AB}} \qty( P_{N}(\pi')^\intercal_{\regA} P_{N}(\pi)_{\regA} \otimes P_{N}(\tau')^\intercal_{\regB} P_{N}(\tau)_{\regB} \cdot \projector{\psi}_{\reg{ABC}} ) \\
& = \frac{1}{N^{2k} (k!)^2} \sum_{\pi, \pi', \tau, \tau' \in \sym_k} 
P_{N}(\pi^{-1} \circ \pi')_{\regA} \otimes P_{N}(\tau^{-1} \circ \tau')_{\regB} \\
& \hspace{12em} \otimes \Tr_{\reg{AB}} \qty( P_{N}(\pi^{-1} \circ \pi')^\intercal_{\regA} \otimes P_{N}(\tau^{-1} \circ \tau')^\intercal_{\regB} \cdot \projector{\psi}_{\reg{ABC}} ) \\
& = \frac{1}{N^{2k}} \sum_{\pi, \tau\in \sym_k} 
P_{N}(\pi)_{\regA} \otimes P_{N}(\tau)_{\regB} \cdot \Tr_{\reg{AB}} \qty( P_{N}(\pi)^\intercal_{\regA} \otimes P_{N}(\tau)^\intercal_{\regB} \cdot \projector{\psi}_{\reg{ABC}} ) 
= \rho^{\sf approx}_{\reg{ABC}}.
\end{align*}

\myparagraph{Closeness between Hybrids~7 and~8} We expand $\ket{ \Adversary^{V^L,\wt{V}^R}_k}$ in~\hybrid{7} by using the definitions of $V^L$ and $\wt{V}^R$, and obtain
\ifllncs
\begin{align*}
& \ket{ \Adversary^{V^L,\wt{V}^R}_k}_{\reg{ABCLR} } \\ 
& = \frac{1}{\sqrt{N^{\downarrow k} N^k}} 
\sum_{\substack{
    \vec{x}, \vec{y} \in[N]^k, \\ 
    \vec{z} \in [N]^k_\dist, \\
    \vec{w} \in [ N \setminus \Supp(\vec{x}) ]^k_\dist
}} 
    \ketbra{\vec{z}}{\vec{x}}_{\reg{A}} \otimes \ketbra{\vec{w}}{\vec{y}}_{\reg{B}} \cdot \ket{\psi}_{\reg{ABC}} \\
& \hspace{.5\textwidth} \otimes \ket{\set{(x_i,z_i)}_{i \in [k]}}_{\reg{L}} \otimes \ket{\set{(w_j,y_j)}_{j \in [k]}}_{\reg{R}} \\
& = \frac{1}{\sqrt{N^{\downarrow k} N^k} \cdot k!} 
\sum_{\substack{
    \vec{x}, \vec{y} \in[N]^k, \\ 
    \vec{z} \in [N]^k_\dist, \\
    \vec{w} \in [N \setminus \Supp(\vec{x}) ]^k_\dist \\ 
    \pi, \tau \in \sym_k
}} 
    \ketbra{\vec{z}}{\vec{x}}_{\reg{A}} \otimes \ketbra{\vec{w}}{\vec{y}}_{\reg{B}} \cdot \ket{\psi}_{\reg{ABC}} \\
& \hspace{.25\textwidth} \otimes P_{N}(\pi) \ket{\vec{x}}_{\reg{L_X}} \otimes P_{N}(\pi)\ket{\vec{z}}_{\reg{L_Y}} \otimes P_{N}(\tau) \ket{\vec{w}}_{\reg{R_X}} \otimes P_{N}(\tau)\ket{\vec{y}}_{\reg{R_Y}}.
\end{align*}
\else
\begin{align*}
\ket{ \Adversary^{V^L,\wt{V}^R}_k}_{\reg{ABCLR} } 
& = \frac{1}{\sqrt{N^{\downarrow k} N^k}} 
\sum_{\substack{
    \vec{x}, \vec{y} \in[N]^k, \\ 
    \vec{z} \in [N]^k_\dist, \\
    \vec{w} \in [ N \setminus \Supp(\vec{x}) ]^k_\dist
}} 
    \ketbra{\vec{z}}{\vec{x}}_{\reg{A}} \otimes \ketbra{\vec{w}}{\vec{y}}_{\reg{B}} \cdot \ket{\psi}_{\reg{ABC}} \otimes \ket{\set{(x_i,z_i)}_{i \in [k]}}_{\reg{L}} \otimes \ket{\set{(w_j,y_j)}_{j \in [k]}}_{\reg{R}} \\
& = \frac{1}{\sqrt{N^{\downarrow k} N^k} \cdot k!} 
\sum_{\substack{
    \vec{x}, \vec{y} \in[N]^k, \\ 
    \vec{z} \in [N]^k_\dist, \\
    \vec{w} \in [N \setminus \Supp(\vec{x}) ]^k_\dist \\ 
    \pi, \tau \in \sym_k
}} 
    \ketbra{\vec{z}}{\vec{x}}_{\reg{A}} \otimes \ketbra{\vec{w}}{\vec{y}}_{\reg{B}} \cdot \ket{\psi}_{\reg{ABC}} \\
& \hspace{.25\textwidth} \otimes P_{N}(\pi) \ket{\vec{x}}_{\reg{L_X}} \otimes P_{N}(\pi)\ket{\vec{z}}_{\reg{L_Y}} \otimes P_{N}(\tau) \ket{\vec{w}}_{\reg{R_X}} \otimes P_{N}(\tau)\ket{\vec{y}}_{\reg{R_Y}}.
\end{align*}
\fi

\noindent Define the following projector $\Pi_{\reg{LR}}$ onto the subspace spanned by
\[
\set{ \ket{\vec{x}}_{\reg{L_X}}\ket{\vec{z}}_{\reg{L_Y}}\ket{\vec{w}}_{\reg{R_X}}\ket{\vec{y}}_{\reg{R_Y}}: \vec{x}, \vec{y} \in[N]^k, \vec{z} \in [N]^k_\dist, \vec{w} \in [N \setminus \Supp(\vec{x}) ]^k_\dist },
\] 
which allows us to write
\begin{equation*}
    \ket{ \Adversary^{V^L,\wt{V}^R}_k }_{\reg{ABCLR}}
    = \sqrt{\frac{N^k}{N^{\downarrow k}}} \cdot \Pi_{\reg{LR}} \cdot \ket{P}_{\reg{ABCLR}}.
\end{equation*}

\noindent Therefore,
\begin{align*}
\norm{ \rho_7 - \rho_8 }_1
= & \norm{ \Tr_{\reg{LR}} \qty( \projector{\Adversary^{V^L,\wt{V}^R}_k}_{\reg{ABCLR} } ) - \Tr_{\reg{LR}} \qty( \projector{P}_{\reg{ABCLR} } )  }_1 \\
= & \norm{ \frac{N^k}{N^{\downarrow k}} \Tr_{\reg{LR}} \qty( \Pi_{\reg{LR}} \projector{P}_{\reg{ABCLR}} \Pi_{\reg{LR} } ) - \Tr_{\reg{LR}} \qty( \projector{P}_{\reg{ABCLR}} ) }_1.
\end{align*}
Since $\Pi_{\reg{LR}}$ acts as the identity on $\reg{ABC}$, we can use the cyclicity of $\Tr_{\reg{LR}}$ to obtain
\begin{align}
\label{eq:partial_trace_cyclicity}
\Tr_{\reg{LR}} \qty( \projector{P}_{\reg{ABCLR} } ) 
& = \Tr_{\reg{LR}} \qty( \Pi_{\reg{LR}} \projector{P}_{\reg{ABCLR}} \Pi_{\reg{LR} } ) \nonumber \\
& + \Tr_{\reg{LR}} \qty( (\id - \Pi_{\reg{LR}}) \projector{P}_{\reg{ABCLR}} (\id - \Pi_{\reg{LR}}) ).
\end{align}
Using~\Cref{eq:partial_trace_cyclicity} the triangle inequality, we obtain
\begin{align*}
& \norm{ \rho_7 - \rho_8 }_1 \\
& \leq \qty( \frac{N^k}{N^{\downarrow k}} - 1 ) \cdot \norm{ \Tr_{\reg{LR}} \qty( \Pi_{\reg{LR}} \projector{P}_{\reg{ABCLR}} \Pi_{\reg{LR} } ) }_1 \\
& \hspace{.2\textwidth} + \norm{ \Tr_{\reg{LR}} \qty( (\id - \Pi_{\reg{LR}}) \projector{P}_{\reg{ABCLR}} (\id - \Pi_{\reg{LR}}) ) }_1 \\
& = \qty( \frac{N^k}{N^{\downarrow k}} - 2 ) \cdot \Tr \qty( \Pi_{\reg{LR}} \projector{P}_{\reg{ABCLR}} \Pi_{\reg{LR} } )
+ \Tr \qty( \projector{P}_{\reg{ABCLR}}) \\
& = \qty( \frac{N^k}{N^{\downarrow k}} - 2 ) \cdot \frac{N^{\downarrow k}}{N^k} \cdot \norm{ \ket{ \Adversary^{V^L,\wt{V}^R}_k }_{\reg{ABCLR}} }^2
+ \Tr \qty( \projector{P}_{\reg{ABCLR}}),
\end{align*}
where we use the fact that $\norm{M}_1 = \Tr(M)$ for $M \succeq 0$. By~\Cref{eq:Hyb6_Hyb7} and the fact that $\ket{ \Adversary^{V^L,V^R}_k }_{\reg{ABCLR}}$ in~\hybrid{6} is normalized, we obtain 
\begin{align*}
\norm{ \ket{ \Adversary^{V^L,\wt{V}^R}_k }_{\reg{ABCLR}} } \leq 1 + O\qty( \frac{k^2}{N-k} ).
\end{align*}
Using~\Cref{lem:multiplicative_twirling} twice, we obtain
\begin{align*}
\Tr \qty( \projector{P}_{\reg{ABCLR}}) 
= \Tr(\rho^{\sf approx}_{\reg{ABC}})
\leq \frac{1}{(1 - \veps)^2}
\leq 1 - O(\veps),
\end{align*}
where $\veps := k^2/N$. Hence, after plugging the above bounds and using the fact that $N^k/N^{\downarrow k} = 1 + O(k^2/N)$, we obtain $\norm{ \rho_7 - \rho_8 }_1 \leq O\qty( k^2/(N-k) )$. Collecting all hybrids and noting that the dominant error term arises from~\hybrid{1} and ~\hybrid{2} completes the proof.
\end{proof}
\vspace{.5em}
\noindent Combining~\Cref{lem:multiplicative_twirling} and~\Cref{thm:mixed_twirling}, we have the following corollary, which states that under \emph{non-adaptive} queries, oracle access to a Haar random unitary and its inverse is indistinguishable from forward-query access to two \emph{independent} Haar random unitaries.
\begin{corollary}
\label{cor:UUdagger_UV}
Let $\Phi_0, \Phi_1 \in \mathrm{C}((\C^N)^{\otimes 2k},(\C^N)^{\otimes 2k})$ be channels defined by
\begin{align*}
\Phi_0(\cdot) & := \Ex_{U \sim \haarunitaries(N)} [ U^{\otimes k} \otimes U^{\dagger, \otimes k}(\cdot)(U^{\otimes k} \otimes U^{\dagger, \otimes k})^\dagger ], \\
\Phi_1(\cdot) & := \Ex_{U, V \sim \haarunitaries(N)} [ U^{\otimes k} \otimes V^{\otimes k}(\cdot)(U^{\otimes k} \otimes V^{\otimes k})^\dagger ].
\end{align*}
Then $\|\Phi_0 - \Phi_1\|_\diamond \leq O(k^2/N^{1/8})$.
\end{corollary}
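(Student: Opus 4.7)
The plan is to observe that both $\Phi_0$ and $\Phi_1$, when applied to an arbitrary (possibly entangled) input state, are close to the \emph{same} approximation map, namely the one built from pairs of permutation operators that appears both in~\Cref{thm:mixed_twirling} and (in product form) in~\Cref{lem:multiplicative_twirling}. Thus the corollary will follow by a triangle inequality, after which taking a supremum over input states (together with an ancilla register $\regC$) gives the diamond-norm bound.

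Concretely, I would fix an arbitrary auxiliary Hilbert space $\hilbert_\regC$ and an arbitrary density matrix $\rho_{\reg{ABC}}$ with $\hilbert_\regA = \hilbert_\regB = (\C^N)^{\otimes k}$, and define the approximation
\[
\rho^{\sf approx}_{\reg{ABC}} := \frac{1}{N^{2k}} \sum_{\pi, \tau \in \Symgp_k} P_N(\pi)_{\regA} \otimes P_N(\tau)_{\regB} \otimes \Tr_{\reg{AB}}\bigl( P_N(\pi)^\intercal_{\regA} \otimes P_N(\tau)^\intercal_{\regB} \cdot \rho_{\reg{ABC}} \bigr).
\]
The first step is to apply~\Cref{thm:mixed_twirling} directly to get $\TD\bigl((\Phi_0 \otimes \id_\regC)(\rho_{\reg{ABC}}), \rho^{\sf approx}_{\reg{ABC}}\bigr) \leq O(k^2/N^{1/8})$. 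The second step is to apply~\Cref{lem:multiplicative_twirling} twice, once for the Haar average over $U$ acting on $\regA$ (with $\regBC$ treated as the ``ancilla'' register) and once for the independent Haar average over $V$ acting on $\regB$. Each application produces the same permutation-sum expression on its own side with multiplicative error $1 \pm \veps$ for $\veps = k^2/N$; composing them yields an operator sandwiched between $(1-\veps)^2$ and $(1+\veps)^2$ times $\rho^{\sf approx}_{\reg{ABC}}$. Since $\rho^{\sf approx}_{\reg{ABC}}$ has trace at most $(1+\veps)^2 \leq 1 + O(k^2/N)$ (its trace approximates $\Tr(\rho) = 1$), this gives $\TD\bigl((\Phi_1 \otimes \id_\regC)(\rho_{\reg{ABC}}), \rho^{\sf approx}_{\reg{ABC}}\bigr) \leq O(k^2/N)$.

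The third step is just the triangle inequality,
\[
\TD\bigl((\Phi_0 \otimes \id_\regC)(\rho_{\reg{ABC}}), (\Phi_1 \otimes \id_\regC)(\rho_{\reg{ABC}})\bigr) \leq O\!\left(\frac{k^2}{N^{1/8}}\right) + O\!\left(\frac{k^2}{N}\right) = O\!\left(\frac{k^2}{N^{1/8}}\right),
\]
and finally taking the supremum over $\rho_{\reg{ABC}}$ and $\hilbert_\regC$ gives $\|\Phi_0 - \Phi_1\|_\diamond \leq O(k^2/N^{1/8})$ by definition of the diamond norm.

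I do not expect a serious technical obstacle here: the heavy lifting is already done by~\Cref{thm:mixed_twirling} and~\Cref{lem:multiplicative_twirling}. The only bookkeeping care point is that~\Cref{lem:multiplicative_twirling} is stated as a multiplicative bound on completely positive maps, so I need to be a little careful when combining the two independent applications: the cleanest route is to apply the lemma for $U$ with $\regBC$ as the auxiliary register, then apply it again for $V$ with $\regAC$ as the auxiliary register (after the first averaging), and then convert the resulting operator-inequality $(1-\veps)^2 \rho^{\sf approx} \preceq (\Phi_1 \otimes \id)(\rho) \preceq (1+\veps)^2 \rho^{\sf approx}$ into a trace-distance bound using $\Tr(\rho^{\sf approx}) = 1 + O(k^2/N)$. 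This gives the $O(k^2/N)$ bound for the $\Phi_1$ side, which is dominated by the $O(k^2/N^{1/8})$ bound from the $\Phi_0$ side.
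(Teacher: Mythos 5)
Your proposal is correct and takes essentially the same route as the paper, which derives \Cref{cor:UUdagger_UV} exactly by combining \Cref{thm:mixed_twirling} (additive closeness of $\Phi_0\otimes\id$ to the permutation-sum approximation) with two applications of \Cref{lem:multiplicative_twirling} (multiplicative sandwiching of $\Phi_1\otimes\id$ around the same approximation), followed by the triangle inequality and the supremum defining the diamond norm. The only nitpick is that \Cref{lem:multiplicative_twirling} gives $\Tr(\rho^{\sf approx}) \leq (1-\veps)^{-2}$ rather than $(1+\veps)^2$, but both are $1+O(k^2/N)$, so your conclusion is unaffected.
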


By~\Cref{cor:UUdagger_UV} and a hybrid argument, we have the following theorem.
\begin{theorem}
\label{thm:NAInvHUD}
For any two-party adversary $(\alice,\bob)$ who does not share any entanglement initially, and where $\alice$ and $\bob$ each make $t$ forward queries and $t$ inverse queries,
\begin{align*}
\Pr[\NAInvHUD(N, \alice, \bob) = 1] 
\leq \frac{1}{2} + O\qty( \frac{t^2}{ N^{1/8} } ).
\end{align*}
\end{theorem}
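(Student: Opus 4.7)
The plan is to reduce $\NAInvHUD$ to the forward-only adaptive game $\HUD$, whose bound is~\Cref{thm:adap_LOCC}, in two phases. Phase~1 uses the mixed-twirling approximation of~\Cref{cor:UUdagger_UV} to replace the inverse oracles by forward queries to fresh, independent Haar unitaries. Phase~2 bounds the resulting forward-only two-oracle LOCC distinguishing game by invoking~\Cref{thm:adap_LOCC} one shared unitary at a time.

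In Phase~1, let $\Theta_0$ and $\Psi_0$ denote the joint post-query channels for $b=0$ and $b=1$, respectively. For $b=0$, Alice's and Bob's $t$-fold forward and inverse queries to the shared $U$ combine into the single channel $\Ex_U[U^{\otimes 2t}\otimes U^{\dagger,\otimes 2t}\cdot(\cdot)\cdot(U^{\otimes 2t}\otimes U^{\dagger,\otimes 2t})^\dagger]$; applying~\Cref{cor:UUdagger_UV} with $k=2t$ replaces it (up to diamond-norm error $O(t^2/N^{1/8})$) by $\Ex_{U,V}[U^{\otimes 2t}\otimes V^{\otimes 2t}\cdot(\cdot)\cdot(\cdot)^\dagger]$, after which both parties have forward access to the shared pair $(U,V)$. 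For $b=1$, the channel factorizes across Alice and Bob since their underlying Haar unitaries are already independent, so I apply~\Cref{cor:UUdagger_UV} with $k=t$ on each side and combine via the triangle inequality for the diamond norm; the total error is again $O(t^2/N^{1/8})$, and Alice ends up with forward access to independent $(U_A,U'_A)$ while Bob has forward access to independent $(U_B,U'_B)$.

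In Phase~2, I bound the advantage of any LOCC adversary distinguishing the two resulting forward-only scenarios by passing through an intermediate hybrid $H_1$ in which Alice holds $(U,V)$ and Bob holds $(U',V)$. The step $H_0\to H_1$ flips $U$ from shared to independent while $V$ stays shared; conditioning on $V$, the non-adaptive $V$-queries become fixed product unitaries on the parties' all-zero initial states and can be absorbed into the first round of each party's LOCC channel $\cE$, reducing the conditional game to an instance of $\HUD(N,\alice,\bob)$ for which~\Cref{thm:adap_LOCC} gives advantage $O(t^2/N)$. Averaging over Haar-random $V$ preserves the bound. An identical argument handles $H_1\to H_2$, where $V$ is flipped from shared to independent with $U$ already independent. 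Using $\Lambda^0+\Lambda^1=\id$ and summing the two steps yields a Phase~2 contribution of $O(t^2/N)$, and the total advantage is thus $O(t^2/N^{1/8})+O(t^2/N)=O(t^2/N^{1/8})$.

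The main obstacle is rigorously justifying Phase~2:~\Cref{thm:adap_LOCC} is stated for a single shared oracle with a $\ket{0}\!\bra{0}^{\otimes 2}$ initial state, whereas the hybrids here carry a second Haar-random unitary $V$ as shared background. The resolution leverages non-adaptivity: the $V$-queries commute with the $U$-queries, and once $V$ is fixed they produce a product pure state that each party's LOCC preprocessing inside $\cE$ can prepare on its own without additional oracle access. The conditional game is then exactly $\HUD(N,\alice,\bob)$, and averaging over Haar-random $V$ does not affect the single-oracle bound.
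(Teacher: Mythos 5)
Your proposal is correct and follows essentially the same route as the paper's proof: a hybrid argument that first uses \Cref{cor:UUdagger_UV} (treating the two parties jointly, exploiting non-adaptivity) to trade inverse access for forward access to fresh independent Haar unitaries, and then applies \Cref{thm:adap_LOCC} twice to flip each shared forward oracle to an independent one, yielding $O(t^2/N^{1/8}) + O(t^2/N) = O(t^2/N^{1/8})$. Your conditioning-on-the-extra-shared-unitary argument (hard-coding the fixed $V$ into each party's local operations so the conditional game is a bona fide $\HUD$ instance) is a valid and welcome elaboration of a step the paper's hybrid chain leaves implicit.
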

\ifllncs
The proof of~\Cref{thm:NAInvHUD} can be found in the full version.
\else
\begin{proof}
Consider the following sequence of (information-theoretic) hybrids in which we use $(\cO_1,\cO_2:\cO_3,\cO_4)$ to denote $\alice$'s and $\bob$'s oracle access. That is, $\alice$ has oracle access to $(\cO_1,\cO_2)$ and $\bob$ has oracle access to $(\cO_3,\cO_4)$. The changes between hybrids are highlighted in red:
\begin{align*}
        & ( \cU, \cU^\dagger : \cU, \cU^\dagger ) &  \\
\approx & ( \cU, \textcolor{red}{\cX} : \cU, \textcolor{red}{\cX} ) \hspace{5em} & \text{(by \Cref{cor:UUdagger_UV})} \\
\approx & ( \cU, \cX : \textcolor{red}{\cV}, \cX ) \hspace{5em} & \text{(by \Cref{thm:adap_LOCC})} \\
\approx & ( \cU, \cX : \cV, \textcolor{red}{\cY} ) \hspace{5em} & \text{(by \Cref{thm:adap_LOCC})} \\
\approx & ( \cU, \textcolor{red}{\cU^\dagger} : \cV, \cY ) \hspace{5em} & \text{(by \Cref{cor:UUdagger_UV})} \\
\approx & ( \cU, \cU^\dagger : \cV, \textcolor{red}{\cV^\dagger} ) \hspace{5em} & \text{(by \Cref{cor:UUdagger_UV})}
\end{align*}
where we treat $(\alice,\bob)$ as a single entity when applying~\Cref{cor:UUdagger_UV}. Hence, the distinguishing advantage is at most $O\qty( t^2 / N^{1/8} )$.
\end{proof}
\fi

\subsection{Non-Interactive LOCC Indistinguishability against Adaptive and Inverse Queries}

\begin{definition}[Non-Interactive Invertible Haar Unitary Distinguishing Game]
\label{def:NIInvHUDgame}
A \emph{non-interactive invertible Haar unitary distinguishing game}, parametrized by dimension
$N \in \N$, is played by the challenger $\challenger$ and a two-party adversary $(\alice^{(\cdot)}, \bob^{(\cdot)})$.
\begin{protocol}{$\underline{\NIInvHUD(N, \alice, \bob)}$}
{
\begin{enumerate}
    \item $\challenger$ samples two independent Haar unitaries $U,V\sim\haarunitaries(N)$.
    \item $\challenger$ samples a challenge bit $b \gets \bit$. If $b = 0$, $\alice$ and $\bob$ are both given oracle access to $U$ and its inverse $U^\dagger$. Otherwise, $\alice$ is given oracle access to $U$ and its inverse $U^\dagger$, and $\bob$ is given oracle access to $V$ and its inverse $V^\dagger$.
    \item $\alice$ sends {\bf one round} of messages to $\bob$.
    \item After receiving the message, $\bob$ can still make queries, and outputs a guess $b'\in\bit$.
    \item $\challenger$ outputs $1$ if and only if $b' = b$. 
\end{enumerate}
}
\end{protocol}
\end{definition}

\begin{theorem}
\label{thm:oneround_LOCC}
For any two-party adversary $(\alice,\bob)$ in $\NIInvHUD$ who does not share any entanglement initially, and where $\alice$ sends $m$ bits of message and $\bob$ makes $t$ forward queries and $t$ inverse queries,
\begin{align*}
\Pr[\NIInvHUD(N, \alice, \bob) = 1] 
\leq \frac{1}{2} + O\qty( \sqrt{\frac{t^2 (m + \log N)}{N}} ).
\end{align*}
\end{theorem}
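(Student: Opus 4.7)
The plan is to follow the concentration-of-measure strategy sketched in~\Cref{sec:overview:ni:hud}. For a unitary $W \in \Unitary(N)$ and message $\tau \in \bit^m$, write $p_W(\tau) := \Pr[\bob^{W,W^\dagger}(\tau) = 1]$, and let $\tau(U)$ denote the (possibly randomized) message $\alice$ sends after interacting with $(U,U^\dagger)$. Since $\Pr[b'=0 \mid b=0] = 1 - \E_U[p_U(\tau(U))]$ and $\Pr[b'=1 \mid b=1] = \E_{U,V}[p_V(\tau(U))]$, it suffices (up to a factor $1/2$) to bound
\[
\abs{\E_U[p_U(\tau(U))] - \E_{U,V}[p_V(\tau(U))]}
\leq \E_U\!\left[ \abs{p_U(\tau(U)) - \E_V[p_V(\tau(U))]} \right]
\]
by $O\!\big(\sqrt{t^2(m + \log N)/N}\big)$.

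The first step is a Lipschitz bound: for every fixed $\tau \in \bit^m$, the map $W \mapsto p_W(\tau)$ is $O(t)$-Lipschitz in the operator norm, and hence (since $\norm{\cdot}_\infty \leq \norm{\cdot}_2$) also in the Hilbert--Schmidt norm. This follows from a standard hybrid over $\bob$'s $t$ forward and $t$ inverse oracle calls, using $\norm{W_1 \cdots W_t - W'_1 \cdots W'_t}_\infty \leq \sum_i \norm{W_i - W'_i}_\infty$ together with $\norm{W^\dagger - W'^\dagger}_\infty = \norm{W - W'}_\infty$, so that a perturbation of $W$ propagates through $\bob$'s circuit with total multiplicative factor $O(t)$.

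The second step is Levy's lemma for the Haar measure on $\Unitary(N)$: for any fixed $\tau$,
\[
\Pr_{U \sim \haarunitaries(N)}\!\left[ \abs{p_U(\tau) - \E_V[p_V(\tau)]} > \veps \right]
\leq 2\exp\!\left( -\Omega\!\left( \tfrac{N\veps^2}{t^2} \right) \right).
\]
Taking a union bound over all $2^m$ possible messages defines a ``good'' event $G$ with $\Pr[U \notin G] \leq 2^{m+1}\exp(-\Omega(N\veps^2/t^2))$, on which $\abs{p_U(\tau) - \E_V[p_V(\tau)]} \leq \veps$ holds \emph{simultaneously for every} $\tau \in \bit^m$. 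Crucially, this absorbs the dependence of $\tau(U)$ on $U$: no matter how many adaptive queries $\alice$ uses --- even enough to learn a near-description of $U$ --- the chosen message still lies in the fixed, oracle-independent set $\bit^m$. Combining the two steps and bounding the contribution of $U \notin G$ by $1$ gives
\[
\E_U\!\left[ \abs{p_U(\tau(U)) - \E_V[p_V(\tau(U))]} \right]
\leq \veps + 2^{m+1}\exp\!\left( -\Omega\!\left( \tfrac{N\veps^2}{t^2} \right) \right),
\]
and choosing $\veps = \Theta\!\big(t\sqrt{(m + \log N)/N}\big)$ makes the tail term $O(1/\poly(N))$ and dominated by $\veps$ itself, yielding the claimed bound.

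The main obstacle is the Lipschitz analysis: one must verify that the presence of inverse queries does not cost an extra factor (e.g.~$t^2$ instead of $t$), and that the Lipschitz constant in the Hilbert--Schmidt metric --- which is what Levy's lemma on $\Unitary(N)$ uses with the correct concentration rate $\exp(-\Omega(N\veps^2/L^2))$ --- is indeed $O(t)$. Once this is in place, the rest is a clean union bound and optimization over $\veps$.
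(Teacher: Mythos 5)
Your proposal is correct and follows essentially the same route as the paper's proof: an $O(t)$-Lipschitz bound on $W \mapsto \Pr[1 \gets \bob^{W,W^\dagger}(\tau)]$ in the Frobenius metric, concentration of the Haar measure, a union bound over all $2^m$ messages (which is exactly how the paper absorbs $\alice$'s unlimited adaptivity), and optimization over $\veps$. The only cosmetic difference is that you obtain the Lipschitz constant via an operator-norm hybrid and $\norm{\cdot}_\infty \leq \norm{\cdot}_F$, while the paper routes through the diamond norm and the bound $\norm{U(\cdot)U^\dagger - V(\cdot)V^\dagger}_\diamond \leq 2\norm{U-V}_F$; both give $O(t)$ and the same final bound.
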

Note that the above bound is independent of the number of queries made by $\alice$. 
\ifllncs
The proof of~\Cref{thm:oneround_LOCC}, which follows a similar idea from~\cite{Kretschmer21} and relies on the concentration of the Haar measure, can be found in the full version.
\else
The proof of~\Cref{thm:oneround_LOCC}, which follows a similar idea from~\cite{Kretschmer21} and relies on the concentration of the Haar measure, can be found in~\Cref{app:oneround_LOCC}.
\fi
\section{Black-box Separations between PRUs and QCCC primitives}
\label{sec:separation}

In this section, we show (fully) black-box separations between PRUs and various QCCC primitives.
\ifllncs
Their definitions can be found in the full version. Our approach follows the general framework in~\cite{AGL24}.
\else
Our approach follows the general framework in~\cite{AGL24}. We first recall the definition of fully black-box reductions~\cite{RTV04,BBF13} and their quantum generalization. The definitions below are taken from~\cite{HY20,CM24TCC,CCS24}.

\begin{definition}[Quantum primitives]
A quantum primitive $\cP$ is a pair $(\cF_\cP,$ $\cR_\cP)$, where $\cF_\cP$ is a set of quantum channels $\cI$, and $\cR_\cP$ is a relation over pairs $(\cI, \cA)$ of quantum channels $\cI \in \cF_\cP$ and $\cA$. A quantum channel $\cI$ implements $\cP$ or is an implementation of $\cP$ if $\cI \in \cF_\cP$. If $\cI \in \cF_\cP$ is QPT, then $\cI$ is an efficient implementation of $\cP$. A quantum channel $\cA$ $\cP$-breaks $\cI \in \cF_\cP$ if $(\cI, \cA) \in \cR_\cP$. A secure implementation of $\cP$ is an implementation $\cI$ of $\cP$ such that no QPT quantum channel $\cP$-breaks $\cI$. The primitive $\cP$ quantumly exists if there exists an efficient and secure implementation of $\cP$.
\end{definition}

\begin{definition}[Quantum primitives relative to oracle]
Let $\cP = (\cF_\cP, \cR_\cP)$ be a quantum primitive, and $O$ be a quantum oracle. An oracle quantum channel $\cI$ implements $\cP$ relative to $O$ or is an implementation of $\cP$ relative to $O$ if $\cI^O \in \cF_\cP$. If $\cI^O \in \cF_\cP$ is QPT, then $\cI$ is an efficient implementation of $\cP$ relative to $O$. A quantum channel $\cA$ $\cP$-breaks $\cI \in \cF_\cP$ relative to $O$ if $(I^O, \cA^O) \in \cR_\cP$. A secure implementation of $\cP$ is an implementation $\cI$ of $\cP$ relative to $O$ such that no QPT channel $\cP$-breaks $\cI$ relative to $O$. The primitive $\cP$ quantumly exists relative to $O$ if there exists an efficient and secure implementation of $\cP$ relative to $O$.
\end{definition}

\begin{definition}[Quantum fully black-box reductions]
A pair $(\sfC, \sfR)$ of QPT oracle quantum channels is a \emph{quantum fully-black-box reduction} from a quantum primitive $\cP = (\cF_\cP, \cR_\cP)$ to a quantum primitive $\cQ = (\cF_\cQ, \cR_\cQ)$ if the following two conditions are satisfied:
\begin{enumerate}
\item \textbf{(Correctness.)} For every implementation $\cI \in \cF_\cQ$, we have $\sfC^\cI \in \cF_\cP$.
\item \textbf{(Security.)} For every implementation $\cI \in \cF_\cQ$ and every quantum channel $\cA$, if $\cA$ $\cP$-breaks $C^\cI$, then $\sfR^{\cA,\cI}$ $\cQ$-breaks $\cI$.
\end{enumerate}
\end{definition}
\fi

\noindent The following is a quantum analog of a result by Impagliazzo and Rudich~\cite{IR89} (formalized in~\cite{RTV04}), and is formally stated in~\cite{CM24TCC}.

\begin{theorem}[{\cite[Theorem~4.1]{CM24TCC}}, rephrased]
\label{thm:relativization}
Suppose there exists a fully black-box reduction $(\sfC,\sfR)$ from primitive $\cQ$ to primitive $\cP$ such that $\sfC$ has unitary access to the implementation of $\cP$. Then for any unitary oracle $\cO$, if $\cP$ exists relative to $\cO$, then $\cQ$ also exists relative to $\cO$. In particular, if $\sfG^\cO$ is an implementation of $\cP$, then $\sfC^{\cO,\sfG^\cO}$ is an implementation of $\cQ$.
\end{theorem}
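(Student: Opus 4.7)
The proof is a direct application of the correctness and security clauses of the fully black-box reduction $(\sfC, \sfR)$, lifted to the oracle world. The plan is to fix an arbitrary unitary oracle $\cO$ relative to which $\cP$ exists, let $\sfG^\cO$ denote an efficient and secure implementation of $\cP$ relative to $\cO$, and then verify that $\sfC^{\cO, \sfG^\cO}$ is (i) an efficient quantum algorithm, (ii) a valid implementation of $\cQ$ relative to $\cO$ (\ie lies in $\cF_\cQ$), and (iii) secure against all QPT adversaries with oracle access to $\cO$.

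Efficiency of the candidate is immediate since both $\sfC$ and $\sfG$ are QPT and $\sfC$ accesses $\sfG$ as an oracle. For (ii), the correctness clause of the reduction states that $\sfC^\cI \in \cF_\cQ$ for \emph{every} $\cI \in \cF_\cP$; instantiating $\cI := \sfG^\cO$, which lies in $\cF_\cP$ by hypothesis, yields $\sfC^{\cO, \sfG^\cO} \in \cF_\cQ$. Crucially, this substitution is legitimate because $\sfC$ accesses its $\cP$-oracle unitarily: $\sfG^\cO$ is realised by composing the unitary channel of $\sfG$ with the unitary oracle $\cO$, so from $\sfC$'s point of view nothing changes.

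For (iii), I would argue by contrapositive. Suppose, for the sake of contradiction, that some QPT oracle adversary $\cA$ is such that $\cA^\cO$ $\cQ$-breaks $\sfC^{\cO, \sfG^\cO}$ relative to $\cO$, \ie $(\sfC^{\cO, \sfG^\cO}, \cA^\cO) \in \cR_\cQ$. Applying the security clause of the reduction with the implementation $\cI := \sfG^\cO$ and the adversary $\cA^\cO$ (treated as an abstract quantum channel), we obtain that $\sfR^{\cA^\cO, \sfG^\cO}$ $\cP$-breaks $\sfG^\cO$. But $\sfR^{\cA^\cO, \sfG^\cO}$ is a QPT oracle quantum channel with access only to $\cO$, so this contradicts the security of $\sfG^\cO$ relative to $\cO$.

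The only subtlety, which is also where the hypothesis that $\sfC$ has \emph{unitary} access to the $\cP$-implementation is used, is that the composed object $\sfC^{\cO, \sfG^\cO}$ must itself be a well-defined oracle quantum channel in the oracle model $\cO$; unitary access ensures that $\sfG^\cO$ can be plugged into $\sfC$ as an oracle without any measurement or purification issues, preserving both the membership in $\cF_\cQ$ and the ability of $\sfR$ to simulate the full composed experiment using only black-box access to $\cA$, $\sfG$, and $\cO$. Beyond this observation, the argument is essentially a rewriting of the definitions of fully black-box reduction and of \emph{primitive existing relative to an oracle}, so I do not expect any quantitative or technical obstacle.
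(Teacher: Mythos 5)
Your argument is correct, and it is essentially the argument behind this statement: the paper itself gives no proof but imports the result from~\cite{CM24TCC}, whose proof is exactly this relativization argument — instantiate the correctness clause at $\cI := \sfG^\cO$ to get $\sfC^{\cO,\sfG^\cO} \in \cF_\cQ$, and argue security by contrapositive via the security clause, noting that $\sfR^{\cA^\cO,\sfG^\cO}$ is a QPT channel with oracle access only to $\cO$, contradicting the security of $\sfG^\cO$ relative to $\cO$. Your identification of the role of unitary access (so that plugging the oracle-aided implementation $\sfG^\cO$ into $\sfC$ yields a well-defined QPT oracle algorithm relative to the unitary oracle $\cO$) matches the reason the hypothesis appears in the cited theorem.
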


\begin{definition}[The oracle $\cU$]
The oracle $\cU = \set{U_\secp}_{\secp \in \N}$ is an infinite sequence of Haar random unitaries, where $U_\secp \sim \haarunitaries(2^\secp)$ for each $\secp \in \N$.
\end{definition}

\begin{lemma}[Existence of PRUs against any polynomial-query attackers~\cite{ABGL24}]
\label{lem:exist_PRU}
Relative to $\cU$, there exists a construction $\sfF^{(\cdot)}$ of PRUs where $\sfF$ makes only \emph{forward} queries to $\cU$, and the distinguishing advantage of any computationally unbounded adversary that makes a polynomial number of \emph{forward} queries to $\sfF^{\cU}$ and $\cU$ is negligible.
\end{lemma}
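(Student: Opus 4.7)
The plan is to follow the construction and analysis of~\cite{ABGL24}. First, I would recall the construction: for each security parameter $\secp$ and output length $n = \poly(\secp)$, define $\sfF^\cU_k$ on $n$-qubit inputs by a polynomial-size quantum circuit that makes only forward queries to $U_m$ for some $m = \poly(\secp)$, using $k \in \bit^\secp$ as a key-dependent parameter embedded into the circuit (for instance, by using $k$ to select how to embed the input register into the larger register on which $U_m$ is applied). Correctness, namely that $\sfF^\cU_k$ is an $n$-qubit unitary implementable using only forward queries to $\cU$, is immediate from the construction.

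For security, the plan is to show that no computationally unbounded $q$-query adversary $\cA$ can distinguish $(k \ugets \bit^\secp,\ \cA^{\sfF^\cU_k,\,\cU})$ from $(V \sim \haarunitaries(2^n),\ \cA^{V,\,\cU})$ with better than negligible advantage, where $q = \poly(\secp)$. I would first compile $\cA$'s queries to $\sfF^\cU_k$ into forward queries to $\cU$, producing a combined $\poly(\secp)$-query algorithm $\wt{\cA}_k$ that accesses $\cU$ only. Then, invoking Levy-type concentration for the Haar measure (as exploited in~\cite{Kretschmer21,MH25,GMMY24} and already used in our~\Cref{sec:overview:ni:hud}), the acceptance probability of $\wt{\cA}_k$ is doubly-exponentially concentrated around its expectation over a Haar-random $\cU$. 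A union bound over the $2^\secp$ possible keys $k$, together with a union bound over polynomially many Haar unitaries in $\cU$ touched by $\cA$ and the construction, yields that with probability $1 - 2^{-\Omega(2^\secp)}$ over the draw of $\cU$, the distinguishing advantage is simultaneously negligible for all keys $k$.

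The main obstacle will be the joint adaptive access to $\sfF^\cU_k$ and to $\cU$, which introduces correlations since $\sfF^\cU_k$ is itself derived from $\cU$: a naive hybrid that tries to replace $\sfF^\cU_k$ by a fresh Haar unitary one query at a time cannot be coupled with the adversary's direct access to $\cU$. The compilation step above defuses this by reducing the whole interaction to a single $\poly(\secp)$-query experiment over $\cU$ alone, so that the Haar concentration bound applies to the combined algorithm and the union bound over keys closes the argument.
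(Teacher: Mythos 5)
There is a genuine gap, and it lies at the heart of your security argument. The paper does not prove~\Cref{lem:exist_PRU} at all --- it is imported as a black box from~\cite{ABGL24} --- so the only question is whether your blind argument would work, and it would not. Concentration of the Haar measure (plus a union bound over the $2^{\secp}$ keys) only shows that, for every fixed key $k$, the acceptance probability $\Pr[1 \gets \wt{\cA}_k^{\cU}]$ of your compiled algorithm is, with overwhelming probability over the draw of $\cU$, close to its mean $c_k := \Ex_{\cU}[\Pr[1 \gets \wt{\cA}_k^{\cU}]]$. It says nothing about how $c_k$ (or its average over $k$) relates to the ideal-world quantity $\Ex_{V \sim \haarunitaries(2^n)}\Pr[1 \gets \cA^{V,\cU}]$, which is exactly what PRU security requires. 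Your argument never uses any property of the construction or of the key, so it ``proves'' too much: take the trivially insecure construction $\sfF_k^{\cU} := U_n$ for every $k$. Every step of your outline (compiling queries, Lévy concentration, union bound over keys) goes through verbatim for this $\sfF$, yet an adversary distinguishes it from an independent Haar $V$ with a single swap test between $U_n\ket{0^n}$ and $\sfF_k\ket{0^n}$. So the missing ingredient is an actual indistinguishability analysis of the joint distribution $(\sfF_k^{\cU},\cU)$ versus $(V,\cU)$, in which the randomness of the key --- not concentration over $\cU$ --- does the work; in~\cite{ABGL24} this is carried out for a concrete construction via a purification/path-recording-style (Weingarten/twirling) analysis of forward queries, which is also why the result is restricted to forward queries. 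Note also that concentration is used in this paper only in the non-interactive setting of~\Cref{thm:oneround_LOCC}, for a different purpose (making the bound independent of $\alice$'s query count), not to establish pseudorandomness.

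A secondary issue is your construction sketch itself. Since $\cU$ contains a single Haar unitary per dimension, you cannot index into a family by $k$, and ``embedding the $n$-qubit input into a larger register on which $U_m$ is applied'' defines an isometry from $n$ to $m$ qubits, not an $n$-qubit unitary; with only forward access you cannot uncompute the ancilla to return to $n$ qubits. Any valid construction must produce a genuine unitary on the input register using forward queries only (e.g., composing the Haar oracle at the appropriate dimension with key-dependent unitaries), and then the security proof must show that the key-dependent part hides the correlation with the adversary's direct access to $\cU$ --- which is precisely the step your proposal leaves unaddressed.
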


\subsection{QCCC Key Agreements}
\ifllncs
The definition of QCCC key agreements can be found in the full version.
\else
\begin{definition}[QCCC key agreements relative to an oracle]
A \emph{QCCC key agreement relative to an oracle $\cO$} is a two-party interactive protocol consisting of a pair of uniform QPT oracle algorithms $(\alice,\bob)$, where $\alice$ and $\bob$ each take as input the security parameter $1^\secp$, are allowed to make a polynomial number of queries, communicate classically, and output the keys $k_\alice\in\bit$ and $k_\bob\in\bit$, respectively.
\begin{itemize}
    \item \textbf{Completeness.} There exists a negligible function $\negl$ such that for all $\secp \in \N$,
    \[
    \Pr \qty[
    k_\alice = k_\bob: 
    (k_\alice, k_\bob, \tau) \gets \langle \alice^\cO(1^\secp), \bob^\cO(1^\secp) \rangle
    ] 
    \geq 1 - \negl(\secp),
    \]
    where $\tau$ denotes the transcript.
    \item \textbf{Security.} For any QPT eavesdropper $\eve^\cO$, there exists a negligible function $\negl$ such that for all $\secp \in \N$,
    \[
    \Pr \qty[
    k_\eve = k_\bob:
    \substack{
        (k_\alice, k_\bob, \tau) \gets \langle \alice^\cO(1^\secp), \bob^\cO(1^\secp) \rangle, \\
        k_E \gets \eve^\cO(1^\secp,\tau) }
    ] 
    \leq \frac{1}{2} + \negl(\secp).
    \]
\end{itemize}
\end{definition}
\noindent In the plain model, completeness and security are defined analogously without an oracle.
\fi

\begin{lemma}[Breaking any key agreement relative to $\cU$ with polynomially many queries]
\label{lem:nexist_KA}
Relative to $\cU$, for any construction of QCCC key agreements that makes only \emph{forward} queries to $\cU$, there exists an eavesdropper, not necessarily time-efficient, who makes a polynomial number of \emph{forward} queries and breaks the security.
\end{lemma}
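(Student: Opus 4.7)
The plan is to follow the two-stage blueprint sketched in the technical overview: first use the adaptive LOCC indistinguishability of Haar unitaries (\Cref{thm:adap_LOCC}) to compile out all ``large-dimension'' components of the shared oracle, and then use quantum process tomography plus classical simulation of $\bob$ to complete the attack. Let $p = p(\secp)$ be a polynomial upper bound on the total number of queries made by $(\alice, \bob)$ to $\cU$, and fix a threshold $\ell^{*} := c\log \secp$ for a sufficiently large constant $c$.

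In the first stage, I iteratively swap, for each $k \in \{\ell^{*}+1, \ldots, \poly(\secp)\}$ that is actually touched by the protocol, the shared $k$-qubit Haar unitary $U_k$ for an independent pair $(U_k, V_k)$, where $\alice$ still queries $U_k$ while $\bob$ queries the fresh $V_k$. This is an instance of the $\HUD(2^k,\cdot,\cdot)$ game in which the two LOCC parties are $\alice$ on one side and $(\bob, \eve)$ on the other, with $\eve$ acting as a passive receiver of the classical transcript; \Cref{thm:adap_LOCC} bounds each swap's effect by $O(p^2 / 2^k)$, so summing over the polynomial number of such $k$ yields a negligible total error of $O(\poly(\secp)/\secp^{c})$. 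Once this is done, $\bob$ can sample the $V_k$'s himself, and the only remaining shared resource between $\alice$ and $\bob$ is the polynomial family $\{U_k\}_{k \le \ell^{*}}$, each of dimension at most $\secp^{c}$.

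In the second stage, $\eve$ performs quantum process tomography on each $U_k$ with $k \le \ell^{*}$ using $\poly(2^k, 1/\delta) = \poly(\secp)$ forward queries to obtain a classical description $\tilde{U}_k$ with $\|U_k - \tilde{U}_k\|_\diamond \le \delta$ for some $\delta = 1/\poly(\secp)$. After observing the transcript $\tau$, $\eve$ locally simulates $\bob$'s algorithm from scratch using $\tilde{U}_k$ in place of $U_k$ for $k \le \ell^{*}$ and her own freshly sampled Haar unitaries in place of $U_k$ for $k > \ell^{*}$ (valid by Stage~1), and outputs whatever bit the simulation produces. Since the conditional distribution of $k_\bob$ given $(\tau, \{U_k\}_{k \le \ell^{*}})$ is fully determined by $\bob$'s algorithm, $\eve$'s simulated output has the same conditional distribution (up to the tomography error, controlled by a standard $\poly(\secp)\cdot\delta$ diamond-norm hybrid); and by completeness, $k_\bob$ concentrates on $k_\alice$ conditioned on $(\tau, \{U_k\}_{k\le\ell^{*}})$ for typical $(\tau, \mathcal U)$. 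Chaining the stages yields $\Pr[k_\eve = k_\bob] \ge 1 - \negl(\secp)$.

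The main obstacle is executing the Stage~1 hybrid cleanly, since when \Cref{thm:adap_LOCC} is applied at a particular dimension $2^k$, the remaining ``frozen'' Haar unitaries $\{U_{k'}\}_{k' \ne k}$ must be absorbed into the two parties' arbitrary initial states rather than treated as fresh randomness; fortunately \Cref{thm:adap_LOCC} is stated uniformly over initial product states, so averaging over the frozen oracles outside the LOCC bound handles this conditioning automatically. A secondary subtlety is that $\eve$ is not a genuine LOCC party during the protocol: she only sees the classical transcript and her own post-hoc tomography queries are restricted to the small-dimension block $\{U_k\}_{k\le\ell^{*}}$, which is never swapped in Stage~1, so bundling $\eve$ with $\bob$ in the HUD game is harmless.
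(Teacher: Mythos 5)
Your two-stage plan is essentially the route the paper takes: Stage~1 is the paper's reduction to a ``short'' scheme (\Cref{lem:reduce_to_shortKA}), done by hybridizing over the large-dimension oracles via \Cref{thm:adap_LOCC} with the eavesdropper bundled into Bob's side, and Stage~2 is the paper's attack on short schemes (\Cref{lem:break_shortKA}) via process tomography (\Cref{thm:tomography}) and prediction of Bob's key from the transcript. However, two steps in your Stage~2 are asserted rather than argued, and they are exactly where the paper has to do work. First, the claim that ``by completeness, $k_\bob$ concentrates on $k_\alice$ conditioned on $(\tau,\{U_k\}_{k\le\ell^*})$'' is not a consequence of completeness alone: it needs the structural fact that in a classical-communication protocol the two parties' states are in product form conditioned on the transcript (\Cref{lem:cond_indep}), so that $k_\alice$ is conditionally independent of $k_\bob$ given $\tau$ and the oracle, whence the best transcript-based predictor of $k_\bob$ succeeds with probability at least the completeness. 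Without naming this ingredient the argument has a hole precisely where the QCCC restriction is used (the assertion is false for protocols with quantum communication). Relatedly, ``simulate Bob's algorithm from scratch and output whatever the simulation produces'' must mean sampling (or taking the most likely value of) Bob's key from its distribution conditioned on the transcript being exactly $\tau$; an unconditional forward simulation fed with Alice's messages does not reproduce that conditional, since Bob's own replies in the simulation need not match $\tau$.

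Second, your ``standard $\poly(\secp)\cdot\delta$ diamond-norm hybrid'' controls only the joint distribution of (key, transcript) under $U$ versus under $\wt{U}$; it does not by itself justify replacing the conditional $\Pr[k_\bob\mid\tau,U]$ by $\Pr[k_\bob\mid\tau,\wt{U}]$, because conditioning is not stable under small statistical perturbations (conditionals can differ arbitrarily on low-probability transcripts). The paper closes this with a dedicated prediction lemma (\Cref{lem:SD_lemma}), which shows the $\wt{U}$-based predictor loses at most $3\cdot\SD$ relative to the $U$-based one. Finally, the advertised success probability $1-\negl(\secp)$ is an overclaim: tomography with polynomially many queries yields only inverse-polynomial accuracy and the Stage-1 hybrids cost another inverse polynomial, so what you actually obtain is roughly $0.99-\eta-O(1/\secp)$, which still amply contradicts security; this only needs rewording. (A cosmetic point: the frozen oracles in Stage~1 should be hard-wired into the parties' local operations for each fixed value and averaged at the end, rather than ``absorbed into the initial states''; your averaging remark is otherwise the right idea.)
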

\ifllncs
The proof of~\Cref{lem:nexist_KA} follows the approach in~\cite{AGL24} closely, and can be found in the full version.
\else
The proof of~\Cref{lem:nexist_KA} follows the approach in~\cite{AGL24} closely, and can be found in~\Cref{app:lem:nexist_KA}.
\fi

\begin{theorem}
\label{thm:QBB_KA}
There does not exist a quantum fully black-box reduction $(\sfC,\sfR)$ from QCCC key agreements to PRUs such that (1) $\sfC$ has unitary access to the PRU, but without access to the inverse (2) $\sfR$ has unitary access to the adversary and the PRU, but without access to the inverse.
\end{theorem}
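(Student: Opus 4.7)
The plan is to proceed by contradiction, following the standard template for black-box separations via relativizing worlds (\Cref{thm:relativization}), combined with the two ingredients already established: \Cref{lem:exist_PRU}, which gives a PRU $\sfF^{(\cdot)}$ that is secure relative to $\cU$ against polynomial-query (computationally unbounded) adversaries making only \emph{forward} queries, and \Cref{lem:nexist_KA}, which breaks any QCCC key agreement that makes only forward queries to $\cU$ by a polynomial-query (unbounded) eavesdropper.

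First I would assume, toward contradiction, that a quantum fully black-box reduction $(\sfC, \sfR)$ with the stated forward-only access restrictions exists. Instantiating with $\sfF$ in the $\cU$-relativized world, \Cref{thm:relativization} guarantees that $\sfC^{\sfF^{\cU}}$ is a QCCC key agreement relative to $\cU$. Because $\sfC$ has forward-only access to its PRU oracle, and $\sfF$ itself makes only forward queries to $\cU$, the composed protocol $\sfC^{\sfF^{\cU}}$ makes only \emph{forward} queries to $\cU$. \Cref{lem:nexist_KA} therefore yields an eavesdropper $\eve$, making polynomially many forward queries to $\cU$, that breaks the security of $\sfC^{\sfF^{\cU}}$ with non-negligible advantage.

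Next I would invoke the security clause of the fully black-box reduction: since $\eve$ breaks $\sfC^{\sfF^{\cU}}$, the machine $\sfR^{\eve, \sfF^{\cU}}$ must break $\sfF^{\cU}$ as a PRU with non-negligible advantage. The crucial bookkeeping step is to argue that this combined machine is a \emph{polynomial-forward-query} distinguisher against $\sfF^{\cU}$: $\sfR$ is QPT with forward-only access to both $\eve$ and $\sfF^{\cU}$, each invocation of $\eve$ makes polynomially many forward queries to $\cU$, and each forward query to $\sfF^{\cU}$ expands (inside the PRU security game) to polynomially many forward queries to $\cU$. Hence $\sfR^{\eve, \sfF^{\cU}}$ is a polynomial-forward-query (though computationally unbounded through $\eve$) distinguisher against $\sfF^{\cU}$, contradicting \Cref{lem:exist_PRU}.

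The main subtlety, and the step I would be most careful about, is propagating the \emph{forward-only} restriction through the entire composition: it is exactly the forward-only hypotheses on $\sfC$ and $\sfR$ that let us (i) apply \Cref{lem:nexist_KA} to the candidate key agreement, and (ii) stay inside the regime of \Cref{lem:exist_PRU} when turning $\sfR^{\eve, \sfF^{\cU}}$ into a PRU distinguisher. If either $\sfC$ or $\sfR$ were allowed inverse PRU queries, the derived protocols/distinguishers would make inverse queries to $\cU$, and both \Cref{lem:exist_PRU} and \Cref{lem:nexist_KA} would no longer apply in their stated forward-only forms—explaining precisely why the separation is stated under these two structural conditions.
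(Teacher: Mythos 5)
Your proposal is correct and follows essentially the same route as the paper's proof: contradiction via \Cref{thm:relativization} to obtain a forward-query-only QCCC key agreement relative to $\cU$, then \Cref{lem:nexist_KA} to get a polynomial-query eavesdropper, then the reduction's security clause plus query bookkeeping to contradict \Cref{lem:exist_PRU}. The only cosmetic difference is that the paper writes the construction as $\sfC^{\cU,\sfF^\cU}$ (with direct access to $\cU$ as well, per \Cref{thm:relativization}), while your forward-only bookkeeping is, if anything, spelled out a bit more explicitly than in the paper.
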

\begin{proof}
For the sake of contradiction, suppose such a pair $(\sfC,\sfR)$ exists. Let $\sfF^\cU$ be the PRU construction in~\Cref{lem:exist_PRU}. By~\Cref{thm:relativization}, $\sfC^{\cU,\sfF^\cU}$ is a construction of QCCC key agreements. According to~\Cref{lem:nexist_KA}, there exists an eavesdropper $\sfE^\cU$ who makes a polynomial number of queries and breaks the security of the key agreement construction $\sfC^{\cU,\sfF^\cU}$. Furthermore, by definition, the reduction $\sfR^{\cU,\sfF^\cU,\sfE^\cU}$ can break the security of PRU construction $\sfF^\cU$ by making a polynomial number of queries to $\cU$, $\sfF^\cU$, and $\sfE^\cU$, which together amount to a polynomial number of queries to  $\cU$ and $\sfF^\cU$. However, this contradicts~\Cref{lem:exist_PRU}.
\end{proof}

\subsection{Interactive QCCC Commitments}
\ifllncs
The definition of interactive QCCC commitments can be found in the full version.
\else
\begin{definition}[Interactive QCCC commitments relative to an oracle]
An \emph{interactive QCCC commitment relative to an oracle $\cO$} is a two-party interactive protocol consisting of a pair of uniform QPT oracle algorithms $(\sen,\rec)$ (where $\sen$ is the sender and $R$ is the receiver). Each of $\sen$ and $\rec$ are allowed to make queries and communicate classically.
\begin{itemize}
    \item \textbf{Commit phase:} In the (possibly interactive) commit phase, $\sen$ takes as input the security parameter $1^\secp$ and an input bit $b\in\bit$, and $\rec$ takes as input the security parameter $1^\secp$. We denote the execution of the commit phase by
    $(\rho^{b,\tau}_{\reg{S}},\sigma^\tau_{\reg{R}},\tau) \gets \commit\langle \sen^\cO(1^\secp,b), \rec^\cO(1^\secp) \rangle$, 
    where $\rho^{b,\tau}_{\reg{S}}$ (\resp $\sigma^\tau_{\reg{R}}$) is the state of $\sen$ (\resp $\rec$) after the commit phase, and $\tau$ denotes the transcript.\footnote{We can express them as a product state due to~\Cref{lem:cond_indep}.}
    \item \textbf{Reveal phase:} In the (possibly interactive) reveal phase, the output is $\mu \in \set{0,1,\bot}$ indicating the receiver's output bit or abort. We denote the execution of the reveal phase by $\mu \gets \reveal\langle \sen^\cO(1^\secp,\rho^{b,\tau}_{\reg{S}}), \rec^\cO(1^\secp, \sigma^\tau_{\reg{R}}) \rangle$.
\end{itemize}
The scheme satisfies the following conditions.
\begin{itemize}
    \item \textbf{Completeness.} There exists a negligible function $\negl$ such that for all $\secp \in \N$,
    \[
    \Pr_{\cO} \left[
    \mu = b:\, 
    \substack{ 
    b \gets \bit, \\
    (\rho^{b,\tau}_{\reg{S}},\sigma^\tau_{\reg{R}},\tau) \gets \commit \inner{ \sen^\cO(1^\secp,b)}{\rec^\cO(1^\secp)}, \\
    \mu \gets \reveal \inner{\sen^\cO(1^\secp,\rho^{b,\tau}_{\reg{S}})}{\rec^\cO(1^\secp, \sigma^\tau_{\reg{R}})}
    }
    \right] 
    \geq 1 - \negl(\secp).
    \]
    \item \textbf{Statistical/Computational hiding.} For any computationally unbound (\resp QPT) malicious receiver $\rec^*$ who makes a polynomial number of queries, there exists a negligible function $\negl$ such that for all $\secp \in \N$,
    \[
    \Pr_\cO \left[
    b' = b:\,
    \substack{ 
    b \gets \bit, \\
    (\rho^{b,\tau}_{\reg{S}},\sigma^\tau_{\reg{R}},\tau) \gets \commit \inner{\sen^\cO(1^\secp,b)}{\rec^{*,\cO}(1^\secp)}, \\
    b' \gets \rec^{*,\cO}(1^\secp, \sigma^\tau_{\reg{R}})
    }
    \right] 
    \leq \frac{1}{2} + \negl(\secp).
    \]
    \item \textbf{Statistical/Computational sum binding.} For any computationally unbounded (\resp QPT) malicious sender $\sen^*$ who makes a polynomial number of queries, there exists a negligible function $\negl$ such that for all $\secp\in\N$,
    \[
    \Pr_\cO \left[
    \mu = ch:\,
    \substack{
    (\rho^\tau_{\reg{S}},\sigma^\tau_{\reg{R}},\tau) \gets \commit \inner{\sen^{*,\cO}(1^\secp)}{\rec^\cO(1^\secp)}, \\
    ch \gets \bit, \\
    \mu \gets \reveal \inner{\sen^{*,\cO}(1^\secp,\,ch,\,\rho^\tau_{\reg{S}})}{\rec^\cO(1^\secp,\,\sigma^\tau_{\reg{R}})}
    }
    \right] 
    \leq \frac{1}{2} + \negl(\secp).
    \]
\end{itemize}
\end{definition}
\noindent In the plain model, completeness, hiding, and binding are defined analogously without an oracle.
\fi

\begin{lemma}[Breaking any commitment relative to $\cU$ with polynomially many queries]
\label{lem:nexist_int_com}
Relative to $\cU$, for any construction of interactive QCCC key commitments that makes only \emph{forward} queries to $\cU$ and satisfy completeness, there exists either (1) a malicious receiver or (2) a malicious sender (or possibly both) who, while not necessarily time-efficient, makes a polynomial number of queries and breaks the security.
\end{lemma}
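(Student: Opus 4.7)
The plan is to follow the two-step template outlined in the technical overview, leveraging the LOCC indistinguishability of identical versus independent Haar unitaries (\Cref{thm:adap_LOCC}) to reduce the oracle-relative impossibility to a known impossibility in the plain model. Fix a purported construction $(\sen, \rec)$ that makes $\poly(\secp)$ forward queries to $\cU = \{U_k\}_{k \in \N}$ and satisfies completeness. We wish to show that either hiding or binding must fail against an unbounded polynomial-query adversary.

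First, I would compile out all ``large'' unitaries. Let $c$ be a sufficiently large constant (to be fixed later) and partition $\cU$ into the ``small'' block $\cU_{\mathrm{sm}} := \{U_k\}_{k \leq c \log \secp}$ and the ``large'' block $\cU_{\mathrm{lg}} := \{U_k\}_{k > c \log \secp}$. Consider an alternate world in which the sender $\sen$ continues to access $\cU = \cU_{\mathrm{sm}} \cup \{U_k\}_{k > c \log \secp}$, but the receiver $\rec$ instead accesses $\cU_{\mathrm{sm}} \cup \{V_k\}_{k > c \log \secp}$, where each $V_k \sim \haarunitaries(2^k)$ is sampled independently. I would argue via a hybrid over the (polynomially many) values of $k$ actually queried that the joint distribution of the transcript and the parties' internal states is negligibly close to the original, for any (possibly malicious) polynomial-query party interacting with the honest counterpart. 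Each hybrid step invokes \Cref{thm:adap_LOCC} applied to a single $U_k$ with $k > c \log \secp$: the relevant distinguishing advantage is $O(t^2 / 2^k) \leq O(\poly(\secp)/\secp^c)$, which is negligible for $c$ large enough. Since completeness, hiding and binding are all properties that can be phrased as the output of an LOCC experiment between two parties with shared oracle access to $\{U_k\}_{k>c\log\secp}$, correctness and security transfer (up to negligible error) to this ``independent large unitaries'' world.

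Second, in the modified world the large unitaries $\{V_k\}_{k > c \log \secp}$ held by $\rec$ are independent of everything $\sen$ sees, so $\rec$ can internally sample them herself; symmetrically $\sen$ can sample her own large-block unitaries. After this simulation step, the protocol only makes genuine oracle calls to $\cU_{\mathrm{sm}}$, which consists of unitaries on at most $c \log \secp$ qubits, \ie of description size $\poly(\secp)$. Both parties can therefore, as a preprocessing phase, perform polynomial-time shadow tomography (or full process tomography) on each small $U_k$ to obtain a joint classical description $\widetilde{\cU}_{\mathrm{sm}}$ that is $\negl(\secp)$-close to $\cU_{\mathrm{sm}}$. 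The resulting protocol takes place in the common random string model where the ``CRS'' is the shared (noisy) classical description $\widetilde{\cU}_{\mathrm{sm}}$. Since information-theoretically secure interactive commitments with completeness are well-known to be impossible in the plain model even with an arbitrary common random string (one can always violate either hiding or binding via the standard unbounded attack), we obtain an unbounded polynomial-query adversary in the CRS world; unwinding the chain of reductions yields the desired unbounded polynomial-query malicious $\rec^*$ or $\sen^*$ in the original $\cU$-oracle world.

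The main obstacle is ensuring the hybrid argument in the first step is sound when one of the two parties is malicious. The LOCC indistinguishability in \Cref{thm:adap_LOCC} is stated for two parties jointly trying to distinguish identical from independent oracles, but in the hiding/binding games one party is honest and the other is adversarial and unbounded. I would handle this by absorbing the malicious party, the challenger, and the final distinguishing predicate into a single LOCC experiment: the honest party plays one side of the LOCC game, while the malicious party together with the challenger and final tester plays the other side. Since \Cref{thm:adap_LOCC} places no computational bound on $(\alice,\bob)$ beyond the query count, and the total query count on each side remains polynomial, the hybrid goes through. A secondary technical point is that the tomography step only gives an approximate description; I would verify that the resulting simulation error in completeness, hiding, and binding remains negligible by choosing the tomography precision to be inverse-polynomial and controlling error propagation through the polynomially many oracle calls in the protocol.
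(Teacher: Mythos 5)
Your proposal follows essentially the same route as the paper's own proof: a hybrid over the long unitaries using \Cref{thm:adap_LOCC} to reduce any construction to one querying only $O(\log\secp)$-qubit unitaries (the paper's \Cref{lem:reduce_to_shortCOM}), followed by process tomography of the small unitaries and the statistical hiding-versus-binding impossibility (the paper's \Cref{lem:Imp_short_query_COM}). The only caveats are that the switching error is inverse-polynomial rather than negligible (which still suffices), and that the error analysis you defer as secondary---tomography error propagating into the malicious sender's reconstructed reveal state, and the quantitative trade-off $\adv(\rec^\ddagger)+\adv(\sen^\ddagger)\geq 1/2-\eta$ with a case analysis, needed because the binding attack loses an additive hiding-advantage term---is where the paper's proof does most of its work.
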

\ifllncs
Similarly, the proof of~\Cref{lem:nexist_int_com} follows the approach in~\cite{AGL24} closely, and can be found in the full version.
\else
Similarly, the proof of~\Cref{lem:nexist_int_com} follows the approach in~\cite{AGL24} closely, and can be found in~\Cref{app:lem:nexist_int_com}.
\fi

\begin{theorem}
\label{thm:QBB_IntCom}
There does not exist a quantum fully black-box reduction $(\sfC,\sfR)$ from interactive QCCC commitments to PRUs such that (1) $\sfC$ has unitary access to the PRU, but without access to the inverse (2) $\sfR$ has unitary access to the adversary and the PRU
\end{theorem}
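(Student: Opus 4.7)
The plan is to follow the same template used in the proof of Theorem~\ref{thm:QBB_KA}, adapting it to the setting of interactive QCCC commitments. I will argue by contradiction: assume a fully black-box reduction $(\sfC,\sfR)$ of the stated form exists, and then construct a polynomial-query adversary against the underlying PRU by composing the building blocks provided by the earlier sections.

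Concretely, first I would instantiate the oracle $\cU$ of Haar random unitaries and invoke~\Cref{lem:exist_PRU} to obtain a forward-only PRU construction $\sfF^\cU$ that is secure against any adversary making polynomially many forward queries. Because $\sfC$ only uses forward unitary access to the PRU, I can apply~\Cref{thm:relativization} with the primitive $\cP$ being PRUs and $\cQ$ being interactive QCCC commitments, obtaining that $\sfC^{\cU,\sfF^\cU}$ is a correct implementation of an interactive QCCC commitment scheme relative to $\cU$. Crucially, all oracle calls made by $\sfC^{\cU,\sfF^\cU}$ to $\cU$ are forward calls, since $\sfC$ only forward-queries $\sfF^\cU$ and $\sfF$ only forward-queries $\cU$.

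Next, I would apply~\Cref{lem:nexist_int_com} to the commitment construction $\sfC^{\cU,\sfF^\cU}$. This yields an inefficient attacker $\sfA^\cU$ (a malicious receiver breaking hiding, or a malicious sender breaking binding) that uses only polynomially many forward queries to $\cU$ and succeeds with non-negligible advantage. By the security property of the fully black-box reduction, $\sfR^{\cU,\sfF^\cU,\sfA^\cU}$ must then $\cP$-break $\sfF^\cU$, i.e.\ distinguish $\sfF^\cU$ from Haar. Since $\sfR$ is QPT and has only forward unitary access to $\sfF^\cU$ and to $\sfA^\cU$, and since both $\sfF^\cU$ and $\sfA^\cU$ themselves make only a polynomial number of forward queries to $\cU$, the entire machine $\sfR^{\cU,\sfF^\cU,\sfA^\cU}$ is a polynomial-forward-query distinguisher against $\sfF^\cU$ relative to $\cU$. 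This contradicts~\Cref{lem:exist_PRU}.

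The main obstacle to be careful about is ensuring that all oracle queries involved remain \emph{forward} queries to $\cU$, so that the contradiction with~\Cref{lem:exist_PRU} is valid: the lemma rules out only forward-query distinguishers. This is where the hypotheses of the theorem matter --- namely, that $\sfC$ accesses the PRU only in the forward direction, and that $\sfR$ accesses the PRU and the adversary only in the forward direction (note that, unlike Theorem~\ref{thm:QBB_KA}, here the statement allows $\sfR$ access to the adversary, but the adversary produced by~\Cref{lem:nexist_int_com} itself only forward-queries $\cU$, so composing does not introduce inverse calls to $\cU$ or to $\sfF^\cU$). Once this bookkeeping is verified, the proof is essentially the same as that of~\Cref{thm:QBB_KA} with~\Cref{lem:nexist_int_com} used in place of~\Cref{lem:nexist_KA}, and so we omit further details.
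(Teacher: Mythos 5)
Your proposal is exactly the argument the paper intends: the paper's proof of this theorem is literally "similar to that of Theorem~\ref{thm:QBB_KA}, omitted," and you carry out that template correctly, invoking \Cref{lem:exist_PRU}, \Cref{thm:relativization}, and \Cref{lem:nexist_int_com} in place of \Cref{lem:nexist_KA}, with the right bookkeeping that all composed queries to $\cU$ remain forward queries. No gaps; this matches the paper's approach.
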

\begin{proof}
The proof is similar to that of~\Cref{thm:QBB_KA}, which we omit here.
\end{proof}

\section*{Acknowledgments}
PA, AG and YTL are supported by the National Science Foundation under the grants FET-2329938, CAREER-2341004 and, FET-2530160.

\printbibliography

\newpage

\ifllncs
\else
    \appendix
    \section{Omitted Proofs in~\Cref{sec:LOCC}}

\ifllncs
\subsection{Proof of~\Cref{lem:PSDness}}
\label{app:PSDness}

\begin{lemma}[\Cref{lem:PSDness}, restated]
For any PSD operators $M_{\reg{XY}}, \rho_{\reg{YZ}} \succeq 0$, the operator
$\wt{M}_{\reg{XZ}} :=$ $\Tr_{\reg{Y}}($ $M_{\reg{XY}} \otimes \id_{\reg{Z}} \cdot \id_{\reg{X}} \otimes \ptrans_{\reg{Z}}(\rho_{\reg{YZ}}) )$ is PSD.
\end{lemma}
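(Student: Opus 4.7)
The map $(M, \rho) \mapsto \wt{M}$ is bilinear, so if I spectrally decompose $M_{\reg{XY}} = \sum_i p_i \projector{\psi_i}_{\reg{XY}}$ and $\rho_{\reg{YZ}} = \sum_j q_j \projector{\phi_j}_{\reg{YZ}}$ with $p_i, q_j \geq 0$, then $\wt{M}_{\reg{XZ}}$ becomes a non-negative combination of the rank-one analogues. Since a non-negative combination of PSD operators is PSD, it suffices to handle the case where $M_{\reg{XY}} = \projector{\psi}_{\reg{XY}}$ and $\rho_{\reg{YZ}} = \projector{\phi}_{\reg{YZ}}$.

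In the rank-one case, I would first take a Schmidt decomposition $\ket{\psi}_{\reg{XY}} = \sum_{i=1}^{r} \lambda_i \ket{u_i}_{\reg{X}} \ket{v_i}_{\reg{Y}}$ with orthonormal $\{\ket{u_i}\}$ and $\{\ket{v_i}\}$, then extend the orthonormal set $\{\ket{v_i}\}_{i \in [r]}$ to an orthonormal basis $\{\ket{v_i}\}_{i \in [d_Y]}$ of $\hilbert_{\reg{Y}}$. In this basis I would expand $\ket{\phi}_{\reg{YZ}} = \sum_{i \in [d_Y], j \in [d_Z]} \alpha_{ij} \ket{v_i}_{\reg{Y}} \ket{j}_{\reg{Z}}$ in the computational basis on $\reg{Z}$ (so that the partial transpose acts in the natural way).

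With these decompositions in hand I would carry out the direct computation: expand $\projector{\psi}_{\reg{XY}} \otimes \id_{\reg{Z}}$ and $\id_{\reg{X}} \otimes \ptrans_{\reg{Z}}(\projector{\phi}_{\reg{YZ}})$, multiply them, and trace out $\reg{Y}$. The inner products $\braket{v_{i'}}{v_k}$ and $\braket{v_i}{v_l}$ produce Kronecker deltas that collapse the sums over $k$ and $l$, leaving
\[
\wt{M}_{\reg{XZ}} = \sum_{i,i' \in [r]} \sum_{j,j' \in [d_Z]} \lambda_i \alpha^*_{ij'} \lambda^*_{i'} \alpha_{i'j}\; \ketbra{u_i}{u_{i'}}_{\reg{X}} \otimes \ketbra{j'}{j}_{\reg{Z}},
\]
which factors as $\ket{\xi}\bra{\xi}$ with $\ket{\xi} := \sum_{i \in [r], j' \in [d_Z]} \lambda_i \alpha^*_{ij'} \ket{u_i}_{\reg{X}} \ket{j'}_{\reg{Z}}$. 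This is manifestly PSD, completing the proof.

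I do not expect any serious obstacle: the whole argument is a bookkeeping exercise once the Schmidt basis of $\ket{\psi}$ is chosen and then extended to a basis adapted to $\ket{\phi}$. The one place to be careful is the convention for the partial transpose (taken here with respect to the computational basis on $\reg{Z}$, as stated in the paper's footnote) and the matching re-ordering $\ketbra{j}{j'} \mapsto \ketbra{j'}{j}$ that makes the final expression collapse into a rank-one projector rather than something only Hermitian.
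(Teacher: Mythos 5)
Your proposal is correct and matches the paper's own proof essentially step for step: reduce to rank-one operators by spectral decomposition, Schmidt-decompose $\ket{\psi}_{\reg{XY}}$, extend the Schmidt basis of $\reg{Y}$ to expand $\ket{\phi}_{\reg{YZ}}$, and compute $\wt{M}_{\reg{XZ}}$ directly to exhibit it as $\projector{\xi}$ (your expression coincides with the paper's after relabeling the dummy indices $j \leftrightarrow j'$). Your care about the computational-basis convention for the partial transpose on $\reg{Z}$ is also consistent with the paper's footnote, so nothing further is needed.
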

\begin{proof}[Proof of~\Cref{lem:PSDness}]
Since a PSD operator has non-negative eigenvalues and the sum of PSD operators remains PSD, it suffices to consider the case where $M_{\reg{XY}}$ and $\rho_{\reg{YZ}}$ are both rank-$1$ projections. \\

\noindent Suppose $M_{\reg{XY}} = \projector{\psi}_{\reg{XY}}$ and $\rho_{\reg{YZ}} = \projector{\phi}_{\reg{YZ}}$. Consider the Schmidt decomposition of $\ket{\psi}_{\reg{XY}}$:
\begin{align*}
    \ket{\psi}_{\reg{XY}} = \sum_{i=1}^r \lambda_i \ket{u_i}_{\reg{X}} \ket{v_i}_{\reg{Y}}.
\end{align*}
Now, extend the set of orthonormal vectors $\set{\ket{v_i}}_{i\in[r]}$ to an orthonormal basis $\set{\ket{v_i}}_{i\in[d_Y]}$ of $\cH_\reg{Y}$, where $d_Y := \dim(\cH_\reg{Y})$. Next, expand $\ket{\phi}_{\reg{YZ}}$ as
\begin{align*}
    \ket{\phi}_{\reg{YZ}} = \sum_{i\in[d_Y],j\in[d_Z]} \alpha_{ij} \ket{v_i}_{\reg{Y}} \ket{j}_{\reg{Z}},
\end{align*}
where $d_Z := \dim(\cH_\reg{Z})$ and $\set{\ket{j}}_{i\in[d_Z]}$ is the computational basis of $\cH_\reg{Z}$. \\

\noindent A direct calculation yields
\begin{align*}
    \wt{M}_{\reg{XZ}} 
    & = \sum_{i,i'\in[r]} \sum_{j,j'\in[d_Z]} \lambda_i \alpha^*_{ij} \lambda^*_{i'} \alpha_{i'j'} \ketbra{u_i}{u_{i'}}_{\reg{X}} \otimes \ketbra{j}{j'}_{\reg{Z}} \\
    & = \qty( \sum_{i\in[r],j\in[d_Z]} \lambda_i \alpha^*_{ij} \ket{u_i}_{\reg{X}} \ket{j}_{\reg{Z}} ) 
    \qty( \sum_{i' \in [r], j' \in [d_Z]} \lambda^*_{i'} \alpha_{i'j'} \bra{u_{i'}}_{\reg{X}} \bra{j'}_{\reg{Z}} ).
\end{align*}
Since $\wt{M}_{\reg{XZ}}$ can be written as $\projector{\xi}$ for some vector $\ket{\xi}$, it is PSD.
\end{proof}
\fi

\subsection{Proof of~\Cref{eq:Hyb3_Hyb4}}
\label{app:eq:Hyb3_Hyb4}

For convenience, we restate~\Cref{eq:Hyb3_Hyb4} below:
\begin{equation*}
\norm{ \ket{\Adversary^{V^L,V^\dagger}_k}_{\reg{ABCLR}} 
- \ket{\Adversary^{\wt{V}^L,V^\dagger}_k}_{\reg{ABCLR}} }
\leq O\qty( \frac{k^2}{N - k} ).
\end{equation*}
\begin{proof}[Proof of~~\Cref{eq:Hyb3_Hyb4}]
Recall the definitions:
\begin{align*}
V^L_{\reg{A_i LR}} \otimes \id_{\reg{B}}
\ket{x}_{\reg{A_i}} \ket{\vec{z}}_{\reg{B}} \ket{L}_{\reg{L}} \ket{R}_{\reg{R}}
& \coloneqq \sum_{ \substack{ y \in [N]: \\ y \notin \Im(L \cup R) }} 
\frac{1}{\sqrt{N - |\Im(L \cup R)|}} \ket{y}_{\reg{A_i}} \ket{\vec{z}}_{\reg{B}} \ket{L \cup \set{(x,y)}}_{\reg{L}} \ket{R}_{\reg{R}}, \\
\wt{V}^L_{\reg{A_i B LR}}
\ket{x}_{\reg{A_i}} \ket{\vec{z}}_{\reg{B}} \ket{L}_{\reg{L}} \ket{R}_{\reg{R}}
& \coloneqq \sum_{ \substack{ y \in [N]: \\ y \notin \Im(L \cup R) \cup \Supp(\vec{z}) }} 
\frac{1}{\sqrt{N - |\Im(L \cup R) \cup \Supp(\vec{z})|}} \ket{y}_{\reg{A_i}} \ket{\vec{z}}_{\reg{B}} \ket{L \cup \set{(x,y)}}_{\reg{L}} \ket{R}_{\reg{R}}.
\end{align*}
One can verify that for every distinct pair $(x,\vec{z},L,R)$ and $(x',\vec{z'},L',R')$, the following holds:
\begin{itemize}
    \item $V^L_{\reg{A_i LR}} \otimes \id_{\reg{B}} \ket{x}_{\reg{A_i}} \ket{\vec{z}}_{\reg{B}} \ket{L}_{\reg{L}} \ket{R}_{\reg{R}}$ is orthogonal to $V^L_{\reg{A_i LR}} \otimes \id_{\reg{B}} \ket{x}_{\reg{A_i}} \ket{\vec{z'}}_{\reg{B}} \ket{L'}_{\reg{L}} \ket{R'}_{\reg{R}}$,
    \item $\wt{V}^L_{\reg{A_i B LR}} \ket{x'}_{\reg{A_i}} \ket{x}_{\reg{A_i}} \ket{\vec{z}}_{\reg{B}} \ket{L}_{\reg{L}} \ket{R}_{\reg{R}}$ is orthogonal to $\wt{V}^L_{\reg{A_i B LR}} \ket{x'}_{\reg{A_i}} \ket{\vec{z'}}_{\reg{B}} \ket{L'}_{\reg{L}} \ket{R'}_{\reg{R}}$,
    \item $V^L_{\reg{A_i LR}} \otimes \id_{\reg{B}} \ket{x}_{\reg{A_i}} \ket{\vec{z}}_{\reg{B}} \ket{L}_{\reg{L}} \ket{R}_{\reg{R}}$ is orthogonal to $\wt{V}^L_{\reg{A_i B LR}} \ket{x'}_{\reg{A_i}} \ket{\vec{z'}}_{\reg{B}} \ket{L'}_{\reg{L}} \ket{R'}_{\reg{R}}$.
\end{itemize}
Hence, for any normalized state 
\[
\ket{\psi}_{\reg{A_i B LR}} = \sum_{x,\vec{z},L,R} \alpha_{x\vec{z}LR} 
\ket{x}_{\reg{A_i}} \ket{\vec{z}}_{\reg{B}} \ket{L}_{\reg{L}} \ket{R}_{\reg{R}},
\]
it holds that
\begin{align*}
& \norm{ (V^L_{\reg{A_i LR}} \otimes \id_{\reg{B}} - \wt{V}^L_{\reg{A_i B LR}}) \ket{\psi}_{\reg{A_i B LR}} }^2 \\
& = \sum_{x,\vec{z},L,R} |\alpha_{x\vec{z}LR}|^2 \cdot 
\norm{ (V^L_{\reg{A_i LR}} \otimes \id_{\reg{B}} - \wt{V}^L_{\reg{A_i B LR}}) \ket{x}_{\reg{A_i}} \ket{\vec{z}}_{\reg{B}} \ket{L}_{\reg{L}} \ket{R}_{\reg{R}} }^2 \\
& \leq \max_{x,\vec{z},L,R: \alpha_{x\vec{z}LR} \neq 0} 
\norm{ (V^L_{\reg{A_i LR}} \otimes \id_{\reg{B}} - \wt{V}^L_{\reg{A_i B LR}}) \ket{x}_{\reg{A_i}} \ket{\vec{z}}_{\reg{B}} \ket{L}_{\reg{L}} \ket{R}_{\reg{R}} }^2,
\end{align*}
where the equality follows from the Pythagorean theorem. \\

\noindent For any $(x,\vec{z},L,R)$, we have
\begin{align*}
& (V^L_{\reg{A_i LR}} \otimes \id_{\reg{B}} - \wt{V}^L_{\reg{A_i B LR}}) \ket{x}_{\reg{A_i}} \ket{\vec{z}}_{\reg{B}} \ket{L}_{\reg{L}} \ket{R}_{\reg{R}} = \\
& \sum_{ \substack{ y \in [N]: \\ y \notin \Im(L \cup R) \cup \Supp(\vec{z}) }} 
\qty( \frac{1}{\sqrt{N - |\Im(L \cup R) \cup \Supp(\vec{z})|}} - \frac{1}{\sqrt{N - |\Im(L \cup R)|}} ) \cdot \ket{y}_{\reg{A_i}} \ket{\vec{z}}_{\reg{B}} \ket{L \cup \set{(x,y)}}_{\reg{L}} \ket{R}_{\reg{R}} \\
& + \sum_{ \substack{ y \in [N]: \\ y \notin \Im(L \cup R) \land y \in \Supp(\vec{z}) }} 
\frac{1}{\sqrt{N - |\Im(L \cup R) \cup \Supp(\vec{z})|}} 
\ket{y}_{\reg{A_i}} \ket{\vec{z}}_{\reg{B}} \ket{L \cup \set{(x,y)}}_{\reg{L}} \ket{R}_{\reg{R}}.
\end{align*}
Hence, the squared norm of $(V^L_{\reg{A_i LR}} \otimes \id_{\reg{B}} - \wt{V}^L_{\reg{A_i B LR}}) \ket{x}_{\reg{A_i}} \ket{\vec{z}}_{\reg{B}} \ket{L}_{\reg{L}} \ket{R}_{\reg{R}}$ is 
\begin{align*}
& \sum_{ \substack{ y \in [N]: \\ y \notin \Im(L \cup R) \cup \Supp(\vec{z}) }} \qty( \frac{1}{\sqrt{N - |\Im(L \cup R) \cup \Supp(\vec{z})|}} - \frac{1}{\sqrt{N - |\Im(L \cup R)|}} )^2 \\
& + \sum_{ \substack{ y \in [N]: \\ y \notin \Im(L \cup R) \land y \in \Supp(\vec{z}) }} 
\frac{1}{N - |\Im(L \cup R) \cup \Supp(\vec{z})|} \\
& \leq \sum_{ \substack{ y \in [N]: \\ y \notin \Im(L \cup R) \cup \Supp(\vec{z}) } } \frac{|\Im(L \cup R) \cup \Supp(\vec{z})| - |\Im(L \cup R)|}{(N - |\Im(L \cup R) \cup \Supp(\vec{z})|)(N - |\Im(L \cup R)|)} \\
& + \sum_{ \substack{ y \in [N]: \\ y \notin \Im(L \cup R) \land y \in \Supp(\vec{z}) } }
\frac{1}{N - |\Im(L \cup R) \cup \Supp(\vec{z})|} \\
& \leq \frac{|\Supp(\vec{z})|}{N - |\Im(L \cup R)|} + \frac{|\Supp(\vec{z})|}{N - |\Im(L \cup R) \cup \Supp(\vec{z})|} \\
& \leq \frac{2|\Supp(\vec{z})|}{N - |\Im(L \cup R) \cup \Supp(\vec{z})|},
\end{align*}
where the first inequality follows from $(1/\sqrt{a} - 1/\sqrt{b})^2 \leq (b - a)/(ab)$ for $0 < a \leq b$. Hence, we obtain
\begin{align*}
& \norm{ (V^L_{\reg{A_i LR}} \otimes \id_{\reg{B}} - \wt{V}^L_{\reg{A_i B LR}}) \ket{\psi}_{\reg{A_i B LR}} } \\
& \leq \frac{|\Supp(\vec{z})|}{N - |\Im(L \cup R)|} + \frac{|\Supp(\vec{z})|}{N - |\Im(L \cup R) \cup \Supp(\vec{z})|}.
\end{align*}

\noindent Finally, using that fact that $\ket{\Adversary^{V^L,V^\dagger}_k}_{\reg{ABCLR}}$ and $\ket{\Adversary^{\wt{V}^L,V^\dagger}_k}_{\reg{ABCLR}}$ are supported by $(L,R,\vec{z})$ such that $|L| = k$, $|R| = 0$, and $|\vec{z}| \leq k$ and a standard hybrid over oracle calls, we obtain 
\begin{align*}
\norm{ \ket{\Adversary^{V^L,V^\dagger}_k}_{\reg{ABCLR}} 
- \ket{\Adversary^{\wt{V}^L,V^\dagger}_k}_{\reg{ABCLR}} }
\leq \sum_{i = 1}^k O\qty( \frac{k}{N - k} )
= O\qty( \frac{k^2}{N - k} ),
\end{align*}
which completes the proof.
\end{proof}

\subsection{Proof of~\Cref{eq:Hyb6_Hyb7}}
\label{app:eq:Hyb6_Hyb7}
For convenience, we restate~\Cref{eq:Hyb6_Hyb7} below:
\begin{equation*}
\norm{ \ket{ \Adversary^{V^L,V^R}_k }_{\reg{ABCLR} } 
- \ket{ \Adversary^{V^L,\wt{V}^R}_k }_{\reg{ABCLR} } }
\leq O\qty(\frac{k^2}{N - k}).
\end{equation*}
\begin{proof}[Proof of~\Cref{eq:Hyb6_Hyb7}]
Recall the definitions:
\begin{align*}
& V^R_{\reg{B_j LR}} \ket{z}_{\reg{B_j}} \ket{L}_{\reg{L}} \ket{R}_{\reg{R}}
\coloneqq \sum_{\substack{w \in [N]: \\ w \notin \Dom(L \cup R)}} 
\frac{1}{\sqrt{ N - |\Dom(L \cup R)}| } \ket{w}_{\reg{B_j}} \ket{L}_{\reg{L}} \ket{R \cup \set{(w,z)}}_{\reg{R}}, \\
& \wt{V}^R_{\reg{B_j LR}} \ket{z}_{\reg{B_j}} \ket{L}_{\reg{L}} \ket{R}_{\reg{R}}
\coloneqq \sum_{\substack{w \in [N]: \\ w \notin \Dom(L \cup R)}} 
\frac{1}{\sqrt{N}} \ket{w}_{\reg{B_j}} \ket{L}_{\reg{L}} \ket{R \cup \set{(w,z)}}_{\reg{R}}. 
\end{align*}
Similar to the proof of~\Cref{eq:Hyb3_Hyb4} in~\Cref{app:eq:Hyb3_Hyb4}, one can verify that orthogonality holds for distinct pair $(z,L,R)$ and $(z',L',R')$. Using the same argument implies
\begin{equation*}
\norm{ \ket{ \Adversary^{V^L,V^R}_k }_{\reg{ABCLR} } 
- \ket{ \Adversary^{V^L,\wt{V}^R}_k }_{\reg{ABCLR} } }
\leq O\qty(\frac{k^2}{N - k})
\end{equation*}
as desired.
\end{proof}

\ifllncs
\subsection{Proof of~\Cref{thm:NAInvHUD}}
\label{app:thm:NAInvHUD}

\begin{theorem}[\Cref{thm:NAInvHUD}, restated]
For any two-party adversary $(\alice,\bob)$ who does not share any entanglement initially, and where $\alice$ and $\bob$ each make $t$ forward queries and $t$ inverse queries,
\begin{align*}
\Pr[\NAInvHUD(N, \alice, \bob) = 1] 
\leq \frac{1}{2} + O\qty( \frac{t^2}{ N^{1/8} } ).
\end{align*}
\end{theorem}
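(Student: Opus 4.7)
The plan is to reduce the inverse-query setting to the forward-only adaptive setting already handled by \Cref{thm:adap_LOCC}, using \Cref{cor:UUdagger_UV} to ``trade'' the inverse oracle for an independent Haar random unitary. Concretely, since $\alice$ and $\bob$ each make only one round of non-adaptive queries before losing oracle access, the entire query phase of each party can be treated as a single channel applied to some initial state, with total query count $t$ forward plus $t$ inverse. This makes \Cref{cor:UUdagger_UV} applicable in a black-box manner.

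First, I would set up a sequence of hybrids indexed by the joint oracle access $(\cO_1^\alice, \cO_2^\alice : \cO_1^\bob, \cO_2^\bob)$, starting from the real game $(\cU,\cU^\dagger:\cU,\cU^\dagger)$. Using \Cref{cor:UUdagger_UV} on the composite system of $(\alice,\bob)$ --- treating both parties plus the classical communication phase as a single distinguisher receiving $(\cU,\cU^\dagger)$ duplicated on both sides --- one can replace $\cU^\dagger$ on both sides simultaneously by a fresh Haar unitary $\cX$, incurring a diamond-norm error of $O(t^2/N^{1/8})$ (where we invoke the corollary with $k = 2t$ total non-adaptive query positions). This yields the hybrid $(\cU,\cX:\cU,\cX)$, where now both parties share \emph{two} independent Haar unitaries accessed in the forward direction only.

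Next, in this forward-only hybrid, I would invoke \Cref{thm:adap_LOCC} twice: once to replace the shared $\cU$ (on $\bob$'s side) with an independent $\cV$, and once to replace the shared $\cX$ (on $\bob$'s side) with an independent $\cY$. Each invocation costs $O(t^2/N)$ in distinguishing advantage. Finally, a symmetric application of \Cref{cor:UUdagger_UV} on each party individually converts $(\cU,\cX)$ back to $(\cU,\cU^\dagger)$ on $\alice$'s side and $(\cV,\cY)$ to $(\cV,\cV^\dagger)$ on $\bob$'s side, again at cost $O(t^2/N^{1/8})$ each. The final hybrid is the ideal game $(\cU,\cU^\dagger:\cV,\cV^\dagger)$, in which the two parties share no correlation and hence no advantage beyond $1/2$.

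The main obstacle, which is already resolved by \Cref{cor:UUdagger_UV}, is justifying the very first hybrid step: one must verify that the non-adaptivity of the inverse queries is preserved when we bundle $(\alice,\bob)$ into a single distinguisher, so that the corollary's non-adaptive twirling bound applies. Since each party makes all forward and all inverse queries in a single batch at the start of the protocol and does not query again, we can collect $\alice$'s $t$ forward and $t$ inverse queries and $\bob$'s $t$ forward and $t$ inverse queries into a single joint non-adaptive query of multiplicity $2t$ to each of $(\cU,\cU^\dagger)$; this is exactly the setting of \Cref{cor:UUdagger_UV}. Summing the errors across all hybrids gives $O(t^2/N^{1/8})$ as the dominant term, completing the proof.
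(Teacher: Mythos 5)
Your proposal is correct and follows essentially the same route as the paper's proof: the identical hybrid chain $(\cU,\cU^\dagger:\cU,\cU^\dagger) \to (\cU,\cX:\cU,\cX) \to (\cU,\cX:\cV,\cX) \to (\cU,\cX:\cV,\cY) \to (\cU,\cU^\dagger:\cV,\cY) \to (\cU,\cU^\dagger:\cV,\cV^\dagger)$, using \Cref{cor:UUdagger_UV} (with both parties bundled into one non-adaptive distinguisher) for the inverse-oracle swaps and \Cref{thm:adap_LOCC} for the identical-versus-independent swaps on $\bob$'s side. The error accounting ($O(t^2/N^{1/8})$ dominating the $O(t^2/N)$ terms) matches the paper as well.
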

\begin{proof}[Proof of~\Cref{thm:NAInvHUD}]
Consider the following sequence of (information-theoretic) hybrids in which we use $(\cO_1,\cO_2:\cO_3,\cO_4)$ to denote $\alice$'s and $\bob$'s oracle access. That is, $\alice$ has oracle access to $(\cO_1,\cO_2)$ and $\bob$ has oracle access to $(\cO_3,\cO_4)$. The changes between hybrids are highlighted in red:
\begin{align*}
        & ( \cU, \cU^\dagger : \cU, \cU^\dagger ) &  \\
\approx & ( \cU, \textcolor{red}{\cX} : \cU, \textcolor{red}{\cX} ) \hspace{5em} & \text{(by \Cref{cor:UUdagger_UV})} \\
\approx & ( \cU, \cX : \textcolor{red}{\cV}, \cX ) \hspace{5em} & \text{(by \Cref{thm:adap_LOCC})} \\
\approx & ( \cU, \cX : \cV, \textcolor{red}{\cY} ) \hspace{5em} & \text{(by \Cref{thm:adap_LOCC})} \\
\approx & ( \cU, \textcolor{red}{\cU^\dagger} : \cV, \cY ) \hspace{5em} & \text{(by \Cref{cor:UUdagger_UV})} \\
\approx & ( \cU, \cU^\dagger : \cV, \textcolor{red}{\cV^\dagger} ) \hspace{5em} & \text{(by \Cref{cor:UUdagger_UV})}
\end{align*}
where we treat $(\alice,\bob)$ as a single entity when applying~\Cref{cor:UUdagger_UV}. Hence, the distinguishing advantage is at most $O\qty( t^2 / N^{1/8} )$.
\end{proof}
\fi

\subsection{Proof of~\Cref{thm:oneround_LOCC}}
\label{app:oneround_LOCC}

\begin{theorem}[\Cref{thm:oneround_LOCC}, restated]
For any two-party adversary $(\alice,\bob)$ in $\NIInvHUD$ who does not share any entanglement initially, and where $\alice$ sends $m$ bits of message and $\bob$ makes $t$ forward queries and $t$ inverse queries,
\begin{align*}
\Pr[\NIInvHUD(N, \alice, \bob) = 1] 
\leq \frac{1}{2} + O\qty( \sqrt{\frac{t^2 (m + \log N)}{N}} ).
\end{align*}
\end{theorem}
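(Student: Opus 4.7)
The plan is to follow the strategy sketched in Section~\ref{sec:overview:ni:hud}, replacing the LOCC/twirling machinery with concentration of the Haar measure on $\Unitary(N)$. The key observation is that although $\alice$ may make unboundedly many queries to learn $U$, her influence on $\bob$'s output is filtered through a single $m$-bit message $\tau$, giving only $2^m$ effective ``branches'' of $\bob$. I will argue that for a typical Haar $U$, $\bob$'s acceptance probability on \emph{every} branch is already close to its Haar-average value, so the interaction behaves nearly the same as when $\bob$'s oracle is an independent $V$.

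For each $\tau \in \bit^m$, view $\bob(\tau)$ as a fixed oracle quantum circuit making $t$ forward and $t$ inverse queries and outputting a bit, and define $p(U,\tau) := \Pr[1 \gets \bob^{U,U^\dagger}(\tau)]$ and $\mu(\tau) := \Ex_{V \sim \haarunitaries(N)}[p(V,\tau)]$. A routine hybrid argument shows that $U \mapsto p(U,\tau)$ is $O(t)$-Lipschitz in operator norm (each forward \emph{or} inverse query contributes $O(\|U - U'\|)$ to the trace distance, using $\|U^\dagger - U'^\dagger\| = \|U - U'\|$), hence also $O(t)$-Lipschitz in Hilbert--Schmidt norm. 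Levy's lemma for $\Unitary(N)$ then yields, for every fixed $\tau$,
\[
\Pr_{U \sim \haarunitaries(N)}\bigl[\,|p(U,\tau) - \mu(\tau)| > \veps\,\bigr] \leq 2\exp\bigl(-\Omega(N\veps^2/t^2)\bigr).
\]
Union-bounding over the $2^m$ messages, $U$ is \emph{good}---meaning $|p(U,\tau) - \mu(\tau)| \leq \veps$ for \emph{all} $\tau$---except with probability $\delta := 2^{m+1}\exp(-\Omega(N\veps^2/t^2))$. Letting $q_\alice(U,\tau)$ denote the probability that $\alice^{U,U^\dagger}$ sends $\tau$ (no bound on $\alice$'s query complexity is needed), the acceptance probabilities in the two branches are
\[
\Pr[1 \gets \bob \mid b=0] = \Ex_{U} \sum_{\tau} q_\alice(U,\tau)\,p(U,\tau),
\quad
\Pr[1 \gets \bob \mid b=1] = \Ex_{U} \sum_{\tau} q_\alice(U,\tau)\,\mu(\tau).
\]
Conditioning on $U$ good controls the difference of the inner sums by $\veps$; the bad event contributes at most $\delta$. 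Hence the distinguishing advantage is at most $\tfrac{1}{2}(\veps + \delta)$, and choosing $\veps = \Theta\bigl(\sqrt{t^2(m + \log N)/N}\bigr)$ makes $\delta = O(\veps)$ (the $\log N$ term is precisely the slack needed to swallow the $2^m$ union bound while keeping $\delta$ below $\veps$), yielding the claimed bound.

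The main subtlety is conceptual rather than technical: one must resist trying to simulate $\alice$'s unbounded-query state and instead observe that all her influence on $\bob$ is mediated by the $m$-bit message, so the analysis reduces to a family of $2^m$ independent concentration inequalities. The only computations to carry out are the Lipschitz estimate in the presence of both $U$ and $U^\dagger$ oracles, which is standard, and the calibration of $\veps$ above. This argument crucially exploits one-way classical communication after $\bob$ starts querying: if $\alice$ could continue to interact with $\bob$ adaptively, her later messages could depend on oracle points $\bob$ has already revealed, breaking the ``$2^m$ fixed branches'' picture; this is consistent with the fact that the fully adaptive inverse-query LOCC setting is left as an open problem.
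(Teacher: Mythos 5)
Your proposal is correct and follows essentially the same route as the paper's proof: fix each of the $2^m$ messages $\tau$, treat $\bob(\tau)^{U,U^\dagger}$ as a $2t$-query algorithm whose acceptance probability is $O(t)$-Lipschitz in the Frobenius norm, apply concentration of the Haar measure plus a union bound over messages, and calibrate $\veps$ so the $2^m$ factor is absorbed, with $\alice$'s query complexity playing no role. The only cosmetic differences are that you use a two-sided deviation bound and a direct operator-norm hybrid for the Lipschitz estimate, whereas the paper uses a one-sided bad set (applied to each output bit separately) and routes the Lipschitz bound through the diamond norm.
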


\begin{theorem}[Special case of~{\cite[Theorem~5.17]{Meckes19}}]
\label{thm:Haar_concentration}
Suppose that $f : \Unitary(N) \to \mathbb{R}$ is $L$-Lipschitz with respect to the Frobenius norm. Then for every $\veps > 0$,
\[
\Pr_{U \sim \haarunitaries(N)} \left[ f(U) \geq \Ex_{V \sim \haarunitaries(N)}[f(V)] + \veps \right] \leq \exp \left( - \frac{(N - 2) \veps^2}{24L^2} \right).
\]
\end{theorem}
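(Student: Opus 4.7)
The plan is to use the Haar concentration inequality (\Cref{thm:Haar_concentration}) together with a union bound over Alice's message space, following the outline sketched in \Cref{sec:overview:ni:hud}. Since Alice's message $\tau$ lies in the finite set $\bit^m$, we can apply concentration to each possible message separately and then take a union bound, which crucially makes the final bound independent of the number of queries Alice makes.

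First I would fix any possible message $\tau \in \bit^m$ and define the function
\[
f_\tau : \Unitary(N) \to [0,1], \qquad f_\tau(U) := \Pr[1 \gets \bob^{U,U^\dagger}(\tau)].
\]
Since $\bob$ makes at most $2t$ queries in total (forward and inverse combined) to a unitary oracle, a standard hybrid argument (see \Cref{fact:diamondnorm}, or more directly the fact that one query of a unitary $V$ is $1$-Lipschitz in operator norm) gives $|f_\tau(U) - f_\tau(V)| \leq 2t \cdot \|U - V\|_\infty \leq 2t \cdot \|U - V\|_F$, so $f_\tau$ is $2t$-Lipschitz in the Frobenius norm. Let $\mu_\tau := \Ex_{V \sim \haarunitaries(N)}[f_\tau(V)]$, which is a constant depending only on $\tau$ (and on $\bob$). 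By \Cref{thm:Haar_concentration} applied to $\pm(f_\tau - \mu_\tau)$,
\[
\Pr_{U \sim \haarunitaries(N)}\!\left[\,|f_\tau(U) - \mu_\tau| \geq \veps\,\right] \leq 2\exp\!\left(-\frac{(N-2)\veps^2}{96\, t^2}\right).
\]

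Next I would call $U$ \emph{good} if $|f_\tau(U) - \mu_\tau| \leq \veps$ holds simultaneously for every $\tau \in \bit^m$, and bound the failure probability by a union bound: $\Pr_U[U \text{ bad}] \leq 2^{m+1}\exp(-\Omega(N\veps^2/t^2))$. The key observation is that even though Alice can query her oracle arbitrarily many times and choose $\tau$ adaptively as a (possibly randomized) function of $U$, the value $\tau$ she outputs always lies in $\bit^m$, so conditioned on $U$ being good, $|f_\tau(U) - \mu_\tau| \leq \veps$ regardless of how $\tau$ was chosen. Writing $P_b := \Pr[b' = 1 \mid b]$ in $\NIInvHUD$, we have
\begin{align*}
P_0 &= \Ex_U\!\left[\Ex_{\tau \sim \alice^{U,U^\dagger}}[\,f_\tau(U)\,]\right], \\
P_1 &= \Ex_U\!\left[\Ex_{\tau \sim \alice^{U,U^\dagger}}[\,\mu_\tau\,]\right],
\end{align*}
since in the $b=1$ case $\bob$ is given an independent Haar unitary $V$, and averaging $f_\tau(V)$ over $V \sim \haarunitaries(N)$ yields $\mu_\tau$ before conditioning on $\tau$. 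Splitting the expectation over good and bad $U$ gives $|P_0 - P_1| \leq \veps + \Pr_U[U \text{ bad}]$.

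Finally I would optimize $\veps$ by setting $\veps = C\sqrt{t^2(m + \log N)/N}$ for a sufficiently large absolute constant $C$, which makes the union-bound term $2^{m+1}\exp(-\Omega(N\veps^2/t^2))$ negligible compared to $\veps$, yielding $|P_0 - P_1| = O\big(\sqrt{t^2(m+\log N)/N}\big)$ and hence $\Pr[\NIInvHUD = 1] = \tfrac{1}{2} + \tfrac{1}{2}|P_0 - P_1|$ of the stated form. The main technical point to verify carefully is the $2t$-Lipschitz bound on $f_\tau$ in the Frobenius norm in the presence of inverse queries -- the standard query-by-query hybrid works because each inverse query $U^\dagger$ is itself a unitary depending linearly on $U^\dagger$, so substituting $U$ for $V$ in any single query introduces error at most $\|U - V\|_\infty \leq \|U - V\|_F$, and these telescope additively across the $2t$ queries. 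The rest is a routine assembly of concentration plus union bound.
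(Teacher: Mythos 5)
Your proposal does not actually prove the statement in question. The statement is the measure--concentration inequality itself: for any $L$-Lipschitz $f:\Unitary(N)\to\R$ (Lipschitz in Frobenius norm), $\Pr_{U\sim\haarunitaries(N)}[f(U)\ge \Ex_V[f(V)]+\veps]\le \exp(-(N-2)\veps^2/(24L^2))$. What you have written instead is a derivation of the downstream application, \Cref{thm:oneround_LOCC} (the bound on $\NIInvHUD$), and your very first step is to \emph{invoke} \Cref{thm:Haar_concentration} as a black box. So as a proof of the quoted theorem the argument is circular: nothing in your Lipschitz hybrid, the per-$\tau$ bad sets, the union bound over $\bit^m$, or the optimization of $\veps$ touches the question of \emph{why} Lipschitz functions of a Haar-random unitary concentrate around their mean in the first place. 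A genuine proof of the statement would have to go through measure-concentration machinery on the unitary group --- in Meckes' book (Theorem~5.17, which the paper simply cites without reproving) this comes from a logarithmic Sobolev inequality with constant $O(1/N)$ for $\mathrm{SU}(N)$ (via the Bakry--\'Emery curvature criterion, with an extra coupling step to handle the central circle direction of $\Unitary(N)$), followed by the Herbst argument to convert the log-Sobolev inequality into sub-Gaussian tail bounds. None of these ingredients appear in your proposal, so there is a complete gap with respect to the stated theorem.

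As a side remark, the argument you did write is essentially the paper's own proof of \Cref{thm:oneround_LOCC} (per-message functions $f_\tau$, Lipschitz bound via a query-by-query hybrid, bad-oracle sets, union bound over $2^m$ messages, and $\veps\approx\sqrt{t^2(m+\log N)/N}$), with only cosmetic differences: the paper records the Lipschitz constant as $4t$ (via $\norm{U(\cdot)U^\dagger-V(\cdot)V^\dagger}_\diamond\le 2\norm{U-V}_F$ and \Cref{fact:diamondnorm}) rather than your $2t$, and it uses the one-sided concentration bound separately for $p_{1|\cdot}$ and $p_{0|\cdot}$ instead of a two-sided bound. But that theorem is not the statement you were asked to prove.
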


\begin{lemma}[{\cite[Lemma~9]{Kretschmer21}}]
\label{lem:diamond_to_Frobenius}
Let $U, V \in \Unitary(N)$. Then $\| U(\cdot)U^\dagger - V(\cdot)V^\dagger \|_\diamond \leq 2 \| U - V \|_F$.
\end{lemma}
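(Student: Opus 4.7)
The plan is a three-step reduction from the diamond norm to the Frobenius norm, routing through the operator norm for the final step. First, I would invoke the standard characterization of the diamond norm: $\| \Phi_U - \Phi_V \|_\diamond$ is attained on a pure input state $\ket{\psi}_{\reg{AA'}}$ where the ancilla register $\reg{A'}$ has dimension at most $N$. This reduces the problem to bounding $\| \Phi_U \otimes \id (\projector{\psi}) - \Phi_V \otimes \id (\projector{\psi}) \|_1$ uniformly over such $\ket{\psi}$.

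Second, since $\Phi_U \otimes \id$ and $\Phi_V \otimes \id$ are both isometric channels, their outputs on a pure state are again pure: $\ket{\alpha} := (U \otimes \id)\ket{\psi}$ and $\ket{\beta} := (V \otimes \id)\ket{\psi}$. I would then apply the well-known pure-state identity $\| \projector{\alpha} - \projector{\beta} \|_1 = 2 \sqrt{1 - |\braket{\alpha}{\beta}|^2}$, together with the elementary bound $\sqrt{1 - |\braket{\alpha}{\beta}|^2} \leq \| \ket{\alpha} - \ket{\beta} \|$ (which follows from $\| \ket{\alpha} - \ket{\beta} \|^2 \geq 2(1 - |\braket{\alpha}{\beta}|) \geq 1 - |\braket{\alpha}{\beta}|^2$ for unit vectors). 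This yields the upper bound $2 \, \| ((U - V) \otimes \id) \ket{\psi} \|$.

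Third, I would convert this Euclidean norm on the joint system to a norm of $U - V$ alone by Schmidt decomposing $\ket{\psi} = \sum_i \sqrt{\lambda_i} \ket{i}_{\reg{A}} \ket{i}_{\reg{A'}}$, so that
\[
\| ((U - V) \otimes \id) \ket{\psi} \|^2 = \sum_i \lambda_i \, \| (U - V) \ket{i} \|^2 \leq \| U - V \|_\infty^2 \leq \| U - V \|_F^2,
\]
using $\sum_i \lambda_i = 1$ and the standard comparison $\| \cdot \|_\infty \leq \| \cdot \|_F$. Combining the three steps gives $\| \Phi_U - \Phi_V \|_\diamond \leq 2 \| U - V \|_F$.

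I do not anticipate any genuine obstacle; the entire argument consists of standard facts about the diamond norm and trace distance of pure states. The only mildly delicate point is that the reduction to pure inputs and the pure-state trace distance identity should be cited rather than re-derived, and one should be careful to note that the intermediate $\| U - V \|_\infty$ bound is in fact tighter than $\| U - V \|_F$, with the Frobenius form being stated only because it is the quantity used in downstream applications (notably the Lipschitz input to \Cref{thm:Haar_concentration}).
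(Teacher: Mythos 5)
Your proof is correct. Note that the paper does not prove this lemma at all -- it is imported verbatim from Kretschmer (Lemma~9), so there is no in-paper argument to compare against. Your chain of reductions is sound: restriction of the diamond norm to pure inputs with an $N$-dimensional ancilla, purity of the outputs of $\cE_U \otimes \id$ and $\cE_V \otimes \id$, the identity $\norm{\projector{\alpha} - \projector{\beta}}_1 = 2\sqrt{1 - |\braket{\alpha}{\beta}|^2}$ together with $(1-x)^2 \geq 0$ giving $2(1-x) \geq 1-x^2$, and the Schmidt-decomposition step $\sum_i \lambda_i \norm{(U-V)\ket{i}}^2 \leq \norm{U-V}_\infty^2 \leq \norm{U-V}_F^2$ are all verified. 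The more common textbook route (and essentially the one in the cited source) is a one-line triangle inequality, writing $(U\otimes\id)\rho(U\otimes\id)^\dagger - (V\otimes\id)\rho(V\otimes\id)^\dagger$ as a telescoping sum and applying H\"older's inequality $\norm{AB}_1 \leq \norm{A}_1\norm{B}_\infty$ to each term, which avoids the pure-state fidelity identity and works directly for mixed inputs; your fidelity-based route costs a little more bookkeeping but yields the same constant, and your observation that the bound actually holds with $\norm{U-V}_\infty$ in place of $\norm{U-V}_F$ is accurate -- the Frobenius form is stated only because it is what feeds the Lipschitz hypothesis of the Haar concentration bound.
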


\begin{lemma}[Adapted from~{\cite[Lemma~28]{Kretschmer21}}]
\label{lem:Lipschitz}
Let $U \in \Unitary(N)$ be a (fixed) unitary oracle and $\alice^{U,U^\dagger}$ be a quantum algorithm that makes $t$ queries to each $U$ and $U^\dagger$. Define $f: \Unitary(N) \to \R$ by $f(U) := \Pr[ 1 \gets \alice^{U,U^\dagger} ]$. Then $f$ is $4t$-Lipschitz in the Frobenius norm.
\end{lemma}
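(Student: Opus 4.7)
The plan is a standard hybrid argument over the $2t$ total oracle queries, combined with Lemma~\ref{lem:diamond_to_Frobenius}. Fix $U, V \in \Unitary(N)$. I would define intermediate hybrid algorithms $\mathsf{H}_0, \mathsf{H}_1, \ldots, \mathsf{H}_{2t}$, where $\mathsf{H}_i$ answers the first $i$ of $\alice$'s oracle calls using $V$ (or $V^\dagger$, depending on whether the $i$-th call is forward or inverse) and answers the remaining $2t - i$ calls using $U$ (or $U^\dagger$). By construction $\mathsf{H}_0 \equiv \alice^{U,U^\dagger}$ and $\mathsf{H}_{2t} \equiv \alice^{V,V^\dagger}$, so $|f(U) - f(V)|$ equals the difference in output probabilities between these two endpoints.

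Between adjacent hybrids $\mathsf{H}_{i-1}$ and $\mathsf{H}_i$ only a single oracle call is swapped, so the difference in the resulting output probabilities of $\alice$ is bounded by $\|\cE_U - \cE_V\|_\diamond$ if the $i$-th call is forward, or by $\|\cE_{U^\dagger} - \cE_{V^\dagger}\|_\diamond$ if it is inverse. Applying Lemma~\ref{lem:diamond_to_Frobenius} to each of these diamond norms gives a bound of $2\|U - V\|_F$, where I would use the fact that the Frobenius norm is invariant under taking the adjoint, so that $\|U^\dagger - V^\dagger\|_F = \|U - V\|_F$. This is the one place where the presence of inverse queries enters the argument in a nontrivial way, and it is handled uniformly alongside the forward queries without any additional work.

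Summing the $2t$ single-call bounds via the triangle inequality then yields $|f(U) - f(V)| \leq 2t \cdot 2\|U - V\|_F = 4t\|U - V\|_F$, which gives the desired Lipschitz constant. The proof is essentially identical in structure to Kretschmer's original Lemma~28 (which treats only forward queries); I do not expect any genuine obstacle, as the only new subtlety is the adjoint-invariance step described above, and the rest is routine hybrid bookkeeping.
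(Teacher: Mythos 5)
Your proposal is correct and is essentially the paper's proof: the paper bounds $\norm{\alice^{U,U^\dagger}-\alice^{V,V^\dagger}}_\diamond \leq 4t\norm{U-V}_F$ via sub-additivity of the diamond norm under composition (which is exactly your query-by-query hybrid), together with \Cref{lem:diamond_to_Frobenius} and adjoint-invariance of the Frobenius norm, and then converts to a probability bound via \Cref{fact:diamondnorm}. The only difference is presentational (telescoping over hybrids versus invoking sub-additivity directly), and both arguments yield the claimed $4t$-Lipschitz constant.
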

\begin{proof}
Suppose that $U, V \in U(N)$ and $\norm{ U - V }_F = \veps$. Since the Frobenious norm is invariant under the adjoint operator, we obtain $\norm{ U^\dagger - V^\dagger }_F = \veps$. By~\Cref{lem:diamond_to_Frobenius}, we obtain $\norm{ U - V }_\diamond = \norm{ U^\dagger - V^\dagger }_\diamond \leq 2\veps$. Using sub-additivity of the diamond norm under composition implies that, as quantum channels, $\norm{ \alice^{U,U^\dagger} - \alice^{V,V^\dagger} }_\diamond \leq 4 t \veps$. From~\Cref{fact:diamondnorm}, the distinguishing advantage $| f(U) - f(V) |/2$ is at most $2 t \veps$, proving the lemma.
\end{proof}

\vspace{.5em} Now, we are ready to prove~\Cref{thm:oneround_LOCC}.

\begin{proof}[Proof of~\Cref{thm:oneround_LOCC}]
Since $\bob$ never sends messages to $\alice$, we can assume that $\bob$ starts making queries after receiving the message $\tau \in \bit^m$ from $\alice$. For any fixed $\tau \in \bit^m$ and $U \in \Unitary(N)$, we can view $\bob(\tau)^{U,U^\dagger}$ as a (non-interactive) oracle algorithm and define $f_\tau: \Unitary(N) \to [0,1]$ as $f_\tau(U) := \Pr[ 1 \gets \bob(\tau)^{U,U^\dagger} ]$. By~\Cref{lem:Lipschitz}, we have $f_\tau(U)$ is $2t$-Lipschitz in the Frobenius norm. Next, let $\veps > 0$ be a constant to be defined later. For each $\tau \in \bit^m$, we define the set of ``bad'' unitaries as follows:
\begin{align*}
    \bad_\tau := \bigg\{ U \in \Unitary(N): f_\tau(U) \geq \Ex_{V \sim \haarunitaries(N)}[ f_\tau(V) ] + \veps \bigg\}
    \subseteq \Unitary(N),
\end{align*}
and let $\bad := \bigcup_{\tau \in \bit^m} \bad_\tau \subseteq \Unitary(N)$. \\

\noindent Conditioning on the challenge bit $b$ being $1$ (in which $\alice$ and $\bob$ are interacting with two independent Haar unitaries and their inverse), the probability of $\bob$ outputting $b' = 1$ is given by
\begin{align}
\label{eq:p11}
    p_{1 \mid 1} := \sum_{\tau \in \bit^m} \Pr[ \tau ] \cdot \Ex_{V \sim \haarunitaries(N)}[ f_\tau(V) ].
\end{align}
On the other hand, when $b = 0$, the probability is given by
\begin{align*}
\label{eq:p10}
p_{1 \mid 0}
:= \Ex_{U \sim \haarunitaries(N)} \qty[ \sum_{\tau \in \bit^m} \Pr[\tau \mid U ] \cdot f_\tau(U) ] 
\end{align*}
Using elementary properties of expectation, we have
\begin{align*}
& p_{1 \mid 0} 
= \Pr_{U \sim \haarunitaries(N)}[U \notin \bad] \cdot \Ex_{U \sim \haarunitaries(N)} \qty[ \sum_{\tau \in \bit^m} \Pr[\tau \mid U ] \cdot f_\tau(U) \bigg\vert U \notin \bad ] \\
& \hspace{.1\textwidth} + \Pr_{U \sim \haarunitaries(N)}[U \in \bad] \cdot \Ex_{U \sim \haarunitaries(N)}\qty[ \sum_{\tau \in \bit^m} \Pr[\tau \mid U ] \cdot f_\tau(U) \bigg\vert U \in \bad ] \\
& \leq \Pr_{U \sim \haarunitaries(N)}[U \notin \bad] \cdot \Ex_{U \sim \haarunitaries(N)} \qty[ \sum_{\tau \in \bit^m} \Pr[\tau \mid U ] \cdot \qty( \Ex_{V \sim \haarunitaries(N)}[ f_\tau(V) ] + \veps ) \bigg\vert U \notin \bad ] \\
& \hspace{.1\textwidth} + \Pr[U \in \bad] \cdot \Ex_{U \sim \haarunitaries(N)}\qty[ \sum_{\tau \in \bit^m} \Pr[\tau \mid U ] \bigg\vert U \in \bad ]\\
& \leq \sum_{\tau \in \bit^m} \qty( \Ex_{V \sim \haarunitaries(N)}[ f_\tau(V) ] + \veps ) \cdot \Ex_{U \sim \haarunitaries(N)} \qty[ \Pr[\tau \mid U ] ]
+ \Pr_{U \sim \haarunitaries(N)}[U \in \bad] \\
& = \sum_{\tau \in \bit^m} \qty( \Ex_{V \sim \haarunitaries(N)}[ f_\tau(V) ] \cdot \Pr[\tau] ) + \veps + \Pr_{U \sim \haarunitaries(N)}[U \in \bad] \\
& \leq p_{1|1} + \veps + 2^m e^{- \frac{(N - 2) \veps^2}{96t^2}},
\end{align*}
where the first inequality follows from the definition of $\bad$ and the fact that $f_\tau(U) \leq 1$, and the last equality follows from~\Cref{thm:Haar_concentration} and the union bound. Now, we use $K := (N-2)/(96t^2)$ as a shorthand and let $\veps := \sqrt{(m + \log K)/K}$, we have 
\[
\veps + 2^m e^{- \frac{(N - 2) \veps^2}{96t^2}} 
= \veps + 2^m e^{-K\veps^2}
\leq O\qty( \sqrt{\frac{m + \log K}{K}} )
= O\qty( \sqrt{\frac{t^2 (m + \log N)}{N}} ).
\]
We define $p_{0|0}$ and $p_{0|1}$ analogously. Applying the same argument, we obtain
\[
p_{0|0} \leq p_{0|1} + \veps + 2^m e^{- \frac{(N - 2) \veps^2}{96t^2}}.
\]
Hence, the distinguishing advantage is at most $O\qty( \sqrt{\frac{t^2 (m + \log N)}{N}} )$.
\end{proof}

    \section{Omitted Proofs in~\Cref{sec:separation}}
We first introduce some technical tools for this section.

\begin{theorem}[Quantum process tomography, rephrased~{\cite[Theorem~1.1 and Proposition~2.4]{HKOT23}}]
\label{thm:tomography}
There is a quantum algorithm that, given any $\eps, \delta \in (0,1]$ and black-box access to an unknown $d$-dimensional unitary $Z \in \Unitary(d)$, makes $O((d^2/\eps) \cdot \log(1/\delta))$ queries, and outputs a classical description of a unitary $\wt{Z} \in \Unitary(d)$ such that $\Ex \qty[ \norm{ Z(\cdot)Z^\dagger - \wt{Z}(\cdot)\wt{Z}^\dagger }^2_\diamond ] \leq \veps^2$ with probability at least $1 - \delta/2$. In particular, it implies $\Ex \qty[ \norm{ Z(\cdot)Z^\dagger - \wt{Z}(\cdot)\wt{Z}^\dagger }_\diamond ]$ $\leq \veps + \delta$.\footnote{To see this, we use the fact that $\norm{ U(\cdot)U^\dagger - V(\cdot)V^\dagger }_\diamond \leq 2$ for any unitaries $U$ and $V$, and Jensen's inequality to obtain: $\Ex \qty[ \norm{ Z(\cdot)Z^\dagger - \wt{Z}(\cdot)\wt{Z}^\dagger }_\diamond ]$ $\leq$ $(1 - \delta/2) \cdot \sqrt{\veps^2}$ $+ \delta/2 \cdot 2 \leq \veps + \delta$.}
\end{theorem}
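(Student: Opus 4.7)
The statement is a restatement of a result cited directly from HKOT23, so the body of the theorem (existence of a tomography algorithm with query complexity $O((d^2/\veps)\log(1/\delta))$ achieving $\Ex\bigl[\|Z(\cdot)Z^\dagger - \wt{Z}(\cdot)\wt{Z}^\dagger\|_\diamond^2\bigr]\le\veps^2$ with probability at least $1-\delta/2$) is imported verbatim and requires no fresh argument. Hence the plan is to simply invoke Theorem~1.1 and Proposition~2.4 of~\cite{HKOT23} for the main claim, and separately establish the additional ``in particular'' consequence, which is a short deduction from the main bound.

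For the ``in particular'' bound, I would argue as follows. Let $G$ denote the success event that $\Ex[\|Z(\cdot)Z^\dagger - \wt{Z}(\cdot)\wt{Z}^\dagger\|_\diamond^2]\le\veps^2$, so that $\Pr[G]\ge 1-\delta/2$ by the cited part of the theorem. Conditioned on $G$, Jensen's inequality applied to the concave function $x\mapsto\sqrt{x}$ gives
\[
\Ex\bigl[\|Z(\cdot)Z^\dagger - \wt{Z}(\cdot)\wt{Z}^\dagger\|_\diamond \,\big|\, G\bigr]
\le \sqrt{\Ex\bigl[\|Z(\cdot)Z^\dagger - \wt{Z}(\cdot)\wt{Z}^\dagger\|_\diamond^2 \,\big|\, G\bigr]}
\le \veps.
\]
On the complementary event $\neg G$, I would use the trivial universal bound $\|U(\cdot)U^\dagger - V(\cdot)V^\dagger\|_\diamond \le 2$ valid for arbitrary unitaries $U,V$ (immediate from the definition of the diamond norm, since each channel has unit diamond norm).

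Combining the two cases by the tower property,
\[
\Ex\bigl[\|Z(\cdot)Z^\dagger - \wt{Z}(\cdot)\wt{Z}^\dagger\|_\diamond\bigr]
\le (1-\delta/2)\cdot \veps + (\delta/2)\cdot 2 \le \veps + \delta,
\]
yielding the stated ``in particular'' inequality. There is no real obstacle here: the only subtle point is bookkeeping the two layers of randomness, distinguishing the expectation that defines the mean-squared error from the probability over the tomography algorithm's randomness that controls whether the success event $G$ holds. Once these are separated, the split-and-bound calculation is straightforward and matches the short derivation indicated in the footnote.
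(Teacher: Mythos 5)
Your proposal is correct and follows essentially the same route as the paper's own footnote: condition on the success event, apply Jensen's inequality to bound the conditional expectation by $\sqrt{\veps^2}=\veps$, use the trivial diamond-norm bound of $2$ on the failure event, and combine to get $(1-\delta/2)\veps+(\delta/2)\cdot 2\le\veps+\delta$. The only difference is that you spell out the conditioning and tower-property bookkeeping explicitly, which the paper leaves implicit.
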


\begin{lemma}[Conditional independence] 
\label{lem:cond_indep}
For any two-party interactive QCCC protocol $(\alice,\bob)$ in the plain model, the final joint state can be written of the form
\[
\sum_{\tau} p_{\tau} \projector{\tau}_{\reg{T}} \otimes \rho_{\tau,\regA} \otimes \sigma_{\tau,\regB},
\]
where register $\reg{T}$ records the transcript $\tau$, and $\set{p_\tau}_\tau$ is a distribution. In particular, conditioning on any transcript in the support, the joint state of $(\alice,\bob)$ is a product state.
\end{lemma}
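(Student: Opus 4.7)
The plan is to proceed by induction on the number of rounds $r$ of the protocol, carrying the claimed decomposition as the inductive invariant. The base case $r = 0$ is immediate: in the plain model, $\alice$ and $\bob$ begin with no shared entanglement, so their joint state factorizes as $\rho_{\varepsilon,\regA} \otimes \sigma_{\varepsilon,\regB}$, and the transcript register $\reg{T}$ deterministically holds the empty string, so the invariant holds with $p_\varepsilon = 1$.

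For the inductive step, I would analyze a generic round $r+1$ in which exactly one party --- say $\alice$, the case of $\bob$ being entirely symmetric --- applies a local quantum instrument to her register $\regA$ and broadcasts its classical outcome $m$, which is then appended to $\reg{T}$. The key observation to exploit is that $\alice$'s instrument acts only on $\regA$ and can depend only on the transcript $\tau$ she has seen so far; hence, within the branch labeled by $\tau$, it transforms $\rho_{\tau,\regA}$ into a sub-normalized ensemble $\set{ (q_{m\mid\tau}, \rho_{\tau m,\regA}) }_m$ with $\sum_m q_{m\mid\tau} = 1$, while $\bob$'s register $\regB$ is left untouched in this round and therefore remains in state $\sigma_{\tau,\regB}$, in particular independent of $m$. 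Appending $m$ to the transcript and re-indexing via $\tau' := \tau m$, $p_{\tau'} := p_\tau \, q_{m\mid\tau}$, $\rho_{\tau',\regA} := \rho_{\tau m,\regA}$, and $\sigma_{\tau',\regB} := \sigma_{\tau,\regB}$ (with $\tau$ the prefix of $\tau'$ before the final message), the updated joint state reads
\[
\sum_{\tau, m} p_\tau \, q_{m\mid\tau} \projector{\tau m}_{\reg{T}} \otimes \rho_{\tau m,\regA} \otimes \sigma_{\tau,\regB}
= \sum_{\tau'} p_{\tau'} \projector{\tau'}_{\reg{T}} \otimes \rho_{\tau',\regA} \otimes \sigma_{\tau',\regB},
\]
which is exactly the claimed form, completing the induction.

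I do not anticipate a genuine obstacle: the lemma is a clean formalization of the principle that classical communication between initially unentangled quantum parties can induce only classical correlations, and the argument falls out of tracking transcript branches. The only point that warrants a careful look is checking that $\bob$'s conditional state truly acquires no dependence on the newly broadcast $m$ within the same round; this is essentially tautological, since $\bob$ has not yet acted on $m$ in round $r+1$, and any downstream dependence on $m$ is automatically absorbed by the conditioning on the extended transcript in later rounds, where the same inductive step is reapplied.
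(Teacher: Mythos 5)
Your proposal is correct and follows exactly the route the paper indicates: the paper gives no detailed proof, stating only that the lemma ``can be proved by induction on the number of rounds,'' which is precisely your branch-by-transcript inductive argument, worked out in more detail. The one point worth keeping explicit in a write-up is that the map $(\tau,m)\mapsto \tau m$ is injective (message boundaries are well-defined), so the re-indexing over extended transcripts is legitimate.
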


\Cref{lem:cond_indep} can be proved by induction on the number of rounds.

\begin{lemma}[{\cite[Lemma~9.10]{AGL24}}] 
\label{lem:SD_lemma}
Let $\bfP_{BT},\bfQ_{BT}$ be two distributions over $\bit \times \cT$ for some finite set $\cT$. Consider the following experiments:
\begin{protocolbox}
\begin{minipage}[t]{0.48\textwidth}  
  \vspace{0pt}  
  \noindent $\mathbf{Exp.0:}$
\begin{enumerate}
    \item Sample $(b,\tau) \gets \bfP_{BT}$.
    \item Set $b'$ to the more likely bit according to $\bfP_{B \mid T = \tau}$.
    \item Output $(b,b',\tau)$.
\end{enumerate}
\end{minipage}
\hfill
\begin{minipage}[t]{0.48\textwidth}  
\vspace{0pt}  
\noindent $\mathbf{Exp.1:}$
\begin{enumerate}
    \item Sample $(b,\tau) \gets \bfP_{BT}$.
    \item If $\bfQ_{T}(\tau) = 0$ (where $\bfQ_{T}$ denotes the marginal distribution of $\bfQ_{BT}$ on $\cT$), then sample $b'$ uniformly at random. Otherwise, set $b'$ to the more likely bit according to $\bfQ_{B \mid T = \tau}$.
    \item Output $(b,b',\tau)$.
\end{enumerate}
\end{minipage}
\end{protocolbox}
Then it holds that
\[
\Pr_{\mathbf{Exp.1}}[b = b'] 
\geq \Pr_{\mathbf{Exp.0}}[b = b'] - 3 \cdot \SD(\bfP_{BT},\bfQ_{BT}).
\]
\end{lemma}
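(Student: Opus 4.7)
My plan is to introduce an intermediate experiment $\mathbf{Exp.1'}$ identical to $\mathbf{Exp.1}$ except that $(b,\tau)$ is sampled from $\bfQ_{BT}$ instead of $\bfP_{BT}$, and then chain three elementary inequalities. Both $\mathbf{Exp.1}$ and $\mathbf{Exp.1'}$ use the same (possibly randomized) predictor, so the success probability in each case equals $\Ex_{(b,\tau)}[g(b,\tau)]$ for the fixed function $g : \bit \times \cT \to [0,1]$ defined by $g(b,\tau) = \indic[b = \argmax_c \bfQ_{B \mid T=\tau}(c)]$ when $\bfQ_T(\tau) > 0$ and $g(b,\tau) = 1/2$ otherwise (averaging out the uniform fallback). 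By the variational characterization of statistical distance applied to a $[0,1]$-valued function, $\bigl| \Pr_{\mathbf{Exp.1}}[b=b'] - \Pr_{\mathbf{Exp.1'}}[b=b'] \bigr| \leq \SD(\bfP_{BT},\bfQ_{BT})$.

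Next, under $(b,\tau) \sim \bfQ_{BT}$ the sample $\tau$ lies in $\supp(\bfQ_T)$ almost surely, so the uniform fallback is never triggered and $b'$ coincides with the Bayes-optimal predictor for $\bfQ_{BT}$. This identifies $\Pr_{\mathbf{Exp.1'}}[b=b'] = \sum_\tau \max_{c \in \bit} \bfQ(c,\tau)$, the optimal prediction success under $\bfQ_{BT}$. An identical computation gives $\Pr_{\mathbf{Exp.0}}[b=b'] = \sum_\tau \max_{c \in \bit} \bfP(c,\tau)$. The gap between these two optimal success probabilities is controlled by
\[
\Bigl| \sum_\tau \max_c \bfP(c,\tau) - \sum_\tau \max_c \bfQ(c,\tau) \Bigr| \leq \sum_{\tau,c} \bigl| \bfP(c,\tau) - \bfQ(c,\tau) \bigr| = 2 \cdot \SD(\bfP_{BT},\bfQ_{BT}),
\]
using the pointwise inequality $\max_c \bfP(c,\tau) - \max_c \bfQ(c,\tau) \leq \sum_c |\bfP(c,\tau) - \bfQ(c,\tau)|$. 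Chaining the three estimates delivers the claim.

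The only subtle point in the argument is that the predictor in $\mathbf{Exp.1}$ is genuinely randomized on $\supp(\bfQ_T)^c$, which forces the TV step to be applied to the deterministic function $g$ obtained by averaging over the predictor's coin tosses, rather than to any joint distribution on $(b,b',\tau)$. The factor of $3$ (rather than $2$) is intrinsic to this approach: comparing the two Bayes-optimal success probabilities can already cost a full $2 \cdot \SD$, and the remaining $\SD$ accounts for switching the sampling distribution of the observation.
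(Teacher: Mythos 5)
Your proof is correct. Note that the paper itself gives no proof of this lemma; it is imported verbatim from the cited reference (Lemma~9.10 of that work), so there is nothing in this document to compare against line by line — but your argument is a valid, self-contained derivation and follows the natural route: introduce the intermediate experiment with the same $\bfQ$-based predictor but sampling $(b,\tau)\gets\bfQ_{BT}$, bound the change of sampling distribution by $\SD(\bfP_{BT},\bfQ_{BT})$ via the variational characterization applied to the $[0,1]$-valued function $g$ (your handling of the uniform fallback by averaging the predictor's coins, and of ties and zero-probability transcripts, is exactly what is needed), identify both remaining quantities as Bayes-optimal success probabilities $\sum_\tau\max_c\bfP(c,\tau)$ and $\sum_\tau\max_c\bfQ(c,\tau)$, and bound their gap. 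One small correction to your closing remark: the factor $3$ is not intrinsic to this approach. Since $b$ is binary, $\max_c \bfP(c,\tau)=\tfrac{1}{2}\qty(\bfP(0,\tau)+\bfP(1,\tau)+\abs{\bfP(0,\tau)-\bfP(1,\tau)})$, and the triangle inequality then bounds the difference of the two Bayes-optimal success probabilities by $\SD(\bfP_{BT},\bfQ_{BT})$ rather than $2\cdot\SD(\bfP_{BT},\bfQ_{BT})$, so your chain actually yields the inequality with constant $2$; of course the stated bound with constant $3$ follows a fortiori, so this does not affect correctness.
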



\subsection{Proof of~\Cref{lem:nexist_KA}}
\label{app:lem:nexist_KA}

\noindent We focus on the following restricted construction of QCCC key agreements relative to $\cU$, where all queries made by Alice and Bob are of length $O(\log \secp)$. 

\begin{definition}[Short QCCC key agreement constructions]
A QCCC key agreement construction relative to $\cU$ is \emph{short} if the maximum length of queries made by Alice and Bob is $O(\log\secp)$.
\end{definition}

\noindent For a short QCCC key agreement construction, a (statistical) eavesdropper can perform process tomography to learn the description of all ``short'' unitaries by making only a polynomial number of queries. Hence, intuitively, Alice and Bob cannot share any secret by using the oracle. To formalize this intuition, most of the effort in this section is dedicated to analyzing the errors introduced by performing process tomography. \\

\begin{lemma}[Breaking any short QCCC key agreement construction] 
\label{lem:break_shortKA}
For any $\eta:\N \to [0,1]$ and any short QCCC key agreement such that Alice and Bob each output the same key with probability at least $1 - \eta$, there exists an eavesdropper (not necessarily time-efficient) who makes a polynomial number of queries and finds Bob's key with probability at least $0.99 - \eta$. This holds even if Alice and Bob are not time-efficient, as long as they make a polynomial number of queries.
\end{lemma}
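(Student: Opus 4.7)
The eavesdropper, call her $\eve$, will combine quantum process tomography with a MAP-style guessing rule. Let $c \in \N$ be the constant such that every query made by the honest parties is to a unitary $U_k$ with $k \leq c\log\secp$, so each relevant oracle acts on a Hilbert space of dimension at most $\poly(\secp)$.

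\textbf{Strategy.} In a preprocessing phase, $\eve$ runs the algorithm of~\Cref{thm:tomography} on each $U_k$ for $k = 1, \dots, c\log\secp$ with tomography parameters $\veps_k = \delta_k = \secp^{-\omega(1)}$. Each invocation costs $O\bigl( 2^{2k}\cdot\secp^{\omega(1)}\cdot\log\secp\bigr) = \poly(\secp)$ queries, and since there are only $O(\log\secp)$ relevant lengths, the total query count is polynomial. Let $\wt{\cU} = \{\wt U_k\}_k$ denote the resulting approximate oracle, so that $\Ex\bigl[\norm{U_k(\cdot)U_k^\dagger - \wt U_k(\cdot)\wt U_k^\dagger}_\diamond\bigr] \leq \negl(\secp)$ for every relevant $k$. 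In the online phase, given the real transcript $\tau$, $\eve$ considers the hypothetical execution of the protocol when the oracle is $\wt{\cU}$, computes (inefficiently) the induced conditional distribution $\wt q_\tau(b) = \Pr_{\wt{\cU}}[k_B = b \mid \tau]$, and outputs the more likely bit (a uniform bit if $\tau$ has zero probability under the simulation).

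\textbf{Analysis.} Let $\bfP_{BT}$ be the real joint distribution of $(k_B,\tau)$ induced by the protocol with oracle $\cU$, and $\bfQ_{BT}$ its analog under $\wt{\cU}$. I will proceed in three steps. \emph{First}, using~\Cref{lem:cond_indep} (adapted to the oracle setting, which is valid because the only inter-party communication is classical and each party's queries act on its own private register), conditioned on $\cU$ and $\tau$, Alice's key $k_A$ and Bob's key $k_B$ are independent. Writing $p_\tau := \Pr[k_A = 1 \mid \tau]$ and $q_\tau := \Pr[k_B = 1 \mid \tau]$ and $d_\tau := p_\tau(1-q_\tau) + (1-p_\tau)q_\tau$, the agreement hypothesis gives $\Ex_\tau[d_\tau] \leq \eta$. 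A one-variable minimization of $d_\tau$ in $p_\tau$ shows $d_\tau \geq \min(q_\tau, 1-q_\tau)$ pointwise, so the MAP estimator for $k_B$ under $\bfP$ succeeds with probability $\Ex_\tau[\max(q_\tau,1-q_\tau)] \geq 1 - \eta$. \emph{Second}, the diamond-norm closeness of each $\wt U_k$ to $U_k$, together with the sub-additivity of the diamond norm under composition (\Cref{fact:diamondnorm}) applied across the $\poly(\secp)$ queries made by the honest parties, yields $\SD(\bfP_{BT}, \bfQ_{BT}) \leq \negl(\secp)$. \emph{Third}, $\eve$'s strategy is exactly the one analyzed in~\Cref{lem:SD_lemma} with these two distributions, so her success probability is at least $(1-\eta) - 3\,\SD(\bfP_{BT}, \bfQ_{BT}) \geq 0.99 - \eta$.

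\textbf{Main obstacle.} The conceptually delicate step is Step~1: even though both parties may query the same family $\cU$ adaptively and interleave these queries with classical messages, one needs to argue that the Alice/Bob product structure of the post-transcript state survives in the oracle model, so that the per-transcript agreement identity $\Pr[k_A = k_B \mid \tau] = p_\tau q_\tau + (1-p_\tau)(1-q_\tau)$ actually holds. The other quantitatively nontrivial piece is calibrating the tomography precisions $\veps_k, \delta_k$ so that the per-query diamond-norm error, amplified by the (polynomially many) queries and summed over the $O(\log\secp)$ distinct lengths, remains negligible while keeping the query budget polynomial; this is done by setting $\veps_k, \delta_k = \secp^{-\omega(1)}$ and invoking the bound $Q_k \leq \poly(\secp)$ on the number of queries per length.
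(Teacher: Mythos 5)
Your overall route is the same as the paper's: tomography on the logarithmically many short oracles, then a MAP guess for Bob's key from the transcript under the reconstructed oracle, with \Cref{lem:SD_lemma} transferring the MAP guarantee and \Cref{lem:cond_indep} (applied with the oracle fixed) converting the agreement hypothesis into the bound $\Ex_\tau[\max(q_\tau,1-q_\tau)]\geq 1-\eta$. That skeleton is sound. However, your parameter calibration contains a genuine error: you set $\veps_k=\delta_k=\secp^{-\omega(1)}$ and simultaneously claim each tomography invocation costs $\poly(\secp)$ queries. By \Cref{thm:tomography} the cost is $O\qty((d^2/\veps_k)\log(1/\delta_k))$, so superpolynomially small $\veps_k$ forces a superpolynomial number of queries; conversely, with a polynomial query budget you can only afford inverse-polynomial precision, and then $\SD(\bfP_{BT},\bfQ_{BT})$ is inverse-polynomial (a constant after the $\poly(\secp)$-query amplification you invoke via \Cref{fact:diamondnorm}), not $\negl(\secp)$. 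As written, Step~2 of your analysis and the query-count claim cannot both hold, so the "finds Bob's key with probability $1-\eta-\negl$" conclusion you are implicitly aiming for is unobtainable by this method.

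The fix is exactly what the lemma's constant $0.99$ is there for, and is what the paper does: choose the precision to be a sufficiently small \emph{inverse polynomial}, e.g.\ $\veps=\delta=1/(1200q\ell)$ where $q$ is the per-party query count and $\ell=O(\log\secp)$ the maximal query length. Then tomography costs $\sum_{\kappa\le\ell} O\qty(\tfrac{2^{2\kappa}}{\veps}\log\tfrac{1}{\delta})=\poly(\secp)$ queries, and the accumulated statistical-distance loss, namely $3\cdot 2q\sum_{\kappa}\Ex\norm{U_\kappa(\cdot)U_\kappa^\dagger-\wt U_\kappa(\cdot)\wt U_\kappa^\dagger}_\diamond\leq 6q\ell(\veps+\delta)$, is bounded by the constant $0.01$, giving success probability at least $0.99-\eta$ as required. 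One further bookkeeping point: the tomography output $\wt{\cU}$ is random and \Cref{thm:tomography} only bounds the diamond error in expectation, so you should fix $(U,\wt U)$, apply \Cref{lem:SD_lemma} conditionally, and only then average over $\wt U$ and $U$; your write-up treats $\wt{\cU}$ as if it were a fixed oracle with a worst-case guarantee. With these repairs your argument coincides with the paper's proof.
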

\begin{proof}
Let $G^{\cU} = (\alice^\cU,\bob^\cU)$ be a short key agreement construction that satisfies the premises in which Alice and Bob each make $q = \poly(\secp)$ queries of length at most  $\ell(\secp) = O(\log \secp)$. Define the following eavesdropper:
\protocol{$E^{\cU}$}
{
on input $1^\secp$ and transcript $\tau \gets \inner{\sfG.\alice^\cU(1^\secp)}{\sfG.\bob^\cU(1^\secp)}$,
  \begin{enumerate}
    \item Perform the process tomography algorithm in~\Cref{thm:tomography} with $\delta = \eps = 1/(1200q\ell)$ on each $\set{U_\kappa}_{\kappa\in[\ell]}$, and obtain the classical description of the unitaries $\wt{U} := \set{ \wt{U}_\kappa }_{\kappa\in[\ell]}$.
    \item Let $k'_\bob$ be the more likely Bob's key according to $\sfG^{\cU}(1^\secp)$ conditioned on the oracle being $\wt{U}$ and the transcript being $\tau$, and output $k'_\bob$ (if $\wt{U}$ and $\tau$ are not consistent, then output a random bit).
  \end{enumerate}
}
From~\Cref{thm:tomography}, $\eve$ makes at most
\[
\sum_{\kappa = 1}^\ell O\qty( \frac{2^{2\ell}}{\veps} \log\qty( \frac{1}{\delta} ) ) 
= \poly(\secp)
\]
number of queries. Fix $U$ and $\wt{U}$. Let $\bfD_U$ and $\bfD_{\wt{U}}$ denote the marginal distributions of $(k_\bob,\tau)$ generated from $G^\cU(1^\secp)$ conditioned on the oracle being $U$ and $\wt{U}$, respectively. From~\Cref{lem:SD_lemma} (setting $\bfP_{BF} = \bfD_U$ and $\bfQ_{BF} = \bfD_{\wt{U}}$), we obtain 
\begin{align}
\label{eq:shortKA_1}
\Pr_{\mathbf{Exp.1}}[ b = b' ] 
\geq \Pr_{\mathbf{Exp.0}}[ b = b' ] 
- 3 \cdot \SD(\bfD_U,\bfD_{\wt{U}}).
\end{align}
First, by~\Cref{fact:diamondnorm} and viewing $\sfG$ as a $2q$-query algorithm, we have
\begin{align}
\label{eq:shortKA_SD}
\SD(\bfD_U,\bfD_{\wt{U}})
\leq 2q \cdot \sum_{\kappa = 1}^\ell \norm{ U_\kappa(\cdot)U_\kappa^\dagger - \wt{U}_\kappa(\cdot)\wt{U}_\kappa^\dagger }_\diamond.
\end{align}

\noindent Notice that $\Pr_{\mathbf{Exp.1}}[ b = b' ]$ is exactly the guessing probability of $\eve$ conditioned on $U$ and $\wt{U}$, \ie
\begin{align}
\label{eq:shortKA_eve}
    \Pr[ k'_\bob = k_\bob \mid U,\,\wt{U} ] = \Pr_{\mathbf{Exp.1}}[ b = b' ]
\end{align}
Next, we relate the probability $\Pr_{\mathbf{Exp.0}}[ b = b' ]$ to the completeness of $\sfG$. Consider the probability
\begin{align*}
\Pr\left[ k_\alice = k_\bob \mid U,\,\wt{U} \right] 
= \Pr\left[ k_\alice = k_\bob: \substack{ 
    (k_\bob, \tau) \gets \bfP_U, \\
    k_\alice \gets \bfk_\alice|_{U,k_\bob,\tau}
}   \right],
\end{align*}
where $\bfk_\alice|_{U,k_\bob,\tau}$ denotes the conditional distribution of $k_\alice$ conditioned on $(U,k_\bob,\tau)$. By~\Cref{lem:cond_indep}, the distribution $\bfk_\alice|_{U,k_\bob,\tau}$ is independent of $k_\bob$, so it can be denoted as $\bfk_\alice|_{U,\tau}$. Hence, we obtain
\begin{align}
\label{eq:shortKA_complete}
\Pr\left[ k_\alice = k_\bob \mid U,\,\wt{U} \right] 
= \Pr\left[ k_\alice = k_\bob: \substack{ 
    (k_\bob, \tau) \gets \bfP_U, \\
    k_\alice \gets \bfk_\alice|_{U,\tau}
}
\right]
\leq \Pr_{\mathbf{Exp.0}}[ b = b' ],
\end{align}
where the inequality holds because the ``optimal strategy'' for letting $k_\alice = k_\bob$ is to set $k_\alice$ as the most likely value to $k_\bob$ conditioned on $\tau$, which corresponds the step~3 in~$\mathbf{Exp.0}$. Combining~\Cref{eq:shortKA_1,eq:shortKA_SD,eq:shortKA_eve,eq:shortKA_complete}, we obtain
\begin{align}
\label{eq:shortKA_2}
\Pr[ k'_\bob = k_\bob \mid U,\,\wt{U} ] 
\geq \Pr\left[ k_\alice = k_\bob \mid U,\,\wt{U} \right] - 6q \cdot \sum_{\kappa = 1}^\ell \norm{ U_\kappa(\cdot)U_\kappa^\dagger - \wt{U}_\kappa(\cdot)\wt{U}_\kappa^\dagger }_\diamond.
\end{align}
Averaging over $\wt{U}$ in~\Cref{eq:shortKA_2}, and applying~\Cref{thm:tomography}, we have
\begin{align*}
\Pr[ k'_\bob = k_\bob \mid U ] 
\geq \Pr\left[ k_\alice = k_\bob \mid U \right] - 6q\ell \cdot (\veps + \delta)
= \Pr\left[ k_\alice = k_\bob \mid U \right] - 0.01,
\end{align*}
Finally, averaging over $U$, we obtain 
\begin{align*}
\Pr[ k'_\bob = k_\bob ] 
\geq \Pr\left[ k_\alice = k_\bob \right] - 0.01
\geq 0.99 - \eta,
\end{align*}
which completes the proof.
\end{proof}

\begin{lemma} 
\label{lem:reduce_to_shortKA}
Suppose there exists a QCCC key agreement relative to $\cU$ such that (1) Alice and Bob output the same key with probability at least $1 - \negl(\secp)$, and (2) any computationally unbounded polynomial-query eavesdropper can find Bob's key with probability at most $1/2 + \negl(\secp)$. Then there exists a short QCCC key agreement relative to $\cU$ such that: (1) Alice and Bob, who are not necessarily time-efficient, each make a polynomial number of queries and output the same key with probability at least $1 - O \qty( 1/\secp )$, and (2) any computationally unbounded polynomial-query eavesdropper can find Bob's key with probability at most $1/2 + O(1/\secp)$.
\end{lemma}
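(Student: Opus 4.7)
The plan is to transform the given key agreement $G = (\alice, \bob)$ into a short one $G' = (\alice', \bob')$ by ``compiling out'' every long unitary of $\cU$. Let $Q = \poly(\secp)$ upper bound the total number of oracle queries made by $\alice$, $\bob$, and by any polynomial-query eavesdropper we wish to rule out. Fix the threshold $\kappa_0 := \lceil 3\log Q + \log \secp \rceil = O(\log \secp)$ and set $\cU' := \set{U_\kappa}_{\kappa \leq \kappa_0}$. The short protocol $G'$ runs $\alice$ and $\bob$ verbatim on queries to $\cU'$; whenever $\alice$ (\resp $\bob$) would query some long $U_\kappa$ with $\kappa > \kappa_0$, party $\alice'$ (\resp $\bob'$) lazily samples an independent Haar random $\wt{U}_\kappa \sim \haarunitaries(2^\kappa)$ (\resp $\wt{V}_\kappa$) from its own local randomness and answers with it. The resulting $G'$ queries only $\cU'$ and is therefore short; the parties are no longer polynomial time (they must sample Haar unitaries), but they still make only $\poly(\secp)$ oracle queries.

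To analyze $G'$, I argue via a hybrid sequence over the long unitary indices, invoking the LOCC Haar-unitary distinguishing bound (\Cref{thm:adap_LOCC}). For any polynomially bounded experiment, only a polynomially sized set $\cK$ of long indices is ever queried, and long indices never queried can be made independent without any effect on the output. Enumerate $\cK = \set{\kappa_1, \dots, \kappa_m}$ with $m \leq Q$ and walk through hybrids $H_0, H_1, \dots, H_m$, where $H_0$ is the $G$-execution and $H_i$ additionally replaces Bob's copy of $U_{\kappa_i}$ by an independently sampled Haar unitary. Treating the specific interaction with $U_{\kappa_i}$ as the HUD game and folding everything else (short-unitary queries, local randomness, all prior independent samples, and the protocol's classical messages) into the surrounding LOCC channel, \Cref{thm:adap_LOCC} bounds the statistical distance between $H_{i-1}$ and $H_i$ by $O(Q^2 / 2^{\kappa_i})$. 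Summing the geometric tail,
\begin{equation*}
\SD(H_0, H_m) \leq \sum_{\kappa > \kappa_0} O\qty(Q^2 / 2^\kappa) = O\qty(Q^2 / 2^{\kappa_0}) = O(1/(Q \secp)),
\end{equation*}
and $H_m$ coincides with the $G'$-execution. In particular, any event defined by a separable two-outcome measurement on the final LOCC state changes in probability by at most $O(1/\secp)$ between $G$ and $G'$.

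Completeness of $G'$ then follows immediately: $\Pr[k_{\alice'} = k_{\bob'}] \geq \Pr[k_\alice = k_\bob] - O(1/\secp) \geq 1 - O(1/\secp)$, since the equality predicate is a separable measurement (Alice and Bob each measure their key and compare). For security, let $\eve'$ be any polynomial-query eavesdropper against $G'$, and define $\eve$ against $G$ to simply run $\eve'$, forwarding its oracle queries to the short part of $\cU$. In the hybrid argument for security, group $(\alice, \eve)$ into a single LOCC party and keep $\bob$ as the other: this is valid because $\eve$ only passively reads the classical transcript and performs local queries on Alice-side copies of the oracles. The same swap argument yields
\begin{equation*}
\Bigl| \Pr[\eve \text{ guesses } k_\bob \text{ in } G] - \Pr[\eve' \text{ guesses } k_{\bob'} \text{ in } G'] \Bigr| \leq O(1/\secp).
\end{equation*}
Since $G$ is secure against every polynomial-query eavesdropper with negligible advantage, $\eve'$ can have advantage at most $O(1/\secp)$ against $G'$.

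The main technical subtlety is verifying that each hybrid step is a legitimate instance of \Cref{thm:adap_LOCC} after previous swaps and after the eavesdropper has been added. Two points need care: (i) \Cref{thm:adap_LOCC} requires the two LOCC parties to start unentangled, which holds because the whole system begins in the all-zero state and all prior independent Haar samples are local to one side; (ii) both parties are allowed arbitrary local quantum operations in between their queries to the specific unitaries being distinguished, so queries to the short oracle $\cU'$ and to previously replaced independent samples can be freely absorbed into the LOCC channel surrounding the HUD game for the current $\kappa_i$. With these observations in place, the hybrid argument reduces to a triangle inequality and the desired $O(1/\secp)$ bounds follow.
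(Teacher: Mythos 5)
Your proposal follows essentially the same route as the paper: replace every long unitary by locally sampled independent Haar unitaries to obtain the short scheme, then bound the change in completeness and in any eavesdropper's success probability by a hybrid over the long indices, invoking \Cref{thm:adap_LOCC} once per index with cost $O(Q^2/2^{\kappa})$ and summing the geometric tail below the $O(\log\secp)$ threshold, with the eavesdropper folded into one of the two LOCC parties for the security step. The only points you gloss over are (i) an eavesdropper against the short scheme may a priori query the long unitaries of $\cU$, and one must first argue (as the paper does explicitly) that since the honest parties never touch these, they are independent of the transcript and keys and hence can be simulated locally, so without loss of generality the eavesdropper makes only short queries --- this is what your step ``forwarding its oracle queries to the short part of $\cU$'' tacitly assumes; and (ii) the identical short unitaries cannot simply be ``absorbed into the LOCC channel'' as local operations, but must be sampled and agreed upon through classical communication between the two HUD parties, exactly as in the paper's reduction.
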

\begin{proof}
Let $\sfG^\cU = (\sfG.\alice^\cU,\sfG.\bob^\cU)$ be a QCCC key agreement construction relative to $\cU$ that satisfies the given premises in which Alice and Bob each make $q = \poly(\secp)$ queries of length at most $L = \poly(\secp)$. Let $\ell(\secp) := \ceil{\log(q^4L^2 + \secp^2)} = O(\log\secp)$. Based on $\sfG$, we define the following short construction:

\myparagraph{Short construction $\sfH$ relative to $\cU$}
\protocol{$\sfH^\cU = (\sfH.\alice^\cU,\sfH.\bob^\cU)$}
{
on input $1^\secp$,
\begin{enumerate}
    \item $\sfH.\alice$ and $\sfH.\bob$ respectively sample a sequence of Haar random unitaries $\set{U_{\alice,\kappa}}_{\ell + 1 \leq \kappa \leq L}$ and \\
    $\set{U_{\bob,\kappa}}_{\ell + 1 \leq \kappa \leq L}$.
    \item $\sfH.\alice$ runs $\sfG.\alice^{(\cdot)}$ and responds to the queries as follows: if the length does not exceed $\ell$, then make a query to $\cU$ to reply; otherwise, simulate the oracle with $\set{U_{\alice,\kappa}}_{\ell + 1 \leq \kappa \leq L}$.
    \item $\sfH.\bob$ is defined analogously.
    \item They output whatever $\sfG.\alice$ and $\sfG.\bob$ output.
\end{enumerate}
}
In other words, $\sfH$ is identical to $\sfG$, except that any query by Alice or Bob exceeding length $\ell$ is answered with locally-simulated \emph{independent} Haar random unitaries.

\myparagraph{Completeness of $\sfH$} We use $(\sfG.\alice,\sfG.\bob)$ to construct a two-party adversary $(\HUD.\alice,\HUD.\bob)$ in~\Cref{thm:adap_LOCC} as follows. First, $\HUD.\alice$ and $\HUD.\bob$ sample and agree on a set of Haar unitaries of length at most $\ell$ through classical communication. Then $(\HUD.\alice,\HUD.\bob)$ runs $(\sfG.\alice,\sfG.\bob)$ by simulating oracles with their oracle access. Suppose $(\HUD.\alice,\HUD.\bob)$ are given identical Haar oracles, their execution is equivalent to $\sfG$; otherwise, it follows $\sfH$. Applying a standard hybrid over index $\kappa$, we have 
\begin{align*}
& \Pr_{\cU} \qty[ k_\alice = k_\bob:\, 
(k_\alice, k_\bob, \tau) \gets \inner{\sfH.\alice^\cU(1^\secp)}{\sfH.\bob^\cU(1^\secp)} ] \\
& \geq \Pr_{\cU} \qty[ k_\alice = k_\bob:\, 
(k_\alice, k_\bob, \tau) \gets \inner{\sfG.\alice^\cU(1^\secp)}{\sfG.\bob^\cU(1^\secp)} ]
- \sum_{\kappa = \ell +1}^L O\qty( \frac{q^2}{2^\kappa} ) \\
& \geq 1 - \negl(\secp) - O\qty( \frac{q^2L}{2^\ell} ) \\
& = 1 - \negl(\secp) - O\qty( \frac{1}{q^2L + \frac{\secp^2}{q^2L}} ) \\
& \geq 1 - \negl(\secp) - O\qty( \frac{1}{\secp} ) 
= 1 - O\qty( \frac{1}{\secp} ).
\end{align*}

\myparagraph{Security of $\sfH$} Let $p = \poly(\secp)$ and $\sfH.\eve^\cU$ be an eavesdropper that makes $p$ queries. Since $\sfH.\alice$ and $\sfH.\bob$ never make queries of length greater than $\ell$, we can assume that $\sfH.\eve^\cU$ does not either, without reducing its advantage. Consider the reduction $\sfG.\eve^\cU$:
\protocol{$\sfG.\eve^\cU$}
{
on input $1^\secp$ and transcript $\tau \gets \inner{\sfG.\alice^\cU(1^\secp)}{\sfG.\bob^\cU(1^\secp)}$,
  \begin{enumerate}
    \item $\sfG.\eve$ runs $\sfH.\eve^{(\cdot)}(1^\secp,\tau)$, responding to queries using $\cU$.
    \item $\sfG.\eve$ outputs whatever $\sfH.\eve$ outputs.
  \end{enumerate}
}

\noindent Similar to the proof of completeness of $\sfH$, we use $\sfG.\alice$, $\sfG.\bob$, and $\sfH.\eve$ to construct a two-party adversary $(\HUD.\alice,\HUD.\bob)$ in~\Cref{thm:adap_LOCC} as follows. First, $\HUD.\alice$ and $\HUD.\bob$ sample and agree on a set of Haar unitaries of length at most $\ell$ through classical communication. Then $(\HUD.\alice,\HUD.\bob)$ runs $(\sfG.\alice,\sfG.\bob)$ by simulating oracles with their oracle access. Finally, $\HUD.\bob$ runs $\sfG.\eve$ and checks whether $\sfG.\eve$ outputs $\sfG.\bob$'s key correctly. Notice that $\HUD.\bob$ does not need to make any queries to simulate $\sfH.\eve$ since $\sfH.\eve$ only makes queries of length at most $\ell$. Suppose $(\HUD.\alice,\HUD.\bob)$ are given identical Haar oracles, their execution is equivalent to the security game played by $\sfG.\eve$; otherwise, it follows the security game played by $\sfH.\eve$. Hence, the probability of $\sfH.\sfE^{\cU}$ finding Bob's key satisfies 
\begin{align*}
& \Pr_{\cU} \qty[ k_\eve = k_\bob:\, 
\substack{
    (k_\alice, k_\bob, \tau) \gets \inner{\sfH.\alice^\cU(1^\secp)}{\sfH.\bob^\cU(1^\secp)}, \\
    k_\eve \gets \sfH.\eve^\cU(1^\secp,\tau)
} ] \\
\leq & \Pr_{\cU} \qty[ k_\eve = k_\bob:\, 
\substack{
    (k_\alice, k_\bob, \tau) \gets \inner{\sfG.\alice^\cU(1^\secp)}{\sfG.\bob^\cU(1^\secp)}, \\
    k_\eve \gets \sfG.\eve^\cU(1^\secp,\tau)
} ]  
+ O\qty( \frac{1}{\secp} ) \\
\leq & \frac{1}{2} + \negl(\secp) + O\qty( \frac{1}{\secp} )
= \frac{1}{2} + O\qty( \frac{1}{\secp} ).
\end{align*}
This proves the security of $\sfH$.
\end{proof}

Now, we are ready to prove~\Cref{lem:nexist_KA}. For convenience, we restate the lemma below:
\begin{lemma}[\Cref{lem:nexist_KA}, restated]
Relative to $\cU$, for any construction of QCCC key agreements that makes only \emph{forward} queries to $\cU$ and satisfy completeness, there exists an eavesdropper, not necessarily time-efficient, who makes a polynomial number of queries and breaks the security.
\end{lemma}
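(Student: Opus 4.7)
The plan is to derive the lemma by combining the two preceding lemmas, \Cref{lem:break_shortKA} and \Cref{lem:reduce_to_shortKA}, via a contradiction argument. Suppose, for the sake of contradiction, that $\sfG^\cU$ is a QCCC key agreement construction using only forward queries to $\cU$ such that (i) Alice and Bob agree on the same key with overwhelming probability, and (ii) every computationally unbounded eavesdropper making $\poly(\secp)$ forward queries has advantage at most $\negl(\secp)$ in guessing Bob's key. The goal is to exhibit a polynomial-query eavesdropper whose existence violates (ii).

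First, I would invoke \Cref{lem:reduce_to_shortKA} on $\sfG^\cU$ to obtain a \emph{short} QCCC key agreement construction $\sfH^\cU$, in which every query to $\cU$ has length $O(\log \secp)$, Alice and Bob each still make $\poly(\secp)$ queries and agree on the same key with probability at least $1 - O(1/\secp)$, and every polynomial-query unbounded eavesdropper against $\sfH^\cU$ succeeds with probability at most $1/2 + O(1/\secp)$. This step is purely an invocation, and the hypothesis that $\sfG$ makes only forward queries is preserved by the construction of $\sfH$ in the proof of \Cref{lem:reduce_to_shortKA} (which only replaces long-query answers with locally sampled Haar unitaries, introducing no inverse queries).

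Next, I would apply \Cref{lem:break_shortKA} to $\sfH^\cU$ with the completeness parameter $\eta = O(1/\secp)$. This yields a polynomial-query (but computationally unbounded) eavesdropper $\sfE^\cU$ whose success probability in recovering Bob's key is at least $0.99 - O(1/\secp)$. For sufficiently large $\secp$, this quantity strictly exceeds $1/2 + O(1/\secp)$, contradicting the security bound on $\sfH^\cU$ obtained in the previous step. Hence no such $\sfG^\cU$ can satisfy (ii), which is exactly the claim of the lemma.

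The main subtlety I anticipate is bookkeeping the parameter degradation through the two reductions. The reduction to the short construction in \Cref{lem:reduce_to_shortKA} only guarantees an inverse-polynomial (rather than negligible) slack in both completeness and security, so \Cref{lem:break_shortKA} must be applied with an inverse-polynomial (not negligible) $\eta$; correspondingly the resulting eavesdropper against $\sfG^\cU$ will only achieve a constant advantage, not an overwhelming one. This is nevertheless sufficient for the stated conclusion, since any non-negligible advantage breaks a QCCC key agreement. One also needs to verify that the eavesdropper's query count is genuinely polynomial: this follows from \Cref{lem:break_shortKA}, since on inputs of length $O(\log \secp)$ the $2^{O(\log \secp)}/\veps \cdot \log(1/\delta)$ tomography cost remains $\poly(\secp)$ for the constant choices of $\veps, \delta$ used there.
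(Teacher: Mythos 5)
Your proposal is correct and follows essentially the same route as the paper: the paper's proof of this lemma is exactly the contradiction argument combining \Cref{lem:reduce_to_shortKA} (to pass to a short construction with $O(1/\secp)$ slack in completeness and security) with \Cref{lem:break_shortKA} (whose tomography-based eavesdropper then achieves advantage $0.99 - O(1/\secp)$, contradicting the $1/2 + O(1/\secp)$ bound). Your additional remarks on parameter bookkeeping and the polynomial query count are consistent with how those lemmas are stated and used in the paper.
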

\begin{proof}[Proof of~\Cref{lem:nexist_KA}]
For the sake of contradiction, suppose there was such a key agreement construction. However, by~\Cref{lem:reduce_to_shortKA}, it implies a short key agreement construction that contradicts~\Cref{lem:break_shortKA}.
\end{proof}


\subsection{Proof of~\Cref{lem:nexist_int_com}}
\label{app:lem:nexist_int_com}

\myparagraph{A general property of QCCC commitments} Suppose $\sfG = (\sen,\rec)$ is an interactive QCCC commitment construction in the plain model. Consider the following ``semi-honest'' statistical receiver $\rec^\ddagger$ and sender $\sen^\ddagger$. Receiver $\rec^\ddagger$ runs the commit phase honestly and outputs the more likely input bit conditioned on the transcript $\tau$; sender $\sen^\ddagger$ runs the commit phase honestly on a random input, and upon receiving the challenge bit $ch$, runs the honest sender in the reveal phase with the initial state $\rho^{ch,\tau}$. By inspection, their advantages for breaking statistical hiding and statistical sum binding are
\begin{align}
\label{eq:adv(rec)_plain}
\adv(\rec^\ddagger) 
& = \sum_{\tau \in \cT} \Pr[ \tau  ] \cdot \frac{1}{2} \cdot \bigg| \Pr[ b = 0 \mid \tau ] - \Pr[ b = 1 \mid \tau ] \bigg| \\
\label{eq:adv(sen)_plain}
\adv(\sen^\ddagger) 
& = \sum_{\tau \in \cT} \Pr[ \tau  ] \cdot \frac{1}{2} \cdot \Bigg(
\sum_{ch \in \bit} \Pr \left[
    \mu = ch:\,
    \mu \gets \reveal \inner{\sen(1^\secp,\,\rho^{ch,\,\tau}_{\reg{S}})}{\rec(1^\secp,\,\sigma^\tau_{\reg{R}})}
\right] - 1 \Bigg),
\end{align}
where the probability is over an honest execution of $\sfG$ on a random input bit $b$.

\noindent Next, suppose $\sfG$ has completeness error $\eta$, which by definition can be written as
\begin{align}
\label{eq:completeness_plain}
\eta & = 1 - \Pr \left[
    \mu = b:\, 
    \substack{ 
    b \gets \bit, \\
    (\rho^{b,\,\tau}_{\reg{S}},\,\sigma^{\tau}_{\reg{R}},\,\tau) \gets \commit \inner{ \sen(1^\secp,\,b)}{\rec(1^\secp)}, \\
    \mu \gets \reveal \inner{\sen(1^\secp,\,\rho^{b,\,\tau}_{\reg{S}})}{\rec(1^\secp,\,\sigma^{\tau}_{\reg{R}})}
    }
\right] \nonumber \\
& = 1 - \sum_{\tau\in\cT} \Pr[\tau] \cdot \sum_{b\in\bit} \Pr[b \mid \tau] \cdot \left[
    \mu = b:\
    \mu \gets \reveal \inner{\sen(1^\secp,\,\rho^{b,\,\tau}_{\reg{S}})}{\rec(1^\secp,\,\sigma^{\tau}_{\reg{R}})}
\right]
\end{align}

\noindent We present the following \emph{quantitative} version of the well-known impossibility of achieving both statistically-hiding and statistically-binding. 
\begin{lemma}
\label{lem:Imp_COM_plain}
$\adv(\rec^\ddagger) + \adv(\sen^\ddagger) \geq 1/2 - \eta$.
\end{lemma}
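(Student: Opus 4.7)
The plan is to reduce the lemma to a pointwise inequality in the transcript $\tau$ and then verify it by elementary algebra. For each $\tau$, let $q_b^\tau := \Pr[b \mid \tau]$ denote the posterior probability that the committed bit is $b$ (so that $q_0^\tau + q_1^\tau = 1$), and let $p_b^\tau$ denote the probability that the honest reveal succeeds when run from $\rho^{b,\tau}_{\reg{S}}$ against $\sigma^\tau_{\reg{R}}$ (the latter being independent of $b$ by~\Cref{lem:cond_indep}). Using $\max(q_0^\tau, q_1^\tau) = \tfrac{1}{2} + \tfrac{1}{2}|q_0^\tau - q_1^\tau|$, the expressions in~\Cref{eq:adv(rec)_plain,eq:adv(sen)_plain,eq:completeness_plain} can be rewritten as
\begin{align*}
\adv(\rec^\ddagger) &= \tfrac{1}{2} \sum_{\tau} \Pr[\tau] \cdot |q_0^\tau - q_1^\tau|, \qquad
\adv(\sen^\ddagger) = \tfrac{1}{2} \sum_{\tau} \Pr[\tau] \cdot (p_0^\tau + p_1^\tau - 1), \\
1 - \eta &= \sum_{\tau} \Pr[\tau] \cdot (q_0^\tau p_0^\tau + q_1^\tau p_1^\tau).
\end{align*}

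Substituting these into $\adv(\rec^\ddagger) + \adv(\sen^\ddagger) - (1/2 - \eta)$, multiplying through by $2$, and using the identity $1 - 2q_b^\tau = q_{1-b}^\tau - q_b^\tau$, the claim reduces to
\[
\sum_{\tau} \Pr[\tau] \cdot \Big( |q_0^\tau - q_1^\tau| + (q_0^\tau - q_1^\tau)(p_1^\tau - p_0^\tau) \Big) \geq 0.
\]
Thus it suffices to prove that the integrand is non-negative for every fixed $\tau$.

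Fix $\tau$ and assume without loss of generality that $q_0^\tau \geq q_1^\tau$ (the other case is symmetric). Then $|q_0^\tau - q_1^\tau| = q_0^\tau - q_1^\tau$, and the integrand factors as $(q_0^\tau - q_1^\tau) \cdot (1 + p_1^\tau - p_0^\tau)$; both factors are non-negative, the first by the WLOG assumption and the second because $p_0^\tau \leq 1$ and $p_1^\tau \geq 0$. This proves the pointwise bound, and hence the lemma. The argument is almost entirely bookkeeping once the notation $(q_b^\tau, p_b^\tau)$ is in place; the only conceptual step is invoking~\Cref{lem:cond_indep} to justify that $\sigma^\tau_{\reg{R}}$ is determined by $\tau$ alone, so that $p_b^\tau$ is a well-defined function of $(b, \tau)$ and the sender can indeed equivocate by preparing $\rho^{ch,\tau}_{\reg{S}}$ itself (using its unbounded power) and running the honest reveal against the same $\sigma^\tau_{\reg{R}}$.
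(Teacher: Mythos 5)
Your proof is correct and follows essentially the same route as the paper: reduce the claim to a pointwise-in-$\tau$ arithmetic inequality relating $\Pr[b\mid\tau]$ and the two reveal probabilities, then verify it by elementary algebra (the paper states this per-transcript inequality as~\Cref{eq:tradeoff} and leaves the arithmetic to the reader, which your factorization $(q_0^\tau-q_1^\tau)(1+p_1^\tau-p_0^\tau)\ge 0$ supplies explicitly). No gaps.
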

\begin{proof}
For any fixed $\tau$, we claim that
\begin{align}
\label{eq:tradeoff}
& \frac{1}{2} \cdot \bigg| \Pr[ b = 0 \mid \tau ] - \Pr[ b = 1 \mid \tau ] \bigg| \nonumber \\
& \quad + \frac{1}{2} \cdot \Bigg(
\sum_{ch \in \bit} \Pr \left[
    \mu = ch:\,
    \mu \gets \reveal \inner{\sen(1^\secp,\,\rho^{ch,\,\tau}_{\reg{S}})}{\rec(1^\secp,\,\sigma^\tau_{\reg{R}})}
\right] - 1 \Bigg) \nonumber \\
& \geq \Pr[b = 0 \mid \tau] \cdot \left[
    \mu = 0:\
    \mu \gets \reveal \inner{\sen(1^\secp,\,\rho^{0,\,\tau}_{\reg{S}})}{\rec(1^\secp,\,\sigma^{\tau}_{\reg{R}})}
\right] \nonumber \\ 
& \quad + \Pr[b = 1 \mid \tau] \cdot \left[
    \mu = 1:\
    \mu \gets \reveal \inner{\sen(1^\secp,\,\rho^{1,\,\tau}_{\reg{S}})}{\rec(1^\secp,\,\sigma^{\tau}_{\reg{R}})}
\right] - \frac{1}{2}.
\end{align}
Assuming~\Cref{eq:tradeoff} and averaging over $\tau$, we will obtain
\[
\adv(\rec^\ddagger) + \adv(\sen^\ddagger) 
\geq 1 - \eta - \frac{1}{2} = \frac{1}{2} - \eta,
\]
which completes the proof. Hence, it remains to prove~\Cref{eq:tradeoff}. Using the shorthand notation
\begin{align*}
    p & := \Pr[ b = 0 \mid \tau ], \\
    1 - p & \  = \Pr[ b = 1 \mid \tau ], \\
    q & := \Pr\left[
        \mu = 0:\
        \mu \gets \reveal \inner{\sen(1^\secp,\,\rho^{0,\,\tau}_{\reg{S}})}{\rec(1^\secp,\,\sigma^{\tau}_{\reg{R}})}
    \right], \\
    r & := \Pr\left[
        \mu = 1:\
        \mu \gets \reveal \inner{\sen(1^\secp,\,\rho^{1,\,\tau}_{\reg{S}})}{\rec(1^\secp,\,\sigma^{\tau}_{\reg{R}})}
    \right],
\end{align*}
we can equivalently express~\Cref{eq:tradeoff} as 
\begin{align*}
    \frac{|2p - 1|}{2} + \frac{q + r}{2} 
    \geq pq + (1 - p) r - \frac{1}{2},
\end{align*}
which can be verified through elementary arithmetic.
\end{proof}

From~\Cref{lem:Imp_COM_plain}, if the completeness error $\eta$ is negligible, then either $\rec^\ddagger$ or $\sen^\ddagger$ (or possibly both) can break the security. \\

\noindent Now, suppose $G^\cU$ is an interactive QCCC commitment relative to an oracle $\cU$. For any (fixed) oracle $U$, in the spirit of~\Cref{eq:completeness_plain,eq:adv(rec)_plain,eq:adv(sen)_plain}, we define the following quantities over an honest execution of $G^U$ with the oracle being fixed to $U$ on a random input:
\begin{align}
\label{eq:completeness_oracle}
& \eta_U 
:=  1 - \sum_{\tau\in\cT} \Pr[\tau \mid U] \sum_{b\in\bit} \Pr[b \mid U,\,\tau] \cdot \left[
    \mu = b:\
    \mu \gets \reveal \inner{\sen(1^\secp,\,\rho^{b,\,U,\,\tau}_{\reg{S}})}{\rec(1^\secp,\,\sigma^{U,\,\tau}_{\reg{R}})}
\right], \\
\label{eq:adv(rec)_oracle}
& \adv_U(\rec^\ddagger) 
:= \sum_{\tau \in \cT} \Pr[ \tau \mid U ] \cdot \frac{1}{2} \cdot \qty| \Pr[ b = 0 \mid U,\,\tau ] - \Pr[ b = 1 \mid U,\,\tau ] |, \\
\label{eq:adv(sen)_oracle}
& \adv_U(\sen^\ddagger) 
:= \sum_{\tau \in \cT} \Pr[ \tau \mid U ] \cdot \frac{1}{2} \cdot \Bigg(
\sum_{ch \in \bit} \Pr \left[
    \mu = ch:\,
    \mu \gets \reveal \inner{\sen(1^\secp,\,\rho^{ch,\,U,\,\tau}_{\reg{S}})}{\rec(1^\secp,\,\sigma^{U,\,\tau}_{\reg{R}})}
\right] - 1 \Bigg).
\end{align}

\noindent Using the same argument in the proof of~\Cref{lem:Imp_COM_oracle}, we obtain the following lemma:
\begin{lemma}
\label{lem:Imp_COM_oracle}
For any fixed choice of the oracle $U$, it holds that $\adv_U(\rec^\ddagger) + \adv_U(\sen^\ddagger) \geq 1/2 - \eta_U$.
\end{lemma}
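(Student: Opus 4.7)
The plan is to observe that Lemma~\ref{lem:Imp_COM_oracle} is just the ``relativized'' version of Lemma~\ref{lem:Imp_COM_plain}: once we fix the oracle to a particular unitary $U$, the entire execution of $\sfG^U$ (including the behaviors of $\sen^\ddagger$ and $\rec^\ddagger$, which only invoke the honest algorithms) can be viewed as a commitment scheme in the plain model, whose completeness error and semi-honest advantages are exactly the quantities $\eta_U$, $\adv_U(\rec^\ddagger)$, and $\adv_U(\sen^\ddagger)$ defined in~\eqref{eq:completeness_oracle}--\eqref{eq:adv(sen)_oracle}. So in principle one could just invoke Lemma~\ref{lem:Imp_COM_plain} as a black box. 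To be safe I would re-derive it, since Lemma~\ref{lem:Imp_COM_plain} was stated for the plain model and the reader may want to see the calculation go through identically with every probability replaced by its conditional-on-$U$ counterpart.

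Concretely, I would fix $U$ and then fix an arbitrary transcript $\tau$ in the support of the honest execution conditioned on $U$. Introduce the shorthand
\begin{align*}
p &:= \Pr[b=0 \mid U,\tau], \\
q &:= \Pr[\mu = 0 : \mu \gets \reveal\langle \sen(1^\secp,\rho^{0,U,\tau}_{\reg{S}}),\ \rec(1^\secp,\sigma^{U,\tau}_{\reg{R}})\rangle], \\
r &:= \Pr[\mu = 1 : \mu \gets \reveal\langle \sen(1^\secp,\rho^{1,U,\tau}_{\reg{S}}),\ \rec(1^\secp,\sigma^{U,\tau}_{\reg{R}})\rangle],
\end{align*}
so that $1-p = \Pr[b=1 \mid U,\tau]$. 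The core pointwise inequality to establish is
\[
\frac{|2p-1|}{2} + \frac{q+r}{2} \;\geq\; pq + (1-p)r - \frac{1}{2},
\]
which is the same elementary arithmetic fact used in the proof of Lemma~\ref{lem:Imp_COM_plain}. I would verify it by case analysis on whether $p \geq 1/2$ or $p < 1/2$: for instance, if $p \geq 1/2$ the left side equals $p - 1/2 + (q+r)/2$, and the inequality rearranges to $(p-1/2)(1-q) + (1-p)(1/2 - r) \geq 0$, both of whose summands are manifestly nonnegative since $q,r \in [0,1]$.

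Having established the pointwise inequality, I would take the convex combination over $\tau$ weighted by $\Pr[\tau \mid U]$. The left-hand side then precisely becomes $\adv_U(\rec^\ddagger) + \adv_U(\sen^\ddagger)$ by the definitions in~\eqref{eq:adv(rec)_oracle} and~\eqref{eq:adv(sen)_oracle}. The right-hand side becomes
\[
\sum_{\tau} \Pr[\tau \mid U]\sum_{b \in \{0,1\}} \Pr[b \mid U,\tau] \cdot \Pr[\mu = b : \mu \gets \reveal\langle \sen(1^\secp,\rho^{b,U,\tau}_{\reg{S}}),\ \rec(1^\secp,\sigma^{U,\tau}_{\reg{R}})\rangle] - \tfrac{1}{2},
\]
which equals $(1-\eta_U) - 1/2 = 1/2 - \eta_U$ by~\eqref{eq:completeness_oracle}, yielding the claimed bound.

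I do not expect any obstacle here; the proof is a direct relativization of the plain-model argument. The only subtlety is the tacit use of~\Cref{lem:cond_indep} conditional on $U$, so that $\rho^{b,U,\tau}_{\reg{S}}$ and $\sigma^{U,\tau}_{\reg{R}}$ can indeed be separated as in a product state given $(U,\tau)$; this holds because conditioning on the oracle first and then following the plain-model conditional-independence argument round-by-round preserves the factorization.
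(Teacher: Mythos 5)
Your route is exactly the paper's: \Cref{lem:Imp_COM_oracle} is proved by running the argument of \Cref{lem:Imp_COM_plain} verbatim conditioned on the fixed oracle $U$ (per transcript, then averaged over $\tau\sim\Pr[\cdot\mid U]$), and your remark about \Cref{lem:cond_indep} holding conditionally on $U$ is fine. The problem is in your execution of the ``elementary arithmetic'' step, which as written does not close. First, a bookkeeping error with the $\tfrac12$: by \eqref{eq:adv(sen)_oracle} the per-transcript sender advantage is $\tfrac{q+r}{2}-\tfrac12$, not $\tfrac{q+r}{2}$, so averaging your stated pointwise inequality $\tfrac{|2p-1|}{2}+\tfrac{q+r}{2}\ge pq+(1-p)r-\tfrac12$ over $\tau$ only yields $\adv_U(\rec^\ddagger)+\adv_U(\sen^\ddagger)+\tfrac12\ \ge\ (1-\eta_U)-\tfrac12$, i.e.\ the vacuous bound $\adv_U(\rec^\ddagger)+\adv_U(\sen^\ddagger)\ge -\eta_U$. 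What you actually need pointwise is the version without the slack on the right, $\tfrac{|2p-1|}{2}+\tfrac{q+r}{2}\ \ge\ pq+(1-p)r$, which is what the displayed \Cref{eq:tradeoff} amounts to once the $-1$ inside the sender term and the $-\tfrac12$ on the right are matched. (To be fair, the paper's shorthand restatement of \Cref{eq:tradeoff} contains the same spurious $-\tfrac12$; its displayed inequality and averaging, however, are consistent.)

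Second, your case-analysis verification is incorrect: for $p\ge\tfrac12$ the difference of the two sides is not $(p-\tfrac12)(1-q)+(1-p)(\tfrac12-r)$, and that expression is not ``manifestly nonnegative'' anyway, since $(1-p)(\tfrac12-r)<0$ whenever $r>\tfrac12$ and $p<1$. The correct computation for the inequality you actually need gives, for $p\ge\tfrac12$,
\begin{equation*}
\frac{|2p-1|}{2}+\frac{q+r}{2}-\bigl(pq+(1-p)r\bigr)=\Bigl(p-\tfrac12\Bigr)\,(1-q+r)\ \ge\ 0,
\end{equation*}
and symmetrically $(\tfrac12-p)(1+q-r)\ge 0$ for $p<\tfrac12$. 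With this corrected identity and the corrected accounting of the $\tfrac12$ in $\adv_U(\sen^\ddagger)$, your averaging step does give $\adv_U(\rec^\ddagger)+\adv_U(\sen^\ddagger)\ge \tfrac12-\eta_U$, exactly as in the paper.
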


\noindent Similarly, we can view $\eta_U$ as the completeness error conditioned on the oracle being $U$. As for $\adv_U(\rec^\ddagger)$ (\resp $\adv_U(\sen^\ddagger)$), it corresponds to $\rec^\ddagger$'s (\resp $\sen^\ddagger$'s) advantage if they have \emph{full knowledge} of oracle $U$, allowing them to (statistically) compute the conditional distribution and the internal states. 

\begin{definition}[Short interactive QCCC commitment constructions]
An interactive QCCC commitment construction relative to $\cU$ is \emph{short} if the maximum length of queries made by the sender and receiver is $O(\log\secp)$.
\end{definition}

\begin{lemma}[Breaking any short interactive QCCC commitment construction]
\label{lem:Imp_short_query_COM}
For any $\eta:\N \to [0,1]$ and any short construction of interactive QCCC commitment relative to $\cU$ that has completeness $1 - \eta$, there exists either 
\begin{enumerate}
    \item a malicious receiver (not necessarily time-efficient) who makes a polynomial number of queries and guesses the input with an advantage at least $0.15 - \eta/3$, or
    \item a malicious sender (not necessarily time-efficient) who makes a polynomial number of queries and opens the challenge $ch$ with an advantage at least $0.15 - \eta/3$
\end{enumerate}
(or possibly both).
\end{lemma}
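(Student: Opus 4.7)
The plan is to mirror the key-agreement attack in the proof of~\Cref{lem:break_shortKA}, running it in parallel for both semi-honest adversaries $\rec^\ddagger$ and $\sen^\ddagger$ and then invoking the pointwise quantitative bound of~\Cref{lem:Imp_COM_oracle} together with pigeonhole. Since the construction is short, every unitary queried by the honest parties has dimension $2^\ell = \poly(\secp)$, so process tomography on the whole family $\{U_\kappa\}_{\kappa \leq \ell}$ can be performed with only $\poly(\secp)$ queries. The resulting classical descriptions $\wt{U}$ let a statistical adversary simulate the ``full oracle knowledge'' semi-honest strategies that~\Cref{eq:adv(rec)_oracle,eq:adv(sen)_oracle} require.

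Concretely, both attackers begin with a common query phase: run~\Cref{thm:tomography} with $\veps = \delta = 1/(c q \ell)$ on each $U_\kappa$, $\kappa \leq \ell$, for a large enough absolute constant $c$ and $q = \poly(\secp)$ the query complexity of the honest sender and receiver. This produces $\wt{U}$ satisfying $\Ex_{\wt{U}} \sum_\kappa \|U_\kappa(\cdot)U_\kappa^\dagger - \wt{U}_\kappa(\cdot)\wt{U}_\kappa^\dagger\|_\diamond \leq \ell(\veps + \delta)$. The tomographic receiver $\rec^\ddagger_{\mathrm{tom}}$ then runs the honest receiver with real queries, records the transcript $\tau$, and outputs the more likely input bit with respect to the transcript distribution of $\sfG$ computed under the oracle $\wt{U}$. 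The tomographic sender $\sen^\ddagger_{\mathrm{tom}}$ commits honestly with real queries on a uniformly random bit; upon receiving the challenge $ch$ it replaces its internal register $\reg{S}$ by a statistical simulation of the honest sender's state $\rho^{ch,\wt{U},\tau}_{\reg{S}}$ (conditioned on transcript $\tau$ under oracle $\wt{U}$), and then executes the honest reveal phase with real queries to $\cU$.

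The error analysis follows~\Cref{lem:break_shortKA}. Fix $U$ and $\wt{U}$, and write $\bfP^U_{B,T}$ (\resp $\bfP^{\wt{U}}_{B,T}$) for the joint input--transcript distribution of honest $\sfG$ under $U$ (\resp $\wt{U}$). By~\Cref{fact:diamondnorm} applied to an honest execution viewed as a $2q$-query algorithm, $\SD(\bfP^U_{B,T}, \bfP^{\wt{U}}_{B,T}) \leq 2q \sum_\kappa \|U_\kappa(\cdot)U_\kappa^\dagger - \wt{U}_\kappa(\cdot)\wt{U}_\kappa^\dagger\|_\diamond$. For the receiver, \Cref{lem:SD_lemma} translates this directly into $\adv(\rec^\ddagger_{\mathrm{tom}}) \geq \Ex_U[\adv_U(\rec^\ddagger)] - O(q\ell(\veps+\delta))$. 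For the sender I would introduce an intermediate hybrid that produces the true $\rho^{ch,U,\tau}_{\reg{S}}$ (computed from $U$) but leaves the real commit and reveal phases unchanged, bounding the jump by $\SD(\bfP^U_{B,T}, \bfP^{\wt{U}}_{B,T})$ plus a diamond-norm telescoping over reveal queries; this yields an analogous bound $\adv(\sen^\ddagger_{\mathrm{tom}}) \geq \Ex_U[\adv_U(\sen^\ddagger)] - O(q\ell(\veps+\delta))$. Choosing $c$ sufficiently large makes both error terms at most $0.025$.

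Combining with~\Cref{lem:Imp_COM_oracle} averaged over $U$ gives
\[
\adv(\rec^\ddagger_{\mathrm{tom}}) + \adv(\sen^\ddagger_{\mathrm{tom}}) \;\geq\; \tfrac{1}{2} - \eta - 0.05 \;=\; 0.45 - \eta,
\]
so by pigeonhole one of them has advantage at least $(0.45 - \eta)/2 = 0.225 - \eta/2 \geq 0.15 - \eta/3$ for every $\eta \leq 0.45$; for $\eta > 0.45$ the stated bound is non-positive and the lemma is vacuous. The main obstacle is the sender hybrid: unlike the receiver---whose role reduces cleanly to a guessing problem captured by~\Cref{lem:SD_lemma}---the malicious sender must feed a $\wt{U}$-simulated quantum state into a subsequent oracle-querying reveal, so one has to carefully order the hybrids (real commit under $U$, $\wt{U}\!\to\!U$ swap on the commit-phase state, real reveal under $U$) and avoid double-counting the tomography error when applying the diamond-norm sub-additivity of~\Cref{fact:diamondnorm}.
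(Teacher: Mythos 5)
Your receiver attack and its analysis match the paper's, but there is a genuine gap in the sender analysis, and it sits exactly at the point you flag as "the main obstacle." Your claimed bound $\adv(\sen^\ddagger_{\mathrm{tom}}) \geq \Ex_U[\adv_U(\sen^\ddagger)] - O(q\ell(\veps+\delta))$ does not follow from the hybrid you sketch. In the sum-binding game the sender commits honestly to a random bit $b$, so the transcript $\tau$ is distributed according to $\bfT|_{b,U}$, but the state you must simulate is $\rho^{ch,\wt{U},\tau}_{\reg{S}}$ for a challenge $ch$ that equals $1-b$ half the time. The data-processing/diamond-norm argument (the analogue of \Cref{claim:binding_lemma}) only controls $\Ex_{\tau\sim\bfT|_{ch,U}}\qty[\TD(\rho^{ch,U,\tau}_{\reg{S}},\rho^{ch,\wt{U},\tau}_{\reg{S}})]$, i.e.\ the \emph{matched} case, because there the trace distance of conditional states is averaged against the very transcript distribution that the honest $2q$-query execution induces. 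For the \emph{mismatched} case, $\tau$ is drawn from $\bfT|_{1-ch,U}$, and a transcript that is typical under input $1-ch$ may have exponentially small probability under input $ch$; conditioning on such a rare transcript can amplify an arbitrarily small tomography error into conditional states $\rho^{ch,U,\tau}$ and $\rho^{ch,\wt{U},\tau}$ that are far apart. So the cross term cannot be bounded by the tomography error alone, and no rearrangement of your hybrids (swap before/after reveal, diamond-norm telescoping) removes this, since the problem is the weighting distribution, not the order of operations.

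This is precisely why the paper's proof pays an extra price: it invokes \Cref{lem:expectation_lemma} to replace the mismatched expectation by the matched one at a cost of $2\cdot\SD(\bfT|_{0,U},\bfT|_{1,U}) = 4\,\adv_U(\rec^\ddagger)$, yielding the weaker bound $\adv_U(\sen^*) \geq \adv_U(\sen^\ddagger) - \adv_U(\rec^\ddagger) - 0.01$. With that correct bound your pigeonhole step collapses, because summing the two attacks only gives $\adv(\rec^*)+\adv(\sen^*) \geq \adv(\sen^\ddagger) - 0.02$, which is not useful on its own. The paper instead finishes with a case analysis on $\adv(\rec^\ddagger)$: if it is at least $0.16-\eta/3$ the tomographic receiver already wins; otherwise the loss term in the sender bound is small and \Cref{lem:Imp_COM_oracle} forces $\adv(\sen^*) \geq 0.15-\eta/3$. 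To repair your write-up, either adopt this distribution-swap-plus-case-analysis route, or supply a genuinely new argument for the mismatched conditional states — but as stated, such closeness is false in general (any nearly transcript-revealing scheme is a counterexample to uniform control of the cross term).
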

\begin{proof}
Let $\sfG^\cU = (\sen^\cU, \rec^\cU))$ be a short construction of interactive QCCC commitments that satisfies completeness $1 - \eta$, where $\sen$ and $\rec$ each make $q = \poly(\secp)$ queries of length at most $\ell = O(\log\secp)$.

\myparagraph{Attacking hiding} Define the following malicious receiver $\rec^*$:
\protocol{$\rec^{*,\,\cU}$}
{
on input $1^\secp$,
  \begin{enumerate}
    \item Run $\rec^{(\cdot)}$ honestly, simulating oracle calls to $\set{U_\kappa}_{\kappa\in[\ell]}$ by querying $\cU$ until the commit phase completes, and denote the resulting transcript by $\tau$.
    \item Perform the process tomography algorithm in~\Cref{thm:tomography} with $\delta = \eps = 1/(1200q\ell)$ on each $\set{U_\kappa}_{\kappa\in[\ell]}$, and obtain the classical description of the unitaries $\set{ \wt{U}_\kappa }_{\kappa\in[\ell]}$.
    \item Sample $b'$ conditioned on $(\set{ \wt{U}_\kappa }_{\kappa\in[\ell]},\tau)$, and output $b'$.
  \end{enumerate}
}
\noindent The total number of queries made by $\rec^*$ in steps~1 and~2 is polynomial. We fix the oracle $U = $ $\set{U_\kappa}_{\kappa\in[\ell]}$ and $\wt{U} := $ $\set{ \wt{U}_\kappa }_{\kappa\in[\ell]}$ obtained in step~2. Denote the advantage of $\rec^*$ conditioned on $U$ and $\wt{U}$ by
\begin{align*}
\adv_{U, \wt{U}}(\rec^*) := \Pr[ b' = b \mid U,\,\wt{U} ] - \frac{1}{2},
\end{align*}
where the probability is over the hiding security game played by $\rec^*$. Let $\bfD_U$ denote the joint distribution of $(b,\tau)$ over an honest execution of $\sfG^U$ conditioned on the oracle being $U$. Letting $\bfP_{BT} = \bfD_U$ and $\bfQ_{BT} = \bfD_{\wt{U}}$ in~\Cref{lem:SD_lemma}, we obtain
\begin{align}
\label{eq:exp0_exp1}
\Pr_{\mathbf{Exp.1}}[b = b'] 
\geq \Pr_{\mathbf{Exp.0}}[b = b'] - 3 \cdot \SD(\bfD_U,\,\bfD_{\wt{U}}).
\end{align}
It is not hard to see that the joint distribution of $(b,\,b',\,\tau)$ in $\mathbf{Exp.1}$ in~\Cref{lem:SD_lemma} is identically distributed to that in the hiding security game played by $\rec^*$. Similarly, the joint distribution of $(b,\,b',\,\tau)$ in $\mathbf{Exp.0}$ in~\Cref{lem:SD_lemma} is identically distributed to that in the hiding security game played by $\rec^*$ if $\wt{U} = U$, that is, in the absence of any tomography error. Expressing~\Cref{eq:exp0_exp1} in terms of distinguishing advantage with the notation in~\Cref{eq:adv(rec)_oracle}, we obtain
\begin{align}
\label{eq:adv_rec}
\adv_{U,\wt{U}}(\rec^*) 
& \geq \adv_U(\rec^\ddagger) - 3 \cdot \SD(\bfD_U,\,\bfD_{\wt{U}}) \nonumber \\
& \geq \adv_U(\rec^\ddagger) - 6q \cdot \sum_{\kappa = 1}^\ell \norm{ U_\kappa(\cdot)U_\kappa^\dagger - \wt{U}_\kappa(\cdot)\wt{U}_\kappa^\dagger }_\diamond.
\end{align}
by~\Cref{fact:diamondnorm} and viewing $\sfG$ as a $2q$-query algorithm. Averaging over $\wt{U}$ in~\Cref{eq:adv_rec} and applying~\Cref{thm:tomography} yields
\begin{align}
\label{eq:adv_rec_final}
\adv_U(\rec^*)
\geq \adv_U(\rec^\ddagger) - 6q\ell \cdot (\veps + \delta)
= \adv_U(\rec^\ddagger) - 0.01,
\end{align}
where $\adv_U(\rec^*) := \Ex_{\wt{U}}\qty[ \adv_{U,\wt{U}}(\rec^*) ]$ denotes the advantage of $\rec^*$ conditioned on $U$.

\myparagraph{Attacking binding} Define the following malicious sender $\sen^*$:
\protocol{$\sen^{*,\,\cU}$}
{
on input $1^\secp$,
\begin{enumerate}
    \item In the commit phase, run $\sen^{(\cdot)}$ honestly on a random input, simulating oracle calls to $\set{U_\kappa}_{\kappa\in[\ell]}$ by querying $\cU$ until the commit phase completes, and denote the resulting transcript by $\tau$.
    \item Receive a random challenge bit $ch$ to open to.
    \item Perform the process tomography algorithm in~\Cref{thm:tomography} with $\delta = \eps = 1/(1200q\ell)$ on each $\set{U_\kappa}_{\kappa\in[\ell]}$, and obtain the classical description of the unitaries $\wt{U} := \set{ \wt{U}_\kappa }_{\kappa\in[\ell]}$.
    \item Compute the state $\rho^{ch,\wt{U},\tau}_{\reg{S}}$.
    \item In the reveal phase, run $\sen^{(\cdot)}$ honestly on input $\rho^{ch,\wt{U},\tau}_{\reg{S}}$, simulating oracle calls to $\set{U_\kappa}_{\kappa\in[\ell]}$ by querying $\cU$ until the reveal phase completes.
\end{enumerate}
}
\noindent The total number of queries made by $\sen^*$ in steps~1 and~5 is polynomial. We fix the oracle $U = $ $\set{U_\kappa}_{\kappa\in[\ell]}$ and $\wt{U} := $ $\set{ \wt{U}_\kappa }_{\kappa\in[\ell]}$ obtained in step~3. Denote the advantage of $\sen^*$ conditioned on $U$ and $\wt{U}$ by
\begin{align*}
\adv_{U, \wt{U}}(\sen^*) := \Pr[ \mu = ch \mid U,\,\wt{U} ] - \frac{1}{2},
\end{align*}
where the probability is over the binding security game played by $\sen^*$. We have the following claim:

\begin{myclaim}
\label{claim:binding_lemma}
Let $\bfT|_{b,\,U}$ (\resp $\bfT|_{b,\,\wt{U}}$) denote the conditional distribution of transcript $\tau$ over an honest execution of $\sfG^U$ conditioned on the input being $b \in \bit$ and the oracle being $U$ (\resp $\wt{U}$). Then for $b \in \bit$,
\begin{align*}
\Ex_{\bfT|_{b,\,U}} \qty[ \TD \qty( \rho^{b,\,U,\,\tau}_{\reg{S}},\, \rho^{b,\,\wt{U},\,\tau}_{\reg{S}}  ) ]
\leq 4q\ell \cdot \norm{ U(\cdot)U^\dagger - \wt{U}(\cdot)\wt{U}^\dagger }_\diamond.
\end{align*}
\end{myclaim}
\begin{proof}[Proof of~\Cref{claim:binding_lemma}]
Consider the following two classical-quantum states corresponding to the final states of an honest execution of the commit phase of $\sfG^U$ and $\sfG^{\textcolor{red}{\wt{U}}}$ on input $b$, respectively (for convenience, we color $\textcolor{red}{\wt{U}}$ in \textcolor{red}{red} throughout the proof):
\begin{align*}
  \Ex_{\bfT|_{b,\,U}}[ \rho^{b,\,U,\,\tau} \otimes \sigma^{U,\,\tau} ]  
  & = \sum_{\tau \in \cT} \Pr_{\bfT|_{b,\,U}} [\tau] \cdot \projector{\tau}_{\reg{T}} \otimes \rho^{b,\,U,\,\tau}_{\reg{S}} \otimes \sigma^{U,\,\tau}_{\reg{R}} \\
  \Ex_{\bfT|_{b,\,\textcolor{red}{\wt{U}}}}[ \rho^{b,\,{\textcolor{red}{\wt{U}}},\,\tau} \otimes \sigma^{{\textcolor{red}{\wt{U}}},\,\tau} ]  
  & = \sum_{\tau \in \cT} \Pr_{\bfT|_{b,\,\textcolor{red}{\wt{U}}}}[\tau] \cdot \projector{\tau}_{\reg{T}} \otimes \rho^{b,\,{\textcolor{red}{\wt{U}}},\,\tau}_{\reg{S}} \otimes \sigma^{{\textcolor{red}{\wt{U}}},\,\tau}_{\reg{R}}.
\end{align*}
Since they are the final states of the honestly-executed $(\sen,\rec)$, which together form a $2q$-query algorithm, the trace distance between them is at most $2q\ell \cdot \norm{ U(\cdot)U^\dagger - \wt{U}(\cdot)\wt{U}^\dagger }_\diamond$ by~\Cref{fact:diamondnorm}. Taking partial trace over $\reg{R}$ on both states, the trace distance between
\begin{align*}
  \Ex_{\bfT|_{b,\,U}}[ \rho^{b,\,U,\,\tau} ]  
  = \sum_{\tau \in \cT} \Pr_{\bfT|_{b,\,U}} [\tau] \cdot \projector{\tau}_{\reg{T}} \otimes \rho^{b,\,U,\,\tau}_{\reg{S}} \quad \text{and} \quad 
  \Ex_{\bfT|_{b,\,\textcolor{red}{\wt{U}}}}[ \rho^{b,\,{\textcolor{red}{\wt{U}}},\,\tau} ]  
  = \sum_{\tau \in \cT} \Pr_{\bfT|_{b,\,\textcolor{red}{\wt{U}}}}[\tau] \cdot \projector{\tau}_{\reg{T}} \otimes \rho^{b,\,{\textcolor{red}{\wt{U}}},\,\tau}_{\reg{S}}
\end{align*}
satisfies
\begin{align}
\label{eq:TD1}
    \TD \qty( \Ex_{\bfT|_{b,\,U}}[ \rho^{b,\,U,\,\tau} ], \Ex_{\bfT|_{b,\,\textcolor{red}{\wt{U}}}}[ \rho^{b,\,{\textcolor{red}{\wt{U}}},\,\tau} ] )
    \leq 2q\ell \cdot \norm{ U(\cdot)U^\dagger - \wt{U}(\cdot)\wt{U}^\dagger }_\diamond
\end{align}
since trace distance cannot increase under partial trace. Define the following ``hybrid'' state
\begin{align*}
\Ex_{\bfT|_{b,\,U}}[ \rho^{b,\,\textcolor{red}{\wt{U}},\,\tau} ]  
  & = \sum_{\tau \in \cT} \Pr_{\bfT|_{b,\,U}}[\tau] \cdot \projector{\tau}_{\reg{T}} \otimes \rho^{b,\,\textcolor{red}{\wt{U}},\,\tau}_{\reg{S}}.
\end{align*}
Notice that
\begin{align}
\label{eq:TD2}
    \TD \qty( \Ex_{\bfT|_{b,\,U}}[ \rho^{b,\,U,\,\tau} ], 
    \Ex_{\bfT|_{b,\,U}}[ \rho^{b,\,\textcolor{red}{\wt{U}},\,\tau} ] ) 
    = \Ex_{\bfT|_{b,\,U}} \qty[ \TD \qty( \rho^{b,\,U,\,\tau}_{\reg{S}},\, \rho^{b,\,\wt{U},\,\tau}_{\reg{S}}  ) ].
\end{align}
and
\begin{align}
\label{eq:TD3}
    \TD \qty( \Ex_{\bfT|_{b,\,U}}[ \rho^{b,\,\textcolor{red}{\wt{U}},\,\tau} ],
    \Ex_{\bfT|_{b,\,\textcolor{red}{\wt{U}}}}[ \rho^{b,\,{\textcolor{red}{\wt{U}}},\,\tau} ] ) 
    = \SD( \bfT|_{b,\,U}, \bfT|_{b,\,\textcolor{red}{\wt{U}}} )
    \leq 2q\ell \cdot \norm{ U(\cdot)U^\dagger - \wt{U}(\cdot)\wt{U}^\dagger }_\diamond.
\end{align}
Applying the triangle inequality on~\Cref{eq:TD1,eq:TD2,eq:TD3} completes the proof of~\Cref{claim:binding_lemma}.
\end{proof}

\noindent The winning probability of $\sen^*$ is 
\begin{align*}
& \Pr[ \mu = ch \mid U,\,\wt{U} ] \\
& = \frac{1}{4} \cdot \sum_{b,\,ch,\in\bit,\,\tau \in \cT} \Pr[ \tau \mid b,\,U]
\cdot \Pr \left[
    \mu = ch:\,
    \mu \gets \reveal \inner{\sen(1^\secp,\,\rho^{ch,\,\wt{U},\,\tau}_{\reg{S}})}{\rec(1^\secp,\,\sigma^{U,\,\tau}_{\reg{R}})}
\right] \\
& \geq \frac{1}{4} \cdot \sum_{\tau \in \cT} \sum_{b \in \bit} \Pr[ \tau \mid b,\,U] \\
& \cdot \sum_{ch \in \bit}
\Bigg( 
\Pr \left[
    \mu = ch:\,
    \mu \gets \reveal \inner{\sen(1^\secp,\,\rho^{ch,\,U,\,\tau}_{\reg{S}})}{\rec(1^\secp,\,\sigma^{U,\,\tau}_{\reg{R}})} 
    \right] 
- \TD \qty( \rho^{ch,\,U,\,\tau}_{\reg{S}},\, \rho^{ch,\,\wt{U},\,\tau}_{\reg{S}} )
\Bigg),
\end{align*}
where the inequality follows from the operational meaning of trace distance. Thus, it allows us to write
\begin{align*}
& \Pr[ \mu = ch \mid U ] \\
& = \sum_{\tau \in \cT} \Pr[ \tau \mid U ] \cdot \frac{1}{2} \cdot 
\sum_{ch \in \bit} \Pr \left[
    \mu = ch:\,
    \mu \gets \reveal \inner{\sen(1^\secp,\,\rho^{ch,\,U,\,\tau}_{\reg{S}})}{\rec(1^\secp,\,\sigma^{U,\,\tau}_{\reg{R}})}
\right] \\
& - \frac{1}{4} \sum_{b \in \bit} \Ex_{\bfT|_{b,\,U}} \qty[ \TD \qty( \rho^{b,\,U,\,\tau}_{\reg{S}},\, \rho^{b,\,\wt{U},\,\tau}_{\reg{S}}  ) ] 
- \frac{1}{4} \sum_{b \in \bit} \Ex_{\bfT|_{1 - b,\,U}} \qty[ \TD \qty( \rho^{b,\,U,\,\tau}_{\reg{S}},\, \rho^{b,\,\wt{U},\,\tau}_{\reg{S}}  ) ].
\end{align*}
The second term can be upper bounded by~\Cref{claim:binding_lemma}. To bound the last term, we rely on the following lemma:

\begin{lemma}
\label{lem:expectation_lemma}
Let $\bfD_0$ and $\bfD_1$ be distributions over a finite set $\cT$, and let $f_0,\,f_1:\cT \to [0,1]$ be functions. Then
\begin{align*}
\sum_{b \in \bit} \Ex_{\tau \sim \bfD_{1-b}} \qty[ f_b(\tau) ] 
\leq \sum_{b \in \bit} \Ex_{\tau \sim \bfD_b} \qty[ f_b(\tau) ] 
+ 2 \cdot \SD(\bfD_0,\,\bfD_1).
\end{align*}
\end{lemma}
\begin{proof}[Proof of~\Cref{lem:expectation_lemma}]
Subtracting the LHS from the RHS yields
\begin{align*}
    \sum_{\tau \in \cT} \bigg( \bfD_0(\tau) \cdot (f_0(\tau) - f_1(\tau)) + \bfD_1(\tau) \cdot (f_1(\tau) - f_0(\tau)) + | \bfD_0(\tau) - \bfD_1(\tau) | \bigg).
\end{align*}
For a fixed $\tau$, suppose $\bfD_0(\tau) \geq \bfD_1(\tau)$, then
\begin{align*}
    & \bfD_0(\tau) \cdot (f_0(\tau) - f_1(\tau)) + \bfD_1(\tau) \cdot (f_1(\tau) - f_0(\tau)) + | \bfD_0(\tau) - \bfD_1(\tau) | \\
    = & \bfD_0(\tau) \cdot (1 + f_0(\tau) - f_1(\tau)) - \bfD_1(\tau) \cdot (1 + f_0(\tau) - f_1(\tau)) \\
    \geq & 0
\end{align*}
since $f_0(\tau), f_1(\tau) \in [0,1]$. The other case, where $\bfD_0(\tau) < \bfD_1(\tau)$, can be proved similarly. This completes the proof of~\Cref{lem:expectation_lemma}.
\end{proof}

\noindent Therefore, we obtain
\begin{align*}
& \Pr[ \mu = ch \mid U,\,\wt{U} ] = \\
& \sum_{\tau \in \cT} \Pr[ \tau \mid U ] \cdot \frac{1}{2} \cdot 
\sum_{b \in \bit} \Pr \left[
    \mu = b:\,
    \mu \gets \reveal \inner{\sen(1^\secp,\,\rho^{b,\,U,\,\tau}_{\reg{S}})}{\rec(1^\secp,\,\sigma^{U,\,\tau}_{\reg{R}})}
\right] \\
& \quad - 4q\ell \cdot \norm{ U(\cdot)U^\dagger - \wt{U}(\cdot)\wt{U}^\dagger }_\diamond
- \frac{1}{2} \cdot \SD( \bfT|_{0,\,U},\,\bfT|_{1,\,U} ),
\end{align*}
or in terms of advantage with the notation in~\Cref{eq:adv(sen)_oracle},
\begin{align}
\label{eq:adv_sen}
\adv_{U, \wt{U}}(\sen^*) 
& \geq \adv_{U}(\sen^\ddagger) - 4q\ell \cdot \norm{ U(\cdot)U^\dagger - \wt{U}(\cdot)\wt{U}^\dagger }_\diamond - \frac{1}{2} \cdot \SD( \bfT|_{0,\,U},\,\bfT|_{1,\,U} ) \nonumber \\
& = \adv_{U}(\sen^\ddagger) - 4q\ell \cdot \norm{ U(\cdot)U^\dagger - \wt{U}(\cdot)\wt{U}^\dagger }_\diamond - \adv_{U}(\rec^\ddagger),
\end{align}
since one can verify that $\adv_{U}(\rec^\ddagger) = \SD( \bfT|_{0,\,U},\,\bfT|_{1,\,U} )/2$. Averaging over $\wt{U}$ in~\Cref{eq:adv_sen} yields
\begin{align}
\label{eq:adv_sen_final}
\adv_U(\sen^*)
& \geq \adv_{U}(\sen^\ddagger) - \adv_{U}(\rec^\ddagger) - 0.01
\end{align}
where $\adv_U(\sen^*) := \Ex_{\wt{U}}\qty[ \adv_{U,\wt{U}}(\sen^*) ]$ denotes the advantage of $\sen^*$ conditioned on $U$.

\myparagraph{Wrapping up} From~\Cref{eq:adv_rec_final,eq:adv_sen_final} and~\Cref{lem:Imp_COM_oracle}, after averaging over $U$, we obtain
\begin{align*}
\adv(\rec^*) & \geq \adv(\rec^\ddagger) - 0.01, \\
\adv(\sen^*) & \geq \adv(\sen^\ddagger) - \adv(\rec^\ddagger) - 0.01, \\
\adv(\rec^\ddagger) + \adv(\sen^\ddagger) & \geq 0.5 - \eta,
\end{align*}
where $\adv(\sen^*) := \Ex_{U}\qty[ \adv_{\wt{U}}(\sen^*) ]$. The other three terms are defined analogously. We complete the proof using a case analysis. If $\adv(\rec^\ddagger) \geq 0.16 - \eta/3$, then $\adv(\rec^*) \geq 0.15 - \eta/3$. Otherwise, we have
\begin{align*}
\adv(\sen^*) & \geq \adv(\sen^\ddagger) - \adv(\rec^\ddagger) - 0.01 \\
& \geq \qty(0.5 - \eta - \adv(\rec^\ddagger)) - \adv(\rec^\ddagger) - 0.01 \\
& \geq 0.15 - \eta/3.
\end{align*}
Hence, at least one of $\adv(\rec^*)$ and $\adv(\sen^*)$ is at least $0.15 - \eta/3$.
\end{proof}

\begin{lemma} 
\label{lem:reduce_to_shortCOM}
Suppose there exists an interactive QCCC commitment relative to $\cU$ such that (1) the completeness error is at most $\negl(\secp)$, and (2) any computationally unbounded polynomial-query malicious receiver (\resp sender) can break statistical hiding (\resp sum binding) with advantage at most $\negl(\secp)$. Then there exists a short interactive QCCC commitment relative to $\cU$ such that: (1) the sender and receiver each make a polynomial number of queries, but are not necessarily time-efficient (2) the completeness error is at most $O \qty( 1/\secp )$, and (2) any computationally unbounded polynomial-query malicious receiver (\resp sender) can break statistical hiding (\resp sum binding) with advantage at most $\negl(\secp)$.
\end{lemma}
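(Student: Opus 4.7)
The plan is to mimic the key-agreement reduction from \Cref{lem:reduce_to_shortKA} essentially verbatim, with one small twist for the sum-binding case. Let $\sfG^\cU = (\sen^\cU, \rec^\cU)$ be the given interactive QCCC commitment, where $\sen$ and $\rec$ each make $q = \poly(\secp)$ queries of length at most $L = \poly(\secp)$. Set the threshold $\ell(\secp) := \lceil \log(q^4 L^2 + \secp^2) \rceil = O(\log\secp)$.

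First, I would define the short construction $\sfH^\cU = (\sfH.\sen^\cU, \sfH.\rec^\cU)$ exactly as in \Cref{lem:reduce_to_shortKA}: $\sfH.\sen$ (resp.\ $\sfH.\rec$) locally samples a private family of independent Haar random unitaries $\set{U_{\sen,\kappa}}_{\ell+1 \leq \kappa \leq L}$ (resp.\ $\set{U_{\rec,\kappa}}_{\ell+1 \leq \kappa \leq L}$), then runs $\sen$ (resp.\ $\rec$) honestly, answering queries of length $\leq \ell$ using $\cU$ and queries of length $> \ell$ using its own locally-sampled unitaries. By construction $\sfH$ is short and makes $\poly(\secp)$ queries.

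Next, completeness. I would reduce to the Haar unitary distinguishing game \Cref{thm:adap_LOCC}: build an adversary $(\HUD.\alice, \HUD.\bob)$ that first samples and agrees on the short-length unitaries $\set{U_\kappa}_{\kappa \in [\ell]}$ via classical communication, then runs the combined commit+reveal phases of $(\sen, \rec)$ on a random input $b$, with oracle access to either identical or independent copies of a length-$\kappa$ Haar unitary for a fixed $\kappa > \ell$, and checks whether $\mu = b$. The identical case is distributed as $\sfG^\cU$, the independent case as $\sfH^\cU$ on one coordinate. A standard hybrid over $\kappa \in \set{\ell+1, \dots, L}$ combined with \Cref{thm:adap_LOCC} gives completeness error at most $\negl(\secp) + \sum_{\kappa > \ell} O(q^2 / 2^\kappa) = \negl(\secp) + O(q^2 L/2^\ell) = O(1/\secp)$ by choice of $\ell$.

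For security, I would handle hiding and binding separately by the same template. For hiding, suppose $\sfH.\rec^*$ breaks statistical hiding of $\sfH$ with non-negligible advantage while making $p = \poly(\secp)$ queries, all of length at most $\ell$ (WLOG, since long queries can be answered with internal fresh Haar unitaries). Define the reduction $\sfG.\rec^{*,\cU}$ that runs $\sfH.\rec^{*,(\cdot)}$, forwarding its short queries directly to $\cU$. Now build $(\HUD.\alice, \HUD.\bob)$ as in the completeness argument, except that $\HUD.\bob$ additionally runs $\sfH.\rec^*$ (requiring no extra oracle queries since $\sfH.\rec^*$ only asks short queries, which were sampled classically) and outputs its hiding-game guess. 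In the identical setting the game is the hiding experiment played against $\sfG$, and in the independent setting it is the hiding experiment played against $\sfH$. A hybrid over $\kappa > \ell$ using \Cref{thm:adap_LOCC} then transfers $\sfH.\rec^*$'s advantage to $\sfG.\rec^*$ up to an additive $O(1/\secp)$, contradicting statistical hiding of $\sfG$. The binding case is symmetric: $\sfG.\sen^{*,\cU}$ is built from $\sfH.\sen^*$, and the same hybrid — now applied across both commit and reveal phases — transfers binding advantage from $\sfH$ to $\sfG$.

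The main potential obstacle is that the binding game is two-phase: the malicious sender's internal quantum state after the commit phase must be treated consistently across the two experiments (identical vs.\ independent oracles). However, since \Cref{thm:adap_LOCC} bounds the advantage of \emph{any} LOCC adversary producing a single bit from the full adaptive interaction, we can simply package $(\sen, \rec)$ together with the reveal phase and the binding decision as a single combined LOCC distinguisher on each hybrid coordinate. This is exactly how the reduction in \Cref{lem:reduce_to_shortKA} folds the completeness check into one bit, so no new technical ingredient beyond \Cref{thm:adap_LOCC} is needed.
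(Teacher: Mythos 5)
Your construction of $\sfH$ and your completeness argument (hybrid over long lengths $\kappa$ via \Cref{thm:adap_LOCC}) coincide with the paper's. The divergence, and the problem, is in the security part. The lemma requires the short scheme to retain \emph{negligible} hiding and sum-binding error, but your reduction transfers advantage only up to an additive $O(1/\secp)$ loss from the \Cref{thm:adap_LOCC} hybrid. From this you can only conclude that $\sfH$ has security $O(1/\secp)$, not $\negl(\secp)$, so the stated lemma is not established; moreover your closing inference is not sound as written, since "non-negligible advantage against $\sfH$" (e.g.\ $1/\secp^2$) minus $O(1/\secp)$ need not yield any advantage against $\sfG$, so no contradiction with statistical hiding of $\sfG$ follows in that regime. (The weaker $O(1/\secp)$-security version would in fact still suffice for the downstream use in \Cref{lem:nexist_int_com}, because the attack of \Cref{lem:Imp_short_query_COM} has constant advantage, but it is not the statement you were asked to prove.)

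The paper's security argument avoids the hybrid entirely and is lossless. After your own WLOG step — the malicious receiver (resp.\ sender) makes only queries of length at most $\ell$, since the honest party in $\sfH$ never makes long queries to $\cU$ — observe that the unitaries $U_\kappa$ with $\kappa>\ell$ are then queried by \emph{no one except the honest party}. Whether that honest party queries the true oracle $\cU$ (as in $\sfG$) or its own freshly sampled independent Haar unitaries (as in $\sfH$) therefore produces \emph{identically distributed} security games, so the adversary's advantage against $\sfH$ equals that of the forwarded adversary against $\sfG$ exactly, and negligible security is preserved with no loss. (Equivalently, in your own hybrid the party hosting the adversary makes zero queries to the hybridized coordinate, so the per-coordinate distinguishing advantage is exactly $0$, not the generic $O(q^2/2^\kappa)$ bound; invoking \Cref{thm:adap_LOCC} there is both unnecessary and the source of the loss.) With that replacement — keep the HUD hybrid only for completeness, where both honest parties do make long queries — your proof matches the intended statement; your packaging of the two-phase binding experiment into a single LOCC distinguisher is fine but becomes unnecessary for the security clause.
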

\begin{proof}
Let $\sfG^\cU = (\sfG.\sen^\cU,\sfG.\rec^\cU)$ be an interactive QCCC commitment construction relative to $\cU$ that satisfies the given premises in which the sender and receiver each make $q = \poly(\secp)$ queries of length at most $L = \poly(\secp)$. Let $\ell(\secp) := \ceil{\log(q^4L^2 + \secp^2)} = O(\log\secp)$. Based on $\sfG$, we define the following short construction:

\myparagraph{Short construction $\sfH$ relative to $\cU$}
\protocol{$\sfH^\cU = (\sfH.\sen^\cU,\sfH.\rec^\cU)$}
{
on input $1^\secp$ and $b \in \bit$,
\begin{enumerate}
    \item $\sfH.\sen$ and $\sfH.\rec$ respectively sample a sequence of Haar random unitaries $\set{U_{\alice,\kappa}}_{\ell + 1 \leq \kappa \leq L}$ and \\
    $\set{U_{\bob,\kappa}}_{\ell + 1 \leq \kappa \leq L}$.
    \item $\sfH.\sen$ runs $\sfG.\sen^{(\cdot)}$ on input $b$ and responds to the queries as follows: if the length does not exceed $\ell$, then make a query to $\cU$ to reply; otherwise, simulate the oracle with $\set{U_{\alice,\kappa}}_{\ell + 1 \leq \kappa \leq L}$.
    \item $\sfH.\rec$ is defined analogously.
    \item $\sfH.\rec$ outputs whatever $\sfG.\rec$ outputs.
\end{enumerate}
}
In other words, $\sfH$ is identical to $\sfG$, except that any query by the sender or receiver exceeding length $\ell$ is answered with locally-simulated \emph{independent} Haar random unitaries.

\myparagraph{Completeness of $\sfH$} We use $(\sfG.\sen,\sfG.\rec)$ to construct a two-party adversary $(\HUD.\alice,\HUD.\bob)$ in~\Cref{thm:adap_LOCC} as follows. First, $\HUD.\alice$ and $\HUD.\bob$ sample and agree on a set of Haar unitaries of length at most $\ell$ through classical communication. Then $(\HUD.\alice,\HUD.\bob)$ runs $(\sfG.\sen,\sfG.\rec)$ by simulating oracles with their oracle access. Suppose $(\HUD.\alice,\HUD.\bob)$ are given identical Haar oracles, their execution is equivalent to $\sfG$; otherwise, it follows $\sfH$. By a similar calculation as in~\Cref{lem:reduce_to_shortKA}, the completeness of $\sfH$ is $1 - O\qty( 1/\secp )$.

\myparagraph{Hiding and binding of $\sfH$} Let $p = \poly(\secp)$ and $\sfH.\rec^{*,\,\cU}$ be a malicious receiver that makes $p$ queries. Since $\sfH.\sen$ never makes queries of length greater than $\ell$, we can assume that $\sfH.\rec^{*,\,\cU}$ does not either, without reducing its advantage. Consider the reduction $\sfG.\rec^{*,\,\cU}$:
\protocol{$\sfG.\rec^{*,\,\cU}$}
{
on input $1^\secp$,
  \begin{enumerate}
    \item $\sfG.\rec^*$ runs $\sfH.\rec^{*,\,(\cdot)}$, responding to queries using $\cU$.
    \item $\sfG.\rec^*$ outputs whatever $\sfH.\rec^*$ outputs.
  \end{enumerate}
}

\noindent By construction, $\sfG.\rec^*$ never makes queries of length greater than $\ell$. Hence, the distribution of the hiding security game played by $\sfG.\rec^*$ is identically distributed to that of the game played by $\sfH.\rec^*$, which implies that they have the same advantage. We can define $\sfG.\sen^*$ similarly and show that $\sfG.\sen^*$ and $\sfH.\sen^*$ have the same advantage.
\end{proof}

Now, we are ready to prove~\Cref{lem:nexist_int_com}. For convenience, we restate the lemma below:
\begin{lemma}[\Cref{lem:nexist_int_com}, restated]
Relative to $\cU$, for any construction of interactive QCCC key commitments that makes only \emph{forward} queries to $\cU$ and satisfy completeness, there exists either (1) a malicious receiver or (2) a malicious sender (or possibly both) who, while not necessarily time-efficient, makes a polynomial number of queries and breaks the security.
\end{lemma}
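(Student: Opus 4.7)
The plan is to mirror the proof template established for QCCC key agreements in Lemma~\ref{lem:nexist_KA}, but now adapted to the commitment setting using the two auxiliary results developed for commitments: the ``short-to-general'' reduction in Lemma~\ref{lem:reduce_to_shortCOM} and the impossibility for short constructions in Lemma~\ref{lem:Imp_short_query_COM}. The argument is by contradiction: I would assume a candidate construction $\sfG^\cU$ that (a) satisfies statistical completeness with negligible error and (b) is simultaneously secure against unbounded polynomial-query malicious receivers (statistical hiding) and senders (statistical sum binding), all while making only forward queries to $\cU$.

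First, I would apply Lemma~\ref{lem:reduce_to_shortCOM} to $\sfG^\cU$. This step, which internally relies on our LOCC indistinguishability result (Theorem~\ref{thm:adap_LOCC}) to replace oracle calls of length greater than $\ell(\secp) = O(\log\secp)$ by locally sampled independent Haar unitaries on the sender's and receiver's sides, yields a short interactive QCCC commitment $\sfH^\cU$ whose completeness error is at most $O(1/\secp)$ and whose statistical hiding/binding security is preserved up to a negligible loss against unbounded polynomial-query adversaries. The preservation of security in Lemma~\ref{lem:reduce_to_shortCOM} is exactly symmetric in the hiding/binding roles, so both security properties transfer to $\sfH^\cU$.

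Next, I would invoke Lemma~\ref{lem:Imp_short_query_COM} on $\sfH^\cU$ with $\eta = O(1/\secp)$. This produces either a malicious receiver $\rec^*$ or a malicious sender $\sen^*$ (or both) making polynomially many queries to $\cU$ and winning the corresponding security game with advantage at least $0.15 - \eta/3 = \Omega(1)$ for all sufficiently large $\secp$. Since both adversaries make only forward queries and their queries inherited from $\sfH$ are of length $O(\log\secp)$, they are valid attackers in the security game for $\sfH^\cU$.

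The final step is to lift either attacker back to an attacker against the original $\sfG^\cU$, which is essentially a direct unwinding of the reduction in Lemma~\ref{lem:reduce_to_shortCOM}: any short attacker against $\sfH^\cU$ immediately translates into an attacker against $\sfG^\cU$ with (at most) a negligible loss in advantage, since honest parties' queries of length greater than $\ell$ look identical under LOCC indistinguishability whether they call into $\cU$ or into independent Haar simulations. The constant-advantage attacker thus contradicts the assumed negligible security of $\sfG^\cU$. No step requires new technical machinery beyond what is already established; the only thing to verify carefully is that the constant advantage $0.15 - \eta/3$ survives the reduction with only a negligible additive loss, which follows because the gap between $\sfG^\cU$ and $\sfH^\cU$ is $\negl(\secp)$ by Theorem~\ref{thm:adap_LOCC} applied to the polynomially many query lengths above $\ell$.
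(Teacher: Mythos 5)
Your proposal is correct and takes essentially the same route as the paper, whose proof is precisely the two-step contradiction: apply \Cref{lem:reduce_to_shortCOM} to obtain a short construction and then invoke \Cref{lem:Imp_short_query_COM} on it. The only (cosmetic) difference is your final step of lifting the attacker back to $\sfG^\cU$: since \Cref{lem:reduce_to_shortCOM} already asserts that the short scheme $\sfH$ inherits statistical hiding and sum binding, the constant-advantage adversary from \Cref{lem:Imp_short_query_COM} contradicts that lemma directly, making the lift unnecessary (and note that in the paper the security transfer to $\sfH$ is in fact lossless, while the completeness loss is $O(1/\secp)$, i.e.\ inverse-polynomial rather than negligible — neither of which affects the contradiction).
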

\begin{proof}[Proof of~\Cref{lem:nexist_int_com}]
For the sake of contradiction, suppose there was such a construction. However, by~\Cref{lem:reduce_to_shortCOM}, it implies a short interactive QCCC commitment construction that contradicts~\Cref{lem:Imp_short_query_COM}.
\end{proof}

\fi

\end{document}